\documentclass[11pt]{article}

\usepackage[letterpaper, portrait, margin=1in]{geometry}
\usepackage{changepage}
\usepackage[usenames,dvipsnames,svgnames,table]{xcolor}
\usepackage[colorlinks=true,linkcolor=RawSienna,citecolor=ForestGreen,urlcolor=Cerulean]{hyperref}
\usepackage{algorithm}
\usepackage[noend]{algpseudocode}
\usepackage{url}
\usepackage{amssymb,amsthm,amsmath}
\usepackage{bm}
\usepackage{bbm}
\usepackage{nicefrac}
\usepackage{thmtools,thm-restate}
\usepackage{mathtools}
\usepackage{xspace}
\usepackage{verbatim}
\usepackage{mathrsfs}
\usepackage{pgf}
\usepackage{tocloft}
\usepackage[nottoc]{tocbibind} 

\usepackage{float}
\floatstyle{plain}
\newfloat{protocol}{H}{}
\floatname{protocol}{Protocol}
\usepackage[noabbrev,capitalise,nameinlink]{cleveref}
\crefname{protocol}{Protocol}{Protocols}

\usepackage{todo}
\usepackage{tabularx}
\usepackage{enumitem}
\usepackage{tikz}
\usepackage{array}
\usepackage{bbm}

\usepackage[framemethod=tikz]{mdframed}

\mdfsetup{
	roundcorner=4pt,
	nobreak=true,
	linewidth=.5,
	innerleftmargin=1.5em,
	innerrightmargin=1.5em,
	innertopmargin=1.5em,
	innerbottommargin=1.5em,
	skipabove=0,skipbelow=0pt
}

\usepackage{regexpatch}
\makeatletter
\xpatchcmd\thmt@restatable{%
\csname #2\@xa\endcsname\ifx\@nx#1\@nx\else[{#1}]\fi
}{%
\ifthmt@thisistheone
\csname #2\@xa\endcsname\ifx\@nx#1\@nx\else[{#1}]\fi
\else
\csname #2\@xa\endcsname[{Restated}]
\fi}{}{}
\makeatother

\usepackage{dsfont}

\newcommand*\ie{i.\kern.1em e.\ }
\newcommand*\eg{e.\kern.1em g.\ }

\theoremstyle{plain}
\newtheorem{theorem}{Theorem} 
\newtheorem{lemma}[theorem]{Lemma}
\newtheorem{fact}[theorem]{Fact}
\newtheorem{proposition}[theorem]{Proposition}
\newtheorem{claim}[theorem]{Claim}
\newtheorem{corollary}[theorem]{Corollary}
\newtheorem{question}{Question}

\theoremstyle{definition}
\newtheorem{definition}{Definition}

\theoremstyle{assumption}
\newtheorem{assumption}{Assumption}

\theoremstyle{remark}
\newtheorem{remark}{Remark}

\makeatletter
\def\th@example{%
  \thm@notefont{}
  \normalfont 
}
\def\th@definition{%
  \thm@notefont{}
  \normalfont 
}
\makeatother

\theoremstyle{example}
\newtheorem{example}{Example}



\newcommand{\DD}{\mathsf{D}}

\newcommand{\Equality}{\textsc{Equality}}
\newcommand{\HD}{\textsc{HD}}
\newcommand{\GHD}{\textsc{GapHD}}

\newcommand{\EHD}{\textsc{EHD}}

\newcommand{\disc}[1]{\text{disc}(#1)}
\newcommand{\m}[1]{\text{m}(#1)}
\newcommand{\inp}[2]{\langle #1 , #2 \rangle}



\newcommand{\ignore}[1]{}


\DeclareMathOperator{\sign}{sign}

\DeclareMathOperator{\dist}{dist}
\DeclareMathOperator{\dom}{dom}





\newcommand{\Ex}[1]{\bE \left[ #1 \right]}

\renewcommand{\Pr}[1]{\bP \left[ #1 \right]} 
\newcommand{\Pru}[2]{\underset{ #1 }\bP \left[ #2 \right]}

\newcommand{\define}{\coloneqq}




\newcommand{\ind}[1]{\mathds{1} \left[ #1 \right] }

\newcommand{\zo}{\{0,1\}}

\newcommand{\sig}{\mathsf{sig}}

\newcommand{\col}{\mathsf{col}}

%
\newcommand{\fixedovalbox}[1]{%
  \tikz[baseline=(X.base)]\node(X)[draw, rounded corners, inner sep=3pt ]{#1};%
}

\newcommand{\domino}[2]{\small %
  \raisebox{-0.5em}{%
\fixedovalbox{$\stackrel{\stackrel{\scriptstyle #1}{\text{---}}}{\scriptstyle #2}$}}%
}



\newcommand{\cD}{\ensuremath{\mathcal{D}}}

\newcommand{\cF}{\ensuremath{\mathcal{F}}}
\newcommand{\cG}{\ensuremath{\mathcal{G}}}
\newcommand{\cH}{\ensuremath{\mathcal{H}}}

\newcommand{\cM}{\ensuremath{\mathcal{M}}}

\newcommand{\cO}{\ensuremath{\mathcal{O}}}
\newcommand{\cP}{\ensuremath{\mathcal{P}}}
\newcommand{\cQ}{\ensuremath{\mathcal{Q}}}
\newcommand{\cR}{\ensuremath{\mathcal{R}}}

\newcommand{\cY}{\ensuremath{\mathcal{Y}}}

\newcommand{\cX}{\ensuremath{\mathcal{X}}}


\newcommand{\bE}{\ensuremath{\mathbb{E}}}

\newcommand{\bN}{\ensuremath{\mathbb{N}}}
\newcommand{\bP}{\ensuremath{\mathbb{P}}}
\newcommand{\bR}{\ensuremath{\mathbb{R}}}


\newcommand{\x}{{\bm{x}}}

\newcommand{\z}{{\bm{z}}}

\newcommand{\N}{\mathbb{N}}

\newcommand{\R}{\mathbb{R}}

\newcommand{\sD}{\mathsf D}

\newcommand{\HFF}{\textsc{HD}_{4,4}}
\newcommand{\HTT}{\textsc{HD}_{2,2}}
\newcommand{\BPPZ}{\mathsf{BPP}^0}
\renewcommand{\R}{\mathsf{R}}
\newcommand{\QS}{\mathsf{QS}}
\newcommand{\concat}{}
\newcommand{\pad}{\mathrm{pad}}
\newcommand{\colr}{\mathsf{col}}





\newtoggle{anonymous}
\togglefalse{anonymous}

\begin{document}

\newgeometry{margin=1.2in,top=1.7in,bottom=1in}

\thispagestyle{empty}

\begin{center}
{\huge Constant-Cost Communication is not Reducible to\\[5pt]
$k$-Hamming Distance}\\[1.3cm] \large

\iftoggle{anonymous}{%
}{%
\setlength\tabcolsep{1em}
\begin{tabular}{cccc}
Yuting Fang&
Mika G\"o\"os&
Nathaniel Harms&
Pooya Hatami\\[-1mm]
\small\slshape Ohio State University &
\small\slshape EPFL &
\small\slshape EPFL &
\small\slshape Ohio State University 
\end{tabular}    
}

\vspace{8mm}
\normalsize

\iftoggle{anonymous}{%
}{%
{\itshape{\today}}
}

\vspace{10mm}
{\bf Abstract}
\end{center}

\begin{adjustwidth}{2.2em}{2.2em}
\noindent
Every known communication problem whose randomized communication cost is
constant (independent of the input size) can be reduced to $k$-Hamming
Distance, that is, solved with a constant number of deterministic queries
to some $k$-Hamming Distance oracle. We exhibit the first examples of
constant-cost problems which
\emph{cannot} be reduced to $k$-Hamming
Distance.

To prove this separation, we relate it to a natural coding-theoretic question.
For~$f\colon \{2,4,6\}\to\N$, we say an encoding function
$E\colon\{0,1\}^n\to\{0,1\}^m$ is an \emph{$f$-code} if it transforms Hamming
distances according to $\dist(E(x),E(y))=f(\dist(x,y))$ whenever $f$ is defined.
We prove that, if there exist~$f$-codes for infinitely many $n$, then $f$ must
be affine: $f(4)=(f(2)+f(6))/2$.
\end{adjustwidth}

\newpage

\setcounter{tocdepth}{2}

{\tableofcontents}

\thispagestyle{empty}
\setcounter{page}{0}
\newpage
\restoregeometry

\section{Introduction}
Some of the most extreme examples of the power of randomness in computing
come from communication complexity, where (shared) randomness can allow
two parties to solve non-trivial problems with communication cost \emph{independent}
of the input size.
The textbook example~\cite{KN96,RY20} is the~$\Equality$ problem, where Alice holds~$x\in\{0,1\}^n$,
Bob holds~$y\in\{0,1\}^n$, and they wish to decide whether $x=y$. While $n$ bits of
\emph{deterministic} communication are necessary, it is well-known that a public-coin
\emph{randomized} protocol, with error probability $1/4$, only requires 2 bits of communication,
regardless of the input length $n$. When and why does randomness allow such extreme efficiency? The
more general example of \textsc{$k$-Hamming Distance} will be central to this paper:

\begin{example}[Hamming distance]
Let $\dist(x,y)$ denote the Hamming distance between binary strings~$x$ and $y$. The \textsc{$k$-Hamming Distance} function $\HD_k\colon\{0,1\}^n\times\{0,1\}^n\to\{0,1\}$ is
\[
\HD_k(x,y)\coloneqq 1 \quad\iff\quad \dist(x,y)= k.
\]
In particular, $\Equality=\HD_0$. When $k$ is a constant, this problem admits a constant-cost protocol: the randomized complexity of $\HD_k$ is known to be $\cO(k\log k)$~\cite{Yao03,HSZZ06,Sag18}. \qed
\end{example}

Randomized communication has of course been studied for decades, and
constant-cost communication specifically has been studied by many recent works~\cite{HHH22dimfree,HWZ22,HHH22counter,EHK22,HHPTZ22,HZ24,HH24,FHHH24}. One
reason that randomized communication is poorly understood is that there are
not many examples of efficient randomized protocols:
the \textsc{$k$-Hamming Distance} problems remain, essentially, the only
examples of constant-cost problems discovered so far. All previously-studied
constant-cost problems \textit{reduce} to~$\HD_k$: they can be computed by a
constant-cost \emph{deterministic} ``oracle protocol'' with access to an oracle
that computes $\HD_k$ for some constant~$k$. That is, in each round of the
oracle protocol, Alice and Bob can construct strings $a,b$ respectively and
query the oracle, which answers with the value $\HD_k(a,b)$
(see~\cref{sec:prelim} for formal definitions). Therefore, randomness is used
\emph{only} for computing $\HD_k$, and it seems natural to guess that the
$\HD_k$ problems are the end of the story for extremely efficient communication.

We survey previously-known examples of constant-cost problems (and why they all
reduce to $\HD_k$, which can be non-trivial) in
\cref{section:reductions-to-khd}. For now, we offer two simple examples:

\begin{example}[Planar adjacency]
\label{ex:planar}
First suppose Alice and Bob have vertices $x,y$ in a shared (rooted) tree $T$,
and they wish to decide whether $x,y$ are adjacent in $T$. If $p(z)$ denotes the
parent of $z$ in $T$, then Alice and Bob can decide adjacency by making two
queries ``$x = p(y)$?''\ and~``$y = p(x)$?''\ to an \textsc{Equality} oracle. Now
suppose Alice and Bob instead have vertices $x,y$ in a shared planar graph. It
is known that the edges of any planar graph can be partitioned into 3 forests.
Then Alice and Bob can perform the tree adjacency protocol in each forest. So
adjacency in planar graphs can be computed by a constant number of
\textsc{Equality} oracle queries.
\qed
\end{example}

\begin{example}[Large-alphabet Hamming distance] \label{ex:q-ary}
For any $q \in \bN$, Alice and Bob are given
strings $x,y \in [q]^n$ and wish to decide whether the $q$-ary Hamming distance
$\dist_q(x,y)$ (\ie the number of unequal coordinates) is $k$. Let $E
\colon [q] \to \zo^q$ be the \emph{indicator code} where $E(i)$ is defined as
the binary string with 1 in coordinate $i$ and 0 elsewhere. Since $\dist(E(i), E(j))
= 2$ when $i \neq j$, the concatenations $E(x_1) \concat E(x_2) \concat \dotsm
\concat E(x_n)$ and $E(y_1) \concat E(y_2) \concat \dotsm \concat E(y_n)$ have
\emph{binary} Hamming distance $2 \cdot \dist(x,y)$. Therefore, the problem
can be solved by the single \textsc{$2k$-Hamming Distance} query
\[
    \HD_{2k}\left( E(x_1) \concat E(x_2) \concat \dotsm \concat E(x_n), E(y_1) \concat E(y_2) \concat \dotsm \concat E(y_n)\right)\,,
\]
regardless of the original alphabet size $q$. Note that, in the constant-cost
setting, the inputs to the oracle may be of arbitrary size. \qed
\end{example}

Constant-cost communication seems quite restrictive, so one may expect only a
very simple class of problems to exhibit this behaviour---one might first hope to
find a simple \emph{complete} problem for this
class: a single\footnote{To clarify, the $\HD_k$ problems
together for each $k$ are an infinite family of problems, not a single problem.} constant-cost problem $\cP$ that all others reduce to, as in the
examples above. This would grant us good understanding of the most extreme
examples of the power of randomness in communication. But \cite{FHHH24} proved
that there is \emph{no} complete problem, because for every
constant-cost problem $\cP$, there is a large enough constant $k$ such
that $\HD_k$ does not reduce to~$\cP$. This emphasizes the
natural possibility that the hierarchy of \textsc{$k$-Hamming Distance} problems
captures \emph{all} of the extreme examples of the power of randomness in
communication:

\begin{question}
\label{question:main}
    For every constant-cost communication problem $\cP$, is there a sufficiently
    large constant $k$ such that $\cP$ can be computed by $\cO(1)$ oracle queries to
    $\HD_k$?
\end{question}

We show that the answer is \emph{no}, by relating it to a new coding-theoretic
question about Hamming distance encodings. This is in contrast to the case for
\emph{partial} problems, where an analogous positive answer has been known since
\cite{LS09}: every constant-cost communication problem $\cP$ can be computed by
1 query to some constant-cost \textsc{Gap Hamming Distance} problem,
but this does not give the same satisfying insight as would \cref{question:main}
(see \cref{app:gap-ham} for a discussion and a new elementary proof of this).

\subsection{A New Constant-Cost Problem}

We introduce a constant-cost communication problem, called
\textsc{$\{4,4\}$-Hamming distance}, which we show does not reduce to~$\HD_k$
for any constant $k$. 

\begin{example}[$\bm{\{4,4\}}$-Hamming distance]
We define $\HFF\colon\{0,1\}^{n^2}\times\{0,1\}^{n^2}\to\{0,1\}$ as follows. The
inputs $\smash{x,y\in\{0,1\}^{n^2}}$ are interpreted as  $n\times n$ Boolean
matrices and we write $x_i,y_i\in\{0,1\}^n$ for their $i$-th rows. We set
$\HFF(x,y) \coloneqq 1$ iff there are two distinct rows $i,j\in[n]$ such that
$\dist(x_i,y_i)=\dist(x_j,y_j)=4$, and all other rows $\ell \notin \{i,j\}$
are equal, $x_\ell = y_\ell$. \qed
\end{example}
There is a simple constant-cost randomized protocol for $\HFF$ given in \cref{prot:rand-h44} below.
\begin{protocol}
\begin{mdframed}
\renewcommand{\thempfootnote}{$*$}
{\bf\slshape
Randomized protocol for $\HFF$ on input $(x,y)$:}
\vspace{0.3em}
\begin{enumerate}[leftmargin=1.3em,itemsep=0.5em]
\item The players think of each of their rows $x_i,y_i \in \zo^n$ as a symbol in a $2^n$-ary alphabet. Using the large-alphabet Hamming distance protocol for $\smash{\HD^{(2^n)}_2}$ from \cref{ex:q-ary}, the players can check that $\dist_{2^n}(x,y)=2$, meaning there are precisely two unequal rows.
\item It remains to verify that the two unequal rows both have  distance 4. The players use public coins to partition the rows uniformly into two parts as $A\sqcup B=[n]$. Note that with probability $1/2$ (which can be boosted by repeating this step) the two unequal rows end up in different parts. Using a protocol for $\HD_4$ the players can check that $\dist(x_A,y_A)=4$ and $\dist(x_B,y_B)=4$ (here $x_A$ denotes the matrix $x$ restricted to rows $A$).
\end{enumerate}
\end{mdframed}
\caption{Constant-cost randomized protocol for the $\{4,4\}$-Hamming distance problem.}
    \label{prot:rand-h44}
\end{protocol}

\cref{prot:rand-h44} invokes $\HD_4$ as a subroutine several times, but it is
not a deterministic $\HD_k$-oracle protocol due to the random partition in Step
2. Our main result states that, indeed,~$\HFF$ cannot be computed with
$\cO(1)$ queries to $\HD_k$, no matter which constant $k$ we choose.

\begin{theorem}[Main result] \label{thm:intro-main}
The problem $\HFF$ does not admit a constant-cost deterministic oracle-protocol with query access to $\HD_k$, for any constant $k$.
\end{theorem}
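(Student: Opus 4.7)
The strategy is to derive a contradiction with the coding theorem announced in the abstract: any $f$-code $E\colon\zo^n\to\zo^m$ for $f\colon\{2,4,6\}\to\N$ that exists for infinitely many $n$ must satisfy the affine relation $f(4) = (f(2)+f(6))/2$. I will show that a constant-cost $\HD_k$-oracle protocol for $\HFF$ would give rise to a \emph{non-affine} $f$-code for infinitely many $n$, yielding the desired contradiction.

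Suppose for contradiction that $\HFF$ admits a deterministic $c$-query oracle protocol $\Pi$ to $\HD_k$ for constants $c$ and $k$. First, I restrict attention to inputs of the form $X=(x_1,x_2,0,\ldots,0)^\top$ and $Y=(y_1,y_2,0,\ldots,0)^\top$ with $x_1,x_2,y_1,y_2\in\zo^n$; under this restriction $\HFF(X,Y)=1$ iff $\dist(x_1,y_1)=\dist(x_2,y_2)=4$. Since $\Pi$ is deterministic with only $c$ queries, it follows one of at most $2^c=\cO(1)$ transcripts $\tau$. Along each transcript $\tau$, Alice's $t$-th query string collapses to a deterministic function $a_t^\tau(x_1,x_2)$ of her input alone, and likewise $b_t^\tau(y_1,y_2)$ for Bob; the transcript $\tau$ is realized exactly when the per-query distances $(\dist(a_t^\tau,b_t^\tau))_t$ match the pattern prescribed by $\tau$.

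The heart of the proof is a Ramsey-type monochromatization that turns $\Pi$ into a single encoding. I color pairs of inputs by the pair (transcript, vector of per-query Hamming distances). There are $\cO(1)$ transcripts and each query distance takes polynomially many values, so the number of colors is polynomial in $n$. Applying a sufficiently strong hypergraph Ramsey argument on $\zo^{2n}$, I extract a large substructure $S$ on which the coloring is monochromatic modulo input Hamming distance. After a symmetrization that merges Alice's and Bob's per-transcript maps into a single map (e.g., exploiting the symmetry $\HFF(X,Y)=\HFF(Y,X)$), this yields an encoding $E\colon\zo^N\to\zo^M$, with $N\to\infty$ along a subsequence, such that $\dist(E(x),E(y))$ depends only on $\dist(x,y)\in\{2,4,6\}$; this defines the desired $f\colon\{2,4,6\}\to\N$.

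It remains to argue that $f$ is non-affine, which is where the two-row structure of $\HFF$ enters crucially. The protocol must simultaneously accept the row-distance pattern $(d_1,d_2)=(4,4)$ and reject both $(2,6)$ and $(6,2)$, three configurations sharing the same total row-distance $d_1+d_2=8$. An affine $f$ satisfying $f(4)=(f(2)+f(6))/2$ would make $\dist(E(x),E(y))$ depend on $\dist(x,y)$ in an arithmetic-progression fashion, forcing the three patterns to induce the same per-query distance vector and hence the same transcript---contradicting the fact that $\Pi$ distinguishes them. The main technical obstacle I anticipate is precisely this Ramsey-to-non-affinity bridge: producing a substructure structured enough that the per-row distance contributions decouple into the affine expression whose non-affine distinguishability is the asserted contradiction, while still being large enough for the ``infinitely many $n$'' hypothesis of the coding theorem. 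Overcoming this will likely require an iterated Ramsey extraction coupled with a careful two-row embedding that leverages both rows symmetrically; once a valid non-affine $f$-code is in hand for infinitely many $N$, the coding theorem completes the contradiction.
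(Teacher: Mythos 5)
Your high-level strategy matches the paper's: extract a Hamming-distance code from a hypothetical $\HD_k$-oracle protocol for $\HFF$, and contradict the affine structure theorem for $f$-codes. But as written, the proposal has several genuine gaps, and you yourself flag the central one as an obstacle you have not resolved.

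\textbf{Two-player codes are unavoidable.} When the protocol queries $\HD_k(\phi(x),\psi(y))$, Alice's map $\phi$ and Bob's map $\psi$ are in general different matrices, and oracle queries in a protocol need not be symmetric even when the target function is. Your ``symmetrization that merges Alice's and Bob's per-transcript maps into a single map'' does not go through; what you actually obtain is a pair $(E_1,E_2)$ with $\dist(E_1(x),E_2(y)) = f(\dist(x,y))$, a \emph{two-player} $f$-code. The paper has to prove a separate structure theorem for these (\cref{thm:two-player-codes}), and that proof is nontrivial: $E_1$ on its own carries no structural guarantee, so one must first use an invariance lemma to upgrade the pair into an ``extended'' two-player code where $E_1$ and $E_2$ are each single-player codes, and then re-run the $\Delta_E$-decomposition argument.

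\textbf{The Ramsey coloring as you describe it has too many colors.} You color by ``(transcript, vector of per-query Hamming distances),'' but the per-query distances can be as large as the oracle inputs, which grow with $n$. A Ramsey argument over polynomially many colors yields nothing. The paper's fix is to augment the query alphabet (\cref{def:augmented-thd-query-set}) to record $\max(\dist,K)$ for a fixed constant $K$, making the color set constant-sized, and then prove after the fact that the relevant distances stay below $K$ (via the triangle inequality, \cref{claim:all-code-values-small}). Moreover, the invariance you need is not bare hypergraph Ramsey: to handle swaps of unequal dominoes you must also exploit stability (no large \textsc{Greater-Than} submatrices in the query set, \cref{prop:ccc-stable}), which is what forbids the coloring from ``ordering'' the inputs. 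This is the machinery of \cref{lemma:ramsey} and \cref{lemma:hff-invariance}, and skipping it leaves a real hole.

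\textbf{The non-affinity argument is circular as stated.} You argue that if the extracted $f$ were affine, the three row-distance patterns would induce the same per-query distance vector, ``contradicting the fact that $\Pi$ distinguishes them.'' But the extracted encoding is not the protocol; you must show that the specific query $Q_j$ which separates $\{4,4\}$ from $\{2,6\}$ induces a non-affine code on some fixed-all-but-one-block restriction. This decoupling into per-block codes (the paper's maps $E_1^{(h)},E_2^{(h)}$ with $h\in\{f_2,f_4,f_6,g\}$, \cref{eq:find-2-player-codes}) is the missing bridge you identify, and non-affinity of at least one such $h$ is not automatic: the paper needs the arithmetic claim that the arrays of increments $\lambda_2,\lambda_4,\lambda_6,\lambda_g$ would all have to coincide, contradicting $f(4,4)\neq f(2,6)$. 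Finally, restricting to inputs $(x_1,x_2,0,\dots,0)$ before running Ramsey loses the outer-permutation symmetry across all $2n$ blocks that \cref{lemma:inner-outer-invariance} actually needs; the paper instead works over the full slice $\Sigma_n^{2n}$ and only fixes all-but-one block at the very end when defining the per-block codes.
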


\paragraph{Why $\bm{\{4,4\}}$?}
What is so special about using $\{4,4\}$ as the multiset of distances of the two
unequal rows? Consider the similar problem $\HTT$ defined on matrices $x,y \in
\zo^{n \times n}$ where the answer should be 1 iff there are exactly 2 unequal
rows, each with distance 2. Unlike $\HFF$, this problem can be solved by a
\textsc{$4$-Hamming Distance} oracle protocol, \cref{prot:h22}.
To find a deeper explanation for why $\HTT$ reduces to $\HD_k$, while $\HFF$
does not, we study in the next section the types of Hamming distance encodings $E(\,\cdot\,)$ that
can be used in Step \ref{line:protocol-h22-encoding} of this protocol.

\begin{protocol}
\begin{mdframed}
\renewcommand{\thempfootnote}{$*$}
{\bf\slshape
Oracle-protocol for $\HTT$ on input $(x,y)$:}
\vspace{0.3em}
\begin{enumerate}[leftmargin=1.3em,itemsep=0em]
\item The players verify that there are precisely two unequal rows, as in Step 1 of \cref{prot:rand-h44}.

\item The players verify that total Hamming distance is $\dist(x,y)=4$ using a $\HD_4$ oracle.
The players now know the multiset of distances of the two unequal rows is one of
\vspace{-0.6em}
\[
\{1,3\},\quad \{2,2\}.
\]
\vspace{-1.6em}

\item It remains to distinguish the above two cases. Let $E \colon \zo^n \to \zo$
be the \emph{parity code}, where $E(z)$ is the parity of $z$. The
players query an $\Equality$ oracle\footnote{ Note that an \textsc{Equality} oracle can be
simulated by one query to any $\HD_k$ oracle, by padding the input.} to check if
\[
    E(x_1) \concat E(x_2) \concat \dotsm \concat E(x_n)
    = E(y_1) \concat E(y_2) \concat \dotsm \concat E(y_n) \,.
\]
These strings are equal iff $\dist(x_i,y_i)$ is \emph{even} for every row $i \in [n]$,
meaning that the distances must be $\{2,2\}$.
\label{line:protocol-h22-encoding}
\end{enumerate}
\end{mdframed}
\caption{Constant-cost $\HD_k$-oracle protocol for the $\{2,2\}$-Hamming distance problem.}
\label{prot:h22}
\end{protocol}

\subsection{Hamming Distance Encoding Theorem}
\label{section:intro-fcodes}

One reason that constant-cost communication is interesting is that it has many
connections to other areas like graph theory, learning theory, and operator
theory; see \eg \cite{FX14,HHH22dimfree,HWZ22,EHK22,HHPTZ22,HZ24,HH24}. We
find a new connection here: we prove \cref{thm:intro-main} by relating it
to a new coding-theoretic question about encodings that preserve small Hamming
distances.

\begin{definition}[Hamming Distance $f$-Codes]\label{def:f-code}
Let $f$ be a partial function $\N\to\N$, that is, defined only on some subset
$\dom(f)\subseteq\N$. We say that an encoding
function~$E\colon\{0,1\}^n\to\{0,1\}^m$ is an \emph{$f$-code} if it transforms
Hamming distances according to
\[
\dist(E(x),E(y))=f(\dist(x,y))
\qquad \forall x,y\enspace\enspace \text{s.t.}\enspace \dist(x,y)\in\dom(f).
\]
We also say that~$f\colon\dom(f)\to\N$ is \emph{realizable} if there is some function $m\colon \bN \to \bN$ such that for infinitely many values of $n$ there exists an $f$-code $\zo^n \to \zo^{m(n)}$.
\end{definition}
For example, the \emph{parity code} used in \cref{prot:h22} mapped strings $x,y$ with
$\dist(x,y) = 2$ to strings $E(x), E(y)$ (actually single bits) with
$\dist(E(x), E(y)) = 0$, and strings $x,y$ with $\dist(x,y) \in \{ 1, 3 \}$ to
strings $E(x), E(y)$ with $\dist(E(x), E(y)) = 1$, so this encoding is an
$f$-code for $f \colon \{1,2,3\} \to \bN$ with $f(1) = f(3) = 1$ and $f(2) = 0$. This encoding exists for every $n$, and hence $f$ is realizable.

More generally, we ask:

\begin{question}
\label{question:f-codes}
Which partial functions $f$ are realizable?
\end{question}

Note that the function $f$ in this question is \emph{constant}, \ie it is a
fixed function that does not depend on the domain size $n$. We think
\cref{question:f-codes} is interesting independent of our application to
constant-cost communication. Our main technical contribution is the
following:

\begin{restatable}[Structure theorem]{theorem}{structurethm}
\label{thm:structure}
If $f \colon \{2,4,6\} \to \N$ is realizable, then it must be affine:
\[\textstyle
f(4) = \frac{1}{2}(f(2)+f(6)).
\]
\end{restatable}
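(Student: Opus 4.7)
The plan is to translate the $f$-code condition into a statement about the Fourier spectrum of the encoding. Let $E \colon \mathbb{F}_2^n \to \mathbb{F}_2^m$ be an $f$-code and set $\phi_k(x) := (-1)^{E_k(x)}$ for $k \in [m]$. The $f$-code condition reads $\sum_k \phi_k(x)\phi_k(y) = m - 2f(\dist(x,y))$ whenever $\dist(x,y) \in \{2,4,6\}$. Averaging yields the autocorrelation $F(z) := \sum_k \mathbb{E}_x[\phi_k(x)\phi_k(x\oplus z)]$ with Fourier expansion $F(z) = \sum_S D(S) \chi_S(z)$ where $D(S) := \sum_k \hat\phi_k(S)^2 \ge 0$, and $F(z) = m - 2f(|z|)$ for $|z| \in \{2,4,6\}$. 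Symmetrizing $D$ over the coordinate-permutation action of $S_n$, we may assume $D(S) = \bar D_j$ depends only on $j := |S|$, so the constraints become $\sum_j \bar D_j K_j(d) = m - 2f(d)$ for $d \in \{2,4,6\}$ (with Krawtchouk polynomials $K_j$), subject to $\bar D_j \ge 0$ and $\sum_j \bar D_j \binom{n}{j} = m$.

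Next, I would exploit the fact that $f$-codes exist for infinitely many $n$ to take an asymptotic limit. In the ``non-trivial'' regime where $m = m(n) \to \infty$ along the realized subsequence (a bounded-$m$ case can be handled separately via a finite case analysis, since there are only finitely many Boolean codes of any bounded length), we normalize and define a probability distribution by $\Pr[J = j] := \bar D_j \binom{n}{j}/m$. The equations then read $\mathbb{E}_J[K_d(J)/\binom{n}{d}] = 1 - 2f(d)/m \to 1$ as $n \to \infty$, for each $d \in \{2,4,6\}$. Using the standard asymptotic $K_d(j)/\binom{n}{d} \to (1 - 2j/n)^d$ for fixed $d$, the random variable $J/n$ must concentrate at $\{0, 1\}$. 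After composing coordinates with parity flips if needed (which preserves the $f$-code property at even distances), we reduce to concentration at $J/n \to 0$, so each $\phi_k$'s Fourier mass sits on subsets of bounded size depending only on $f$.

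A rigidity theorem for Boolean functions with bounded Fourier degree (in the spirit of Friedgut--Kindler--Safra) then implies that each $\phi_k$ is essentially a junta on a bounded number of variables. For such a junta-based encoding, a direct computation expresses $f(d)$ as a finite positive combination of terms of the form ``probability that a fixed pattern of coordinates is perturbed by a distance-$d$ flip,'' all of which are polynomials in $d/n$ of degree matching the junta arity; the boundedness of this arity, combined with integrality of $f(d) \in \mathbb{N}$, forces the second difference $f(2) - 2f(4) + f(6)$ to vanish exactly, giving the affine identity.

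The hardest step is making this large-$n$ argument rigorous: promoting asymptotic Fourier concentration to an exact bounded-junta structure for each $\phi_k$, and then bookkeeping the precise arithmetic of junta-based distances. A more elementary attack via the hypermetric (negative-type) inequality in $L_1$ applied to the 4-point ``$\mathbb{R}$-parallelogram'' configuration $x_1 = 0$, $x_2 = e_1+e_2$, $x_3 = e_1+\dots+e_6$, $x_4 = e_3+\dots+e_6$ with coefficients $b = (1, -1, -1, 1)$ gives only the one-sided triangle-type bound $f(4) \le f(2) + f(6)$; neither the matching lower bound nor the sharper affine equality can be extracted by combining integer hypermetric inequalities alone on configurations whose pairwise distances lie in $\{2, 4, 6\}$, so the Fourier/rigidity route appears essential.
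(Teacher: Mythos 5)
Your proposal follows a genuinely different, Fourier-analytic route from the paper's argument, which is purely combinatorial: the paper sparsifies the code to inputs of fixed Hamming weight $k$ (so outputs have weight at most $r \le f(2)k$ by a triangle inequality), decomposes each codeword $E(x)$ into disjoint pieces $\Delta_E(y)$ indexed by subsets $y \subseteq x$, and proves that $E$ maps sunflowers to sunflowers. This forces $|\Delta_E(y)|$ to depend only on $|y|$ and to vanish for $1 < |y| < k$, giving the exact identity $g(2t) = 2\delta_k + 2t\delta_1$ and hence affineness. Your opening steps (Fourier spectrum of $\phi_k$, symmetrization over $S_n$, passage to the Krawtchouk form of the constraints) are correct and are a natural reformulation of the problem.

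However, the gap you flag in the rigidity step is worse than you suggest, and I do not see how to close it along this route. First, the concentration you extract is of the aggregate density $D(S) = \sum_k \hat\phi_k(S)^2$, which gives Fourier concentration of the $\phi_k$'s only \emph{on average}; individual $\phi_k$'s need not concentrate, and even after averaging one only gets that most $\phi_k$'s have small mass on intermediate degrees. Second, and more fundamentally, the conclusion ``$J/n \to 0$'' is very far from ``$J = O_f(1)$'': any $j = o(n)$ satisfies $j/n \to 0$, so no bounded-degree junta structure falls out. Third, even if one had per-coordinate low-degree concentration, the strongest available conclusion is that each $\phi_k$ is $\epsilon$-\emph{close} to a junta, not equal to one, and the small error terms cannot be swept away by integrality: for the indicator code every $\phi_k(x) = 1 - 2\cdot\ind{x=k}$ depends on all $n$ input coordinates and is only $o(1)$-close in $L^2$ to the constant $1$ --- and it is precisely this $o(1)$ error, summed over $m = 2^n$ coordinates, that produces the nonzero value $f(d) = 2$. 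So the asserted ``direct computation with junta-based distances'' does not correctly model the codes we need to rule out, and the argument as written would actually disprove the (perfectly valid) indicator code. Separately, the remark that the bounded-$m$ case follows from ``finitely many Boolean codes of bounded length'' is incorrect (there are $(2^m)^{2^n}$ functions $\zo^n \to \zo^m$); that case should instead be dispatched by a sphere-packing count showing some fiber of $E$ must contain pairs at each distance in $\{2,4,6\}$, forcing $f \equiv 0$, which is trivially affine.
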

In fact, this theorem can be generalized to state that if $f \colon \{2, 4, 6,\dotsc, 2t\} \to \bN$ is realizable, then $f(2), f(4), \dotsc, f(2t)$ must be an arithmetic progression. To better understand \cref{thm:structure}, consider the following examples:
\begin{enumerate}[label=(\roman*)]
\item \label{it:rep}
The \textit{repetition code} $\{0,1\}^n\to\{0,1\}^{2n}$ mapping $x\mapsto x \concat x$ (concatenation) is an $f$-code for $f(d)\coloneqq 2d$, which is affine.
\item \label{it:ind}
The \textit{indicator code} $\{0,1\}^n\to\{0,1\}^{2^n}$ (used in \cref{ex:q-ary}) is an $f$-code for $f(0)\coloneqq 0$ and $f(d)\coloneqq 2$ for $d\geq 1$. This is \emph{not} affine, although it becomes so when restricted to $d\geq 1$.
\item \label{it:merged-ind}
A \textit{merged-indicator code} $\zo^n \to \zo^m$ is obtained by partitioning $\zo^n$ into sets $S_1, S_2, \dotsc, S_m$
such that every pair $x,y \in \zo^n$ belonging to the same part $S_i$ must have $\dist(x,y) > 6$. Encode $E(x)$ as the string
that is all 0 except having bit $i$ set to 1, where $i$ is the index of $S_i$ containing $x$. This code has $\dist(E(x),E(y)) = 2$
when $0 < \dist(x,y) \leq 6$, but otherwise $\dist(E(x), E(y))$ could be 0 or 2, so it may not
have ``uniform'' behavior on larger distances, but it is an $f$-code for $f(d) \define 2$
on domain $d \in [6]$ and affine on that domain.
\item The \textit{parity code} $\{0,1\}^n\to\{0,1\}$ (used in \cref{prot:h22}) mapping $x\mapsto \sum_i x_i \bmod 2$ is an $f$-code for $f(d)\coloneqq d\bmod 2$. This is \emph{not} affine, although it becomes so when restricted to even $d$.
\item The \textit{product code}\footnote{This example uses the slice $\binom{[2n]}{n}$
as the domain of $E$ instead of $\zo^n$; for our purposes it suffices to consider the slice.
In this example we could reduce to the slice by taking $x \mapsto x \concat \overline x$ where
$\overline x = x \oplus \vec 1$.} $\binom{[2n]}{n} \to\{0,1\}^{2n\times 2n}$
(where $\binom{[2n]}{n}$ denotes the set of $2n$-bit strings with Hamming weight
$n$) mapping $x\mapsto x\otimes x$ where $(x \otimes x)_{i,j} = x_i \wedge x_j$
(outer product) is an $f_n$-code for $f_n(d)\coloneqq 2dn-d^2/2$ when $d$ is
even. This is \emph{not} affine, but since $f_n$ depends on~$n$, it does not
yield an infinite family of $f$-codes for a fixed function $f$, and does not fall in the purview
of \cref{thm:structure}.
\label{it:product-code}
\end{enumerate}
By using $\alpha$ repetitions followed by $\beta$ instances of the indicator or merged-indicator
codes, one may realize $f$-codes for any given~$f \colon [t] \to \bN$ of the form $f(d) = \alpha d +
2\beta$.  \cref{thm:structure} states that this is essentially the only form that $f$ can take in an
infinite $f$-code family.

\paragraph*{Relation to \textsc{$\bm{\{4,4\}}$-Hamming Distance}.}
To relate this question back to the \textsc{$\{4,4\}$-Hamming Distance} problem,
notice that, by an adaptation of \cref{prot:h22} for $\HTT$, the $f$-code structure theorem
is \textit{necessary} for \cref{thm:intro-main}. Suppose that the structure theorem
was false, so that there existed an~$f$-code family for $f \colon \{2,4,6\} \to \bN$ where $f(4) \neq
\tfrac{1}{2}(f(2) + f(6))$. Then \cref{prot:h44} describes how to use that code
to construct a $\HD_k$-oracle protocol for $\HFF$.

Therefore any proof of the separation of $\HFF$ from the hierarchy of \textsc{$k$-Hamming Distance}
problems must involve a proof of \cref{thm:structure}. It is much more challenging to show that
\cref{thm:structure} is \emph{sufficient} to prove our main \cref{thm:intro-main}. We will show that
a reduction from $\HFF$ to some $\HD_k$ would allow us to extract impossible $f$-codes; see
\cref{section:intro-proof-overview} for a proof overview.

\begin{protocol}
\begin{mdframed}
{\bf\slshape
Oracle-protocol for $\HFF$ on input $(x,y)$} --- assuming \cref{thm:structure} is false!
\vspace{0.3em}
\begin{enumerate}[leftmargin=1.3em,itemsep=0em]
\item The players verify that there are precisely two unequal rows, as in Step 1 of \cref{prot:rand-h44}.

\item The players verify that $\dist(x,y)=8$ using a $\HD_8$ oracle.
The players now know the multiset of distances of the two unequal rows is one of
\vspace{-0.6em}
\[
\{1,7\},\quad \{2,6\},\quad \{3,5\},\quad \{4,4\}.
\]
\vspace{-1.6em}

\item The players verify that $\dist(x_i,y_i)$ is even for every row $i\in[n]$,
using the parity code as in Step 3 of \cref{prot:h22}. The multiset of distances
must now be one of
\vspace{-0.6em}
\[
\{2,6\},\quad \{4,4\}.
\]
\vspace{-1.6em}

\item It remains to distinguish the above two cases. Let $E \colon \zo^n \to \zo^{m(n)}$ be an $f$-code with $f(4) \neq \tfrac{1}{2}(f(2) + f(6))$,
and let $i \neq j$ be the two rows where $x_i \neq y_i$ and $x_j \neq y_j$. Then the concatenations of these codes for each row
satisfy
\begin{align*}
    \dist(E(x_1) \concat \dotsm \concat E(x_n), E(y_1) \concat \dotsm \concat E(y_n)
    &= \dist(E(x_i) \concat E(x_j), E(y_i) \concat E(y_j)) \\
    &= f(\dist(x_i,y_i)) + f(\dist(x_j,y_j)) \,.
\end{align*}
If these distances are $\{4,4\}$, then $f(\dist(x_i,y_i)) + f(\dist(x_j,y_j)) = 2\cdot f(4) \neq f(2) + f(6)$.
So we can distinguish between the two cases using a query to $\HD_{2f(4)}$.
\end{enumerate}
\end{mdframed}
\caption{The structure theorem for $f$-codes is \textit{necessary} for \cref{thm:intro-main}: If the structure theorem fails (meaning there exists an infinite family of non-affine~$f$-codes), then we are able to design a deterministic $\HD_k$-oracle protocol for $\HFF$.}
\label{prot:h44}
\end{protocol}

\subsection{Hierarchies and a General Class of Constant-Cost Problems}

Let us briefly review the new state of knowledge about constant-cost
communication protocols, and introduce a new class of communication problems
which generalizes \textsc{$k$-Hamming Distance}, \textsc{$\{k,k\}$-Hamming
Distance}, and in fact all of the constant-cost problems prior to this work\iftoggle{anonymous}{}{\footnote{See the discussion in \cref{section:discussion}}}.

\paragraph*{Hierarchies.}

Prior work \cite{HHH22dimfree,HWZ22} showed that \textsc{$1$-Hamming Distance} cannot be computed
by $\cO(1)$ $\Equality$ queries, and \cite{FHHH24} showed more generally that
there is no single complete problem, and that the
\textsc{$k$-Hamming Distance} problems form a \emph{hierarchy}: there are infinitely
many constants $k$ such that $\HD_{k+1}$ cannot be computed by $\cO(1)$ $\HD_k$ queries.
Applying a theorem of \cite{FHHH24} (see \cref{section:kk-hierarchy}), we see that the
\textsc{$\{k,k\}$-Hamming Distance} problems also form a hierarchy. 
\cref{thm:intro-main} shows
that this new hierarchy is \emph{separate} from the \textsc{$k$-Hamming Distance} hierarchy.

\begin{proposition}
    For infinitely many $k$, \textsc{$\{k,k\}$-Hamming Distance} does not admit a constant-cost deterministic oracle-protocol with query access to a \textsc{$\{k-1,k-1\}$-Hamming
    Distance} oracle.
\end{proposition}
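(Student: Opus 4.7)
The plan is to transfer the $\HD_k$-hierarchy to the $\{k,k\}$-Hamming Distance family via a single-query upward reduction, and then invoke the ``no complete problem'' theorem of \cite{FHHH24}. We use its quantitative form: for every constant-cost $\cP$ and every threshold $K$, there exists $k > K$ with $\HD_k$ not reducing to $\cP$ with $O(1)$ queries (this follows from the original statement by applying it to the parallel composition of $\cP$ with $\HD_1,\ldots,\HD_K$, which is itself constant-cost). The upward reduction is as follows: given $(x,y) \in \{0,1\}^n \times \{0,1\}^n$, the players locally form the two-row matrices $x', y'$ with both rows of $x'$ equal to $x$ and both rows of $y'$ equal to $y$. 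The rows of $(x',y')$ are then unequal in either zero or exactly two positions, and in the latter case both unequal rows have Hamming distance $\dist(x,y)$. Hence $\{k,k\}$-Hamming Distance evaluates to $1$ on $(x',y')$ iff $\HD_k(x,y)=1$. By composition, whenever $\{k',k'\}$-Hamming Distance reduces to $\{k,k\}$-Hamming Distance with $O(1)$ queries, so does $\HD_{k'}$. Applying anti-completeness with $\cP = \{k,k\}$-Hamming Distance and threshold $k$, we obtain some $k' > k$ for which $\HD_{k'}$ does not reduce to $\cP$; contrapositively, $\{k',k'\}$-Hamming Distance does not reduce to $\{k,k\}$-Hamming Distance.

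Now define
\[
k^*(k) \coloneqq \min\bigl\{k' > k : \{k',k'\}\text{-Hamming Distance does not reduce to }\{k,k\}\text{-Hamming Distance with }O(1)\text{ queries}\bigr\}.
\]
By the previous step this minimum is well-defined and satisfies $k^*(k) > k$. By minimality, $\{k^*(k)-1,\,k^*(k)-1\}$-Hamming Distance \emph{does} reduce to $\{k,k\}$-Hamming Distance with $O(1)$ queries (the case $k^*(k) = k+1$ being trivial). If $\{k^*(k),\,k^*(k)\}$-Hamming Distance reduced to $\{k^*(k)-1,\,k^*(k)-1\}$-Hamming Distance with $O(1)$ queries, composing the two reductions would yield a reduction of $\{k^*(k),\,k^*(k)\}$-Hamming Distance to $\{k,k\}$-Hamming Distance, contradicting the definition of $k^*(k)$. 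Hence $\{k^*(k),\,k^*(k)\}$-Hamming Distance does not reduce to $\{k^*(k)-1,\,k^*(k)-1\}$-Hamming Distance.

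Since $k^*(k) > k$ for every $k \in \bN$, the values $\{k^*(k) : k \in \bN\}$ form an unbounded set of integers $k_0$ at which $\{k_0,k_0\}$-Hamming Distance fails to reduce to $\{k_0-1,\,k_0-1\}$-Hamming Distance, giving the desired infinitely many witnesses. I do not anticipate any substantial obstacle: the argument is essentially bookkeeping built directly on top of the anti-completeness theorem of \cite{FHHH24}, with the only ``new'' ingredient being the trivial single-query reduction from $\HD_k$ to $\{k,k\}$-Hamming Distance by row duplication.
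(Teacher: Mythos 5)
Your proof is correct and uses the same two ingredients as the paper's own argument: the anti-completeness theorem of \cite{FHHH24} instantiated with $\cQ = \HD_{k,k}$, together with the one-query row-duplication reduction from $\HD_{k'}$ to $\HD_{k',k'}$. The paper's corollary stops at ``for every $k$, some $k' > k$ has $\HD_{k',k'}$ not reducing to $\HD_{k,k}$'' and leaves the passage to the ``infinitely many adjacent $k$ versus $k{-}1$'' form implicit, whereas your minimality-plus-composition bookkeeping (and your quantified restatement of \cite{FHHH24} guaranteeing $k'>k$) makes that final step explicit.
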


\paragraph*{A general class of problems.}

Having shown that \textsc{$k$-Hamming Distance} does not capture the class of
constant-cost problems, we think it is useful now to present the
\textsc{$k$-Hamming Distance} and \textsc{$\{k,k\}$-Hamming Distance} problems
as special cases of a general class of constant-cost problems, described
informally as follows. Suppose Alice and Bob each have $n$ inputs, $x_1, \dotsc,
x_n$ and $y_1, \dotsc, y_n$ respectively, to an arbitrary number $n$ of
independent instances $P_1, P_2, \dotsc, P_n$ of a communication problem $\cP$.
We assume each of these instances is \emph{symmetric}, meaning that
$P_i(x_i,y_i) = P_i(y_i,x_i)$.

They also have some number $r$ and a \emph{permutation-invariant} function
$g$, meaning that $g(u) = g(v)$ whenever $u,v$ are permutations of each other.
Now Alice and Bob wish to accomplish:
\begin{enumerate}
    \item If the number of inputs $(x_i,y_i)$ such that $x_i \neq y_i$ is larger than $r$, output 0;
    \item Otherwise, compute the function $g(P_{i_1}(x_{i_1},y_{i_1}), \dotsc,
    P_{i_\ell}(x_{i_\ell}, y_{i_\ell}))$ applied to the answers of $P_{i_j}$,
    but \emph{only} on the $\ell \leq r$ instances $i_j$ where their inputs are
    unequal, $x_{i_j} \neq y_{i_j}$.
\end{enumerate}
We call these problems ``distance-$r$ compositions'' since they combine function
composition with \textsc{$r$-Hamming Distance}.

\begin{example}
\label{ex:dist-r-hd}
If we take each $P_i$ to be the matrix $\left[
\begin{smallmatrix} 0 & 1 \\ 1 & 0 \end{smallmatrix}\right]$, and let $g(z)
\define 1$ iff $z$ has Hamming weight $|z|=r$, then we obtain the \textsc{$r$-Hamming Distance}
problem.
\end{example}

\begin{example}
\label{ex:dist-r-hff}
If we take each $P_i$ to be the \textsc{$k$-Hamming
Distance} matrix, let $r=2$, and let $g(z) \define 1$ iff $|z|=2$, then we
obtain the \textsc{$\{k,k\}$-Hamming Distance} problem.
\end{example}

\begin{example}
\label{ex:dist-r-planar-adj}
If we take each $P_i$ to be the adjacency matrix of (say) a planar
graph $G$, let $r=1$, and let $g(z) \define 1$ be the constant 1 function, then
we obtain the problem of computing adjacency in the $n$-wise Cartesian product
of $G$ (shown to have constant cost in \cite{HWZ22}).
\end{example}

\begin{example}
\label{ex:dist-r-planar-dist}
If we take each $P_i$ to be the threshold-path-distance function
$\max(\dist(x_i,y_i), k)$ on vertices $x_i,y_i$ in a planar graph $G_i$, let
$r=k$, and let $g \colon \bN^* \to \bN$ be defined as $g(z) \define \max(\sum_i z_i,
k)$, then we obtain the problem of computing \textsc{Threshold-Path-Distance},
\ie $\max(\dist(x,y), k)$, in the $n$-wise Cartesian product graph $G = G_1 \times \dotsm \times G_n$ (shown to have constant cost in
\cite{HWZ22}).
\end{example}

\cref{ex:planar,ex:dist-r-hd,ex:dist-r-planar-adj,ex:dist-r-planar-dist} are
representative of the constant-cost problems that have been explicitly studied
in prior work; all other problems are either similar to
\cref{ex:planar,ex:dist-r-planar-adj,ex:dist-r-planar-dist} (being defined on more general
classes of graphs than planar graphs, see \cite{HWZ22,EHK22}), or they are
obtained via oracle-protocols to such problems; see
\cref{section:reductions-to-khd} for a more detailed survey.

Surprisingly, two players can compute $g$ applied to $r$ unequal instances of
$P_i$, even though they do not have enough communication to determine
\emph{which} instances those are: in particular, we show in
\cref{sec:composed-functions} that distance-$r$ compositions preserve
constant-cost communication. Write $\R_{\delta}(\,\cdot\,)$ for the public-coin
randomized two-way communication cost with error probability $\delta$. Then:

\begin{theorem}
\label{thm:intro-dist-r}
If $P_1, \dotsc, P_n$ are any symmetric communication problems, then any
distance-$r$ composition of them has a randomized communication cost $\cO( r \log
r + r \cdot \max_i \R_{1/r}(P_i))$. In particular, if the $P_i$ are
constant-cost problems then their distance-$r$ composition has cost $\cO(r \log
r)$.
\end{theorem}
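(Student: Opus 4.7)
My approach decomposes the distance-$r$ composition into three phases: a sparsity test, randomized isolation of the $\leq r$ unequal coordinates, and per-bucket invocations of the underlying protocols $P_i$, followed by a local aggregation via $g$. The first two phases will contribute the $O(r \log r)$ term, while the third will contribute the $O(r R)$ term, where $R = \max_i \R_{1/r}(P_i)$.

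For the sparsity test, I treat each $x_i$ as an opaque symbol in a (possibly enormous) alphabet and invoke the large-alphabet $r$-Hamming distance protocol of \cref{ex:q-ary} on $(x_1, \ldots, x_n)$ and $(y_1, \ldots, y_n)$ to verify that $|\{i : x_i \neq y_i\}| \leq r$ at cost $O(r \log r)$; if this fails, output $0$. For isolation, I use public randomness to sample a hash $h : [n] \to [B]$ with $B = \Theta(r^2)$; a standard birthday-paradox argument shows that with constant probability the $\leq r$ differing positions fall into distinct buckets, and this probability is amplifiable by $O(1)$ independent repetitions.

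In each of the $\leq r$ nonempty buckets there is then a unique $i^*$ with $x_{i^*} \neq y_{i^*}$, and the players run $P_{i^*}$'s randomized protocol at error $\sim 1/r$ on the bucket's content, so that a union bound over the $\leq r$ invocations controls the overall error; summed, this step contributes $O(r R)$ bits. Since $g$ is permutation-invariant and both players see the transcript, they locally apply $g$ to the resulting multiset of per-bucket outputs to produce the final answer.

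The crux --- and the main technical obstacle --- is the per-bucket invocation: the players must compute $P_{i^*}(x_{i^*}, y_{i^*})$ without first communicating the identity of $i^*$ inside its bucket. When $P_i(x,y)$ depends only on $x \oplus y$ (\eg $\HD_k$), this is immediate by XOR-combining the bucket contents (Alice XORs her rows in the bucket, Bob his, and the unequal positions cancel out), mirroring Step 3 of \cref{prot:rand-h44}. For general symmetric $P_i$ I expect the technical heart to be a base lemma for $r = 1$ (the distance-$1$ composition) showing that a symmetric $P_i$ can be evaluated on its unique differing coordinate with only $O(R)$ overhead, via an appropriate lifting of the bucket-wide input into a single $P_i$ invocation. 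Once this base lemma is in hand, combining it with the sparsity and hashing phases yields the stated bound.
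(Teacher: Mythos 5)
Your high-level plan (sparsity test via large-alphabet Hamming distance, then $O(r^2)$-bucket hashing to isolate, then a per-bucket ``distance-$1$'' subroutine, then local aggregation by the permutation-invariant $g$) matches the paper's structure exactly, and you correctly diagnose that the crux is the base case $r=1$: computing $P_{i^*}(x_{i^*},y_{i^*})$ on the unique unequal coordinate $i^*$ in a bucket without communicating $i^*$. This is precisely \cref{lemma:dist-1-promise}. But your proposal leaves that lemma unproven, and the direction you point at---``an appropriate lifting of the bucket-wide input into a single $P_i$ invocation''---does not actually work for general symmetric $P_i$. Unlike $\HD_k$, where the matrix depends only on $x\oplus y$ so XOR-collapsing the bucket is sound, an arbitrary symmetric matrix $P_i$ has no such homomorphic structure: there is no single ``combined'' input you can feed into one invocation of $P_i$ that recovers $P_{i^*}(x_{i^*},y_{i^*})$.

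What the paper does instead is simulate the protocol tree for $P_{i^*}$ round by round, without ever learning $i^*$. It introduces $k$-Sidon encodings (\cref{prop:k-sidon}): the players XOR Sidon-encoded values over all coordinates in the bucket, so that the contributions of the equal coordinates cancel and only the pair of values for coordinate $i^*$ survives, recoverable as an \emph{unordered} pair. This is used three times: to detect the pair of hash values of the differing coordinate, to recover each round's transcribed bit, and to recover the final leaf label. A further subtlety---which your sketch does not address---is that after Sidon decoding the players obtain an unordered pair and hence do not know who ``is'' Alice vs.\ Bob in the simulation of $P_{i^*}$; the paper resolves this with a random per-input role assignment $p_j(z)\in\{\sfA,\sfB\}^{O(\log(1/\delta))}$ and crucially uses symmetry $P_i(x,y)=P_i(y,x)$ so that the eventual output is well-defined regardless of the role assignment. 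So the gap is concrete: your outline identifies the right lemma, but the lemma is where essentially all the work is, and the mechanism you gesture at (a single lifted $P_i$ call) fails; the actual argument is the parallel-transcript-simulation with Sidon XOR encodings and randomized role assignment.
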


This is optimal in terms of $r$, due to the $\Omega(r \log r)$ lower bound on
\textsc{$r$-Hamming Distance} \cite{Sag18}.

\begin{remark}
\cref{ex:dist-r-planar-adj}, or the more complicated
\cref{ex:dist-r-planar-dist}, may also appear as if
they could be separated from the \textsc{$k$-Hamming Distance} hierarchy. But
this is not the case: we show in \cref{section:reductions-to-khd} that there is a (non-trivial)
$\HD_k$-oracle protocol for computing problems of this form. This emphasizes our
choice of the $\HFF$ problem.
\end{remark}

\subsection{Constant-Cost Communication: the Story so Far and Farther}
\label{section:discussion}

The focused study of constant-cost communication was initiated in \cite{HHH22dimfree,HWZ22}, see
also the recent survey \cite{HH24}.  Constant-cost communication is interesting not only as an
extreme case of the power of randomness in computation, but also for its many connections to other
areas: see \cite{HHH22dimfree} for connections to algebra and operator theory;
\cite{HWZ22,EHK22,EHZ23,HZ24} for connections to structural graph theory and distributed data
structures; \cite{LS09,HHPTZ22,HZ24} (and \cref{app:gap-ham}) for connections to learning theory;
and constant-cost communication problems are also equivalent to the hypothesis classes learnable
under pure differential privacy \cite{FX14}.

We write $\BPPZ$ for the class of constant-cost randomized communication
problems.  The distance-$r$ composed functions defined in this paper (and oracle
reductions to these) capture all known ways to obtain problems in $\BPPZ$
until the preparation of this manuscript \iftoggle{anonymous}{}{(but see \cref{remark:ben-problem})}.
These problems use randomness to partition coordinates and solve subproblems
using hashing. It would be interesting to find problems in $\BPPZ$ with
substantially different flavor. We can ask for a problem in $\BPPZ$ that does
not reduce to any distance-$r$ composition of \textsc{$k$-Hamming Distance},
though it is not clear that such a problem would have ``substantially different
flavor''\!. 

\paragraph{One-sided error, and large rectangles.}
All known problems in $\BPPZ$ can be solved with $\cO(1)$ queries to constant-cost
problems with one-sided error (\ie the class $\mathsf{RP}^0$). Is this always
true? (See also \cite{HH24}). Also, all known $N \times N$ matrices in $\BPPZ$
have monochromatic rectangles of size $\Omega(N) \times \Omega(N)$, and
\cite{CLV19,HHH22dimfree} conjecture that this is always true (it is true for
problems solved by $\cO(1)$ queries to $\mathsf{RP}^0$). We mentioned above that
finding a complete problem (or hierarchy) could have answered many questions
about $\BPPZ$ (depending on the properties of the complete problem itself);
these two open problems are examples of that.

\newcommand{\UPPZ}{\mathsf{UPP}^0}
\paragraph{Sign-rank.}
An interesting place to look for new examples is
the class $\UPPZ$. These are the problems with constant-cost
\emph{unbounded-error} private-coin randomized protocols, or equivalently the matrices of
bounded \emph{sign-rank} (which can be represented as point--halfspace
incidences in a constant dimensional Euclidean space) \cite{PS86}. The
relationship between $\UPPZ$ and $\BPPZ$ is the subject of several open
problems. \cite{HZ24} ask if $\UPPZ \cap \BPPZ$ is exactly the class of problems
which reduce to $\textsc{Equality}$, which would rule out any new examples in
this class. This would imply $\BPPZ \not\subset \UPPZ$, and in particular the
conjecture of \cite{HHPTZ22} that $\HD_1 \in \BPPZ\setminus\UPPZ$. It would also
imply the conjecture of \cite{CHHS23} that the \textsc{Integer Inner Product}
functions (which belong to $\UPPZ$ and form a hierarchy in $\mathsf{BPP}$
\cite{CLV19}) do not belong to $\BPPZ$. These conjectures suggest a negative
answer to question of \cite{HZ24} would require a problem with ``substantially
different flavor''\!.

\pagebreak[1]

\section{Proof Overview}
\label{section:intro-proof-overview}

Step 1 is to prove the structure theorem for $f$-codes (\cref{thm:structure}), which rules out
non-affine $f$-codes. Step 2 is to show why a constant-cost oracle protocol for $\HFF$ would allow
us to extract a non-affine $f$-code, contradicting the structure theorem.

\paragraph*{Notation.}
For any set $T$ and number $\ell$, we write ${ T \choose \ell }$ (resp.\ ${ T \choose \leq \ell }$)
for the set of subsets $S \subseteq T$ of size $|S| = \ell$ (resp.\ $|S|\leq \ell$). We identify
strings in $x \in \zo^n$ with the subset $\{ i \in [n] : x_i = 1 \}$. This way, ${[n] \choose \ell}$
is the set of all~$x\in \{0,1\}^n$ of Hamming weight $\ell$. 

\subsection{Step 1: Structure Theorem for $f$-Codes}
\label{sec:step1}

Our goal is to prove \cref{thm:structure}: every realizable $f\colon\{2,4,6\}\to\N$ is affine. Given
an $f$-code $E\colon\{0,1\}^n\to\{0,1\}^m$ it is helpful to study its behavior locally inside a
small Hamming ball, instead of worrying about the whole exponential-size domain $\{0,1\}^n$. We
restrict $E$ to strings of some large constant weight $k$ (to be chosen later), and view $E$ as a
\emph{sparse} code
\[
E\colon\binom{[n]}{k}\to\binom{[m]}{\leq r}.
\]
Since any two input strings $x,y$ have $\dist(x,y)\leq 2k$, it follows from the triangle inequality that $\dist(E(x),E(y))\leq k\cdot f(2)$. By shifting the output if necessary (defining $E'(x)=E(x)\oplus z$ for some fixed $z$), we may assume that the output weight is a constant, too:
\begin{equation} \label{eq:choice-r}
r\leq k\cdot f(2).
\end{equation}

\begin{example}[Sparse codes] \label{ex:sparse}
Let us illustrate sparse codes through examples. Suppose that~$E$ has a particularly simple form: each output bit $E(x)_j$, $j\in[m]$, is a monotone term (conjunction) of some set of input variables~$x=(x_1,\ldots,x_n)$. That is, $E(x)_j= \prod_{i\in S_j} x_i$ for some $S_j\subseteq[n]$; we call $|S_j|$ the \emph{degree} of the term. Which example codes from \cref{section:intro-fcodes} (restricted to inputs of constant weight $k$) can be built this way?
\begin{itemize}[label=$-$,noitemsep]
\item The repetition code \ref{it:rep} has two output bits for each degree-1 term.
\item The indicator code \ref{it:ind} has one output bit for each degree-$k$ term.
\item The product code \ref{it:product-code} has one output bit for each (ordered) degree-$2$ term.
\end{itemize}
We know that degree-1 \ref{it:rep} and degree-$k$ \ref{it:ind} examples are valid (affine)
$f$-codes, whereas the {degree-2} encoding~\ref{it:product-code} is not: it is impossible for
\ref{it:product-code} to be an $f$-code because it maps an input $x$ of weight~$k$ to an output
$E(x)$ of weight $k^2$. But we have $k^2\leq r\leq k\cdot f(2)$ from the ``triangle inequality bound''~\eqref{eq:choice-r}. This is a contradiction when $k$ is chosen as a large enough constant, $k\gg f(2)$.

Another extreme example is the ``homogeneous'' degree-$(k-1)$ encoding $E$ that has $m\coloneqq\binom{n}{k-1}$ output bits, one for each degree-$(k-1)$ term. If we consider inputs $x\coloneqq [k]$ and $y\coloneqq[k-1]\cup \{k+1\}$, we have $\dist(x,y)=2$ and hence $\dist(E(x),E(y))=f(2)$ from the definition of an $f$-code. On the other hand, we have $\dist(E(x),E(y))=2k-2$ from the definition of $E$. This is again a contradiction if we choose $k\gg f(2)$, and hence this does not yield a valid $f$-code.\qed
\end{example}

The above examples suggest that $1$ and $k$ are the only possible degrees in a sparse $f$-code built homogeneously using monotone terms. Our main idea is to formalise this phenomenon more generally. We show that---after some cleanup operations---any sparse $f$-code satisfies two properties:
\begin{enumerate}[label=(\arabic*)]
\item \label{it:mon-term}
Every sparse $f$-code can be thought of as being built from monotone terms.
\item \label{it:homog}
Every sparse $f$-code is \emph{homogeneous}: if some degree-$\ell$ term appears as an output bit (with some multiplicity), then every other degree-$\ell$ terms appears, too (with the same multiplicity).
\end{enumerate}
These properties imply that the only possible sparse $f$-codes are essentially those discussed in \cref{ex:sparse}. We generalize the impossibility arguments for degree-2 and degree-$(k-1)$ examples there and conclude that, indeed, the only possible term degrees are $1$ and $k$, and such codes are affine as desired. We next discuss some of the main ideas towards proving \cref{it:mon-term,it:homog}.

We start by extending the definition of $E$ to the whole Hamming ball of radius $k$ by
\[
{\textstyle
\forall y\in\binom{[n]}{\leq k}\colon}
\qquad
E(y) \define \bigcap_{x \supseteq y,\; x \in { [n] \choose k}} E(x).
\]
The resulting function, still denoted by $E\colon\binom{[n]}{\leq k}\to\binom{[m]}{\leq r}$, is monotone: $E(y)\subseteq E(y')$ for~$y\subseteq y'$. Each output bit can be thought of as a monotone DNF (disjunction of monotone terms), and moreover, $E(y)$ is the set of output bits whose DNF contains the term $\prod_{i\in y} x_i$. Finally, we let
\[
\Delta(y)
\define E(y) \setminus \Big( \bigcup_{z \subsetneq y} E(z) \Big).
\]
In words, $\Delta(y)$ is the set of output bits whose DNF contains $\prod_{i\in y} x_i$ and does not contain any of its subterms (that is, $\prod_{i\in z} x_i$ for some $z\subsetneq y$).
We can now rephrase \cref{it:mon-term,it:homog} as:
\begin{enumerate}[label=(\arabic*),noitemsep]
\item
$\Delta(y)\cap\Delta(y')=\emptyset$ for every $y\neq y'$.
\item $|\Delta(y)|$ depends only on $|y|$.
\end{enumerate}
To establish these properties, the main technical crux of the proof is to show that \emph{$E$ maps sunflowers to sunflowers.} A
sunflower $\cF$ is a collection of sets $S_1, \dotsc, S_t$ with a kernel $K$
such that $S_i \cap S_j = K$ for all distinct $S_i, S_j$. We show that if $\cF$
is a sunflower of $\ell$-sets in the domain then their image $E(\cF)$ is
also a sunflower. See \cref{section:single-player-codes} for the full proof.

\paragraph{Extension to two-player codes.}
In order to prove the separation of $\HFF$ from $\HD_k$, we actually need a structure theorem for a slightly more general class of \emph{two-player} $f$-codes: when Alice and Bob query the $\HD_k$ oracle, they need not apply the \emph{same} code $E$ to their inputs---Alice applies some $E_1$ and Bob applies some other $E_2$. Formally, a \emph{two-player $f$-code} is a pair of encodings $E_1, E_2 \colon \zo^n \to \zo^m$
such that
\[
\dist(E_1(x),E_2(y))=f(\dist(x,y))
\qquad \forall x,y\enspace\enspace \text{s.t.}\enspace \dist(x,y)\in\dom(f).
\]
\emph{Single-player} $f$-codes are now the special case $E_1=E_2$ where $f(0) = 0$. 

We again show that the only families of two-player $f$-codes are affine. The difficulty in the two-player case is that the encodings $E_1$ and $E_2$
have no structural guarantees \emph{a priori}. For example,~$E_1$ may not itself be a single-player $g$-code for any nontrivial $g$. To get around this, we
use an invariance lemma (see \cref{sec:step2} below) to show that $E_1$ can be transformed into a $g$-code for some function~$g$, which is related to~$f$ by $g(2) \leq f(0) + f(2)$ (so that it is also constant). We may then sparsify the code as in the single-player case and reuse the same decomposition given by the~$\Delta(\,\cdot\,)$ functions. See \cref{section:two-player-codes} for the full proof.

\subsection{Step 2: Extracting $f$-Codes from Oracle Protocols}
\label{sec:step2}

Next, we prove our main \cref{thm:intro-main} by showing that a constant-cost $\HD_k$-oracle
protocol for $\HFF$ would imply the existence of an (impossible) non-affine two-player $f$-code.
Our main tools for this proof are \emph{invariance lemmas}, which, informally, state the following:

\begin{quote}
\noindent
\textbf{Invariance lemma (informal):} Suppose $P(x,y)$ is invariant under a group of
permutations $\cG$ acting on the inputs, so that $P(\pi x, \pi y) = P(x,y)$ for all $\pi \in \cG$. Suppose also that
$P$ can be computed (\eg with an encoding or an oracle communication protocol) as
\[
P(x,y) = f(M(x,y))
\]
where $M$ is a ``stable matrix''\!, meaning that it does not contain large instances of the
\textsc{Greater-Than} communication problem. Then we may assume $M$ is \emph{also} invariant under
permutations in $\cG$. In other words, not only is the \emph{value} $P(x,y)$ invariant, but so is the
``computation'' of $P(x,y)$.
\end{quote}
We use such lemmas frequently -- some lightweight versions are already used inside the proofs
of the structure theorems for $f$-codes. To prove these invariance lemmas, we use Ramsey-theoretic
arguments generalizing the methods of \cite{FHHH24} (who showed $\BPPZ$ admits no
complete problem). We defer the proofs to \cref{section:invariance}, dedicated to the invariance
lemmas used throughout the paper. This method of applying Ramsey theory for communication complexity
appears new to the current paper and \cite{FHHH24}; see \cite{GasRamseyWeb} for a survey of
applications of Ramsey theory in computer science.

\paragraph{Using invariance.} We formally define oracle reductions in \cref{sec:prelim}, but for now
we use a convenient alternate form of reductions (used also in prior works
\eg~\cite{HWZ22,HZ24,FHHH24}). If we assume $\HFF$ can be reduced to $\HD_k$,
then for every $n$ we can write the matrix $\HFF^n : \zo^n \times \zo^n \to \zo$ as
\[
  \HFF^n = \rho(Q_1, Q_2, \dotsc, Q_q)
\]
where $q$ is a constant, each $Q_i$ is an $\HD_k$ query matrix (\ie submatrix of $\HD_k$),
and $\rho$ is a ``reduction function'' that takes the answers to each query and produces
the answer to $\HFF$.

To apply an invariance lemma, we define the \emph{distance signature}. Let $x,y \in \zo^{2n \times
2n}$ be inputs to $\HFF$, which we think of as $2n \times 2n$ Boolean matrices. We restrict
ourselves to the case where each $x_i, y_i \in \binom{[2n]}{n}$ is in the Hamming slice of weight
$n$. We then define $\sig(x,y) \define \{ \dist(x_i, y_i) \;|\; x_i \neq y_i \}$ as the (unordered)
multiset of non-zero distances between rows. Restricted to the slice,
\[
  \HFF(x,y) = 1 \iff \sig(x,y) = \{ 4, 4 \} \,.
\]
The intuition for the proof is that it is difficult for \textsc{$k$-Hamming Distance} queries to
tell the difference between signatures $\{4, 4\}$ and $\{2, 6\}$: after all, the \emph{total}
Hamming distance is the same in both cases, and it ought to be difficult to tell which row each
difference belongs to.

Changing perspective, we think of the pair $x,y$ comprising $2n$ blocks
of $2n$ bits as being arranged as a sequence
\[
\overbrace{%
\underbrace{\fixedovalbox{$\domino{(x_1)_1}{(y_1)_1} \domino{(x_1)_2}{(y_1)_2} \dotsm
\domino{(x_1)_{2n}}{(y_1)_{2n}}$}}_{2n}
\;\; \fixedovalbox{$\domino{(x_2)_1}{(y_2)_1} \domino{(x_2)_2}{(y_2)_2} \dotsm
\domino{(x_2)_{2n}}{(y_2)_{2n}}$}
\;\;\dotsm\;\;
\fixedovalbox{$\domino{(x_{2n})_1}{(y_{2n})_1} \domino{(x_{2n})_2}{(y_{2n})_2} \dotsm
\domino{(x_{2n})_{2n}}{(y_{2n})_{2n}}$}
}^{2n}
\]
where we think of larger blocks as ``outer dominoes'', each composed of smaller ``inner dominoes''.
Observe that the signatures are invariant under ``inner permutations'' of small dominoes within any
block, and ``outer permutations'' of large blocks. Assuming a reduction from $\HFF$ to $\HD_k$,
we apply an invariance lemma to (a suitable generalization of) the computation
\[
  \HFF = \rho(Q_1, Q_2, \dotsc, Q_d)
\]
(obtained from \cref{prop:reduction-function})
to show that each \emph{query} $Q_i$ to $\HD_k$ must \emph{also} be invariant under inner- and
outer-permutations, meaning that they also are determined by $\sig(x,y)$.

Since $\HFF(x,y)$ has different value depending on whether $\sig(x,y)$ is $\{4,4\}$ or $\{2,6\}$,
this means there must be a query $Q_i$ which itself distinguishes these signatures. In other words,
there is an encoding ${ [2n] \choose n }^{2n} \to \zo^m$ such that inputs with
signature $\{4,4\}$ and signature $\{2,2\}$ are encoded with different (constant) Hamming
distances. From this query $Q_i$ we show that we can extract an impossible two-player $f$-code
and complete the proof by contradiction.

\section{Single-Player $f$-Codes}
\label{section:single-player-codes}

In this section, we prove the structure theorem for single-player $f$-codes, restated below.%
\structurethm*

We consider a more general setting than that of \cref{def:f-code}
where the encoding
function~$E$ is a partial function, i.e. $E\colon \cX \rightarrow \{0,1\}^{m(n)}$ where $\cX\subseteq \{0,1\}^n$. We say that $E$ is an $f$-code if 
\[
\dist(E(x),E(y))=f(\dist(x,y))
\qquad \forall x,y\in \cX \enspace\enspace \text{s.t.}\enspace \dist(x,y)\in\dom(f).
\]

We start our proof with a cleanup step.
Recall the example of the \emph{merged-indicator code} from
\cref{section:intro-fcodes}: we can have functions $f \colon \{2,4,6\} \to \bN$ such
that $f$-codes $E \colon \zo^n \to \zo^m$ exist for every $n$, but where inputs $x,y
\in \zo^n$ with $\dist(x,y) > 6$ may have several different possible values of
$\dist(E(x), E(y))$. That is, the $f$-code may not behave uniformly on large
distances, meaning that $E$ may not be a $g$-code for any total function $g \colon \bN \to \bN$. Our first step is to eliminate these
non-uniformities by extending the code to the full domain.

\begin{definition}[Extended $f$-code]
\label{def:extended-fcode}
An encoding $E \colon \cX \to \zo^m$ is an \emph{extended $f$-code} if it is a~$g_n$-code for some \emph{total extension} $g_n \colon \bN \to \bN$ of $f$. We emphasize that, in an infinite family of extended $f$-codes, $g_n$ may depend on $n$ (although $g_n$ and $f$ must still agree on $\dom(f)$).
\end{definition}

We obtain this code extension using a ``permutation-invariance lemma'' whose
proof we postpone until \cref{section:invariance-single-player-extension},
inside \cref{section:invariance} dedicated to similar invariance lemmas
used throughout the paper.

\begin{restatable}{lemma}{lemmasingleplayercodeextension}
\label{lemma:single-player-code-ramsey}
    Suppose $f \colon \{2,4,6\} \to \bN$ is such that, for infinitely many $n$, there
    exists an $f$-code $E \colon { [2n] \choose n } \to \zo^{m(n)}$. Then for every $n$, there exists an
    extended $f$-code $E' \colon { [2n] \choose n}  \to \zo^{m'(n)}$.
\end{restatable}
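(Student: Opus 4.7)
The plan is to use a finite Ramsey argument to \emph{symmetrize} the hypothesized $f$-code: from the existence of $f$-codes on $\binom{[2N]}{N}$ for arbitrarily large $N$, I will extract a sub-code on $\binom{[2n]}{n}$ whose induced code-distance matrix is invariant under the $S_{2n}$-action on coordinates. Since the $S_{2n}$-orbits on pairs of slice elements are classified precisely by Hamming distance, such a symmetric code is automatically an extended $f$-code.

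Concretely, I would fix $n$, pick $N \gg n$ for which an $f$-code $E \colon \binom{[2N]}{N} \to \zo^{m(N)}$ exists, and embed the small slice into the large one as follows. For each ordered tuple $\bfc = (c_1, \ldots, c_{2n})$ of distinct elements of $[2N]$, let $A$ be a canonical $(N-n)$-subset of $[2N] \setminus \{c_1, \ldots, c_{2n}\}$, and set $\iota_{\bfc}(x) := A \cup \{c_i : i \in x\}$. Since $\dist(\iota_{\bfc}(x), \iota_{\bfc}(y)) = \dist(x,y)$, the composition $E \circ \iota_{\bfc}$ remains an $f$-code on $\binom{[2n]}{n}$. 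Writing $\chi_{\bfc}(x,y) := \dist(E \circ \iota_{\bfc}(x), E \circ \iota_{\bfc}(y))$, a triangle-inequality walk through $\dist = 2$ swaps bounds $\chi_{\bfc}(x,y) \leq n \cdot f(2)$, so the number of possible profiles $\chi_{\bfc}$ is at most $(n f(2) + 1)^{\binom{2n}{n}^2}$, a constant independent of $N$. I then color each (lex-ordered) $2n$-subset $C \subseteq [2N]$ by its profile and apply the finite Ramsey theorem for $2n$-uniform hypergraphs: for $N$ sufficiently large, there is a set $T \subseteq [2N]$ with $|T| \geq 2n$ all of whose $2n$-subsets carry the same color $\chi$. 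The candidate extended $f$-code is $E' := E \circ \iota_{\bfc}$ for any enumeration $\bfc$ of any $C \subseteq T$.

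The hard step---the technical heart of the proof, which I expect will be formalised in the invariance section---is showing that monochromaticity forces $\chi$ to be $S_{2n}$-invariant, i.e.\ $\chi(\sigma x, \sigma y) = \chi(x,y)$ for every $\sigma \in S_{2n}$. The idea is that every such $\sigma$ can be realised by reordering indices inside $T$, so the profiles $\chi_{\bfc}$ and $\chi_{\sigma \cdot \bfc}$ coincide via the monochromatic color, which, by the way the two profiles transform under reorderings, yields exactly the desired invariance. (In execution one likely colors \emph{ordered} tuples or iterates Ramsey, so that all reorderings live in the same monochromatic class; this is routine bookkeeping.) Once invariance is established, the fact that the $S_{2n}$-orbits on $\binom{[2n]}{n} \times \binom{[2n]}{n}$ are classified by $\dist(x,y)$ implies $\chi(x,y)$ depends only on $\dist(x,y)$, giving the desired $g_n$-code extending $f$.
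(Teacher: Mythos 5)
Your setup is close to the paper's (embed the small slice into a large slice, color subsets of $[2N]$ by the induced distance profile, Ramsey down to a monochromatic $T$), but the step you call ``routine bookkeeping'' is exactly where the proof cannot be completed by Ramsey alone, and that is where the paper does its real work.

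Let me make the gap precise. Your coloring is of \emph{unordered} $2n$-subsets $C\subseteq[2N]$ by $\chi_{\text{lex}(C)}$, so monochromaticity of $T$ gives you: for any two $2n$-subsets $C,C'\subseteq T$ and any \emph{fixed} $(x,y)$, $\chi_{\text{lex}(C)}(x,y)=\chi_{\text{lex}(C')}(x,y)$. What you need is the different statement $\chi_{\text{lex}(C)}(\sigma x,\sigma y)=\chi_{\text{lex}(C)}(x,y)$ for all $\sigma\in S_{2n}$. Since $\chi_{\bfc}(\sigma x,\sigma y)=\chi_{\sigma\cdot\bfc}(x,y)$ and $\sigma\cdot\bfc$ is a \emph{non-lex} reordering of the \emph{same} set $C$, this value is not constrained by your coloring at all. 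Concretely, take $n=4$, $x=\{1,3,5,7\}$, $y=\{2,4,6,8\}$, $\sigma=(1\ 2)$. Then $\dist(x,y)=8\notin\{2,4,6\}$, so the $f$-code hypothesis tells you nothing about either $\dist(E(\iota_{\bfc}(x)),E(\iota_{\bfc}(y)))$ or $\dist(E(\iota_{\bfc}(\sigma x)),E(\iota_{\bfc}(\sigma y)))$, and $(\sigma x,\sigma y)\notin\{(x,y),(y,x)\}$ so symmetry of $\dist$ does not rescue you. The pair $(\iota_{\bfc}(\sigma x),\iota_{\bfc}(\sigma y))$ is a genuinely new pair of large-slice strings and no choice of a different lex-ordered $C'\subseteq T$ produces it; Ramsey cannot close this.

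What you \emph{can} get from Ramsey is invariance under permutations realized by shuffling ``padding,'' i.e.\ swapping a $\domino{0}{0}$ or a $\domino{1}{1}$ with a neighbor (the paper does this in two separate Ramsey passes, padding first with 0s and then with 1s). But swapping a $\domino{0}{1}$ with an adjacent $\domino{1}{0}$ is not a padding shuffle and needs a completely different mechanism. The paper's missing ingredient is \emph{stability}: the matrix $Q(u,v)=\max(\dist(u,v),n')$ has $O(1)$ randomized communication cost, hence contains no large \textsc{Greater-Than} submatrix. If swapping a $\domino{0}{1}$-$\domino{1}{0}$ pair ever changed $Q(\phi(x),\phi(y))$, one could — using the already-established $\domino{0}{0}$-shuffle invariance to slide the $\domino{0}{1}$ and $\domino{1}{0}$ to arbitrary positions among the padding — manufacture an arbitrarily large \textsc{Greater-Than} instance, a contradiction. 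This is the crux of \cref{lemma:ramsey}, and it is not Ramsey-theoretic; it crucially uses that the encoding's distance is a constant-cost communication quantity. Your proposal has no substitute for it, and ``coloring ordered tuples'' does not help because there is no Ramsey theorem that makes all orderings of a set monochromatic (the sign of a sorting permutation is an immediate counterexample coloring). So the proposal as written has a genuine gap at exactly the technical heart of the lemma.
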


Next, we ``sparsify'' the code and focus only on strings with bounded Hamming weight.

\begin{restatable}{proposition}{propsingleplayersparsification}
\label{prop:single-player-sparsification}
    Let $f \colon \{2, 4, 6\} \to \bN$ be any function such that, for infinitely many
    $n$, there exists an extended $f$-code $F \colon { [2n] \choose n } \to
    \zo^{m(n)}$. Then for any fixed constant $k$, there is a constant $r \leq
    f(2) \cdot k$ such that, for infinitely many values $n$, there exists an
    extended $f$-code
    \[
        E \colon { [n] \choose k } \to { [m'(n)] \choose r } \,.
    \]
\end{restatable}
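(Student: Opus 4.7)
The plan is straightforward: given an extended $f$-code $F \colon \binom{[2N]}{N} \to \zo^{m(N)}$, I produce the desired sparse code via a distance-preserving input restriction to a Hamming slice of weight $k$, followed by an isometric output shift that bounds the output weight. For each $N$ in the infinite set on which $F$ is defined, I set $n \define N$; since $N$ ranges over an infinite set, so does $n$.

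For the input restriction, fix a padding set $C \define \{n+1, n+2, \ldots, 2n-k\}$ of size $n-k$ (disjoint from $[n]$), and define $\phi \colon \binom{[n]}{k} \to \binom{[2n]}{n}$ by $\phi(x) \define x \cup C$. Then $|\phi(x)| = k + (n-k) = n$ and $\dist(\phi(x),\phi(y)) = \dist(x,y)$ since the padding is common, so $F \circ \phi$ is an extended $f$-code on $\binom{[n]}{k}$ valued in $\zo^{m(n)}$, inheriting from $F$ the same total extension $g_n$ of $f$. For the output shift, fix any reference point $x_0 \in \binom{[n]}{k}$ (say $x_0 \define [k]$) and define $E(x) \define (F \circ \phi)(x) \oplus (F \circ \phi)(x_0)$. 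Because XOR by a fixed string is an isometry on $\zo^{m(n)}$, the map $E$ is still a $g_n$-code and hence an extended $f$-code, and by construction $|E(x)| = \dist\bigl((F\phi)(x),(F\phi)(x_0)\bigr)$ with $|E(x_0)| = 0$.

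It remains to bound $|E(x)|$, and this is where the choice $r \define k \cdot f(2)$ enters. The key combinatorial observation is that the Johnson slice $\binom{[n]}{k}$ is connected through distance-2 steps: for $x, x_0 \in \binom{[n]}{k}$ with $\dist(x,x_0) = 2d \leq 2k$, enumerate $x \setminus x_0 = \{a_1,\ldots,a_d\}$ and $x_0 \setminus x = \{b_1,\ldots,b_d\}$ and swap $a_i$ for $b_i$ one index at a time to obtain a path $x = y_0, y_1, \ldots, y_d = x_0$ in the slice with $\dist(y_i, y_{i+1}) = 2$ for all $i$. Since $2 \in \dom(f)$, the $f$-code property gives $\dist((F\phi)(y_i),(F\phi)(y_{i+1})) = f(2)$, and the triangle inequality yields
\[
|E(x)| \;\leq\; \sum_{i=0}^{d-1} \dist\bigl((F\phi)(y_i),(F\phi)(y_{i+1})\bigr) \;=\; d \cdot f(2) \;\leq\; k \cdot f(2) \;=\; r.
\]
Identifying outputs of weight at most $r$ with the codomain $\binom{[m'(n)]}{r}$ in the statement then completes the construction. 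The proof is essentially mechanical; the only nontrivial ingredient is the distance-2 connectivity of the Johnson slice, which is standard. The one genuine subtlety is the cosmetic discrepancy between ``weight at most $r$'' and ``weight exactly $r$'', which can be absorbed by a routine padding of the codomain and does not affect the downstream use of the proposition.
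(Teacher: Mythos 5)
Your overall plan (pad the inputs into the slice, then XOR-shift the outputs by a fixed codeword and use distance-$2$ connectivity of the Johnson scheme to bound the weight) is the right one, and matches the paper's strategy. But there is a genuine gap in the last step, and it is not cosmetic.

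With your choice $x_0=[k]\in\binom{[n]}{k}$ and $E(x)=(F\phi)(x)\oplus(F\phi)(x_0)$, the weight is $|E(x)|=\dist\bigl((F\phi)(x),(F\phi)(x_0)\bigr)=g_n(\dist(x,[k]))$, where $g_n$ is the total extension. For $x\in\binom{[n]}{k}$, the distance $\dist(x,[k])=2\bigl(k-|x\cap[k]|\bigr)$ ranges over all of $\{0,2,\dots,2k\}$, so $|E(x)|$ genuinely varies (it is $0$ at $x_0$ and $g_n(2k)$ for $x$ disjoint from $[k]$). The conclusion of the proposition, however, is a code into $\binom{[m'(n)]}{r}$, i.e.\ \emph{exactly} constant weight, and this is used essentially downstream: \cref{lem:2-sun} infers ``all pairwise intersections of codewords have the same size'' from the identity $\dist(E(x),E(x'))=|E(x)|+|E(x')|-2|E(x)\cap E(x')|$, which only works if $|E(x)|=|E(x')|=r$, and \cref{lemma:single-player-delta} similarly derives $|E_\ell(x)|=(2r-g(2k-2\ell))/2$. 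There is also no generic padding of the codomain that turns a code of weight $\le r$ into one of weight exactly $r$ while preserving all Hamming distances: any coordinate appended to equalize weight changes $\dist(E(x),E(y))$ whenever $|E(x)|\ne|E(y)|$.

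The fix is a further \emph{domain} restriction, not a codomain patch: restrict to sets disjoint from the shift center, so that every remaining $x$ is at the same distance $2k$ from $[k]$ and hence $|E(x)|=g_n(2k)\le f(2)k$ is constant. This is precisely what the paper does — it works with $N=n+k$, pads by $[2N]\setminus[N+k]$, shifts by $F'([k])$, and restricts to $\binom{[N]\setminus[k]}{k}$ (relabelled as $\binom{[n]}{k}$). Your proof becomes correct if you incorporate this restriction (and adjust the dimension bookkeeping accordingly), but as written the final sentence asserting the discrepancy can be ``absorbed by a routine padding'' is false.
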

We defer the proof of \cref{prop:single-player-sparsification} to \cref{section:single-player-sparsification}.
As discussed in the proof overview (\cref{sec:step1}), the fact that the weight $r$ in the range of the encoding is bounded by $f(2) \cdot k$
will be crucial in our proof later.

The next step is to decompose the extended $f$-code $E \colon { [n] \choose k } \to {
[m] \choose r }$ into component codes for each Hamming weight $\ell \leq k$.

\begin{definition}
    For any $E \colon { [n] \choose k } \to { [m] \choose r}$ and $\ell \leq k$, we
    define the function $E_\ell \colon { [n] \choose \ell } \to { [m] \choose \leq r}$
    as
    \[
        E_\ell(y) \define \bigcap_{x \supseteq y,\; x \in { [n] \choose k}} E(x) \,.
    \]
Observe that this function is monotone in the sense that, if $u,v \in { [n] \choose \leq k }$
are such that $u \subseteq v$, then
\begin{equation}
\label{eq:extended-code-monotone}
    E_{|u|}(u) = \bigcap_{x \supseteq u,\; x \in { [n] \choose k} } E(x) \subseteq \bigcap_{x \supseteq v,\; x \in { [n] \choose k} } E(x)
    = E_{|v|}(v) \,.
\end{equation}
\end{definition}

The main lemma we require about these component codes is that they preserve
sunflowers. A \emph{sunflower} is a collection $\cF = \{ F_1, F_2, \dotsc, F_t
\}$ of sets such that there is some set $K$ (called the \emph{kernel} of $\cF$)
which satisfies $F_i \cap F_j = K$ for every distinct $i \neq j$. The sets $F_i
\setminus K$ are called the \emph{petals}. The proof of the next lemma
is deferred to \cref{section:sunflowers-to-sunflowers}.

\begin{restatable}[Sunflowers to sunflowers]{lemma}{lemmasunflowertosunflower}
\label{lemma:sunflowers-to-sunflowers}
For any $k, r$ and sufficiently large $n > k$, let $g \colon \bN \to \bN$ be any
function such that there exists a $g$-code $E \colon { [n] \choose k } \to { [m]
\choose r}$. If $\cF \subseteq { [n] \choose \ell }$, $\ell\leq k$,
is a sunflower with kernel $y \in { [n] \choose \ell' }$, then $E_\ell(\cF)$ is a
sunflower with kernel $E_{\ell'}(y)$.
\end{restatable}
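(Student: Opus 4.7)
My plan is to prove the two inclusions of $E_\ell(F) \cap E_\ell(F') = E_{\ell'}(y)$ (for any distinct $F, F' \in \cF$) separately, which is exactly the sunflower claim with kernel $E_{\ell'}(y)$.

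The inclusion $E_{\ell'}(y) \subseteq E_\ell(F) \cap E_\ell(F')$ is immediate from the monotonicity \eqref{eq:extended-code-monotone}: since $y \subseteq F$ and $y \subseteq F'$, both $E_\ell(F)$ and $E_\ell(F')$ contain $E_{\ell'}(y)$. This places the putative kernel inside every pairwise intersection, and reduces the problem to the reverse inclusion $E_\ell(F) \cap E_\ell(F') \subseteq E_{\ell'}(y)$.

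For the reverse inclusion, I will fix $b \in E_\ell(F_1) \cap E_\ell(F_2)$ (renaming the two petals) and any $x \in \binom{[n]}{k}$ with $x \supseteq y$, aiming to show $b \in E(x)$. Since $n$ is large and each petal has bounded size $\ell - \ell' < k$, I will first extend $\cF$ to a sunflower with many petals $F_1, \ldots, F_t$ sharing kernel $y$; the conclusion for the extended sunflower restricts back to the original, and proving the lemma in this enlarged regime is what gives us room to maneuver. For each petal $P_i \define F_i \setminus y$ disjoint from $x \setminus y$ (all but at most $k$ petals) and any fixed $T \subseteq x \setminus y$ of size $\ell - \ell'$, I will form the swap $x_i \define (x \setminus T) \cup P_i$. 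Each $x_i$ contains $F_i$, and one checks that $\dist(x, x_i) = \dist(x_i, x_j) = 2(\ell - \ell')$ for distinct $i, j$. Hence $\{E(x), E(x_1), \ldots, E(x_t)\}$ is an equidistant constant-weight code in $\binom{[m]}{r}$ at Hamming distance $D \define g(2(\ell - \ell'))$.

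The main step is then a symmetry argument: I will invoke an invariance/Ramsey lemma in the style of \cref{section:invariance} to assume, after passing to a large subcollection of petals, that the code $E$ is invariant (up to relabeling of $[m]$) under the subgroup of permutations of $[n]$ that fixes $y$ and transitively permutes the disjoint blocks $\{T, P_1, \ldots, P_t\}$. Under such invariance, $x$ and the $x_i$'s lie in a single orbit, so $b \in E(x_1)$ forces $b$ to appear in $E(v)$ for every $v$ in the orbit (including $v = x$), yielding $b \in E(x)$. The sparsity bound $r \leq f(2) \cdot k$ from \cref{prop:single-player-sparsification} will be essential both to keep the output alphabet small enough that the Ramsey argument converges on a large homogeneous subcollection of petals and to control the counting tied to the equidistance identity $\sum_{b'} a_{b'}^2 = (t+1)^2 r - (t+1) t D / 2$ for $a_{b'} \define |\{i : b' \in E(x_i)\} \cup \{0 : b' \in E(x)\}|$.

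The hard part will be the invariance step. The $g$-code property alone yields equidistance of the codewords associated to $\{x\} \cup \{x_i\}_i$, but equidistance is too weak to transport membership of $b$ between codewords. Extracting the required permutation-invariance from an arbitrary $g$-code is exactly what the Ramsey-theoretic machinery of \cref{section:invariance} delivers; once symmetry is in hand, the orbit-based conclusion is immediate.
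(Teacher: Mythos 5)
Your framing of the lemma as a pair of inclusions $E_{\ell'}(y) \subseteq E_\ell(F) \cap E_\ell(F')$ and $E_\ell(F) \cap E_\ell(F') \subseteq E_{\ell'}(y)$ is fine, and the easy direction via the monotonicity in \eqref{eq:extended-code-monotone} is correct. The construction of the equidistant family $x, x_1, \dotsc, x_t$ by swapping a fixed $T \subseteq x \setminus y$ for the various petals is also a natural move, and your observation that these codewords are equidistant at distance $g(2(\ell-\ell'))$ is correct (this is essentially what feeds into \cref{prop:equidistant-to-sunflower} in the paper).

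The gap is in the symmetry step. You claim that a Ramsey/invariance argument lets you assume $E$ is invariant under the subgroup permuting the blocks $\{T, P_1, \dotsc, P_t\}$, so that $b \in E(x_1)$ transports to $b \in E(x)$. But the invariance lemmas in \cref{section:invariance} only make \emph{derived scalar quantities} permutation-invariant — e.g., the color $Q'(\phi'_n(x), \phi'_n(y))$ in \cref{lemma:ramsey}, or the Hamming distance of a re-extracted code — never the encoding map itself. Indeed, distance invariance ($\dist(E(\sigma x), E(\sigma y)) = \dist(E(x), E(y))$) already holds for free since $E$ is a $g$-code, and Ramsey cannot strengthen that to $E(\sigma x) = E(x)$ or even $E(\sigma x) = \tau_\sigma E(x)$ with a \emph{coherent} family of relabelings $\tau_\sigma$. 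Even granting such a relabeling, your conclusion needs $\tau_\sigma$ to fix the specific coordinate $b$, and there is no reason for it to do so short of already knowing $b \in E_{\ell'}(y)$, which would be circular. So the orbit argument as written does not close.

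The paper resolves the hard inclusion by an entirely different, combinatorial/probabilistic route: first a distance-2 base case (\cref{lem:2-sun}) showing that the fan of $k$-sets above a $(k-1)$-set maps to a sunflower; then a random-walk "kernel robustness" statement (\cref{prop:kernels-are-robust}) showing that if $j \notin E_\ell(y)$, then a random $k$-set $\mathbf{x} \supseteq y$ has $j \notin E(\mathbf{x})$ with probability $1-\cO(k/n)$, by swapping out one element at a time and applying the base case at each step; and finally a union bound (\cref{prop:k-sunflowers-to-sunflowers}) producing a witness $x$ in the sunflower with $j \notin E(x)$, which forces $j$ out of the kernel. Your second-moment identity is true but only reproves equidistance/sunflowerhood (\cref{prop:equidistant-to-sunflower}); it does not identify the kernel as $E_{\ell'}(y)$, which is where the random-walk argument is doing the real work. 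To repair your proof you would need to replace the invariance step with something like \cref{prop:kernels-are-robust}.
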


We will use this lemma to establish an important structure of codewords. (This will be reused in \cref{section:two-player-codes} on two-player $f$-codes.)

\begin{definition}
\label{def:fcode-deltas}
    For any $E \colon { [n] \choose k} \to {[m] \choose r}$ and any $y \in { [n] \choose \ell}$
    for $\ell \leq k$, we define
    \[
        \Delta_E(y) \define E_\ell(y) \setminus \left( \bigcup_{z \subsetneq y} E_{|z|}(z) \right) \,.
    \]
\end{definition}

\begin{lemma}
\label{lemma:single-player-delta}
For any $k, r$ where $r < { k \choose 2 }$ and sufficiently large $n \geq 2k$, let
$g \colon \bN \to \bN$ be any function with $g(2) + 2 < k$, such that there exists a
$g$-code $E \colon { [n] \choose k } \to { [m] \choose r }$. Then:
\begin{enumerate}
    \item For any distinct sets $x,y \in { [n] \choose \leq k }$,
        $\Delta_E(x) \cap \Delta_E(y) = \emptyset$.
    \label{item:lemma-delta-disjoint}
    \item For every $\ell \leq k$, there is $\delta_\ell$ such that
    every $x \in { [n] \choose \ell }$ has $|\Delta_E(x)| = \delta_\ell$;
    \label{item:lemma-delta-weight}
    \item For every $\ell \notin \{ 0, 1, k \}$, $\delta_\ell = 0$.
    \label{item:lemma-delta-middle}
\end{enumerate}
As a consequence, for each $x \in { [n] \choose k }$,
$E(x)$ is the disjoint union
\[
    E(x) = \Delta_E(x) \cup \Delta_E(\emptyset) \cup \left( \bigcup_{i \in x} \Delta_E(\{i\}) \right) \,.
\]
\end{lemma}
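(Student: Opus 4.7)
The plan is to prove the three parts in sequence, with \cref{lemma:sunflowers-to-sunflowers} applied to $2$-element sunflowers $\{x,y\}$ with kernel $x\cap y$ as the central structural input. Parts~2 and 3 then reduce to M\"obius inversion on the Boolean lattice and a binomial-coefficient count, respectively.

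\noindent\textbf{Part 1 (Disjointness).} Let $x\neq y$ in $\binom{[n]}{\leq k}$. If $x\subsetneq y$, then $E_{|x|}(x)$ is one of the terms in the union defining $\Delta_E(y)$, so $\Delta_E(y)\cap\Delta_E(x)\subseteq\Delta_E(y)\cap E_{|x|}(x)=\emptyset$. Otherwise $w\define x\cap y$ is a proper subset of both, and $\{x,y\}$ is a $2$-element sunflower with kernel $w$. Applying \cref{lemma:sunflowers-to-sunflowers} yields $E_{|x|}(x)\cap E_{|y|}(y)=E_{|w|}(w)\subseteq\bigcup_{z\subsetneq x}E_{|z|}(z)$, so $\Delta_E(x)\cap E_{|y|}(y)=\emptyset$, which implies $\Delta_E(x)\cap\Delta_E(y)=\emptyset$.

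\noindent\textbf{Part 2 (Uniform weight).} The key intermediate claim is that $|E_\ell(T)|$ depends only on $\ell\define|T|$. Pick disjoint $(k-\ell)$-sets $A,B\subseteq[n]\setminus T$ (possible because $n\geq 2k$) and set $x\define T\cup A$, $y\define T\cup B$; then $x\cap y=T$ and $\dist(x,y)=2(k-\ell)$, so the $g$-code identity gives $|E(x)\cap E(y)|=r-g(2(k-\ell))/2$. Applying \cref{lemma:sunflowers-to-sunflowers} to $\{x,y\}$ identifies $E(x)\cap E(y)=E_\ell(T)$, whence
\[
|E_\ell(T)|=r-g(2(k-\ell))/2 \eqqcolon S_\ell.
\]
On the other hand, Part~1 makes the decomposition $E_\ell(T)=\bigsqcup_{w\subseteq T}\Delta_E(w)$ into a disjoint union (every bit $u\in E_\ell(T)$ has a unique minimal ``support'' in $\binom{[n]}{\leq k}$, which must lie inside $T$ since $T$ sits in its up-set). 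Hence $S_\ell=\sum_{w\subseteq T}|\Delta_E(w)|$, and M\"obius inversion on the Boolean lattice gives
\[
|\Delta_E(T)|=\sum_{j=0}^{|T|}(-1)^{|T|-j}\binom{|T|}{j}S_j,
\]
which depends only on $|T|$. Write $\delta_\ell$ for this common value.

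\noindent\textbf{Part 3 (Vanishing middle) and consequence.} Evaluating at $|T|=k$ gives $r=\sum_{\ell=0}^{k}\binom{k}{\ell}\delta_\ell$. For $\ell\in\{2,\dotsc,k-2\}$ we have $\binom{k}{\ell}\geq\binom{k}{2}$, so $\delta_\ell\geq 1$ would force $r\geq\binom{k}{2}$, contradicting the hypothesis $r<\binom{k}{2}$; hence $\delta_\ell=0$ throughout that range. The case $\ell=k-1$ is more delicate since $\binom{k}{k-1}=k$ is too small to contradict $r<\binom{k}{2}$ directly. Instead, I expand $r-S_{k-1}=g(2)/2$ using Pascal's rule to obtain
\[
g(2)/2=\sum_{j=0}^{k-1}\binom{k-1}{j}\delta_{j+1}\geq(k-1)\,\delta_{k-1},
\]
and invoke the hypothesis $g(2)+2<k$ to conclude $\delta_{k-1}\leq g(2)/(2(k-1))<1$, forcing $\delta_{k-1}=0$. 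The claimed disjoint-union decomposition of $E(x)$ then follows by combining the decomposition $E(x)=\bigsqcup_{w\subseteq x}\Delta_E(w)$ from Part~1 with the vanishing of $\delta_\ell$ outside $\{0,1,k\}$. The main obstacle I anticipate is precisely the $\ell=k-1$ step, which is the only place where the hypothesis $g(2)+2<k$ plays a role and where the middle-degree counting argument degenerates.
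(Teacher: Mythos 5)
Your proposal follows the same structural plan as the paper: \cref{lemma:sunflowers-to-sunflowers} drives both the disjointness in Part~1 and the computation $|E_\ell(T)| = r - g(2(k-\ell))/2$, and the counting bounds for Parts~2 and~3 match. Your Möbius inversion is a routine repackaging of the paper's induction on $\ell$, and your Pascal's-rule identity for $\ell=k-1$ is algebraically the same as the paper's direct count of $(k-1)$-subsets of $x$ avoiding $y$. Splitting off the nested case $x\subsetneq y$ explicitly in Part~1 is in fact cleaner than what the paper wrote.

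However there is one technical gap in Part~1, in the incomparable case: you apply \cref{lemma:sunflowers-to-sunflowers} to the two-element family $\{x,y\}$ with kernel $w=x\cap y$ to conclude $E_{|x|}(x)\cap E_{|y|}(y)=E_{|w|}(w)$. But that lemma is stated only for sunflowers $\cF\subseteq\binom{[n]}{\ell}$ whose members all have the same cardinality $\ell$; here $|x|$ and $|y|$ need not be equal, so the lemma does not directly apply. The fix is the padding trick the paper uses: assuming $|x|\geq|y|$, choose $y'\supseteq y$ with $|y'|=|x|$ and $x\cap y'=w$ (possible since $n\geq 2k$), apply the lemma to the equal-size sunflower $\{x,y'\}$, and then pass from $E_{|x|}(y')$ to $E_{|y|}(y)$ via the monotonicity relation \eqref{eq:extended-code-monotone}. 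With that repair, the rest of your argument goes through unchanged.
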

\begin{proof}
We use \cref{lemma:sunflowers-to-sunflowers} to guarantee that for each $\ell \leq k$,
the map $E_\ell \colon { [n] \choose \ell } \to { [m] \choose \leq r }$ maps
sunflowers in ${ [n] \choose \ell }$ to sunflowers in ${ [m] \choose \leq r}$.

To prove \cref{item:lemma-delta-disjoint},
let $x,y \in { [n] \choose \leq k }$ be distinct and let $z = x \cap y$. Assume
$|x| \geq |y|$ and let $\ell = |x|$. Since $n \geq 2k$ then we may choose
$y'\supseteq y$ with $|y'| = \ell$ and $x \cap y' = z$. Then $x$ and
$y'$ form a sunflower in ${ [n] \choose \ell }$ with kernel $z$, so $E_\ell(x),
E_\ell(y')$ form a sunflower with kernel $E_{|z|}(z)$ and disjoint petals $E_\ell(x) \setminus E_{|z|}(z)$, $E_\ell(y') \setminus E_{|z|}(z)$. Then
\[
    \Delta_E(x) \subseteq E_\ell(x) \setminus E_{|z|}(z) \,,
\]
and by \cref{eq:extended-code-monotone},
\[
    \Delta_E(y) \subseteq E_{|y|}(y) \setminus E_{|z|}(z)
                \subseteq E_\ell(y') \setminus E_{|z|}(z) \,.
\]
So $\Delta_E(x)$ and $\Delta_E(y)$ must be disjoint since they are contained in disjoint petals. 

To prove \cref{item:lemma-delta-weight}, we first show that for any $\ell \leq
k$, if $x,y \in { [2k] \choose \ell }$, then $|E_\ell(x)| = |E_\ell(y)|$. Let
$u,v \in { [2k] \choose k }$ have $u \cap v = x$, so $\dist(u,v) = 2k - 2\ell$.
Then $E(u), E(v)$ form a sunflower with kernel $E_\ell(x)$, and therefore
\[
g(\dist(u,v)) = g(2k-2\ell) = \dist(E(u), E(v)) = |E(u)| + |E(v)| - 2
|E_\ell(x)| = 2r - 2 |E_\ell(x)| \,.
\]
Since the same holds for any $x \in { [n] \choose \ell }$, the claim holds.
Therefore for each $\ell \leq k$ there is $\gamma_\ell$ such that $|E_\ell(x)| =
\gamma_\ell$ for every $x \in { [n] \choose \ell }$.

We now prove \cref{item:lemma-delta-weight} by induction on the size of the sets
$\ell$. In the base case, consider $\ell = 1$. For any $i \in [n]$, we have
$|\Delta_E(\{i\})| = |E_1(\{i\})| - |E_0(\emptyset)| = \gamma_1 - \gamma_0$, so
the claim holds. Now consider $\ell > 1$ and let $x \in { [n] \choose \ell }$.
By disjointness of $\Delta_E(\,\cdot\,)$ and the inductive hypothesis, we have
\[
    \gamma_\ell = |E_\ell(x)| = |\Delta_E(x)| + \sum_{z \subsetneq x} |\Delta_E(z)|
                              = |\Delta_E(x)| + \sum_{z \subsetneq x} \delta_{|z|} \,.
\]
Since the sum is the same for any $x \in { [n] \choose \ell }$, we may conclude that
$|\Delta_E(x)|$ is the same for any $x \in { [n] \choose \ell }$, which concludes the
proof of \cref{item:lemma-delta-weight}.

To prove \cref{item:lemma-delta-middle}, first consider $\ell \in \{2, \dotsc, k-2\}$.
Then for any $x \in { [n] \choose k }$, by disjointness of $\Delta_E(\,\cdot\,)$, we have
\[
r = |E(x)| = \sum_{z \subseteq x} |\Delta_E(z)|
\geq \sum_{z \subseteq x,\; |z|=\ell} \delta_\ell
= { k \choose \ell } \delta_\ell \geq {k \choose 2} \delta_\ell \,.
\]
But $r < { k \choose 2}$, so we must have $\delta_\ell = 0$. The only case remains is $\ell=k-1$.

Let $x,y \in { [n] \choose k }$ have $\dist(x,y) = 2$, so $z = x \cap y$ has $|z| = \ell = k-1$.
Then there are $k-2$ sets $u \subsetneq x$ such that $|u| = k-1$ but $u \neq z$, so $u$ is not
a subset of $y$. Then
\[
    g(2) = g(\dist(x,y)) = \dist(E(x), E(y)) \geq (k-2) \delta_{k-2} \,.
\]
But since $g(2) +2 < k$ we must have $\delta_{k-2} = 0$, which concludes the proof of
\cref{item:lemma-delta-middle}.
\end{proof}

We may now complete the proof of the main theorem of this section.

\begin{proof}[Proof of \cref{thm:structure}]
Let $f \colon \{2,4,6\} \to \bN$ be any function such that, for infinitely many $n$,
there exists an $f$-code $E \colon {[2n] \choose n} \to \zo^{m(n)}$. Set $k > 2(f(2)+2)$. By
\cref{prop:single-player-sparsification}, there is a constant $r \leq f(2) \cdot
k$ such that, for infinitely many $n$, there exists an $f$-code $F \colon { [n]
\choose k } \to { [m] \choose r }$. Fix a sufficiently large $n>k$ so that by 
\cref{lemma:single-player-code-ramsey}, for some function $g \colon \bN \to
\bN$ with $g(t) = f(t)$ for all $t \in \{2,4,6\}$, there exists a $g$-code
\[
    G \colon { [n] \choose k } \to { [m] \choose r } \,.
\]
For any $t \in [k]$, we may choose
$x,y \in { [n] \choose k }$ such that $\dist(x,y) = 2t$, and write $s = (x
\setminus y) \cup (y \setminus x)$ for the symmetric difference, which has cardinality $|s|=2t$. Then,
since $k > 2(g(2)+2) > g(2)+2$ and $r < g(2) \cdot k < {k \choose 2}$,
by \cref{lemma:single-player-delta} there are integers $\delta_1$ and $\delta_k$ such that 
\begin{align*}
    g(2t) &= g(\dist(x,y)) = \dist(G(x), G(y)) \\
          &= |\Delta_G(x)| + |\Delta_G(y)|
            + \sum_{i \in s} |\Delta_G(\{i\})|
        = 2 \cdot \delta_k + 2t \cdot \delta_1 \,.
\end{align*}
The conclusion now follows since $f(2t)=g(2t)$ for $t\in [3]$.
\end{proof}

\subsection{Sparsification}
\label{section:single-player-sparsification}

We will require the following elementary fact about $f$-codes.
\begin{proposition}
\label{prop:single-player-fcode-triangle}
    Let $f \colon \{2\} \cup \cD \to \bN$, let $k \leq n$ be arbitrary, and suppose $E \colon { [2n] \choose k } \to \zo^m$
    is an $f$-code. Then for any $x,y \in { [2n] \choose k }$,
    \[
        \dist(E(x), E(y)) \leq \frac{f(2)}{2} \dist(x,y) \,.
    \]
\end{proposition}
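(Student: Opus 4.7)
The plan is to use a simple path argument via the triangle inequality for Hamming distance. Since both $x$ and $y$ lie in the Hamming slice $\binom{[2n]}{k}$, their Hamming distance is always even: writing $t = |x \setminus y| = |y \setminus x|$, we have $\dist(x,y) = 2t$. So it suffices to exhibit a sequence of intermediate codewords, each still lying in the slice, such that consecutive pairs are at distance exactly $2$ from one another.

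First, I would enumerate $x \setminus y = \{a_1, \dots, a_t\}$ and $y \setminus x = \{b_1, \dots, b_t\}$, and define the sequence $z_0, z_1, \dots, z_t \in \binom{[2n]}{k}$ by $z_0 \define x$ and $z_i \define (z_{i-1} \setminus \{a_i\}) \cup \{b_i\}$. By construction each $z_i$ is obtained from $z_{i-1}$ by a single ``swap'' that removes one element and adds another, so $z_i \in \binom{[2n]}{k}$ and $\dist(z_{i-1}, z_i) = 2$. After $t$ swaps we have exhausted $x \setminus y$ and inserted all of $y \setminus x$, so $z_t = y$.

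Next, since $2 \in \dom(f)$ and $E$ is an $f$-code, each step contributes exactly $\dist(E(z_{i-1}), E(z_i)) = f(2)$. Applying the triangle inequality of Hamming distance in $\{0,1\}^m$, I conclude
\[
    \dist(E(x), E(y)) \;\leq\; \sum_{i=1}^{t} \dist(E(z_{i-1}), E(z_i)) \;=\; t \cdot f(2) \;=\; \frac{f(2)}{2} \cdot \dist(x,y).
\]

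There is no real obstacle here; the only point worth double-checking is that every intermediate codeword $z_i$ lies inside the domain $\binom{[2n]}{k}$ of $E$ (hence $E(z_i)$ is defined), which is immediate because each swap preserves both the ambient set $[2n]$ and the weight $k$. Note we do not need $f$ to be defined on any distance besides $2$ for this bound to apply.
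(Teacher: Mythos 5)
Your proof is correct and matches the paper's argument exactly: both construct a sequence of intermediate codewords in the slice, each obtained from the previous by a single swap (so at distance 2), and then apply the triangle inequality with each step costing $f(2)$. The only difference is that you spell out the construction of the intermediate $z_i$ explicitly, whereas the paper merely asserts the existence of such a sequence.
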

\begin{proof}
    For any $x,y \in { [2n] \choose k }$ with $\dist(x,y) = 2d$, there exists a sequence $u_0, u_1, \dotsc, u_d$ with $u_0=x$,
    $u_d=y$ such that $\dist(u_i, u_{i+1}) = 2$ for all $i \in \{0, \dotsc, d-1\}$. Therefore, 
    \begin{equation*}
        \dist(E(x), E(y))
            \leq \sum_{i=1}^{d} \dist(E(u_{2i-1}), E(u_{(2i)})
        = f(2) \cdot d = \frac{f(2)}{2} \dist(x,y) \,. \qedhere
    \end{equation*}
\end{proof}

We may now restate and prove \cref{prop:single-player-sparsification}. 
\propsingleplayersparsification*
\begin{proof}
    Let $k$ be a fixed constant. For any given $n$, let $N = n+k$. Let $F \colon
    { [2N] \choose N } \to \zo^{m(N)}$ be an extended $f$-code, with extension
    $g : \bN \to \bN$. Define $F' \colon { [N] \choose k } \to \zo^{m(N)}$ as
    \[
        F'(x) \define F\left( x \cup ([2N] \setminus [N+k]) \right) \,.
    \]
    Observe that $\dist(F'(x), F'(y)) = \dist(F(x), F(y))$ for all $x,y
    \in { [N] \choose k }$, so $F'$ is also an extended $f$-code. Now we define
    $E \colon \binom{[N] \setminus [k]}{k} \to \zo^{m(N)}$ as
    \[
        E(x) \define F'(x) \oplus F'([k]) \,.
    \]
    This transformation again preserves distances, so $E$ remains an extended
    $f$-code with extension $g$. All $x \in \binom{[N] \setminus [k]}{k}$ have
    $\dist(x,[k]) = 2k$, so by \cref{prop:single-player-fcode-triangle}, for all
    $x \in \binom{[N] \setminus [k]}{k}$,
    \begin{align*}
        |E(x)| = \dist(F'(x), F'([k])) = g(2k) \leq f(2) \cdot k \,.
    \end{align*}
    The domain of $E$ is the set of weight-$k$ subsets of $[N] \setminus [k]$
    where $| [N] \setminus [k] | = n$, so by relabeling the domain as $[n]$
    and letting $r = g(2k)$, $E$ satisfies the required conditions.
\end{proof}

\subsection{Sunflowers to sunflowers}
\label{section:sunflowers-to-sunflowers}
Our goal here is to prove \cref{lemma:sunflowers-to-sunflowers}:
\lemmasunflowertosunflower*

We start by proving a simple but useful fact about sunflowers. 

\begin{proposition}
\label{prop:equidistant-to-sunflower}
    Fix any $n,k$ and let $\cF \subseteq { n \choose k }$. If $|\cF| > k 2^k$ and every pair of
    sets in $\cF$ has the same size of intersection (\ie $\exists s \in \bN$
    such that $\dist(x,y) = s$ for all distinct $x,y \in \cF$), then
    $\cF$ is a sunflower.
\end{proposition}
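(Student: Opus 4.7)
The proposition is a cleanup lemma: a sufficiently large family of $k$-sets with all pairwise intersection \emph{sizes} equal is automatically a sunflower (where the pairwise intersections are all equal as \emph{sets}). My plan is to prove it in two stages: first use pigeonhole to extract a genuine sunflower subfamily $\cF'\subseteq \cF$ of size exceeding $k$, then use a volume/petal-counting argument to show every remaining set in $\cF\setminus \cF'$ must share the same kernel.

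\textbf{Stage 1 (pigeonhole to extract a sunflower).} Fix any $F_1\in\cF$. For every other $F\in\cF$ we have $F\cap F_1\in\binom{F_1}{s}$, and there are only $\binom{k}{s}\leq 2^k$ possible choices. Since $|\cF|-1\geq k\cdot 2^k$, pigeonhole gives a set $K\subseteq F_1$ of size $s$ and a subfamily $\cF'\subseteq\cF$ containing $F_1$ with $|\cF'|\geq k+1$ such that $F\cap F_1=K$ for all $F\in\cF'\setminus\{F_1\}$. I will then check that $\cF'$ is in fact a sunflower with kernel $K$: for any two $F,F'\in\cF'\setminus\{F_1\}$ we have $K\subseteq F\cap F'$, and since $|F\cap F'|=s=|K|$, equality holds. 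Thus $\cF'$ is a sunflower of size greater than $k$ with kernel $K$ and petals $F\setminus K$ of size $k-s$, all pairwise disjoint.

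\textbf{Stage 2 (extending the kernel to all of $\cF$).} Take any $G\in\cF\setminus\cF'$ and set $a\define|G\cap K|$. For each $F\in\cF'$ we have $|G\cap F|=s$, and $G\cap F$ decomposes disjointly as $(G\cap K)\cup(G\cap(F\setminus K))$, so $|G\cap(F\setminus K)|=s-a$. Since the petals $F\setminus K$ are pairwise disjoint and $G$ has only $k$ elements in total,
\[
|\cF'|\cdot(s-a)\;=\;\sum_{F\in\cF'}|G\cap(F\setminus K)|\;\leq\;|G|\;=\;k.
\]
Because $|\cF'|>k$, this forces $s-a<1$, hence $a=s$ and $K\subseteq G$.

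\textbf{Conclusion.} So every set in $\cF$ contains $K$. For any two distinct $G,G'\in\cF$, we then have $K\subseteq G\cap G'$ and $|G\cap G'|=s=|K|$, so $G\cap G'=K$. Therefore $\cF$ is a sunflower with kernel $K$, completing the proof. I do not expect any serious obstacle: the only delicate point is verifying the constant $k\cdot 2^k$ is enough so that both the pigeonhole step yields $|\cF'|>k$ \emph{and} the inequality $|\cF'|(s-a)\leq k$ in Stage 2 forces $a=s$; the chosen bound is comfortably sufficient.
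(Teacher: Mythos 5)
Your proof is correct and follows essentially the same approach as the paper's: pigeonhole on $F\cap F_1$ over $F_1$'s $\leq 2^k$ subsets to extract a sub-sunflower $\cF'$ of size $>k$, then use disjointness of the petals inside the bounded-size set $G$ to force $K\subseteq G$. The only cosmetic difference is that the paper phrases Stage 2 as a direct contradiction ("$y\setminus z$ must meet all $>k$ disjoint petals but has size $\leq k$"), whereas you make the same counting explicit via $|\cF'|(s-a)\leq k$.
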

\begin{proof}
     Let $x\in\cF$ be any set and consider all possible intersections of $x$ with other sets $x'\in\cF\setminus\{x\}$. 
    There are at most $2^k$ such intersections.
    Thus there must exist an intersection $z \coloneqq x \cap x'$ such that $\cF_z \coloneqq \{x\in\cF : z \subseteq x\}$ has size $|\cF_z| > k$. 
    Note that $\cF_z$ itself is a sunflower with kernel $z$. 
    
    We claim that $\cF_z = \cF$. 
    Suppose for the sake of contradiction that there is a set $y\in \cF\setminus \cF_z$. 
    Then $z\not\subseteq y$ and $y$ has intersection of size $|z|$ with every $x\in \cF_z$. 
    Thus $y\setminus z$ has to intersect every petal of $\cF_z$. 
    But $|y\setminus z|\leq k$ and $\cF_z$ has more than $k$ petals, a contradiction.
\end{proof}

We may assume $k\geq 1$, as the $k=0$ case of \cref{lemma:sunflowers-to-sunflowers} is trivial. In the remainder of \cref{section:sunflowers-to-sunflowers}, we assume 
$E \colon { [n] \choose k } \to { [m] \choose r }$ is a $g$-code
for some $g \colon [2k] \to \bN$. 

\begin{lemma}[Distance-2 sunflowers] \label{lem:2-sun}
    Let $y\in\binom{[n]}{k-1}$ and let $\cF=\{x\in\binom{[n]}{k}:x\supseteq y\}$ be a sunflower with singleton petals. 
    Then $E(\cF)$ is a sunflower with kernel $E_{k-1}(y)$.
\end{lemma}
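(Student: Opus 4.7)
The plan is to reduce \cref{lem:2-sun} directly to the equidistant sunflower criterion \cref{prop:equidistant-to-sunflower}, using that singleton-petal sunflowers are pairwise at Hamming distance exactly $2$.

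First, I would observe that every element of $\cF$ has the form $y \cup \{a\}$ for some $a \in [n] \setminus y$, so any two distinct $x_1, x_2 \in \cF$ satisfy $\dist(x_1,x_2) = 2$. Since $E$ is a $g$-code, this gives $\dist(E(x_1), E(x_2)) = g(2)$ for every such pair. Thus $E(\cF) \subseteq \binom{[m]}{r}$ is an equidistant family.

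Next, I would apply \cref{prop:equidistant-to-sunflower} with parameter $r$ in place of $k$ to conclude $E(\cF)$ is a sunflower, provided $|E(\cF)| > r \cdot 2^r$. When $g(2) > 0$, the map $E$ is injective on $\cF$ (images at positive pairwise distance are distinct), so $|E(\cF)| = |\cF| = n-k+1$, which exceeds $r \cdot 2^r$ once $n$ is taken sufficiently large (as is the standing hypothesis of \cref{lemma:sunflowers-to-sunflowers}). The degenerate case $g(2) = 0$ needs no appeal to the proposition: all codewords $E(x)$ for $x \in \cF$ coincide, so $E(\cF)$ is a single set, which is trivially a sunflower.

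Finally, I would identify the kernel. Whenever a sunflower has at least two distinct member sets, its kernel equals the intersection of all its members, and by the very definition of $E_{k-1}$,
\[
\bigcap_{x \in \cF} E(x) \;=\; \bigcap_{\substack{x \in \binom{[n]}{k} \\ x \supseteq y}} E(x) \;=\; E_{k-1}(y),
\]
so the kernel is $E_{k-1}(y)$ as claimed. In the degenerate $g(2)=0$ case, the unique codeword equals this same intersection, so the identification of the kernel still holds. I do not expect any real obstacle here: all the technical content is absorbed into \cref{prop:equidistant-to-sunflower}, and the only small care needed is handling injectivity of $E$ on $\cF$ (to count $|E(\cF)|$) and the $g(2) = 0$ edge case.
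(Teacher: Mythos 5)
Your proof is correct and takes essentially the same approach as the paper: both reduce the claim to \cref{prop:equidistant-to-sunflower} by observing that the elements of $\cF$ are pairwise at Hamming distance $2$, so their images under the $g$-code are equidistant in $\binom{[m]}{r}$, and then identify the kernel as $\bigcap_{x\in\cF}E(x)=E_{k-1}(y)$. You are somewhat more explicit than the paper in verifying the size hypothesis $|E(\cF)|>r\cdot 2^r$ (via injectivity of $E$ on $\cF$ when $g(2)>0$) and in handling the degenerate $g(2)=0$ case, but these are points of extra care rather than a different argument.
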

\begin{proof}
    For any two distinct $x,x'\in \cF$,
    we have $\dist(E(x),E(x')) = f(\dist(x,x')) = g(2)$. 
    Since $|E(x)|=|E(x')|=r$, it must be the case that $|E(x) \cap E(x')| = r-g(2)/2$,
    which means that every pair of $x,x'$ in $\cF$ has the same size of intersection. 
    According to \cref{prop:equidistant-to-sunflower}, $E(\cF)$ is a sunflower. 
    The claim about its kernel is straightforward.
\end{proof}

\begin{proposition}[Kernels are robust]
\label{prop:kernels-are-robust}
Let $y \in { [n] \choose \ell}$ for any $\ell \leq k$ and let $j \in [m]
\setminus E_\ell(y)$. Let $\mathbf{x}$ be a uniformly random set $\mathbf{x} \in
{ [n] \choose k}$ such that $y \subseteq \mathbf{x}$. Then
\[
    \Pru{\mathbf{x}}{ j \notin E(\mathbf{x})} \geq 1 - \cO(k/n) \,.
\]
\end{proposition}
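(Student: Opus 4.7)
The plan is downward induction on $\ell \leq k$, using \cref{lem:2-sun} as the only structural tool. The base case $\ell = k$ is immediate: $\mathbf{x} = y$ is forced and the hypothesis $j \notin E_k(y) = E(\mathbf{x})$ directly gives the conclusion with probability $1$.

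For the inductive step I will sample $\mathbf{x}$ in two stages: first pick $a \in [n] \setminus y$ uniformly, then $\mathbf{x}$ uniformly from $\{x \in \binom{[n]}{k} : x \supseteq Y_a\}$ where $Y_a \define y \cup \{a\}$. A direct counting check verifies that the marginal on $\mathbf{x}$ is uniform over the set of $k$-sets containing $y$. Conditioning on $a$, if $j \in E_{\ell+1}(Y_a)$ then $j \in E(\mathbf{x})$ holds deterministically; otherwise the inductive hypothesis applied at level $\ell + 1$ gives $\Pr[j \in E(\mathbf{x}) \mid a] \leq \cO(k/n)$. The task therefore reduces to showing that the ``bad set'' $\{a \in [n] \setminus y : j \in E_{\ell+1}(Y_a)\}$ has size at most~$1$, which contributes only $\cO(1/n)$.

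Bounding the bad set is the hard part. Using $j \notin E_\ell(y)$, fix a witness $x^* \in \binom{[n]}{k}$ with $y \subseteq x^*$ and $j \notin E(x^*)$. For any $a \in x^* \setminus y$, $Y_a \subseteq x^*$ forces $j \notin E_{\ell+1}(Y_a)$, so bad $a$'s must lie in $[n] \setminus x^*$. For each $b \in x^* \setminus y$, let $z_b \define x^* \setminus \{b\}$, a $(k-1)$-set containing $y$. Applying \cref{lem:2-sun} to $z_b$, the family $\{E(x) : x \supseteq z_b\}$ is a sunflower with kernel $E_{k-1}(z_b)$; since $x^*$ witnesses $j \notin E_{k-1}(z_b)$, the bit $j$ lies in at most one petal, so at most one $k$-set $x \supseteq z_b$ has $j \in E(x)$. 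For a bad $a \in [n] \setminus x^*$, the $k$-set $z_b \cup \{a\}$ contains $Y_a$, hence $j \in E(z_b \cup \{a\})$, which pins down $z_b \cup \{a\}$ as that unique petal and determines $a$ as a function of $b$. These determinations must agree across every $b \in x^* \setminus y$, leaving at most one candidate $a$.

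Combining the two contributions yields $\Pr[j \in E(\mathbf{x})] \leq 1/(n-\ell) + \cO(k/n)$ at each level of the recursion. Since the recursion has at most $k$ levels, each adding $\cO(1/n)$, the accumulated bound remains $\cO(k/n)$, completing the induction.
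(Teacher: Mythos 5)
Your proof is correct and follows a genuinely different organization from the paper's. The paper runs a random swapping walk starting at the witness $x^* \supseteq y$ with $j \notin E(x^*)$: at each of $k-\ell$ steps it removes one element of $x^* \setminus y$, invokes \cref{lem:2-sun} to argue the random replacement element has probability $\cO(1/n)$ of adding $j$ to the codeword (since the distance-2 sunflower has disjoint petals and at most one petal can contain $j$), takes a union bound, and finally observes that the resulting distribution is $\cO(k/n)$-close to the uniform distribution over $k$-sets containing $y$. Your downward induction replaces the walk by a two-stage sampling that is \emph{exactly} uniform at every level, so no closeness-to-uniform coupling is needed. The pieces are the same (\cref{lem:2-sun} is the only structural input in both arguments, and both accumulate $\cO(1/n)$ per element added), but you get the bad-set bound cleanly as ``at most one bad extension $a$'' by fixing $b \in x^* \setminus y$, noting $j$ can lie in at most one petal of the sunflower over $z_b = x^* \setminus \{b\}$, and observing that a bad $a$ is forced to be that unique petal element simultaneously for all $b$. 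This is tighter and somewhat more transparent than the paper's ``the random petal misses $j$ with probability $1-\cO(1/n)$'' phrasing. One cosmetic point worth making explicit if you write this up: you should state the induction as $p_\ell \leq \frac{1}{n-\ell} + p_{\ell+1}$ with $p_k = 0$, giving $p_\ell \leq \sum_{i=\ell}^{k-1}\frac{1}{n-i} \leq \frac{k}{n-k}$, rather than the informal ``each level adds $\cO(1/n)$,'' since a naive reading of the $\cO$-notation in the inductive hypothesis could suggest an unbounded constant.
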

\begin{proof}
    Since $j \notin E_\ell(y)$, by definition of $E_\ell$, there is some $k$-set $x$ such that $ y\subseteq x$ and $j \notin E(x)$. 
    Suppose $x=[k]$ and $y=[\ell]$ for simplicity of notation. 
    Let $\bm{\pi}\colon [k-\ell] \to [n]\setminus[\ell]$ be a uniform random injection. 
    Consider the following random process:
    \begin{center}
    \begin{minipage}{0.4\textwidth}
        \begin{enumerate}
        \setlength{\parskip}{0pt}
            \item[$\circ$] Initialize $\x_0 \coloneqq x$
            \item[$\circ$] For step $i=1,\ldots,k-\ell$ :
            \begin{enumerate}
                \item[-] Let $\z_i \coloneqq \x_{i-1} \setminus \{\ell+i\}$
                \item[-]  Let $\x_i \coloneqq \z_i \cup \{ \bm{\pi}(i)\}$
            \end{enumerate}
        \end{enumerate}
    \end{minipage}
    \end{center}
    
    At each step $i$, we swap out the $i$-th element of $x\setminus y$ 
    and add an element $\bm{\pi}(i)$ which can be considered as a uniform random element from $([n]\setminus [\ell])\setminus \{ \bm{\pi}(1),\dots,\bm{\pi}(i-1) \}$.

    Looking at the first step $i=1$,  $j \notin E(\x_0)$ so that $j \notin E_\ell(\z_1)$.
    By \cref{lem:2-sun}, $E(\z_1\cup\{ p \})$ for $p\in[n]\setminus\z_1$ is a sunflower with kernel $E_{k-1}(\z_1)$.
    Thus adding $\bm{\pi}(1)$ to $\z_1$, corresponds to adding a random petal to kernel in the output domain, i.e., $E(\x_1)$ equals to the kernel $E_{k-1}(\z_1)$ plus a random petal.
    The petal of $E(\x_1)$ contains $j$ with probability at most $\cO(1/n)$.
    Hence $\Pru{}{j\notin E(\x_1)}\geq 1-\cO(1/n)$.
    
    Subsequent steps are handled similarly. 
    A union bound over all steps shows that with probability at least $1-\cO(k/n)$, $j \notin E(\x_i)$ for every $i \in [k-\ell]$.
    This concludes the proof observing that $\x_{k-\ell}$ is $\cO(k/n)$-close to uniform over supersets of $y$ in ${ [n] \choose k}$.    
\end{proof}

We first prove \cref{lemma:sunflowers-to-sunflowers} for the case $\ell = k$.

\begin{proposition}
\label{prop:k-sunflowers-to-sunflowers}
Under the assumptions of \cref{lemma:sunflowers-to-sunflowers},
for sufficiently large $n$, if $\cF
\subseteq { [n] \choose k }$ is a sunflower with kernel $y \in { [n] \choose \ell }$,
then $E(\cF)$ is a sunflower with kernel $E_{\ell}(y)$.
\end{proposition}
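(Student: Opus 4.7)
The plan is to reduce to \cref{prop:equidistant-to-sunflower} by enlarging $\cF$ to a sufficiently large sunflower, and then identify the resulting kernel as $E_\ell(y)$ via an argument showing that the kernel of $E(\cF)$ is insensitive to the specific choice of sunflower (as long as its kernel is $y$).

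The cases $|\cF| \leq 1$ and $\ell = k$ are trivial, so assume $|\cF| \geq 2$ and $\ell < k$. Any two distinct $x, x' \in \cF$ satisfy $|x \cap x'| = \ell$, so $\dist(x, x') = 2(k - \ell) \in \dom(g)$, and the $g$-code property gives $\dist(E(x), E(x')) = g(2(k-\ell))$, hence the constant intersection size $|E(x) \cap E(x')| = r - g(2(k-\ell))/2$. Using that $n$ is sufficiently large, I extend $\cF$ to a sunflower $\cF^\star$ with kernel $y$ and $|\cF^\star| > r 2^r$, by adding petals that are pairwise disjoint $(k-\ell)$-subsets of $[n] \setminus y$. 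All pairs in $\cF^\star$ still have distance $2(k-\ell)$, so their images have a common intersection size, and \cref{prop:equidistant-to-sunflower} applied in $\binom{[m]}{r}$ yields that $E(\cF^\star)$ is a sunflower with kernel $K \coloneqq \bigcap_{x \in \cF^\star} E(x)$ of size $r - g(2(k-\ell))/2$.

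It remains to show $K = E_\ell(y)$. The inclusion $E_\ell(y) \subseteq K$ is immediate, since $E_\ell(y) = \bigcap_{x \supseteq y,\, |x|=k} E(x)$ is an intersection over a superset of $\cF^\star$. For the reverse, fix any $x_0 \supseteq y$ with $|x_0| = k$ and construct a \emph{bridge} sunflower $\cF^\ddagger$ with kernel $y$, size exceeding $r 2^r$, and all petals disjoint from both $x_0 \setminus y$ and every petal of $\cF^\star$; such a $\cF^\ddagger$ exists for $n$ sufficiently large. Then $\cF^\star \cup \cF^\ddagger$ and $\cF^\ddagger \cup \{x_0\}$ are both sunflowers with kernel $y$ and size exceeding $r 2^r$, so by the same reasoning their images are sunflowers with kernels of the common size $r - g(2(k-\ell))/2$. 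Writing $K(\cG) \coloneqq \bigcap_{x \in \cG} E(x)$, the inclusions $K(\cF^\star \cup \cF^\ddagger) \subseteq K(\cF^\star) = K$ and $K(\cF^\star \cup \cF^\ddagger) \subseteq K(\cF^\ddagger)$ together with equality of sizes force $K = K(\cF^\ddagger)$, and similarly $K(\cF^\ddagger) = K(\cF^\ddagger \cup \{x_0\}) \subseteq E(x_0)$. Thus $K \subseteq E(x_0)$ for every $k$-superset $x_0$ of $y$, giving $K \subseteq E_\ell(y)$ and hence equality.

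Finally, $E(\cF) \subseteq E(\cF^\star)$ inherits the sunflower structure with kernel $E_\ell(y)$, completing the proof. The main delicacy will be to verify that all the extensions $\cF^\star, \cF^\ddagger, \cF^\star \cup \cF^\ddagger, \cF^\ddagger \cup \{x_0\}$ can be arranged simultaneously from the single hypothesis ``$n$ sufficiently large,'' and to handle the degenerate case $g(2(k-\ell)) = 0$, where $E$ is constant on $\cF$ but the size count still pins down $K = E_\ell(y)$.
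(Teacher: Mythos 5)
Your argument takes a genuinely different route from the paper's. For the reverse inclusion $K \subseteq E_\ell(y)$, the paper's proof is probabilistic: it fixes a pair $u,v \in \cF$ (so $K = E(u)\cap E(v)$), samples a uniformly random $k$-superset $\mathbf{x}$ of $y$, invokes \cref{prop:kernels-are-robust} (a random-walk argument built on the distance-$2$ sunflower structure) to show $j \notin E(\mathbf{x})$ with high probability for any fixed $j \notin E_\ell(y)$, combines that with the fact that $\{u,v,\mathbf{x}\}$ is a sunflower with kernel $y$ with high probability, and derives a contradiction. Your bridge construction is deterministic and avoids \cref{prop:kernels-are-robust} entirely, which is an appealing simplification.

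That said, there is a genuine gap in the bridge step as written. You require $\cF^\ddagger$ to have all petals disjoint from every petal of $\cF^\star$, where $\cF^\star \supseteq \cF$. But $\cF$ is an \emph{arbitrary} sunflower in ${[n] \choose k}$ with kernel $y$, and its petals can partition essentially all of $[n]\setminus y$ (for instance, $\cF$ could contain a $k$-superset of $y$ for every $(k-\ell)$-block in a partition of $[n]\setminus y$). In that case there is no room for $\cF^\ddagger$ no matter how large $n$ is, since $|\cF|$ grows with $n$; the hypothesis ``$n$ sufficiently large'' cannot help. The fix is small and preserves your approach: since the kernel $K$ of the sunflower $E(\cF^\star)$ already equals $E(u)\cap E(v)$ for any fixed distinct pair $u,v\in\cF^\star$, it suffices to build $\cF^\ddagger$ avoiding only $u\setminus y$, $v\setminus y$, and $x_0\setminus y$ --- a constant number of coordinates --- and bridge $\{u,v\}$ to $\{x_0\}$ through $\cF^\ddagger$. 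With that change the argument is sound, and your remark about the degenerate case $g(2(k-\ell))=0$ also holds up: \cref{prop:equidistant-to-sunflower} no longer applies literally (the image family is a singleton), but a singleton is trivially a sunflower and the size count forces all images of $k$-supersets of $y$ to coincide, giving $K = E_\ell(y)$.
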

\begin{proof}
Since $n$ is sufficiently large, if $|\cF| \leq k 2^k$ then we can find a set
$\cF' \supseteq \cF$ with $|\cF'| = k 2^k$ such that $\cF'$ is also a sunflower
with kernel $y$, by taking sufficiently many sets $y \cup z$ where $z \subset
[n]$ has cardinality $|z| = k - \ell$ and $z$ is disjoint from all $x \in \cF$. Therefore
we may assume that $|\cF| > k 2^k$.

Since $\cF$ is a sunflower of sets $x$ with $|x|=k$, there is some $r$ such that
$\dist(x,y) = r$ for every distinct $x,y \in \cF$. Therefore, by assumption,
\[
    \dist(E(x), E(y)) = g(\dist(x,y)) = g(r) 
\]
for every distinct $x,y \in \cF$. Then $E(\cF)$ satisfies the conditions of 
\cref{prop:equidistant-to-sunflower}, so it is a sunflower with some kernel $K$.

It remains to show that $K = E_\ell(y)$. By definition,
\[
    E_\ell(y) = \bigcap_{x \supseteq y,\; |x|=k} E(x) \subseteq \bigcap_{x \in \cF} E(x) = K \,.
\]
Suppose for the sake of contradiction that there exists $j \in K \setminus
E_\ell(y)$. Fix any distinct pair $u,v \in \cF$, so that by definition, $K = E(u) \cap E(v)$. Let
$\mathbf{x} \in { [n] \choose k}$ be a uniformly random set satisfying
$\mathbf{x} \supseteq y$. Then:
\begin{enumerate}
    \item $\Pr{ \{u,v,\mathbf{x}\} \text{ is a sunflower with kernel } y } \geq 1 - o(1)$ (as a function of $n$), and
    \item $\Pr{ j \notin E(\mathbf{x}) } \geq 1- o(1)$ by \cref{prop:kernels-are-robust}.
\end{enumerate}
Therefore, for sufficiently large $n$, the probability that both of these events occur is greater than 0. Thus there exists $x\in \binom{[n]}{k}$ such that $u,v,x$ form a sunflower with kernel $y$, and $j \notin E(u) \cap E(v) \cap E(x)$.
But then, by the argument above, $E(u),E(v),E(x)$ form a sunflower with kernel $E(u) \cap E(v) \cap E(x) = E(u) \cap E(v) = K$, and $j \notin E(x)$ so $j \notin K$, a contradiction.
\end{proof}

We may now complete the proof of \cref{lemma:sunflowers-to-sunflowers}.

\begin{proof}[Proof of \cref{lemma:sunflowers-to-sunflowers}]
Let $\cF \subseteq { [n] \choose \ell }$ be a sunflower with kernel $y \in { [n]
\choose \ell'}$. Let $u,v \in \cF$ be distinct members of $\cF$; then $u \cap v
= y$. Write $K = E_\ell(u) \cap E_\ell(v)$. By definition, $E_{\ell'}(y)
\subseteq K$. Assume for the sake of contradiction that there is $j \in K
\setminus E_{\ell'}(y)$.

If $n$ is sufficiently large, we may choose sets $u', v'$ disjoint from each
other and from $u \cup v$, such that $|u'| = |v'| = k-\ell$. Then $y = u \cap v
= (u \cup u') \cap (v \cap v')$, and $|u \cup u'| = |v \cup v'| = k$. Then $\{u
\cup u', v \cup v'\}$ is a sunflower with kernel $y$, so by
\cref{prop:k-sunflowers-to-sunflowers} we have $E_{\ell'}(y) = E(u \cup u') \cap
E(v \cup v')$. But since $j \in E_\ell(u) \cap E_\ell(v)$, we must also have by
definition $j \in E(u \cup u') \cap E(v \cup v') = E_{\ell'}(y)$, which is a
contradiction, so $E_{\ell'}(y) = E_\ell(u) \cap E_\ell(v)$ for any distinct
$u,v \in \cF$, as desired.
\end{proof}

\section{Two-Player $f$-Codes}
\label{section:two-player-codes}
In this section we prove the structure theorem for two-player $f$-codes.%
\begin{theorem}
\label{thm:two-player-codes}
    Let $f \colon \{0, 2, 4, 6\} \to \bN$ be any function such that, for infinitely
    many values $n \in \bN$, there exists a two-player $f$-code $E_1, E_2 \colon {
    [2n] \choose n } \to \zo^{m(n)}$. Then $f(4) = \frac{f(2) + f(6)}{2}$.
\end{theorem}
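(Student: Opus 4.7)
The plan is to bootstrap the single-player structure theorem \cref{thm:structure} by reducing each of $E_1, E_2$ to a single-player code amenable to the analysis of \cref{section:single-player-codes}, and then extracting the affine relation from a careful accounting of cross-intersections between the two codes' $\Delta$-decompositions.

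First, I would invoke a two-player analogue of the permutation-invariance lemma \cref{lemma:single-player-code-ramsey}, to be established by a Ramsey-theoretic argument in the spirit of \cref{section:invariance}, to reduce to the case where the pair $(E_1, E_2)$ is a two-player $f$-code in which, in addition, $E_1$ is an extended single-player $g_1$-code and $E_2$ is an extended single-player $g_2$-code for some total functions $g_1, g_2 \colon \bN \to \bN$. The key observation is that by the triangle inequality, $g_j(2) \leq \dist(E_j(x), E_{3-j}(y)) + \dist(E_{3-j}(y), E_j(y)) = f(2) + f(0)$, so each $g_j(2)$ is a constant independent of $n$. Applying the sparsification procedure of \cref{prop:single-player-sparsification} simultaneously to both codes then produces sparse $E_1, E_2 \colon \binom{[n]}{k} \to \binom{[m]}{r}$ for a suitably large constant $k$, with $r$ bounded by a constant multiple of $k$, preserving both the two-player $f$-code condition and each single-player $g_j$-code condition. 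Then \cref{lemma:single-player-delta} applies to each $E_j$ individually, yielding the familiar disjoint decompositions
\[
E_j(x) \,=\, \Delta_{E_j}(x) \,\cup\, \Delta_{E_j}(\emptyset) \,\cup\, \bigcup_{i \in x} \Delta_{E_j}(\{i\}),
\]
in which $|\Delta_{E_j}(u)|$ depends only on $|u| \in \{0, 1, k\}$.

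To exploit the two-player condition, I would next apply a further invariance argument, now on both codes jointly so that a single $S_n$-action on the input domain $\binom{[n]}{k}$ lifts to a compatible action on the common output domain $[m]$. This forces the cross-intersection sizes $|\Delta_{E_1}(u) \cap \Delta_{E_2}(v)|$, for $u \subseteq x$ and $v \subseteq y$, to depend only on the local type $(|u|, |v|, |u \cap v|)$. In particular, the ``top-level'' cross-intersection $|\Delta_{E_1}(x) \cap \Delta_{E_2}(y)|$ depends only on $s \coloneqq |x \cap y| = k - t$, and by a sunflower-to-sunflower argument adapted to the joint setting (analogous to \cref{lemma:sunflowers-to-sunflowers} but applied to pairs of codes), this quantity is itself affine in $s$. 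Finally, I would compute $f(2t) = 2r - 2\,|E_1(x) \cap E_2(y)|$ for $\dist(x,y) = 2t \in \{0,2,4,6\}$ by expanding $E_1(x) \cap E_2(y)$ through the $\Delta$-decompositions and collecting terms by local type: contributions from pairs of $\Delta$'s of weight $\leq 1$ are linear in $s$ (the pair-counts being $s \cdot a^{=} + (k^2-s)\cdot a^{\neq}$, and so on), and combined with the affine top-level contribution, $f(2t)$ becomes affine in $t$, giving $f(4) = \tfrac{1}{2}(f(2) + f(6))$.

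The main obstacle I anticipate is the joint invariance lemma needed to force \emph{cross}-intersections $|\Delta_{E_1}(u) \cap \Delta_{E_2}(v)|$ to depend only on the local type: the internal $g_j$-code structure of $E_1$ and $E_2$ individually does not automatically control how their output coordinates are aligned, and a Ramsey argument applied to pairs of codes seems unavoidable. A secondary technical point is extending the sunflower-to-sunflower lemma to handle joint sunflowers across the two codes, which should follow by the same inductive structure as in \cref{section:sunflowers-to-sunflowers} once the joint invariance is in hand; from there, the affine relation falls out of counting, exactly as in the single-player proof.
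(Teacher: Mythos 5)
Your first two steps match the paper: the two-player Ramsey extension (\cref{lemma:two-player-code-ramsey}) gives an extended two-player $f$-code in which each $E_i$ is individually a $g_i$-code, the sparsification (\cref{prop:two-player-sparisification}) moves you to $\binom{[2n]}{k}$ with $r_1, r_2 = \cO(k)$, and the single-player decomposition (\cref{lemma:single-player-delta}) is then available for each $E_i$. The gap is in the finish. You want the cross-intersection sizes $|\Delta_{E_1}(u) \cap \Delta_{E_2}(v)|$ to depend only on the type $(|u|,|v|,|u\cap v|)$ via a joint equivariance lemma (a compatible permutation action on the output coordinates) and a joint sunflower argument, and you flag this as the main obstacle --- but it is not merely a technicality to be filled in. The Ramsey lemmas of \cref{section:invariance} give invariance of scalar observables built from \emph{pairwise} distances $\dist(E_1(x), E_2(y))$, $\dist(E_i(x), E_i(y))$; they do not give equivariance of the codes, and the cross-intersections of the $\Delta$-sets are not functions of pairwise distances: $\Delta_{E_1}(u) \cap \Delta_{E_2}(v)$ unpacks into intersections over many codewords such as $\bigcap_{x \supseteq u} E_1(x) \cap \bigcap_{y \supseteq v} E_2(y)$, and triple-and-higher intersection sizes are not recoverable from pairwise ones. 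There is no evident coloring that would make a hypergraph Ramsey argument yield the claimed invariance, and I do not see how a ``joint sunflower'' lemma would, either; the single-player sunflower lemma gets its leverage precisely from the fact that within one code all pairwise distances are controlled.

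The paper's actual proof sidesteps this entirely and is considerably lighter. It applies \cref{lemma:single-player-delta} to $E_1$ only, fixes $S \define E_2([k])$ as an opaque set of size $< Rk$, and uses a counting argument (\cref{claim:two-player-special-inputs}, exploiting disjointness of the $\Delta_1$-sets) to pick generic indices $j_1,j_2,j_3 \in [n]\setminus[k]$ so that the $\Delta_1$-sets of the shifted inputs are disjoint from $S$. Then $f(4)-f(2)$ and $f(6)-f(4)$ are written as explicit differences of distances to $S$, each equal to $|\Delta_1(\{i\}) \cap S| - |\Delta_1(\{i\}) \setminus S| + \delta_1$, and these two expressions are shown equal by one more application of $f(2)=f(2)$ with swapped indices. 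No decomposition of $E_2$ and no control over cross-intersections of $\Delta$-sets is needed. If you want to salvage your plan, the move to make is to abandon symmetry between $E_1$ and $E_2$ and treat $E_2([k])$ as a fixed, small obstacle set.
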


Similar to the single-player case, the first step is to extend the code beyond
domain $\{0,2,4,6\}$. We will also force each player's
individual encoding $E_1, E_2$ to also be a $g_1, g_2$-code for some functions~$g_1, g_2$. This will allow us to use the established structure of single-player
codes.

\begin{definition}[Extended Two-Player $f$-Codes]
\label{def:extended-two-player-fcode}
Let $f \colon \cD \to \bN$. We will say that a pair of encodings $E_1, E_2 \colon \cX \to
\zo^m$ is an \emph{extended two-player $f$-code} if it is a two-player $f$-code
such that there exist \emph{extensions} $g, g_1, g_2 \colon \bN \to \bN$ such that, for all
$x,y \in \cX$:
\begin{enumerate}
    \item $\dist(E_1(x), E_2(y)) = g(\dist(x,y))$;
    \item $\dist(E_1(x), E_1(y)) = g_1(\dist(x,y))$; and
    \item $\dist(E_2(x), E_2(y)) = g_2(\dist(x,y))$.
\end{enumerate}
\end{definition}

As in the single-player case, we defer the proof of the next lemma to the later
section on invariances; see \cref{section:invariance-two-player-extension}.

\begin{restatable}{lemma}{lemmatwoplayercodeextension}
\label{lemma:two-player-code-ramsey}
    Let $f \colon \{0,2,4,6\} \to \bN$ be any function such that, for infinitely
    many $n$, there
    exists a two-player $f$-code $F_1, F_2 \colon { [2n] \choose n } \to \zo^{m(n)}$.
    Then for every $n$, there exists an
    extended two-player $f$-code $E_1, E_2 \colon { [2n] \choose n}  \to \zo^{m'(n)}$.
\end{restatable}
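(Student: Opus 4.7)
The plan is to adapt the single-player argument (\cref{lemma:single-player-code-ramsey}) to handle both encodings simultaneously via a Ramsey-theoretic extraction. Given $n$, pick $N = N(n) \gg n$ large enough that an $f$-code $F_1, F_2 \colon \binom{[2N]}{N} \to \zo^{m(N)}$ exists (possible by the infinite family assumption). Each injection $\phi \colon [2n] \hookrightarrow [2N]$, together with any fixed completion set $A \subseteq [2N] \setminus \phi([2n])$ of size $N-n$, induces the pull-back encodings $F_i^\phi \colon \binom{[2n]}{n} \to \zo^{m(N)}$ defined by $F_i^\phi(x) \coloneqq F_i(\phi(x) \cup A)$ for $i \in \{1,2\}$. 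Since $\dist(\phi(x) \cup A,\, \phi(y) \cup A) = \dist(x,y)$, the pair $F_1^\phi, F_2^\phi$ is still an $f$-code on $\binom{[2n]}{n}$.

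The task reduces to selecting $\phi$ so that the three distance functions
\[
d_0(x,y) \coloneqq \dist(F_1^\phi(x), F_2^\phi(y)),\quad d_1(x,y) \coloneqq \dist(F_1^\phi(x), F_1^\phi(y)),\quad d_2(x,y) \coloneqq \dist(F_2^\phi(x), F_2^\phi(y))
\]
each depend only on $\dist(x,y)$. The orbits of the diagonal $S_{2n}$-action on pairs in $\binom{[2n]}{n}^2$ are labelled precisely by Hamming distance, so we want each $d_i$ to be constant on every orbit. This is exactly the setting handled by the paper's general invariance machinery developed in \cref{section:invariance}, building on \cite{FHHH24}: for functions with bounded range and stable structure (no large embedded \textsc{Greater-Than} instance), a sufficiently ``random-like'' injection $\phi$ into the larger slice reduces the pull-back to an orbit-invariant function. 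By a triangle-inequality argument analogous to \cref{prop:single-player-fcode-triangle}, each $d_i$ has output bounded by a constant multiple of $\dist(x,y)$, and symmetry in $(x,y)$ makes each $d_i$ stable in the required sense. Applying the invariance lemma to the finite collection $\{d_0, d_1, d_2\}$ (iterated if necessary over the finitely many distance classes in $\{0,2,\ldots,2n\}$) produces the desired injection $\phi$.

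Setting $E_1 \coloneqq F_1^\phi$ and $E_2 \coloneqq F_2^\phi$ then gives an extended two-player $f$-code: the three distances factor through $\dist(x,y)$, yielding total extensions $g, g_1, g_2 \colon \bN \to \bN$, and $g$ agrees with $f$ on $\dom(f) = \{0,2,4,6\}$ because the $f$-code property is preserved under pull-back. The principal obstacle is the Ramsey-theoretic extraction itself: the output length $m(N)$ is allowed to grow with $N$, so the number of potential color-classes is not \emph{a priori} uniformly bounded, and one must exploit stability of the distance functions together with the structure of slices of the Boolean hypercube to control this. This is precisely the content of the general invariance lemma proved in the dedicated section, to which the proof ultimately reduces.
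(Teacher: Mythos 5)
Your high-level plan matches the paper's: pull back the large-$N$ code to the slice $\binom{[2n]}{n}$ via coordinate selection and padding, bound the relevant distances by the triangle inequality, and appeal to the general Ramsey invariance lemma (\cref{lemma:ramsey}). However, the actual application of \cref{lemma:ramsey} is not set up, and the fix you gesture at (``iterated if necessary over the finitely many distance classes'') does not close the gap. The essential missing step is to package the three distance functions $d_0,d_1,d_2$ into a \emph{single} query matrix over a bounded alphabet. In the paper this is done by taking $\Lambda_n = \{0,\dots,n'\}^3$ with $n' = 2f(0) + \tfrac{f(2)}{2}n$, letting $\phi_N(x)$ be the concatenation of $E_1(x)$ and $E_2(x)$, and letting $Q$ send row $(u,u')$ and column $(v,v')$ to $\bigl(\max(\dist(u,v),n'),\,\max(\dist(u',v'),n'),\,\max(\dist(u,v'),n')\bigr)$. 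The cap at $n'$ makes $\Lambda_n$ finite and independent of $N$ --- this is exactly what controls the number of Ramsey color-classes, the issue you flag at the end --- and the three-component entry is what makes all three distances permutation-invariant after a \emph{single} application of \cref{lemma:ramsey}. Without this packaging, \cref{lemma:ramsey} only yields invariance of a scalar quantity; iterating it to handle $d_0,d_1,d_2$ separately would require re-establishing the infinite-family hypothesis after each round while preserving the invariances already obtained, and would in any case require defining a stable query set exposing all remaining distances at once, i.e.\ the same packaging.

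Two smaller inaccuracies: stability of the query matrices does not follow from ``symmetry in $(x,y)$''; in the paper it follows from \cref{prop:ccc-stable} because each $Q$ has constant randomized cost (it is solvable by a constant-cost protocol with an exact-Hamming-distance oracle up to $n'$). And ``a sufficiently random-like injection $\phi$'' is not the right picture of \cref{lemma:ramsey}'s output: the map it constructs is a composition of coordinate selection with staged paddings by fixed $0$s and $1$s, obtained from hypergraph Ramsey together with a Greater-Than-freeness argument, not from choosing $\phi$ at random.
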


Again following the strategy for single-player codes, we ``sparsify'' and focus
on strings with bounded Hamming weight. As before, we postpone the proof, to
\cref{section:two-player-fcode-sparsification}.

\begin{restatable}{proposition}{proptwoplayersparsification}
\label{prop:two-player-sparisification}
    Let $f \colon \{0, 2, 4, 6\} \to \bN$ be any function such that, for infinitely
    many $n$, there exists an extended two-player $f$-code $F_1, F_2 \colon { [2n]
    \choose n } \to \zo^{m(n)}$. Then for any fixed constant $k$, there are
    constants $r_1, r_2 \leq 2 f(0) + f(2)\cdot k$ such that, for infinitely
    many values $n$, there exists an extended two-player $f$-code $E_1, E_2$
    where
    \begin{align*}
        E_1 \colon { [2n] \choose k } \to { [m'(n)] \choose r_1 } &&
        E_2 \colon { [2n] \choose k } \to { [m'(n)] \choose r_2 } \,.
    \end{align*}
\end{restatable}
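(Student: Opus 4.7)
The plan closely mirrors the proof of \cref{prop:single-player-sparsification}, with two additional subtleties. First, the translation step used to make outputs have bounded Hamming weight must apply a \emph{common} shift vector to both players' encodings, so that the cross-distance function $g$ between $E_1$ and $E_2$ is preserved. Second, bounding both output weights requires a ``zig-zag'' triangle inequality that alternates between the two encodings, rather than the one-encoding triangle inequality that sufficed in the single-player case.

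Concretely, set $N \coloneqq 2n + k$ and invoke the hypothesis to obtain an extended two-player $f$-code $F_1, F_2 \colon \binom{[2N]}{N} \to \{0,1\}^{m(N)}$ with extensions $g, g_1, g_2 \colon \bN \to \bN$. Define
\[
F_i'(x) \;\coloneqq\; F_i\bigl(x \cup ([2N]\setminus[N+k])\bigr), \qquad x \in \binom{[N]}{k}.
\]
Since the same fixed set is appended to both arguments of any Hamming distance, $F_1', F_2'$ is still an extended two-player $f$-code with the same extensions $g, g_1, g_2$. Now shift both codes by the \emph{same} vector:
\[
E_1(x) \;\coloneqq\; F_1'(x) \oplus F_1'([k]), \qquad E_2(x) \;\coloneqq\; F_2'(x) \oplus F_1'([k]).
\]
XORing both arguments of any Hamming distance by the same vector preserves it, so the identities $\dist(E_i(x), E_j(y)) = \dist(F_i'(x), F_j'(y))$ persist and $E_1, E_2$ remains an extended two-player $f$-code. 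Finally, restrict to the subdomain $\binom{[N]\setminus[k]}{k}$, which has cardinality $N - k = 2n$, and relabel it as $\binom{[2n]}{k}$.

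For every $x \in \binom{[N]\setminus[k]}{k}$ we have $\dist(x,[k]) = 2k$, hence $|E_1(x)| = g_1(2k)$ and $|E_2(x)| = g(2k)$. The main technical step---and the step I expect to be the only nontrivial one---is to bound both of these quantities by $2f(0) + f(2)\cdot k$. The naive estimate $g_1(2k) \leq k \cdot g_1(2) \leq k(f(0) + f(2))$ is too weak, so I would instead apply a zig-zag triangle inequality along a path $[k] = u_0, u_1, \dotsc, u_k = x$ with $\dist(u_i, u_{i+1}) = 2$, alternating the encoding applied at consecutive vertices. Each alternating transition $\dist(F_a'(u_i), F_b'(u_{i+1}))$ with $a \neq b$ contributes $g(2) = f(2)$, and depending on the parity of $k$ at most one additional ``switch'' $\dist(F_1'(v), F_2'(v)) = g(0) = f(0)$ at an endpoint is needed to land on the correct encoding.

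This yields $g_1(2k), \, g(2k) \leq k f(2) + f(0) \leq 2 f(0) + f(2) \cdot k$, giving the claimed bounds $r_1, r_2 \leq 2 f(0) + f(2) \cdot k$. The principal obstacle is the parity bookkeeping for the zig-zag (ensuring in all cases that we start at $F_1'$ and end at either $F_1'$ or $F_2'$ as dictated by which of $g_1(2k)$ or $g(2k)$ we are bounding); every other ingredient is a direct transcription of the single-player argument.
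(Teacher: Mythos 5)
Your argument is correct, and it differs from the paper's proof in one substantive way. The paper keeps the full domain $\binom{[N]}{k}$ after the shift (so the output weights $|F''_1(x)|$, $|F''_2(x)|$ vary with $\dist(x,[k])$), colors each $k$-set by its pair of output weights, and invokes the hypergraph Ramsey theorem to find a subdomain on which both weights are constant. You instead restrict directly to $\binom{[N]\setminus[k]}{k}$, forcing $\dist(x,[k])=2k$ identically, so that $|E_1(x)| = g_1(2k)$ and $|E_2(x)| = g(2k)$ are automatically constant — exactly mirroring the single-player sparsification proof and making the Ramsey step unnecessary. This is a genuine simplification and requires only $N \geq 2k$ rather than the larger $N$ needed for Ramsey. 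Your correct observation that the shift vector must be the \emph{common} string $F_1'([k])$ for both players, and that $E_2$'s weight is controlled by the cross-extension $g$ rather than $g_2$, are the two points where the two-player case departs from the single-player template.

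Two minor things you should tighten. First, the ``zig-zag'' triangle inequality you re-derive is precisely \cref{prop:two-player-fcode-triangle}, which is already stated and proved in the paper — you can cite it directly instead of redoing the parity case analysis; note that it gives the slightly looser but sufficient bound $g_1(2k), g(2k) \leq 2f(0) + f(2)k$ uniformly, so the parity bookkeeping you were worried about is already packaged there. Second, you omit the pigeonhole step needed to produce a \emph{fixed} pair $(r_1, r_2)$ valid for infinitely many $n$: since the extensions $g, g_1$ may differ for different admissible $N$, so may $g_1(2k)$ and $g(2k)$; but since each lies in $\{0,\dotsc, 2f(0)+f(2)k\}$, some fixed pair recurs infinitely often. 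Relatedly, you cannot literally ``set $N = 2n+k$'' since the hypothesis only provides codes for infinitely many $N$; rather, for each admissible $N$ you obtain a code on domain $\binom{[N]\setminus[k]}{k}$ and relabel.
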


For a given extended two-player $f$-code $E_1, E_2$, we now define the functions
$\Delta_1 \define \Delta_{E_1}$ and $\Delta_2 \define \Delta_{E_2}$ as in
\cref{def:fcode-deltas}. Since $E_1, E_2$ is an \emph{extended} two-player
$f$-code, each $E_i$ is also an extended single-player $g_i$-code for some
function $g_i$, which means we may apply the single-player structure lemma,
\cref{lemma:single-player-delta}, to conclude the proof as follows.
We require the following elementary inequality, proved in \cref{section:two-player-fcode-sparsification}.
\begin{proposition}
\label{prop:two-player-fcode-triangle}
    Let $f \colon \{2\} \cup \cD \to \bN$, let $k \leq n$ be arbitrary, and suppose $E_1, E_2 \colon { [2n] \choose k } \to \zo^m$
    is a two-player $f$-code. Then for any $x,y \in { [2n] \choose k }$,
    \[
        \max\left\{ \dist(E_1(x), E_2(y)), \dist(E_1(x), E_1(y)), \dist(E_2(x), E_2(y)) \right\}
            \leq 2f(0) + \frac{f(2)}{2} \dist(x,y) \,.
    \]
\end{proposition}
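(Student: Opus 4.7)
The plan is to mimic the single-player argument (\cref{prop:single-player-fcode-triangle}) but use an \emph{alternating} path that jumps between the two players' encodings, so that each step can be evaluated via the two-player $f$-code identity at distance $2$. Fix $x,y\in\binom{[2n]}{k}$ and write $d=\dist(x,y)/2$. Choose a sequence $u_0=x,u_1,\ldots,u_d=y$ in $\binom{[2n]}{k}$ with $\dist(u_{i-1},u_i)=2$ for every $i$ (obtained by swapping one element of the symmetric difference at a time).

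For the bound on $\dist(E_1(x),E_2(y))$, set the alternating path
\[
p_i \define \begin{cases} E_1(u_i) & i \text{ even,} \\ E_2(u_i) & i \text{ odd.} \end{cases}
\]
Each consecutive pair has the form $\dist(E_a(u_{i-1}),E_{3-a}(u_i))$ for some $a\in\{1,2\}$, and by the two-player $f$-code property this equals $f(\dist(u_{i-1},u_i))=f(2)$. Hence the triangle inequality gives $\dist(p_0,p_d)\leq d\cdot f(2)=\frac{f(2)}{2}\dist(x,y)$. Now $p_0=E_1(x)$, while $p_d$ equals $E_2(y)$ when $d$ is odd and $E_1(y)$ when $d$ is even. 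In the first case we are done directly; in the second case an extra triangle step using $\dist(E_1(y),E_2(y))=f(0)$ contributes at most $f(0)$. Either way, $\dist(E_1(x),E_2(y))\leq \frac{f(2)}{2}\dist(x,y)+f(0)$.

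The bound for $\dist(E_1(x),E_1(y))$ is obtained from the same path, now using an extra $f(0)$ correction when $d$ is odd (to swap $E_2(y)$ to $E_1(y)$) and none when $d$ is even. The bound for $\dist(E_2(x),E_2(y))$ is obtained symmetrically by reversing the roles of $E_1$ and $E_2$ in the alternating path. In all three cases we obtain a bound of $\frac{f(2)}{2}\dist(x,y)+f(0)\leq 2f(0)+\frac{f(2)}{2}\dist(x,y)$, as required.

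There is no real obstacle here: the proof is a one-step generalization of the single-player triangle estimate, the only subtlety being the parity bookkeeping that forces at most one application of the identity $\dist(E_1(z),E_2(z))=f(0)$ to close the path, which is absorbed into the additive slack $2f(0)$ in the stated bound.
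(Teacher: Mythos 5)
Your proof is correct and follows essentially the same approach as the paper: build a length-$d$ path $u_0,\dotsc,u_d$ in the slice with consecutive distance $2$, alternate between $E_1$ and $E_2$ so each step costs $f(2)$, and absorb one $f(0)$ step to fix the endpoint's player when the parity of $d$ is wrong. The only cosmetic difference is that the paper prepends the $f(0)$ correction at the start of the path for even $d$ (and is a bit looser, taking $2f(0)$ uniformly), whereas you append it at the end and note the tighter $f(0)$ slack suffices — both establish the stated bound.
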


We now prove the main theorem about two-player codes:
\begin{proof}[Proof of \cref{thm:two-player-codes}]
Let $k > 2(2f(0) + f(2)) + 2$ so that $2f(0) + f(2)k < {k \choose 2}$ and let $n > k$ be sufficiently
large. Let $E_1\colon { [2n] \choose k } \to { [m'(n)] \choose r_1 }, E_2\colon { [2n] \choose k } \to { [m'(n)] \choose r_2 }$ be an extended two-player $f$-code provided by \cref{prop:two-player-sparisification} with associated extensions $g, g_1, g_2$ such that
 $r_1, r_2 \leq 2f(0) + f(2) \cdot k$. By \cref{prop:two-player-fcode-triangle},
we have
\[
    g_1(2) \leq 2f(0) + f(2) < k-2\;,
\]
so we may apply \cref{lemma:single-player-delta} to $E_1$. Then for each $x \in
{ [2n] \choose k }$, $E_1(x)$ is the disjoint union
\[
    E_1(x) = \Delta_1(x) \cup \Delta_1(\emptyset) \cup \left( \bigcup_{i \in x} \Delta_1(\{i\}) \right) \,,
\]
with $|\Delta_1(x)| = \delta_k$ and $|\Delta_1(\{i\})| = \delta_1$ for some
constants $\delta_k, \delta_1$.

For any $i \in [k]$, $j \in [n] \setminus [k]$, define $y^{i \to
j} \coloneqq ([k]-\{i\})\cup \{j\}\in {[n] \choose k}$, and denote its compositions as $y^{(i_1,i_2) \to (j_1, j_2)}\coloneqq (y^{i_1\rightarrow j_1})^{i_2\rightarrow j_2}$  for distinct $i_1,i_2\in [k],j_1,j_2\in [n]\backslash [k]$.

\begin{claim}
\label{claim:two-player-special-inputs}
    There exist distinct $j_1, j_2, j_3 \in [n] \setminus [k]$ such that
    \begin{enumerate}
        \item $\forall a\in [3]$, $\Delta_1(\{j_a\}) \cap E_2([k]) = \emptyset$;
        \item $\forall a \in [3]$, $\Delta_1( y^{a \to j_1} ) \cap E_2([k]) = \emptyset$;
        \item $\Delta_1( y^{(1,2) \to (j_1, j_2)} ) \cap E_2([k]) = \emptyset$;
        \item $\Delta_1( y^{(1,2,3) \to (j_1, j_2, j_3)} ) \cap E_2([k]) = \emptyset$.
    \end{enumerate}
\end{claim}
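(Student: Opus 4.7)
The plan is a greedy counting argument resting on two facts. First, by \cref{prop:two-player-sparisification}, $|E_2([k])| \leq r_2$ is a constant (bounded in terms of $f(0), f(2), k$). Second, because $E_1$ is an extended single-player $g_1$-code with the bounds $g_1(2) \leq 2f(0) + f(2) < k-2$ and $r_1 \leq 2f(0) + f(2) \cdot k < \binom{k}{2}$ already established in the setup of the theorem, \cref{lemma:single-player-delta}\ref{item:lemma-delta-disjoint} applies to $E_1$ and the sets $\Delta_1(w)$ are pairwise disjoint across distinct $w \in \binom{[n]}{\leq k}$. Combining these two observations gives the key ``budget'' inequality: for any family $\{w_\alpha\}$ of pairwise distinct subsets of $[n]$ of size at most $k$, at most $r_2$ indices $\alpha$ can satisfy $\Delta_1(w_\alpha) \cap E_2([k]) \neq \emptyset$, since each element of $E_2([k])$ lies in at most one $\Delta_1(w_\alpha)$.

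With this budget in hand, I would pick $j_1, j_2, j_3$ one at a time. When choosing $j_1 \in [n] \setminus [k]$, the constraints that depend only on $j_1$ are condition (1) with $a = 1$ (a condition on the singleton $\{j_1\}$) and condition (2) for each $a \in [3]$ (a condition on the $k$-set $y^{a \to j_1}$). For each fixed $a$, the sets $\{j_1\}$ and $y^{a \to j_1}$ vary distinctly with $j_1$, so each of these four constraints forbids at most $r_2$ values; at most $4r_2$ values of $j_1$ are bad overall. Next, for $j_2 \in [n] \setminus ([k] \cup \{j_1\})$ the new constraints are condition (1) with $a = 2$ and condition (3); for (3), as $j_2$ varies with $j_1$ fixed, the $k$-sets $y^{(1,2) \to (j_1, j_2)} = ([k] \setminus \{1,2\}) \cup \{j_1, j_2\}$ are all distinct, so again at most $2 r_2$ values of $j_2$ are bad. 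The choice of $j_3$ is fully analogous, contributing at most $2 r_2$ forbidden values from (1) with $a = 3$ and from (4). Since $n$ is sufficiently large (it suffices that $n - k > 4 r_2 + 3$), the required distinct $j_1, j_2, j_3$ exist.

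There is no deep obstacle here: the argument is a clean counting/pigeonhole, and the only thing to verify carefully within each constraint is that the relevant $w_\alpha$'s are genuinely distinct so that the disjointness of the $\Delta_1(\cdot)$ sets can be invoked. This is immediate in every case, since changing any ``swap'' parameter produces a different subset of $[n]$. One minor sanity check is the edge case where $\delta_1 = 0$ or $\delta_k = 0$: in those cases the corresponding $\Delta_1(w)$ are empty, the constraints become vacuous, and the argument only gets easier.
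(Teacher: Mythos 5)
Your proposal is correct and rests on exactly the same two observations as the paper's proof: $|E_2([k])| = r_2$ is a small constant, and the $\Delta_1(\cdot)$ sets are pairwise disjoint by \cref{lemma:single-player-delta}, so each constraint rules out at most $r_2$ values. The only organizational difference is that you select $j_1, j_2, j_3$ greedily one at a time, while the paper counts the total number of bad unordered triples $\{j_1,j_2,j_3\}$ (bounding it by $4Rk(n-k)^2$, which is $o\bigl(\binom{n-k}{3}\bigr)$); this is a trivial reformulation of the same pigeonhole argument.
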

\begin{proof}[Proof of claim]
Let $R \define 2f(0) + f(2)$. Since $|E_2([k])| < Rk$ and the sets $\Delta_1(S)$
are pairwise disjoint for all $S \in { [n] \choose k }$, there are at most
$Rk(n-k)^2$ choices of $\{j_1,j_2,j_3\}$ which fail condition 1; at most $R
k(n-k)^2$ choices which fail condition 2; at most $Rk(n-k)$ choices which fail
condition 3; and at most $Rk$ choices which fail condition 4. So there are at
most $4Rk(n-k)^2$ choices of $\{j_1, j_2, j_3\}$ which fail any of the
conditions, which is less than the ${ n-k \choose 3 }$ total choices when
$n$ is sufficiently large.
\end{proof}

For simplicity, let $S \define E_2([k])$. Let $j_1,j_2,j_3$ be such that $S$ is disjoint from the sets considered in \cref{claim:two-player-special-inputs}. We have:
\begin{align*}
    f(4) - f(2)
        &= \dist(E_1( y^{(1,2) \to (j_1,j_2)}), S) - \dist(E_1( y^{1 \to j_1} ), S) \\
        &= |\Delta_1(\{2\}) \cap S| - |\Delta_1(\{2\}) \setminus S|
                + |\Delta_1(\{j_2\})| \\
        &= |\Delta_1(\{2\}) \cap S| - |\Delta_1(\{2\}) \setminus S| + \delta_1 \\
    f(6) - f(4)
        &= \dist(E_1( y^{(1,2,3) \to (j_1,j_2,j_3)} ), S) - \dist(E_1( y^{(1,2) \to (j_1,j_2)} ), S) \\
        &= |\Delta_1(\{3\}) \cap S| - |\Delta_1(\{3\}) \setminus S|
                + |\Delta_1(\{j_3\})| \\
        &= |\Delta_1(\{3\}) \cap S| - |\Delta_1(\{3\}) \setminus S| + \delta_1 \,.
\end{align*}
Now since $S$ is disjoint from $E_1(y^{2 \to j_1})$ and $E_1(y^{3 \to j_1})$,
we have
\begin{align*}
    0
    &= f(2) - f(2) \\
    &= \dist(E_1(y^{2 \to j_1}), S) - \dist(E_1(y^{3 \to j_1}), S) \\
    &= \left(|\Delta_1(\{2\}) \cap S| - |\Delta_1(\{2\}) \setminus S|\right)
        - \left(|\Delta_1(\{3\}) \cap S| - |\Delta_1(\{3\}) \setminus S|\right) \,,
\end{align*}
which shows that $f(4)-f(2) = f(6) - f(4)$, as desired.
\end{proof}

\subsection{Sparsification}
\label{section:two-player-fcode-sparsification}

Let us now prove the sparsification proposition and triangle inequality that were used above.
Starting with the triangle inequality:

\begin{proof}[Proof of \cref{prop:two-player-fcode-triangle}]
    For any $x,y \in { [2n] \choose k }$ with $\dist(x,y) = 2d$, there exists a sequence $u_0, u_1, \dotsc, u_d$ with $u_0=x$,
    $u_d=y$ such that $\dist(u_i, u_{i+1}) = 2$ for all $i \in \{0, \dotsc, d-1\}$. Therefore, if $d$ is odd,
    \begin{align*}
        \dist(E_1(x), E_2(y))
            &\leq \dist(E_1(u_0), E_2(u_1)) \\
                &\qquad+
                \sum_{i=1}^{(d-1)/2} \left(\dist(E_2(u_{2i-1}), E_1(u_{(2i})) + \dist(E_1(u_{2i}), E_2(u_{2i+1}))\right) \\
        &= f(2) \cdot d = \frac{f(2)}{2} \dist(x,y) \,,
    \end{align*}
    and if $d$ is even,
    \begin{align*}
        \dist(E_1(x), E_2(y))
            &\leq \dist(E_1(u_0), E_2(u_0)) \\
                &\qquad+
                    \sum_{i=0}^{d/2 -1} \left(\dist(E_2(u_{2i}), E_1(u_{(2i+1})) + \dist(E_1(u_{2i+1}), E_2(u_{2i+2}))\right) \\
            &= f(0) + f(2) \cdot d = f(0) + \frac{f(2)}{2} \dist(x,y) \,.
    \end{align*}
    The remaining bounds are obtained by adding $f(0)$ to the upper bound to change between $E_1(x)$ and $E_2(x)$.
\end{proof}

\proptwoplayersparsification*
\begin{proof}
    Fix any constant $k$, any $n$, and let $N > n$ be sufficiently large (to be determined later). Let
    $F_1, F_2 \colon { [2N] \choose N } \to \zo^{m(N)}$ be an extended two-player $f$-code, and
    write $g, g_1, g_2$ for the functions such that for all $X,Y \in { [2N] \choose N }$,
    \begin{align*}
        \dist(F_1(X), F_2(Y)) &= g(\dist(X,Y)) \\
        \dist(F_1(X), F_1(Y)) &= g_1(\dist(X,Y)) \\
        \dist(F_2(X), F_2(Y)) &= g_2(\dist(X,Y)) \,.
    \end{align*}
    Let $F'_1, F'_2 \colon { [N] \choose k } \to \zo^{m(N)}$ be defined as
    \begin{align*}
        F'_1(x) \define F_1( x \cup ( [2N] \setminus [N+k] ) ) \\
        F'_2(y) \define F_2( y \cup ( [2N] \setminus [N+k] ) ) \,,
    \end{align*}
    and observe that the padding does not change the distance, so $F'_1, F'_2 \colon { [2N] \choose k } \to \zo^{m(N)}$
    is also an extended two-player $f$-code with associated functions $g, g_1, g_2$. Now define $F''_1, F''_2 \colon { [N] \choose k } \to \zo^{m(N)}$ as
    \begin{align*}
        F''_1(x) \define F'_1(x) \oplus F'_1([k]) \\
        F''_2(x) \define F'_2(x) \oplus F'_1([k]) \,,
    \end{align*}
    noting that the same string $F'_1([k])$ is used in both cases. This transformation preserves distances
    between encodings, so $F''_1, F''_2$ remains an extended two-player $f$-code, but now,
    using \cref{prop:two-player-fcode-triangle}, for any $x \in { [N] \choose k }$ we have
    \begin{align*}
        |F''_1(x)| &= \dist(F'_1(x), F'_1([k])) \leq 2f(0) + \frac{f(2)}{2} \cdot 2k \\
        |F''_2(y)| &= \dist(F'_2(y), F'_1([k])) \leq 2f(0) + \frac{f(2)}{2} \cdot 2k \,.
    \end{align*}
    Now we assign to every subset $S \subseteq [N]$ with cardinality $|S| = k$ the
    color $\colr(S) = (|F''_1(S)|, |F''_2(S)|)$. By the hypergraph Ramsey
    theorem (\cref{thm:hypergraph-ramsey}), if $N$ is sufficiently large, there
    exists a subset $T \subseteq [N]$ of cardinality $|T|=n$ such that all subsets
    $S \subseteq T$ of cardinality $|S|=k$ have the same color $(r_1, r_2)$ with
    $r_1, r_2 \leq 2f(0) + f(2) \cdot k$. Then we obtain the desired extended
    two-player $f$-code by restricting $F''_1, F''_2$ to the coordinates $T$ and
    relabeling the coordinates as $[n]$. Since there is a constant number of
    options for $r_1, r_2 \leq 2f(0) + f(2)\cdot k$, there must be two fixed
    values $r_1, r_2$ which are obtained by the above argument for infinitely
    many values of $n$.
\end{proof}

\section{Separations from the $k$-Hamming Distance Hierarchy}
\label{section:separations}

We now apply the $f$-code theorems to prove our main result about communication
complexity. We prove that the \textsc{$\{4,4\}$-Hamming Distance} problem cannot be
computed by $\cO(1)$ queries to a \textsc{$k$-Hamming Distance} oracle, no matter
which constant $k$ we choose.

\subsection{Definitions: Communication, Oracles, and Reductions} \label{sec:prelim}

Let us begin with some definitions since they will differ slightly from the standard definitions
(\eg~\cite{KN96,RY20}).

\newcommand{\CC}{\mathsf{R}}
\begin{definition}[Communication Problem]
For any alphabet $\Lambda$, a \emph{$\Lambda$-valued communication problem} is a set $\cP$ of
$\Lambda$-valued matrices. For a fixed matrix $P$, we write $\R_\delta(P)$ for the randomized,
public-coin, two-way communication cost of $P$ with error probability at most $\delta$.
A communication problem $\cP$ has \emph{constant-cost} if there is a constant $c$
such that $\R_{1/4}(P) \leq c$ for all $P \in \cP$.
\end{definition}

\begin{definition}[Query set]
A \emph{query set} is a set $\cQ$ of matrices that is closed under (1) row and column copying; (2)
row and column permutations; and (3) row and column deletions. For any set $\cM$ of matrices, we
write $\QS(\cM)$ for the closure of $\cM$ under these operations (\ie the minimal query set
containing $\cM$).
\end{definition}

\newcommand{\Det}{\mathsf{D}}
\begin{definition}[Communication with Oracle Queries]
Let $\cQ$ be any set of Boolean matrices and let $\cP$ be any communication problem. A deterministic
\emph{communication protocol} for $P:\cX\times \cY\rightarrow \Lambda$ with oracle access to $\cQ$
is defined by a binary tree $T$, where each inner node $v$ of $T$ is labeled by a $\cX \times \cY$
submatrix $Q_v \in \QS(\cQ)$ in the query set of $\cQ$,  and each leaf of $T$ is labeled with an element of
$\Lambda$. Any $x\in \cX$ and $y\in \cY$ then naturally corresponds to a path from the root to a
leaf $\ell(x,y)$, where at each node $v$, $Q_v(x,y)$ is used to decide whether to travel to the left
or right child. The protocol correctly computes $P$ if $P(x,y)$ is equal to the label of $\ell(x,y)$
for every $x,y$. The cost of the protocol is the maximum number of internal nodes on any path from
the root to a leaf of $T$. The minimum cost of such a protocol for $P$ is denoted by $\Det^\cQ(P)$.
\end{definition}

\begin{remark}
The above definition does not allow ``normal'' communication---Alice and Bob
cannot send messages directly to each other. But, as long as $\cQ$ is
non-trivial (\ie contains a matrix $Q$ with either a row or column that is not
monochromatic), messages can be simulated by oracle queries.
\end{remark}

\begin{definition}[Reduction]
We say that a communication problem $\cP$ \emph{reduces to} a problem $\cQ$ if there is a constant $c$
such that $\Det^\cQ(P) \leq c$ for every $P \in \cP$.
\end{definition}

In the introduction, we informally stated an alternate definition of reduction. Here
is the formal statement, common in \cite{HWZ22,HZ24,FHHH24}:

\begin{restatable}{proposition}{propreductionfunction}
\label{prop:reduction-function}
    Let $\cP$ be any communication problem and let $\cQ$ be any query set
    of matrices with entries in the alphabet $\Lambda$.
    If $\DD^\cQ(\cP) = \cO(1)$ then there exists a constant $q$ and
    a function $\rho \colon \Lambda^q \to \zo$ such that, for every $N \times M$ matrix $P \in \cP$,
    there exist $N \times M$ matrices $Q_1, \dotsc, Q_q \in \cQ$ such that
    \begin{equation}
        \label{eq:reduction-function}
        P = \rho(Q_1, Q_2, \dotsc, Q_q) \,.
    \end{equation}
\end{restatable}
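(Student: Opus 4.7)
My plan is to non-adaptively unroll any depth-$c$ oracle protocol for $\cP$, where $c \coloneqq \DD^\cQ(\cP)$ is constant. Given $P \in \cP$, its decision tree $T_P$ of depth at most $c$ has internal nodes labeled by matrices in $\QS(\cQ) = \cQ$ and $\zo$-valued leaves. I would pad $T_P$ to a complete binary tree of depth exactly $c$, duplicating queries and labels along short branches, so that it has exactly $q_0 \coloneqq 2^c - 1$ internal nodes indexed $v = 1, \ldots, q_0$ (say, in BFS order), each carrying a query matrix $Q_v^P \in \cQ$ and a branching rule $\beta_v^P \colon \Lambda \to \{L,R\}$; each of the $2^c$ leaves $\ell$ carries a label $b_\ell^P \in \zo$. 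The requirement that the $Q_v^P$ be $N \times M$ is enforced throughout via the closure of $\cQ$ under row/column copying, permutation, and deletion.

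Next, to produce a single universal $\rho \colon \Lambda^q \to \zo$ valid for \emph{every} $P \in \cP$, I would set $q = q_0 + q_1$ with $q_1 = \cO(2^c)$, reserving the first $q_0$ slots for the tree queries $Q_v^P$ and the last $q_1$ slots for constant matrices that encode the protocol-specific data $(\beta_v^P, b_\ell^P)$. Closure gives that any $N \times M$ constant matrix whose value appears in some element of $\cQ$ itself lies in $\cQ$, so I would fix two reference values $v_0 \neq v_1 \in \Lambda$ and place constant-$v_0$ or constant-$v_1$ matrices in the trailing $q_1$ coordinates to encode a finite binary description of $T_P$. The function $\rho(a_1, \ldots, a_q)$ then decodes this description from the last $q_1$ entries and simulates the depth-$c$ tree on the first $q_0$ entries, returning the reached leaf label. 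The degenerate case in which no two distinct values appear in any matrix of $\cQ$ forces every $P \in \cP$ to be constant, and is handled separately.

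The main obstacle is encoding the branching rules $\beta_v^P$, which are nominally functions on the potentially infinite alphabet $\Lambda$ and thus need not admit a finite description independent of $N, M$. I would resolve this by first standardizing the protocol: without loss of generality, each branching $\beta_v^P$ can be taken to be the equality test ``$Q_v^P(x,y) = v_0$?''\ by replacing $Q_v^P$ with an appropriate new query $\tilde Q_v^P \in \cQ$ that consolidates $\beta_v^P \circ Q_v^P$ into the query itself. Establishing that such $\tilde Q_v^P$ always lies in $\cQ$ is the delicate step; one route is to build this closure into the working definition of the query set, and another is to observe that $Q_v^P$ has only finitely many distinct entries so $\beta_v^P$ need only be specified on that finite support, which can then be absorbed through the row/column duplication operations. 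After standardization only the $2^c$ leaf labels remain protocol-specific, and these are carried by the $q_1 = \cO(2^c)$ Boolean constant slots described above, yielding the universal $\rho$ and completing the argument.
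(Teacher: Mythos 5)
Your plan is the same one the paper uses: non-adaptively evaluate the (completed) depth-$c$ protocol tree, let $\rho$ simulate the traversal from the $2^c-1$ answers, and squeeze the protocol-specific data that $\rho$ needs through a constant number of extra queries to constant matrices in $\cQ$. Your second paragraph, together with the observation about constant matrices and the degenerate single-value case, is essentially the paper's proof.

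The trouble is in your third paragraph. The ``standardization'' step --- replacing $Q_v^P$ with a query $\tilde Q_v^P$ that computes $\beta_v^P \circ Q_v^P$ --- is not justified: a query set is closed under row/column copying, permutation, and deletion, and none of these operations ever changes an entry value, so $\cQ$ is \emph{not} closed under entrywise post-processing. Your first route silently changes the definition of query set, and your second route also fails: the finitely-many-distinct-entries observation is true, but ``absorbing $\beta_v^P$ through row/column duplication'' cannot be made precise because duplication/deletion/permutation simply cannot produce a matrix with entries $\beta_v^P(Q_v^P(x,y))$ from $Q_v^P$. So the standardization leaves a real gap. The fix, however, is that there is no obstacle to fix: $\Lambda$ is an alphabet, hence finite (and in every application in the paper it is explicitly finite, \eg $\Lambda=\zo$ or $\Lambda=\zo\times\{0,\dots,K\}$). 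With $|\Lambda|$ a fixed constant, each branching rule $\beta_v^P\colon\Lambda\to\{L,R\}$ is one of finitely many functions, so the tuple of all $(\beta_v^P)_v$ and the $2^c$ leaf labels is one of a \emph{constant} number of possibilities; this index can be written directly into your $q_1=\cO(1)$ constant-matrix slots exactly as in your second paragraph, with no standardization needed. This is precisely what the paper's proof sketch means by ``since [the protocol] is only constant-size, there is only a constant number of possible functions.''
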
 
\begin{proof}[Proof sketch]
Think of the function $\rho$ as simulating the protocol for $P$ given the
answers to each query. This may depend on the protocol itself, but since it
is only constant-size, there is only a constant number of possible functions, so
we can pad the protocol with $\cO(1)$ queries to encode the specific choice of $\rho$.
\end{proof}

\subsection{Extracting $f$-Codes from $k$-Hamming Distance Oracles}  
\label{section:reductions-to-codes}

We will now show how to obtain a non-affine two-player $f$-code using a reduction from $\HFF$ to~$\HD_k$. Together with \cref{thm:two-player-codes} this will imply our main result, \cref{thm:intro-main}.

\begin{lemma}
\label{lemma:non-affine-code}
Suppose that there exists a constant $k$ such that $\HFF$ reduces to $\HD_k$.
Then there exists a function $f \colon \{0,2,4,6\} \to
\bN$ such that $f(4) \neq \tfrac{1}{2}(f(2) + f(6))$, and for infinitely many
$n$ there is a two-player $f$-code $E_1, E_2 \colon \zo^n \to \zo^m$.
\end{lemma}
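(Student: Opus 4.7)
The plan is to extract a non-affine two-player $f$-code from the hypothesised reduction via a sequence of Ramsey-theoretic invariance reductions, matching the sketch in \cref{section:intro-proof-overview}. First, by \cref{prop:reduction-function}, the reduction produces a constant $q$, a fixed decoder $\rho$, and queries $Q_j(x,y) = \HD_k(\alpha_j(x), \beta_j(y))$ such that $\HFF^{(n)} = \rho(Q_1, \ldots, Q_q)$. I restrict attention to the slice inputs $\cX := \binom{[2n]}{n}^{2n}$, on which $\HFF(x,y) = 1$ iff $\sig(x,y) = \{4,4\}$.

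Next, I invoke an invariance lemma from \cref{section:invariance}: colour each input pair by the $q$-tuple of underlying Hamming distances $(\dist(\alpha_j(x), \beta_j(y)))_j$, and pass via a hypergraph Ramsey argument to a sub-domain closed under both inner permutations (permuting coordinates within a single row) and outer permutations (permuting entire rows). On this sub-domain, each distance $D_j(x,y) := \dist(\alpha_j(x), \beta_j(y))$ depends only on the signature, giving a well-defined function $D_j(\sigma)$. Since $\rho$ is fixed and $\HFF$ distinguishes signatures $\{4,4\}$ and $\{2,6\}$, at least one index $i^*$ satisfies $D_{i^*}(\{4,4\}) \neq D_{i^*}(\{2,6\})$; pigeonholing over the infinitely many $n$ under consideration, I fix such an $i^*$.

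Finally, I extract the $f$-code. A further invariance argument, analogous in spirit to the sunflower-to-sunflower arguments of \cref{section:sunflowers-to-sunflowers} and leveraging a triangle-inequality bound like \cref{prop:two-player-fcode-triangle}, should force $D_{i^*}$ to decompose additively over signature elements: $D_{i^*}(\sigma) = c + \sum_{d \in \sigma} g(d)$ for some constant $c$ and some function $g$ with $g(0) = 0$. The inequality $D_{i^*}(\{4,4\}) \neq D_{i^*}(\{2,6\})$ then rearranges to $2g(4) \neq g(2) + g(6)$, so $g$ is non-affine on $\{2,4,6\}$. To realise the $f$-code concretely, embed $u, v \in \binom{[2n]}{n}$ as Alice's and Bob's first-row content with a fixed default row $c_0$ in all other rows, setting $E_1(u) := \alpha_{i^*}(u, c_0, \ldots, c_0)$ and $E_2(v) := \beta_{i^*}(v, c_0, \ldots, c_0)$. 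Then the signature of the embedded pair is the singleton $\{\dist(u,v)\}$ (or empty when $u=v$), so $\dist(E_1(u), E_2(v)) = c + g(\dist(u,v))$, producing a two-player $f$-code for $f(d) := c + g(d)$ which inherits the non-affinity of $g$ on $\{2,4,6\}$.

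The main obstacle is the additive-decomposition step: per-query invariance alone only yields signature-dependence, not the additive form. The plan is to iterate the Ramsey arguments of \cref{section:invariance} on the pair $(\alpha_{i^*}, \beta_{i^*})$ to force the encoding, after cleanup, to behave as a concatenation of independent per-row contributions. Should a full additive decomposition prove elusive, a weaker but still sufficient statement---comparing $D_{i^*}$ on the sibling signatures $\{4,4\}$, $\{2,6\}$, $\{2\}$, $\{4\}$, $\{6\}$, and $\emptyset$ via inner/outer permutations only---should still exhibit the non-affinity $2g(4) \neq g(2) + g(6)$ needed for the desired $f$-code.
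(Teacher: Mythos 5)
Your high-level strategy matches the paper's: restrict to the slice, apply an invariance (Ramsey) lemma so queries depend only on the distance signature, find a query index $i^*$ whose value distinguishes the signatures $\{4,4\}$ and $\{2,6\}$, and then extract a non-affine two-player $f$-code from that query. But the extraction step is where your proposal has a genuine gap, and it is the heart of the argument.

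Your main plan rests on proving an additive decomposition $D_{i^*}(\sigma) = c + \sum_{d\in\sigma} g(d)$, and you correctly flag this as the obstacle. The paper does \emph{not} prove any such decomposition, and it is not clear one can be forced from signature-invariance alone --- signature-invariance gives a function on multisets of small distances, but nothing a priori constrains how that function behaves as the multiset grows. The paper sidesteps additivity entirely. It fixes the first block of the input to one of $x^{(\alpha)}$ for $\alpha\in\{2,4,6\}$ (or to the variable block itself), puts the variable block second, and pads the rest with a constant $z$. This produces four candidate two-player codes $f_2, f_4, f_6, g$ on $\{0,2,4,6\}$. Signature-invariance (in particular, the ability to swap the two nonzero-distance blocks) yields the symmetry $f_\alpha(\beta) = f_\beta(\alpha)$ and the diagonal relation $g(\beta)=f_\beta(\beta)$, and a short lattice argument (the $\lambda$-diagram) then shows that if all four were affine, $f(2,6)=f(4,4)$ would follow --- contradicting the separation at $i^*$. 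Your fallback paragraph gestures in this direction (comparing singleton and doubleton signatures), but it does not supply the specific pairing construction, the cross-symmetry $f_\alpha(\beta)=f_\beta(\alpha)$, or the diagonal coupling via $g$, without which the non-affinity of any single extracted code does not follow from the single separation $D_{i^*}(\{4,4\}) \neq D_{i^*}(\{2,6\})$.

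There is also a technical issue in your invariance step: you propose to colour input pairs by the tuple of actual oracle-input distances $(\dist(\alpha_j(x),\beta_j(y)))_j$, but these distances are unbounded as $n$ grows, so the colour alphabet is infinite and Ramsey does not apply. The paper handles this by working with the augmented oracle $H_t(u,v) = (\ind{\dist(u,v)\le k}, \max(\dist(u,v),K))$, which bounds the alphabet; a separate triangle-inequality argument (Claims 5.9--5.10) then shows the cap $K$ is never hit on the relevant inputs, so the capped values determine the true distances. Without that cap-and-then-bound two-step, the Ramsey argument does not go through.

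In short: the skeleton is right, but the key move is not an additive decomposition --- it is the pairing construction producing $f_2, f_4, f_6, g$ together with the $\lambda$-diagram contradiction, plus the bounded-alphabet device needed to make the Ramsey invariance valid.
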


The starting point of our proof is \cref{prop:reduction-function}. The first
step will be to use an invariance lemma to strengthen this proposition and
impose structure on the query matrices $Q_i$ in \cref{eq:reduction-function}.
The query set of matrices $\QS(\HD_k)$ is the set of matrices obtained by
performing row and column copies, deletions, and permutations from the base set
of matrices $\HD_k \colon \zo^t \times \zo^t \to \zo$. We augment these query
matrices as follows; it is clear that these augmented queries are more powerful,
so that a reduction to $\HD_k$ implies a reduction to the augmented query set:

\begin{definition}[Augmented Query Set]
\label{def:augmented-thd-query-set}
Let $k \leq K$ be any constants. We will define the following set of matrices.
We define the set $\Lambda \define \zo \times \{0, 1, \dotsc, K \}$. For any $t
\in \bN$, we write $H_t \colon \zo^t \times \zo^t \to \Lambda$ as the matrix defined
by
\[
    H_t(u,v) \define \left( \ind{\dist(u,v) \leq k}, \max( \dist(u,v), K ) \right) \,.
\]
We then let $\cQ \define \QS(\{ H_t : t \in \bN\})$ be the query set obtained from these matrices
by taking the closure under row and column duplications, deletions, and permutations.
\end{definition}

Our analysis will focus on the inputs $x,y$ to $\HFF$ which have $2n$ blocks of
$2n$ bits and Hamming weight exactly $n$ in each block. For simplicity of
notation, write $\Sigma_n \define { [2n] \choose n }$ for the weight $n$ slice,
so that we are interested in the subset of inputs to $\HFF^{2n}$ belonging to
$\Sigma_n^{2n}$. We use the following definition:

\begin{definition}[Distance Signatures]
\label{def:distance-signatures}
For any $n \in \bN$ and any $x,y \in \Sigma_n^{2n}$, we let the \emph{distance
signature} of $x,y$ be the (unordered) multiset of nonzero distances each of the $2n$
blocks:
\[
    \sig(x,y) \define \left\{ \dist(x_i,y_i)  : i \in [2n], \dist(x_i,y_i) > 0 \right\} \,.
\]
\end{definition}
Note that $\HFF^{2n}(x,y) = 1$ if and only if $\sig(x,y) = \{ 4, 4 \}$. In
particular, the value of $\HFF^{2n}(x,y)$ on inputs $x,y \in \Sigma_n^{2n}$
depends only on their distance signature. Our strengthened version of
\cref{prop:reduction-function} shows that we may force the (augmented) query
matrices to \emph{also} depend only on the distance signatures. This lemma will
be proved in the section dedicated to invariance lemmas; see
\cref{section:query-permutation-invariance}.

\begin{restatable}{lemma}{lemmahffinvariance}
\label{lemma:hff-invariance}
Suppose there exists a constant $k$ such that $\DD^{\HD_k}(\HFF) = \cO(1)$. Let
$K \geq k$ be any constant and let $\cQ$ be the query set on values
$\Lambda \define \zo \times \{0,\dotsc,K\}$ defined in
\cref{def:augmented-thd-query-set}. Then there exists a constant $q$
and a function $\rho \colon \Lambda^q \to \zo$ which depends only on the first
bit of each input in $\Lambda$, such that: $\forall n \in \bN$
there exist $Q_1, \dotsc, Q_q \in \cQ$ such that
\[
    \HFF^{2n} = \rho(Q_1, Q_2, \dotsc, Q_q) \,,
\]
and for every $x,y,u,v \in \Sigma_n^{2n}$ and $i \in [q]$,
\[
    \sig(x,y) = \sig(u,v) \implies Q_i(x,y) = Q_i(u,v) \,.
\]
\end{restatable}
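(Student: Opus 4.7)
The plan is to lift the unconditional reduction supplied by Proposition~\ref{prop:reduction-function} to a reduction whose queries respect all symmetries of $\sig$, via a hypergraph Ramsey argument in the spirit of~\cite{FHHH24}.

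First I would apply Proposition~\ref{prop:reduction-function} to the hypothesis $\DD^{\HD_k}(\HFF) = \cO(1)$ to obtain a constant $q$, a fixed function $\rho_0 \colon \zo^q \to \zo$, and, for each dimension $2N$, queries $Q_1^0, \ldots, Q_q^0 \in \QS(\HD_k)$ such that $\HFF^{2N} = \rho_0(Q_1^0, \ldots, Q_q^0)$. Each $Q_i^0$ is a submatrix of some $\HD_k$, \ie there exist encodings $\alpha_i, \beta_i$ mapping inputs of $\HFF^{2N}$ to $\zo^{t_i}$ with $Q_i^0(x,y) = \ind{\dist(\alpha_i(x), \beta_i(y)) \leq k}$. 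Using the same encodings I would augment $Q_i^0$ to the $\cQ$-query $Q_i$ reporting $(\ind{\dist \leq k}, \max(\dist, K))$, and let $\rho$ be $\rho_0$ composed with projection onto the first coordinate of each $\Lambda$-symbol. This gives $\HFF^{2N} = \rho(Q_1, \ldots, Q_q)$ with $\rho$ depending only on the first bit of each argument, as required.

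Next, fix the target dimension $n$ and take $N$ to be a sufficiently large integer (chosen at the end via Ramsey). The signature-preserving group $\cG_n$ is the wreath product $S_{2n} \wr S_{2n}$: an outer permutation $\sigma \in S_{2n}$ of the $2n$ blocks, together with inner permutations $\pi_1, \ldots, \pi_{2n} \in S_{2n}$ applied jointly to the columns of $x_i$ and $y_i$ inside each block. Both $\HFF^{2n}$ and $\sig(\,\cdot\,,\,\cdot\,)$ are $\cG_n$-invariant on $\Sigma_n^{2n} \times \Sigma_n^{2n}$. I would use hypergraph Ramsey, applied to the $2N$-dimensional reduction, to find a ``copy'' of $\Sigma_n^{2n}$ inside $\Sigma_N^{2N}$ (namely, a choice of $2n$ blocks out of $2N$ together with $2n$ columns inside each selected block) on which every restricted query $Q_i$ depends only on $\sig(x,y)$. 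Concretely, I color each such nested choice of blocks/columns by the tuple of values $(Q_1, \ldots, Q_q)$ evaluated on canonical inputs supported by that choice; since $\Lambda^q$ is finite, iterated hypergraph Ramsey extracts a monochromatic structured subset, and restricting the reduction to this copy produces queries $Q_i$ that are constant on each $\cG_n$-orbit (\ie depend only on $\sig(x,y)$).

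The main obstacle is to set up the Ramsey coloring so that a single monochromatic structure simultaneously enforces invariance under \emph{all} of $\cG_n$, \ie both inner permutations within each block and outer permutations across blocks. Handling these two levels of symmetry requires a wreath-style Ramsey argument: first, for each block, a Ramsey step over the column index sets that ``absorbs'' inner permutations into the encodings; then, a Ramsey step over block indices that absorbs outer permutations. The augmentation of queries to $\cQ$ is essential here---reporting $\max(\dist, K)$ in the second coordinate keeps the output alphabet finite (so Ramsey applies), while retaining enough distance information to guarantee that even the event $\dist > K$ becomes signature-invariant after restriction. This careful two-level Ramsey setup, extending the approach of~\cite{FHHH24}, is the technical heart of the proof and is what we develop in \cref{section:query-permutation-invariance}.
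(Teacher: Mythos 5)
Your proposal correctly identifies the starting point (apply \cref{prop:reduction-function} to get $\HFF^{2N} = \rho_0(Q^0_1,\ldots,Q^0_q)$ with $\HD_k$ queries, then augment each query to the $\Lambda$-valued query set $\cQ$ of \cref{def:augmented-thd-query-set}, with $\rho$ depending only on the first bit), and it correctly anticipates that a two-level argument is needed, one for inner permutations within blocks and one for outer permutations across blocks. This matches the paper's architecture (\cref{lemma:inner-outer-invariance} via \cref{lemma:inner-permutation-invariance}). However, the plan has a genuine gap: you assert that ``iterated hypergraph Ramsey extracts a monochromatic structured subset, and restricting the reduction to this copy produces queries $Q_i$ that are constant on each $\cG_n$-orbit.'' Ramsey alone does not deliver this. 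The Ramsey step only yields invariance under swaps that involve a \emph{passive} domino, i.e.\ one that can be absorbed into the padding used to build the coloring: $\domino{0}{0}$ and $\domino{1}{1}$ at the inner level, $\domino{a}{a}$ at the outer level. Those are the swaps that can be written as a relabeling of which padded coordinates are used, so a monochromatic set $T$ collapses them.

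The adjacent transposition of two \emph{active} dominoes---$\domino{0}{1} \leftrightarrow \domino{1}{0}$ at the inner level (\cref{prop:inner-01}) and $\domino{a}{u} \leftrightarrow \domino{v}{a}$ at the outer level (\cref{prop:outer-invariance-auv})---cannot be handled this way, because both transposed coordinates are non-constant and cannot be hidden inside the padding. The paper handles these with an entirely separate mechanism: stability of $\cQ$ (\cref{def:stable-matrix-set} and \cref{prop:ccc-stable}), i.e.\ the absence of large \textsc{Greater-Than} submatrices. If a query $Q_i$ distinguished the two orderings, one uses the already-established padding-swap invariance to slide the two active dominoes past the padding and produce a large \textsc{Greater-Than} submatrix inside $Q_i$, contradicting stability. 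This is where the second coordinate $\max(\dist,K)$ really earns its keep: not merely to keep the alphabet finite for Ramsey (the first coordinate alone would already be finite), but to give the augmented queries constant communication cost, hence stability, so that the \textsc{Greater-Than} argument applies. Your write-up mentions stability only obliquely and folds everything into ``a careful two-level Ramsey setup''; without the \textsc{Greater-Than}/stability step as a distinct ingredient, the argument does not close, since a pure Ramsey extraction leaves open the possibility that some $Q_i$ encodes an ordering of the active blocks.
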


We may now prove our main lemma, which shows that if there is a reduction from
\textsc{$\{4,4\}$-Hamming Distance} to \textsc{$k$-Hamming Distance}, then we
can extract two-player $f$-codes $E_1, E_2$ for
constant functions $f \colon \{0,2,4,6\} \to \bN$ with $f(4) \neq \tfrac{1}{2}(f(2) +
f(6))$.

\begin{proof}[Proof of \cref{lemma:non-affine-code}]
    Fix any constant $k$ and suppose
    $\mathsf{D}^{\HD_k}(\HFF) = \cO(1)$. Let $K = 10k$ and let $\Lambda = \zo \times \{0, \dotsc, K\}$
    and $\cQ$ be the $\Lambda$-valued matrices of \cref{def:augmented-thd-query-set}. 
    Then we apply \cref{lemma:hff-invariance} so that there exists a constant $q$
    and a function $\rho \colon \Lambda^q \to \zo$ which depends only on the first
    bit of each of its $q$ inputs, such that for all $n$ there
    exist $Q_1, \dotsc, Q_q \in \cQ$ with
    \[
        \HFF^{2n} = \rho(Q_1, Q_2, \dotsc, Q_q) \,,
    \]
    such that $Q_i(x,y) = Q_i(u,v)$ whenever $x,y,u,v \in \Sigma_n$ satisfy
    $\sig(x,y) = \sig(u,v)$.
    
    Now suppose that $x,y$ have $\sig(x,y) = \{4,4\}$
    while $x',y'$ have $\sig(x',y') = \{2,6\}$. Recall that $\rho \colon \Lambda^q \to \zo$ depends only
    on the first bit of each input in $\Lambda$. Assume for the sake of contradiction that
    \[
        \forall i \in [q] \;:\; \text{first bit of } Q_i(x,y) = \text{ first bit of } Q_i(x',y') \,.
    \]
    Then $\HFF(x,y) = \rho(Q_1(x,y), \dotsc, Q_q(x,y)) = \rho(Q_1(x',y'), \dotsc, Q_q(x',y')) = \HFF(x',y')$,
    which is a contradiction, so it must be the case that there exists some $j \in [q]$ such that
    \begin{equation}
        \label{eq:query-separator}
        \text{first bit of } Q_j(x,y) \neq \text{first bit of } Q_j(x',y') \,.
    \end{equation}
    By definition of $\cQ$, there is some matrix $H_m \colon \zo^m \times \zo^m \to \Lambda$
    and some pair of maps $\phi, \psi \colon \Sigma_n^{2n} \to \zo^m$ such that
    \[
        \forall x,y \in \Sigma_n^{2n} \;:\qquad Q_j(x,y) = H_m( \phi(x), \psi(y) ) \,.
    \]
    We now choose arbitrary pairs of strings $(x^{(0)}, y^{(0)}), (x^{(2)},
    y^{(2)}), (x^{(4)}, y^{(4)}), (x^{(6)}, y^{(6)}) \in \Sigma_n^2$ which
    satisfy $\dist(x^{(\beta)}, y^{(\beta)}) = \beta$ for each $\beta \in
    \{0,2,4,6\}$, and we also choose an arbitrary $z \in \Sigma_n$. Define the
    functions $f_2, f_4, f_6, g \colon \{0,2,4,6\} \to \bN$ as follows:
    \begin{equation}
        \label{eq:2-player-code-values}
    \begin{aligned}
        f_2(\beta) &\define \dist\left( \phi\left( (x^{(2)}, x^{(\beta)}, z, z, \dotsc, z) \right),
                                       \psi\left( (y^{(2)}, y^{(\beta)}, z, z, \dotsc, z) \right)\right) \\
        f_4(\beta) &\define \dist\left( \phi\left( (x^{(4)}, x^{(\beta)}, z, z, \dotsc, z) \right),
                                       \psi\left( (y^{(4)}, y^{(\beta)}, z, z, \dotsc, z) \right)\right) \\
        f_6(\beta) &\define \dist\left( \phi\left( (x^{(6)}, x^{(\beta)}, z, z, \dotsc, z) \right),
                                       \psi\left( (y^{(6)}, y^{(\beta)}, z, z, \dotsc, z) \right)\right) \\
        g(\beta)   &\define \dist\left( \phi\left( (x^{(\beta)}, x^{(\beta)}, z, z, \dotsc, z) \right),
                                       \psi\left( (y^{(\beta)}, y^{(\beta)}, z, z, \dotsc, z) \right)\right) \,.
    \end{aligned}
    \end{equation}
    We now define maps on domain
    $\Sigma_n$ as follows:
    \begin{equation}
    \label{eq:find-2-player-codes}
    \begin{aligned}
        E_1^{(f_\alpha)}(u) &\define \phi\left( ( x^{(\alpha)}, u, z, z, \dotsc, z ) \right)
      & E_2^{(f_\alpha)}(u) &\define \psi\left( ( y^{(\alpha)}, u, z, z, \dotsc, z ) \right) \\
        E_1^{(g)}(u) &\define \phi\left( ( u, u, z, z, \dotsc, z ) \right)
      & E_2^{(g)}(u) &\define \psi\left( ( u, u, z, z, \dotsc, z ) \right) \,.
    \end{aligned}
    \end{equation}
    Our goal is to show that each $E_1^{(h)}, E_2^{(h)}$ is a two-player $h$-code.
    For convenience, for each $h \in \{f_2, f_4, f_6, g\}$, we define
    \[
        H_m^{(h)}(u,v) \define H_m( E_1^{(h)}(u), E_2^{(h)}(v) ) \,.
    \]
    As desired, these values are invariant under permutations:
    \begin{claim}
    \label{claim:permutation-invariance-h}
        For any $u,v,y',v' \in \Sigma_n$, if $\dist(u,v) = \dist(u',v')$ then $H_m^{(h)}(u,v) = H_m^{(h)}(u',v')$.
    \end{claim}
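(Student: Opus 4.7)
The plan is to deduce the claim directly from the permutation-invariance built into \cref{lemma:hff-invariance}. Recall that $H_m^{(h)}(u,v) = H_m(E_1^{(h)}(u), E_2^{(h)}(v))$, and by construction $H_m(E_1^{(h)}(u), E_2^{(h)}(v))$ equals $Q_j(X,Y)$ for a specific pair $X,Y \in \Sigma_n^{2n}$ obtained by plugging $u,v$ and the fixed inputs $x^{(\alpha)}, y^{(\alpha)}, z$ into the coordinates of $X,Y$ as prescribed by \cref{eq:find-2-player-codes}. Since \cref{lemma:hff-invariance} guarantees that $Q_j(X,Y)$ depends only on $\sig(X,Y)$ whenever $X,Y\in\Sigma_n^{2n}$, it suffices to verify that $\sig(X,Y)$ is determined by $\dist(u,v)$ alone (modulo quantities that are fixed).

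First I would handle $h = f_\alpha$, where $X = (x^{(\alpha)}, u, z, z, \dotsc, z)$ and $Y = (y^{(\alpha)}, v, z, z, \dotsc, z)$. Then the multiset of nonzero block-distances is
\[
\sig(X,Y) \subseteq \{\alpha, \dist(u,v)\}
\]
with zeros removed. In particular, if $\dist(u,v) = \dist(u',v')$, the corresponding $\sig$ values agree exactly, and invariance of $Q_j$ gives $H_m^{(f_\alpha)}(u,v) = H_m^{(f_\alpha)}(u',v')$. Next, for $h=g$, we have $X = (u,u,z,\dotsc,z)$ and $Y = (v,v,z,\dotsc,z)$, so $\sig(X,Y) \subseteq \{\dist(u,v), \dist(u,v)\}$, again determined entirely by $\dist(u,v)$. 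Applying \cref{lemma:hff-invariance} a second time yields $H_m^{(g)}(u,v) = H_m^{(g)}(u',v')$ whenever $\dist(u,v)=\dist(u',v')$.

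There is no substantial obstacle here: the claim is essentially a bookkeeping exercise that repackages the invariance of $Q_j$ under permutations of the block-pairs and of coordinates within each block. The only mild subtlety is confirming that each tuple we build genuinely lies in $\Sigma_n^{2n}$ (which holds since $x^{(\alpha)}, y^{(\alpha)}, u, v, z$ all belong to $\Sigma_n$), so that \cref{lemma:hff-invariance} applies. The real work of the argument has already been done in establishing \cref{lemma:hff-invariance}; the present claim is the clean interface through which that invariance propagates into the two-player $f$-codes $E_1^{(h)}, E_2^{(h)}$ and will let us, in the next step, read off the distances $f_2, f_4, f_6, g$ from a fixed ``distance table'' and ultimately contradict \cref{thm:two-player-codes}.
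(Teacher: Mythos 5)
Your proposal is correct and takes essentially the same approach as the paper: you identify $H_m^{(h)}(u,v)$ with the entry $Q_j(X,Y)$ of the invariant query matrix at the padded inputs from \cref{eq:find-2-player-codes}, check that $\sig(X,Y)$ depends only on $\dist(u,v)$ (and the fixed parameters $\alpha$, $z$), and invoke the signature-invariance from \cref{lemma:hff-invariance}. The paper states this in a single sentence; your version just spells out the signature calculation case by case.
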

    \begin{proof}[Proof of claim]
    Any $u,v, u', v' \in \Sigma_n$ with $\dist(u,v) = \dist(u',v')$ have the
    same signature in the inputs defined in \cref{eq:find-2-player-codes}, and
    the value of $H_m(\phi(\,\cdot\,),\psi(\,\cdot\,))$ depends only on the signature. 
    \end{proof}
    We cannot immediately conclude from
    \cref{claim:permutation-invariance-h} that each pair $E_1^{(h)}, E_2^{(h)}$
    is a two-player $h$-code, because the value
    \begin{equation}
        \label{eq:hm-entry}
        H_m^{(h)}(u,v) = \left( \ind{ \dist(E_1^{(h)}(u), E_2^{(h)}(v)) \leq k },
                            \max\{ K, \dist(E_1^{(h)}(u), E_2^{(h)}(v)) \} \right)
    \end{equation}
    is not yet guaranteed to determine the value $\dist(E_1^{(h)}(u),
    E_2^{(h)}(v))$, because of the max in the second part. We must show that
    these distances are less than $K$ in order to obtain the two-player
    $h$-codes:
    \begin{claim}
    \label{claim:small-code-values}
    For any $h \in \{f_2, f_4, f_6, g\}$ and any $\alpha,\beta \in \{0,2,4,6\}$:
    \begin{enumerate}
    \item If $h(\beta) < K$ then every $u,v \in \Sigma_n$ with $\dist(u,v) = \beta$ satisfies
            $\dist(E_1^{(h)}(u), E_2^{(h)}(v)) = h(\beta)$. 
    \item If $f_\alpha(\beta) < K$ then $f_\alpha(\beta) = f_\beta(\alpha)$, and if
            $f_\beta(\beta) < K$ then $g(\beta) = f_\beta(\beta)$.
    \end{enumerate}
    \end{claim}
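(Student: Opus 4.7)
The plan is to derive both parts from the invariance of $Q_j$ on distance signatures, combined with the observation that the second coordinate of any $H_m$-entry recovers the true Hamming distance exactly whenever that distance is below $K$ (the matrix $H_t$ reports the distance truncated at $K$; this is precisely why the preceding paragraph takes $K = 10k$). The main workhorses are \cref{claim:permutation-invariance-h} (symmetry within one choice of $h$) and the signature-invariance of $Q_j$ from \cref{lemma:hff-invariance} (symmetry across different choices of $h$).

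For Part 1, I would fix $h \in \{f_2, f_4, f_6, g\}$ and $\beta$ with $h(\beta) < K$, pick an arbitrary pair $u,v \in \Sigma_n$ with $\dist(u,v) = \beta$, and compare to the canonical pair $(x^{(\beta)}, y^{(\beta)})$. Applying \cref{claim:permutation-invariance-h} to $h$ gives $H_m^{(h)}(u,v) = H_m^{(h)}(x^{(\beta)}, y^{(\beta)})$, so in particular the second coordinates agree. By the definitions in \cref{eq:2-player-code-values}, the right-hand second coordinate is the truncation of $h(\beta)$ at $K$, which equals $h(\beta)$ since $h(\beta) < K$. Hence the left-hand second coordinate is also $h(\beta) < K$, which forces $\dist(E_1^{(h)}(u), E_2^{(h)}(v)) = h(\beta)$ as claimed.

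For the first half of Part 2, I would carry out a block-swap argument using \cref{lemma:hff-invariance} directly. Let
\[
A \define (x^{(\alpha)}, x^{(\beta)}, z, \dotsc, z), \quad B \define (y^{(\alpha)}, y^{(\beta)}, z, \dotsc, z),
\]
and let $A', B'$ be obtained by interchanging the first two blocks. Then $(A,B)$ and $(A',B')$ have the same distance signature (the unordered multiset $\{\alpha,\beta\}$ with any zeros removed), so \cref{lemma:hff-invariance} gives $Q_j(A,B) = Q_j(A',B')$. Unpacking $Q_j = H_m(\phi(\,\cdot\,), \psi(\,\cdot\,))$ and reading off the second coordinates yields that the truncations of $f_\alpha(\beta)$ and $f_\beta(\alpha)$ at $K$ agree; under the hypothesis $f_\alpha(\beta) < K$ this forces $f_\alpha(\beta) = f_\beta(\alpha)$. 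The second identity $g(\beta) = f_\beta(\beta)$ is immediate by inspection: the input tuples defining $g(\beta)$ and $f_\beta(\beta)$ in \cref{eq:2-player-code-values} are literally identical, so the distances coincide unconditionally.

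Overall the claim is essentially bookkeeping on top of the two invariance statements, and no step is difficult. The one point that warrants care is the role of the ``$< K$'' hypothesis: it exists solely to invert the truncation in $H_m$, so that equality of $H_m$-values translates into equality of genuine Hamming distances. The choice $K = 10k$ in the preceding step is engineered precisely to make this hypothesis applicable at every place the claim is later invoked, so I do not expect any genuine obstacle here.
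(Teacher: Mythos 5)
Your proof is correct and takes essentially the same approach as the paper: apply \cref{claim:permutation-invariance-h} for Part~1, and a block-swap (signature-invariance) argument for Part~2, with the ``$< K$'' hypothesis used in both parts to invert the truncation in $H_m$. One small refinement over the paper's write-up: you correctly note that $g(\beta) = f_\beta(\beta)$ holds unconditionally because the defining tuples in \cref{eq:2-player-code-values} are literally identical, so the ``$f_\beta(\beta) < K$'' hypothesis is superfluous for that half of Part~2.
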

    \begin{proof}[Proof of claim]
    Let $u,v \in \Sigma_n$ satisfy $\dist(u,v) = \beta$. From
    \cref{claim:permutation-invariance-h}, we know the entry
    $H_m^{(h)}(u,v)$ in \cref{eq:hm-entry}
    remains the same if we replace $u,v$ with $x^{(\beta)}, y^{(\beta)}$. Since
    $\dist(E_1^{(h)}(x^{(\beta)}), E_2^{(h)}(y^{(\beta)})) = h(\beta) < K$, the maximum $K$
    in the second part of the entry is not achieved. So
    \[
        \dist(E_1^{(h)}(u), E_2^{(h)}(v)) = \dist(E_1^{(h)}(x^{(\beta)}), E_2^{(h)}(y^{(\beta)})) = h(\beta) \,.
    \]
    The second conclusion of the claim follows from the fact that, assuming the maximum
    $K$ is not achieved, swapping the first two blocks in the strings in
    \cref{eq:2-player-code-values} will not change the signature and therefore
    will not change the entry of $H_m(\phi(\,\cdot\,), \psi(\,\cdot\,))$, which means the distance
    also will not change.
    \end{proof}
    
    We observe that $f_6(2) \neq f_4(4)$, which follows from
    \cref{eq:query-separator} because $Q_j(x,y)$ depends only on $\sig(x,y)$,
    and the inputs defining $f_6(2)$ and $f_4(4)$ have signatures $\{2,6\}$ and
    $\{4,4\}$. From \cref{eq:query-separator} and the definition of $H_m$, this
    means that either
    \begin{equation}
    \label{eq:less-than-k}
        f_6(2) \leq k \text{ or } f_4(4) \leq k\,.
    \end{equation}
    This will be sufficient to bound the other values of $h(\beta)$ using the triangle inequality.
    We will require \cref{fact:triangle-helper}, proved below.
    
    \begin{claim}
    \label{claim:all-code-values-small}
    Assume $n \geq 12$.
    Let $h \in \{f_2, f_4, f_6, g\}$ and $\beta \in \{0,2,4,6\}$. Then $h(\beta) < K$. 
    \end{claim}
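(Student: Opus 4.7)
The plan is to argue by contradiction, leveraging the permutation invariance together with the triangle inequality and \cref{fact:triangle-helper}. By \eqref{eq:less-than-k} we may assume, WLOG, that $f_4(4) \leq k$; the other case $f_6(2) \leq k$ is handled by an entirely analogous argument that swaps the roles of the two signatures $\{4,4\}$ and $\{2,6\}$ (they are precisely the two signatures distinguishing $\HFF$). Suppose for contradiction that there is some $h \in \{f_2, f_4, f_6, g\}$ and $\beta \in \{0,2,4,6\}$ with $h(\beta) \geq K = 10k$.

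The first step is to combine \cref{claim:permutation-invariance-h} with the structure of $H_m$. For any signature $S \subseteq \{0,2,4,6\}$ of size $\leq 2$, the value $H_m(\phi(u),\psi(v))$ is constant across all pairs $(u,v) \in \Sigma_n^{2n}\times\Sigma_n^{2n}$ with $\sig(u,v)=S$. Decoding the two coordinates $(\ind{\dist\leq k},\,\max(\dist,K))$ of $H_m$, the pairs with signature $S$ must fall in exactly one of three regimes: (i) all have $\dist(\phi(u),\psi(v))\leq k$; (ii) all have $k<\dist(\phi(u),\psi(v))\leq K$; or (iii) all have a common exact distance $D>K$. By assumption, the offending $(h,\beta)$ picks out some signature $S^*$ in regime (iii) with value $D^* > K$, while the signature $\{4,4\}$ is in regime (i) with value $\leq k$.

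The next step is to connect $\{4,4\}$ to $S^*$ through a short chain of intermediate pairs, interpolating by single-block changes. Since all the signatures involved lie in the finite set $\{0,2,4,6\}^{\leq 2}$ (a constant number of possibilities), the chain has bounded length. At each step the signature is obtained from the previous one by replacing a single block's contents, so by \cref{fact:triangle-helper} (applied together with the Hamming triangle inequality) the distance $\dist(\phi(u),\psi(v))$ can change by at most an additive $O(k)$, bounded in terms of $f_4(4)$ and the constant $\gamma = \dist(\phi(w),\psi(w)) = g(0)$. The hypothesis $n \geq 12$ ensures there is enough room in $\Sigma_n$ to realize all intermediate strings by performing the swaps of at most six pairs of coordinates needed to move between the fixed representatives $y^{(0)},y^{(2)},y^{(4)},y^{(6)}$.

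The main obstacle is controlling the additive increments in the chain well enough that the accumulated total stays strictly below $K = 10k$. The choice $K = 10k$ is calibrated precisely for this: as long as each step contributes at most $O(k)$ and the chain has length $O(1)$, we get $D^* \leq k + O(k) < 10k = K$, contradicting $D^* > K$ and thus establishing the claim. A secondary, subsidiary difficulty is bootstrapping the bound on $\gamma$ itself (which is the value $g(0)$ that the claim also needs to bound): this has to be done first, by observing that $\gamma \leq \dist(\phi(Y_{4,4}),\psi(Y_{4,4})) \leq \dist(\phi(Y_{4,4}),\psi(X_{4,4})) + f_4(4)$ where the first summand, by invariance applied to $\sig(Y_{4,4},X_{4,4}) = \sig(X_{4,4},Y_{4,4}) = \{4,4\}$, equals $f_4(4) \leq k$, giving $\gamma \leq 2k < K$ and seeding the chain.
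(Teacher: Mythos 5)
Your high-level idea — propagate the one known-small value from \eqref{eq:less-than-k} to all $h(\beta)$ via the Hamming triangle inequality together with \cref{fact:triangle-helper} and \cref{claim:permutation-invariance-h} — is the same as the paper's. But the plan as written has several gaps that would need to be filled to make it a proof.

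First, the chain of pairs ``interpolating by single-block changes'' does not by itself yield a valid triangle-inequality estimate. If you move from a pair $(A,B)$ to $(A',B')$ by altering both parties' block $i$, the Hamming triangle inequality lets you only insert legs of the form $\phi(\cdot)\to\psi(\cdot)$, and the cross-legs such as $\dist(\phi(A'),\psi(B))$ have signatures determined by $\dist(A'_i,B_i)$, which is not controlled. The paper's proof avoids exactly this by building an explicit alternating chain $\phi\to\psi\to\phi\to\psi$ in which \emph{only the second block} moves and every consecutive leg has signature $\{\alpha, 2s\}$ (or $\{\alpha,6\}$): concretely, it first shows $f_6(\beta) \le 3f_6(2) \le 3k$ via a chain with distance-$2$ steps, then uses the symmetry $f_\alpha(6)=f_6(\alpha)\le 3k$ from \cref{claim:small-code-values}, then shows $f_\alpha(\beta)\le 3f_\alpha(6)\le 9k$ via a chain with distance-$6$ steps, and finally $g(\beta)=f_\beta(\beta)\le 9k$. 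You would need to make your ``single-block'' chain equally explicit, and in particular argue that every leg has a signature whose value is already bounded.

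Second, the bootstrapping of $\gamma=g(0)$ is incorrect as written. The bound $\dist(\phi(Y_{4,4}),\psi(Y_{4,4})) \le \dist(\phi(Y_{4,4}),\psi(X_{4,4})) + f_4(4)$ implicitly charges the second summand to $\dist(\psi(X_{4,4}),\psi(Y_{4,4}))$, a $\psi$-$\psi$ distance that the reduction gives no handle on. A correct bound requires a three-leg $\phi\to\psi\to\phi\to\psi$ chain and yields $\gamma\le 3f_4(4)\le 3k$, not $2k$; you never ``see'' $\psi$-$\psi$ or $\phi$-$\phi$ distances.

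Third, your ``regime (iii)'' misreads $H_m$: with the intended truncation (the second coordinate is $\min(\dist,K)$ — note the codomain is $\{0,\dots,K\}$), pairs in a signature class with distance $\ge K$ need \emph{not} share a common exact distance; you only learn that it is at least $K$. This doesn't sink your contradiction (you only need $h(\beta)\ge K$), but the framing as ``a common exact distance $D^*>K$'' is wrong.

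Finally, the claim that ``each step contributes $O(k)$'' and ``the chain has bounded length'' so the total stays below $10k$ is exactly the part that needs to be computed, not asserted: the paper's two rounds of triangle inequalities each multiply by $3$, giving $9k<10k$, and $K=10k$ is chosen to make this work. Without pinning down your chain you can't verify the constant.
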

    \begin{proof}[Proof of claim]
    From \cref{eq:less-than-k}, we have either $f_6(2) \leq k$ or $f_4(4) \leq
    k$. Assume $f_6(2) \leq k$; a nearly identical proof will hold if we instead
    assume $f_4(4) \leq k$. Let $u,v \in \Sigma_n$ have $\dist(u,v) = \beta$ for
    any $\beta \in \{0,2,4,6\}$. Then by \cref{fact:triangle-helper}, there
    exists $w_1, w_2 \in \Sigma_n$ such that $\dist(u,w_1) = \dist(w_1,w_2) =
    \dist(w_2,u) = 2$. Then by the triangle inequality,
    \begin{align*}
        &\dist(E_1^{(h)}(u), E_2^{(h)}(u)) \\
        &\qquad\leq \dist(E_1^{(h)}(u), E_2^{(h)}(w_1))
        + \dist(E_1^{(h)}(w_1), E_2^{(h)}(w_2))
        + \dist(E_1^{(h)}(w_2), E_2^{(h)}(v)) \\
        &\qquad\leq 3 \cdot f_6(2) \leq 3k \,,
    \end{align*}
    where the final line follows from \cref{claim:small-code-values}. This
    establishes $f_6(\beta) \leq 3k < K$ for any $\beta \in \{0,2,4,6\}$. Again
    using \cref{claim:small-code-values}, we have $f_\alpha(6) = f_6(\alpha)$
    for all $\alpha \in \{0,2,4,6\}$. Repeating similar arguments with the
    assumption that $f_\alpha(6) \leq 3k$, we get $f_\alpha(\beta) \leq 3 \cdot
    3k$ for any $\beta \in \{0,2,4,6\}$, so $f_\alpha(\beta) \leq 9k$ for all
    $\alpha \in \{2,4,6\}, \beta \in \{0,2,4,6\}$. Again using
    \cref{claim:small-code-values}, we also have $g(\beta) = f_\beta(\beta) \leq
    9k$.
    \end{proof}
    From \cref{claim:small-code-values} and \cref{claim:all-code-values-small},
    we may now conclude that each $E_1^{(h)}, E_2^{(h)}$ is a two-player $h$-code
    with all values $h(\beta) < K$. Now we claim that at least one of these
    $h$-codes must be non-affine:
   
    \begin{claim} One of the functions $h \in \{f_2, f_4, f_6, g\}$ satisfies
    $h(4) \neq \tfrac{1}{2}( h(2) + h(6) )$.
    \end{claim}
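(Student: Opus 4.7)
The plan is to argue by contradiction: suppose all four of $f_2, f_4, f_6, g$ are affine on $\{2,4,6\}$, and derive the forbidden equality $f_4(4) = f_6(2)$, contradicting the inequality $f_6(2) \neq f_4(4)$ already recorded just before \cref{claim:all-code-values-small}.

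First I would record the two \emph{symmetry relations} supplied by \cref{claim:small-code-values}: because \cref{claim:all-code-values-small} has already guaranteed that every code value stays strictly below $K$, the ``max with $K$'' clause in the definition of $H_m$ never triggers, so swapping the first two coordinate blocks in the defining inputs \eqref{eq:2-player-code-values} leaves the signature (and hence the distance) unchanged. This gives
\[
    f_\alpha(\beta) = f_\beta(\alpha) \qquad \text{and} \qquad g(\beta) = f_\beta(\beta)
\]
for all $\alpha,\beta \in \{0,2,4,6\}$.

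Next I would write out the four affinity hypotheses and immediately rewrite each using the symmetries (so that $f_2(4) = f_4(2)$, $f_2(6) = f_6(2)$, $f_4(6) = f_6(4)$, $g(\beta) = f_\beta(\beta)$):
\begin{align*}
    f_2 \text{ affine:} &\quad f_4(2) = \tfrac12\bigl(f_2(2) + f_6(2)\bigr), \\
    f_4 \text{ affine:} &\quad f_4(4) = \tfrac12\bigl(f_4(2) + f_6(4)\bigr), \\
    f_6 \text{ affine:} &\quad f_6(4) = \tfrac12\bigl(f_6(2) + f_6(6)\bigr), \\
    g   \text{ affine:} &\quad f_4(4) = \tfrac12\bigl(f_2(2) + f_6(6)\bigr).
\end{align*}
Substituting the first and third equations into the second yields $4 f_4(4) = f_2(2) + 2 f_6(2) + f_6(6)$, while doubling the fourth yields $4 f_4(4) = 2 f_2(2) + 2 f_6(6)$. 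Equating the two expressions produces $2 f_6(2) = f_2(2) + f_6(6) = 2 f_4(4)$, so $f_6(2) = f_4(4)$, contradicting the inequality noted above.

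No step should present a real obstacle; the whole argument is a short piece of linear algebra once the two symmetries are in hand. The only subtlety is that the symmetry relations drawn from \cref{claim:small-code-values} require boundedness $h(\beta) < K$, which is precisely what \cref{claim:all-code-values-small} was set up to provide.
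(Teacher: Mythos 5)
Your proof is correct and takes essentially the same approach as the paper: assume all four affinities, use the symmetry relations $f_\alpha(\beta)=f_\beta(\alpha)$ and $g(\beta)=f_\beta(\beta)$ (justified exactly as you say by the boundedness from the preceding claim), and derive $f_6(2)=f_4(4)$, contradicting the separator property. The paper just packages the same algebra in ``arithmetic progression step size'' notation with a diagram, but the underlying computation is identical.
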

    \begin{proof}[Proof of claim]
        Assume for the sake of contradiction that for all functions   $h \in
        \{f_2, f_4, f_6, g\}$ we have $h(4) = \tfrac{1}{2}( h(2) + h(6) )$. In
        other words, for each $h$ there is a constant $\lambda$ such that the values
        $h(2), h(4), h(6)$ are an arithmetic progression with step size $\lambda$
        (\ie $h(4) = h(2) + \lambda, h(6) = h(4) + \lambda$). We denote by
        $\lambda_2, \lambda_4, \lambda_6, \lambda_g$ the constants for $f_2,f_4,f_6,g$, respectively.
        Recall from \cref{claim:all-code-values-small} and \cref{claim:small-code-values}
        that $f_\alpha(\beta) = f_\beta(\alpha)$ and $g(\beta) = f_\beta(\beta)$.
        Then we may write $f(\alpha,\beta) = f_\alpha(\beta) = f_\beta(\alpha)$
        and put the values in the following diagram:
    \begin{center} \begin{tikzpicture} 
            \node (22) at (0, 0) {$f(2,2)$};
            \node (24) at (3, 0) {$f(2,4)$}; \node (26) at (6, 0) {$f(2,6)$};
            \node (42) at (0, 2) {$f(4,2)$}; \node (44) at (3, 2) {$f(4,4)$};
            \node (46) at (6, 2) {$f(4,6)$}; \node (62) at (0, 4) {$f(6,2)$};
            \node (64) at (3, 4) {$f(6,4)$}; \node (66) at (6, 4) {$f(6,6)$}; 
            \draw[->] (22)--node[above]{$+\lambda_2$}(24);
            \draw[->] (24)--node[above]{$+\lambda_2$}(26); 
            \draw[->] (42)--node[above]{$+\lambda_4$}(44); 
            \draw[->] (44)--node[above]{$+\lambda_4$}(46); 
            \draw[->](62)--node[above]{$+\lambda_6$}(64); 
            \draw[->] (64)--node[above]{$+\lambda_6$}(66); 
            \draw[->](22)--node[right]{$+\lambda_2$}(42); 
            \draw[->] (24)--node[right]{$+\lambda_4$}(44); 
            \draw[->] (26)--node[right]{$+\lambda_6$}(46); 
            \draw[->] (42)--node[right]{$+\lambda_2$}(62); 
            \draw[->] (44)--node[right]{$+\lambda_4$}(64); 
            \draw[->] (46)--node[right]{$+\lambda_6$}(66); 
            \draw[->] (22)--node[right]{$+\lambda_g$}(44); 
            \draw[->] (44)--node[right]{$+\lambda_g$}(66);
    \end{tikzpicture} 
    \end{center}
    Recall from \cref{eq:less-than-k} that
    that $f(2,6) \neq f(4,4)$.
    From the upper left quadrant of the diagram, we have $2\lambda_4 = \lambda_2+\lambda_6$. On the
    diagonal, $\lambda_g = \lambda_2+\lambda_4  = \lambda_4+\lambda_6$ so $\lambda_2 = \lambda_6$.
    Then $\lambda_2=\lambda_4 = \lambda_6$,  which contradicts $f(4,4) \neq f(2,6)$.
    \end{proof}

    To conclude the proof of \cref{lemma:non-affine-code},
    we observe that, since $h(\beta) < K$ for each $\beta \in \{0,2,4,6\}$,
    there is only a constant number of functions $\{0,2,4,6\} \to [K] \cup \{0\}$ to choose from. For every
    sufficiently large $n$, one of the chosen functions must be non-affine and
    admit a two-player $h$-code. Therefore there must be some non-affine
    function $f \colon \{0,2,4,6\} \to [K] \cup \{0\}$ which admits a two-player $f$-code for infinitely many values of $n$.
\end{proof}

\newcommand{\whitesquare}{%
  \begingroup
  \setlength{\fboxsep}{0pt} 
  \setlength{\fboxrule}{0.1pt} 
  \fcolorbox{black}{white}{\phantom{$\blacksquare$}} 
  \endgroup
}

Let us prove the simple fact that was used above to apply the triangle inequality:

\begin{fact}
\label{fact:triangle-helper}
    Let $n \geq 12$, let $u,v \in \Sigma_n$ have $\dist(u,v) \in \{0,2,4,6\}$
    and let $s \in \{1,2,3\}$. Then there exist $w_1, w_2 \in \Sigma_n$ such
    that
    \[
        \dist(u,w_1) = \dist(w_1, w_2) = \dist(w_2,v) = 2s \,.
    \]
\end{fact}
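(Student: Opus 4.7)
The plan is to construct $w_1$ and $w_2$ explicitly by tracking their intersections with a four-part partition of $[2n]$ induced by $u$ and $v$. Write $d \define \dist(u,v)/2 \in \{0,1,2,3\}$ and set $P_1 \define u \cap v$, $P_2 \define u \setminus v$, $P_3 \define v \setminus u$, and $P_4 \define [2n] \setminus (u \cup v)$, whose sizes are $n-d, d, d, n-d$, respectively. The Hamming distance between any two elements of $\Sigma_n$ is easily read off from how they intersect these four parts.

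First I would produce $w_1 \in \Sigma_n$ satisfying $\dist(u, w_1) = 2s$ and $\dist(w_1, v) \leq 4s$, splitting on whether $d \leq s$ or $d > s$. In the former case, take $w_1 \define (P_1 \setminus S_1) \cup P_3 \cup S_4$ where $S_1 \subseteq P_1$ and $S_4 \subseteq P_4$ are arbitrary subsets of size $s-d$; a direct count gives $\dist(u, w_1) = 2s$ and $\dist(w_1, v) = 2(s-d) \leq 2s$. In the latter case, take $w_1 \define P_1 \cup (P_2 \setminus T_2) \cup T_3$ where $T_2 \subseteq P_2$ and $T_3 \subseteq P_3$ are arbitrary subsets of size $s$; then $\dist(u, w_1) = 2s$ and $\dist(w_1, v) = 2(d-s) \leq 4$. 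In either case $\dist(w_1,v) \leq 4s$, and the bound $n \geq 12$ comfortably accommodates every required subset.

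The main step is an auxiliary claim: \emph{for any $a, b \in \Sigma_n$ with $\dist(a,b) = 2t$ and $t \leq 2s$, provided $n \geq 12$, there exists $c \in \Sigma_n$ with $\dist(a,c) = \dist(c,b) = 2s$}. To prove it, I would partition $[2n]$ into the four analogous parts $Q_1, Q_2, Q_3, Q_4$ with respect to $a, b$, set $\alpha \define \max(0, t-s)$, and take $c$ to consist of any $n - s - \alpha$ elements of $Q_1$, any $\alpha$ elements each of $Q_2$ and $Q_3$, and any $s - \alpha$ elements of $Q_4$. The hypothesis $t \leq 2s$ gives $\alpha \leq \min(s,t)$, so the prescribed subset sizes all fit inside their $Q_i$ (the binding constraint $n - t \geq s - \alpha$ reduces to $n \geq 2s$, trivially satisfied since $s \leq 3$). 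A direct count then confirms $|c| = n$ and $\dist(a,c) = \dist(c,b) = 2s$. Applying this claim to $(a,b) = (w_1, v)$ yields the required $w_2$. I expect no serious obstacle---the argument is really just careful case analysis and bookkeeping of the intersection sizes.
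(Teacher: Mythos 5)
Your proof is correct. Every claim checks out: the weight and distance computations in both cases of the $w_1$ construction are right, the subset-size constraints all hold comfortably under $n \geq 12$ and $s \leq 3$, and the auxiliary ``midpoint at prescribed distance'' lemma is correctly stated and proved (the constraint $\alpha \leq s$, which makes $s - \alpha \geq 0$, is exactly the hypothesis $t \leq 2s$, and all four parts $Q_i$ have room for the prescribed subsets).

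The paper's own proof is considerably more terse: it exhibits a single pictorial construction for the case $\dist(u,v)=2$, $s=3$, $n=12$ and declares ``the other cases are similar.'' The underlying idea is the same as yours---build the intermediate points explicitly by bookkeeping their intersections with the partition $\{u\cap v, u\setminus v, v\setminus u, [2n]\setminus(u\cup v)\}$---but you modularize it into a genuinely cleaner two-step argument: first place $w_1$ at distance exactly $2s$ from $u$ with $\dist(w_1,v)$ small (at most $4s$), then invoke an auxiliary claim producing a point equidistant at $2s$ from two given points at distance at most $4s$. The auxiliary claim is a reusable lemma and would extend painlessly to any bounded parameter range $(d,s)$, whereas the paper's sketch implicitly leaves twelve cases to the reader. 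Your version is the more rigorous and more general of the two.
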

\begin{proof}
We verify the statement for the case $\dist(u,v) = 2, s = 3$; the other cases are similar.
Without loss of generality we may consider $n = 12$, and one may check that the following construction suffices
(up to permutations):
\begin{align*}
    u &\qquad   \blacksquare \blacksquare \blacksquare \blacksquare \blacksquare \blacksquare
                \whitesquare \whitesquare \whitesquare \whitesquare \whitesquare \whitesquare \\
    w_1 &\qquad \blacksquare \blacksquare \blacksquare \whitesquare \whitesquare \whitesquare
                \blacksquare \blacksquare \blacksquare \whitesquare \whitesquare \whitesquare \\
    w_2 &\qquad \blacksquare \blacksquare \blacksquare \whitesquare \whitesquare \whitesquare
                \whitesquare \whitesquare \whitesquare \blacksquare \blacksquare \blacksquare \\
    v &\qquad   \blacksquare \blacksquare \blacksquare \blacksquare \blacksquare \whitesquare
                \blacksquare \whitesquare \whitesquare \whitesquare \whitesquare \whitesquare
    \tag*{\qedhere}
\end{align*}
\end{proof}

\subsection{A $\{k,k\}$-Hamming Distance Hierarchy}
\label{section:kk-hierarchy}

Having proved that \textsc{$\{4,4\}$-Hamming Distance} is not reducible
to the \textsc{$k$-Hamming Distance} hierarchy, we can now show that
there is another infinite hierarchy of \textsc{$\{k,k\}$-Hamming Distance}
problems which are each separate from the \textsc{$k$-Hamming Distance} hierarchy.
This is a simple consequence of the main result of \cite{FHHH24}:

\begin{theorem}[\cite{FHHH24}]
\label{thm:fhhh-no-complete}
For every constant-cost communication problem $\cQ$,
there exists a constant $k$ such that \textsc{$k$-Hamming Distance}
does not reduce to $\cQ$.
\end{theorem}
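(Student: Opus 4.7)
The plan is to proceed by contradiction, using the same invariance/Ramsey toolkit that powers the proof of \cref{thm:intro-main}. Suppose $\cQ$ is a constant-cost communication problem such that $\HD_k$ reduces to $\cQ$ for every constant $k$. By \cref{prop:reduction-function}, for each $k$ and each $n$ there exist a constant $c_k$, a reduction function $\rho_k$, and query matrices $Q_1^{(k,n)}, \dotsc, Q_{c_k}^{(k,n)} \in \QS(\cQ)$ such that $\HD_k^n = \rho_k(Q_1^{(k,n)}, \dotsc, Q_{c_k}^{(k,n)})$. Since $\HD_k^n$ is invariant under simultaneous coordinate permutations applied to Alice's and Bob's inputs, the next step is a hypergraph-Ramsey invariance argument analogous to \cref{lemma:hff-invariance}: after restricting to the Hamming slice $\binom{[2N]}{N}$ for a large $N$ and passing to a sufficiently large invariant subdomain, we may assume each $Q_i^{(k,n)}(x,y)$ depends only on $\dist(x,y)$. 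Consequently, each query factors through an encoding $E_i^{(k,n)}$ which is an $f_i^{(k,n)}$-code for some partial function on the even distances realized in the slice.

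The second step is to invoke the structure theorem for $f$-codes (\cref{thm:structure}, together with the generalization to $\{2,4,\dotsc,2t\}$ noted after the theorem). Pigeonholing the $f_i^{(k,n)}$ across infinitely many values of $n$ so that they stabilize to a single function $f_i^{(k)}$, each $f_i^{(k)}$ must be affine: $f_i^{(k)}(d) = \alpha_i d + \beta_i$. The constant-cost hypothesis on $\cQ$ now enters: the queries live in $\QS(\cQ)$, whose matrices inherit bounded alphabet size and bounded randomized cost from $\cQ$. This caps the output distances $\dist(E_i(x), E_i(y))$ that the encoding can produce on the slice. Therefore for $d$ beyond a threshold depending on $\cQ$, the slope $\alpha_i$ is forced to $0$. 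Choosing $k$ larger than this threshold, the joint query profile $(Q_1(x,y),\dotsc,Q_{c_k}(x,y))$ becomes constant across all large distances, so the Boolean combination $\rho_k$ cannot distinguish distance exactly $k$ from nearby distances, a contradiction.

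The principal obstacle is the uniformity of the invariance step: the structure theorem requires an infinite family of $f_i$-codes witnessed by a \emph{single} function $f_i$, whereas a direct Ramsey application only naturally yields $n$-dependent functions $f_i^{(n)}$. Overcoming this requires an additional pigeonhole step to force stabilization across infinitely many $n$, using that $f_i$ restricted to any fixed finite domain of distances has only boundedly many possibilities. A secondary subtlety is that the number of queries $c_k$ may grow with $k$, so the ``bounded profile'' argument in the final step must be calibrated with care: we do not obtain a single global bound on the number of distinct profiles, but affineness of each $f_i$ caps the number of transitions in the joint profile as the distance varies, independently of $c_k$, and this is the structural feature that closes the contradiction.
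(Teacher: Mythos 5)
This theorem is \emph{not} proved in the present paper; it is a black-box citation to \cite{FHHH24}, and the paper only \emph{uses} it (in the corollary establishing the $\{k,k\}$-Hamming Distance hierarchy). So there is no ``paper's own proof'' to compare against, and the task is to assess your argument on its own merits.

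There is a fundamental mismatch in your plan: the $f$-code machinery (\cref{thm:structure} and the surrounding apparatus) is about encodings $E\colon\{0,1\}^n\to\{0,1\}^m$ where the observable quantity is the \emph{Hamming distance} $\dist(E(x),E(y))$. That structure is available precisely because, in the paper's setting, the oracle is $\HD_k$, so a query is of the form $\HD_k(\phi(x),\psi(y))$ and depends on the Hamming distance of two encoded strings. In your setting the oracle is an \emph{arbitrary} constant-cost problem $\cQ$, so a query $Q_i(x,y)$ is just an entry of a stable $\Lambda$-valued matrix with no Hamming-distance semantics at all. After your Ramsey/invariance step you obtain a function $\gamma_i(\dist(x,y))\in\Lambda$, but this is not an $f$-code, and the claim that ``each query factors through an encoding $E_i$ which is an $f_i$-code'' has no justification. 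Consequently the appeal to \cref{thm:structure} (affineness) and the downstream inferences (``caps the output distances $\dist(E_i(x),E_i(y))$'', ``affineness $\dots$ caps the number of transitions'') are not available; there are no $f_i$ to which the structure theorem applies.

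The correct argument, which is the one in \cite{FHHH24}, runs differently after the invariance step: once each query becomes a function $\gamma_i$ of the distance alone, one uses \emph{stability} of $\QS(\cQ)$ directly. A $\gamma_i$ that changes value too many times along the distance axis lets you embed a large \textsc{Greater-Than} submatrix into a matrix of $\QS(\cQ)$, contradicting \cref{prop:ccc-stable}. So each $\gamma_i$ has boundedly many level changes, with the bound depending on $\cQ$ but not on $k$ or $n$; this is the quantity that actually replaces your ``affineness caps transitions'' step. One then chooses $k$ so large that the constant-size reduction cannot isolate distance exactly $k$. Your own flagged worry about $c_k$ growing is real, and your proposed fix does not address it because it rests on the unavailable $f$-code structure; the [FHHH24] argument handles it by working with a \emph{fixed} candidate cost $c$ at a time and ruling it out for some $k$ depending on $c$. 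As written, your proposal does not constitute a proof of the theorem.
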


\begin{corollary}
    For every constant $k$, there exists a constant $k' > k$ such that
    \textsc{$\{k', k'\}$-Hamming Distance} does not reduce to
    \textsc{$\{k,k\}$-Hamming Distance}.
\end{corollary}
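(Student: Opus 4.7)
The plan is to deduce the corollary from \cref{thm:fhhh-no-complete} by showing that a reduction from $\{k',k'\}$-Hamming Distance to $\{k,k\}$-Hamming Distance would imply a reduction from $\HD_{k'}$ to $\{k,k\}$-Hamming Distance, and then using the FHHH no-complete-problem theorem to rule this out for infinitely large $k'$.

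First, I would record two simple reductions. (i) For any $j \leq k$, $\HD_j$ reduces to $\HD_k$ by padding: Alice appends $1^{k-j}$ to $x$ and Bob appends $0^{k-j}$ to $y$, increasing their Hamming distance by exactly $k-j$. (ii) For any $K$, $\HD_K$ reduces to $\{K,K\}$-Hamming Distance: Alice forms the $n \times n$ matrix $X$ whose first two rows both equal $x$ and whose remaining rows are $0^n$, and Bob forms $Y$ analogously from $y$. Then $\{K,K\}$-HD$(X,Y) = 1$ exactly when both rows $1$ and $2$ differ with Hamming distance $K$, i.e.\ when $\dist(x,y) = K$. Composing (i) and the reduction $\HD_k \to \{k,k\}$-HD from (ii), we obtain $\HD_j \to \{k,k\}$-HD for every $j \leq k$.

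Next, apply \cref{thm:fhhh-no-complete} to the constant-cost problem $\cQ = \{k,k\}$-Hamming Distance. This produces a constant $K$ such that $\HD_K$ does not reduce to $\{k,k\}$-HD. The previous paragraph shows $K > k$, since every $\HD_j$ with $j \leq k$ does reduce to $\{k,k\}$-HD.

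Finally, set $k' \coloneqq K$. Suppose for contradiction that $\{k',k'\}$-HD reduced to $\{k,k\}$-HD. Composing this with reduction (ii) above (in the form $\HD_{k'} \to \{k',k'\}$-HD) would give a reduction $\HD_{k'} \to \{k,k\}$-HD, contradicting the choice of $K = k'$. Hence $\{k',k'\}$-HD does not reduce to $\{k,k\}$-HD, which proves the corollary. There is no real obstacle here beyond the observation that $K>k$, which is guaranteed by the pad-and-duplicate reductions in the first paragraph; the rest is just chaining reductions and invoking \cref{thm:fhhh-no-complete} as a black box.
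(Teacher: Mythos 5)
Your proof is correct and follows essentially the same route as the paper: apply \cref{thm:fhhh-no-complete} to $\cQ = \HD_{k,k}$ to obtain a constant $K$ for which $\HD_K$ does not reduce to $\HD_{k,k}$, observe that $\HD_{K}$ reduces to $\HD_{K,K}$ by duplicating the row into the first two blocks and padding the rest with zeros, and then derive a contradiction by chaining reductions. One small point in your favor: you explicitly verify that $K > k$, by noting that $\HD_j$ reduces to $\HD_k$ by padding and hence to $\HD_{k,k}$ for all $j \le k$, so the $K$ produced by the FHHH theorem cannot be $\le k$. The paper's proof takes the $k'$ from \cref{thm:fhhh-no-complete} without explicitly arguing $k' > k$, even though the corollary asserts it; your extra paragraph closes that (minor) gap, and the rest matches the paper's argument.
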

\begin{proof}
    Fix any constant $k$ and let $k'$ be the constant obtained from
    \cref{thm:fhhh-no-complete} by letting $\cQ$ be the constant-cost problem
    $\HD_{k,k}$. Assume for the sake of contradiction that there is a reduction
    from $\HD_{k',k'}$ to $\HD_{k,k}$, so that
    \[
        \DD^{\HD_{k,k}}(\HD_{k',k'}^n) = \cO(1) \,.
    \]
    We then obtain a reduction from $\HD_{k'}$ to $\HD_k$ as follows.
    For $x \in \zo^n$, define
    \[
        p(x) \define (x, x, \vec 0, \dotsc, \vec 0) \in (\zo^n)^n
    \]
    where the all-0 vector is repeated $n-2$ times. Then for $x,y \in \zo^n$,
    we have
    \[
        \HD_{k'}^n(x,y) = 1 \iff \HD_{k',k'}^n(p(x), p(y)) = 1 \,.
    \]
    Then on inputs $x,y$ to $\HD_{k'}$, the two players can use one query to $\HD_{k',k'}^n$
    to solve the problem, so $\DD^{\HD_{k',k'}}(\HD_{k'}) = 1$. But then
    \[
        \DD^{\HD_{k,k}}(\HD_{k'}^n) \leq \DD^{\HD_{k,k}}( \HD_{k',k'}^n ) = \cO(1) \,,
    \]
    which contradicts the fact that $\HD_{k'}$ does not reduce to $\HD_{k,k}$.
\end{proof}

\section{Invariance Lemmas}
\label{section:invariance}

In this section we prove the ``invariance lemmas''
(\cref{lemma:single-player-code-ramsey,lemma:two-player-code-ramsey,lemma:hff-invariance}) that were used in several places in the above proofs. These lemmas all use a similar type of argument, which generalizes the main argument of \cite{FHHH24}.

Throughout this section, for any string $x$ and any subset $I$ of its indices, we write
$x_{I}$ for the substring of $x$ on coordinates $I$. Moreover, we write
$x_{[i:j]}$ for the substring of $x$ on coordinates $\{i, i+1, \dotsc, j\}$.

We require the definition of a \emph{stable matrix set}.

\begin{definition}[Stable Matrix Set]
\label{def:stable-matrix-set}
Let $\Lambda$ be any fixed, finite alphabet and let $\cQ$ be any set of matrices
with entries in $\Lambda$. For any $Q \in \cQ$, we say that sequences
$r_1, \dotsc, r_t$ of rows and $c_1, \dotsc, c_t$ of columns form
a \textsc{Greater-Than} submstrix of size $t$ if there exist distinct $a,b \in \Lambda$ such that 
\[
    Q(r_i, c_j) = \begin{cases}
        a &\text{ if } i \leq j \\
        b &\text{ if } i > j \,.
    \end{cases}
\]
We say the set $\cQ$ is \emph{stable} if there exists a constant $t$ such that
every \textsc{Greater-Than} submatrix of every $Q \in \cQ$ has size at most $t$.
\end{definition}

The following proposition has been used in \cite{HWZ22,HZ24,FHHH24} and follows
easily from the fact that the \textsc{Greater-Than} communication problem
has non-constant communication cost (see \eg \cite{Nis93,Vio15}).

\begin{proposition}
\label{prop:ccc-stable}
Let $\cQ$ be any set of matrices. If $\R(\cQ) = \cO(1)$ then $\cQ$ is stable.
\end{proposition}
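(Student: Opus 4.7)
The plan is to prove the contrapositive: if $\cQ$ is not stable, then $\R(\cQ)$ is not bounded by any constant. The key ingredient is the known lower bound that the $\textsc{Greater-Than}$ problem on $[t]$ requires $\Omega(\log \log t)$ bits of public-coin randomized communication (due to Nisan, with later refinements by Viola and others, cited as \cite{Nis93,Vio15} in the paper).

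First I would observe that if $Q \in \cQ$ contains a $\textsc{Greater-Than}$ submatrix of size $t$, witnessed by rows $r_1, \dots, r_t$ and columns $c_1, \dots, c_t$ with label pair $a \neq b \in \Lambda$, then any randomized protocol for $Q$ with error $\delta$ immediately yields a randomized protocol for $\textsc{GT}_t$ with error $\delta$ at the same cost: on input $i \in [t]$, Alice pretends to hold row $r_i$; on input $j \in [t]$, Bob pretends to hold column $c_j$; they run the protocol for $Q$, and from its output (which is $a$ if $i \leq j$ and $b$ if $i > j$) they recover the $\textsc{GT}$ value. Thus $\R_\delta(\textsc{GT}_t) \leq \R_\delta(Q)$ whenever $Q$ contains a $\textsc{GT}$-submatrix of size $t$.

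Now suppose $\cQ$ is not stable. Then for every $t$ there is some $Q_t \in \cQ$ containing a $\textsc{GT}$-submatrix of size $t$, so $\R_{1/4}(Q_t) \geq \R_{1/4}(\textsc{GT}_t) = \Omega(\log \log t)$, which is unbounded as $t \to \infty$. This contradicts the hypothesis that $\R(\cQ) = \cO(1)$ (which means there is a uniform constant $c$ with $\R_{1/4}(Q) \leq c$ for all $Q \in \cQ$). Hence $\cQ$ must be stable.

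This is a short argument and I do not anticipate any serious obstacle; the only mild subtlety is making sure one is using the two-way public-coin randomized lower bound for $\textsc{GT}$ (not the one-way $\Omega(\log t)$ bound), so that the conclusion is consistent with the notion of $\R$ used throughout the paper. Since the paper has already cited \cite{Nis93,Vio15} and explicitly sketched this proposition as ``follows easily,'' the write-up can simply invoke the $\textsc{GT}$ lower bound as a black box and spend its effort on the (trivial) embedding reduction from $\textsc{GT}_t$ into any $Q \in \cQ$ containing a size-$t$ $\textsc{GT}$-submatrix.
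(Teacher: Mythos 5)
Your proof is correct and follows precisely the approach the paper intends: the paper gives no detailed argument but states the proposition ``follows easily from the fact that the \textsc{Greater-Than} communication problem has non-constant communication cost,'' which is exactly the contrapositive embedding argument you write out (including the correct choice of the two-way public-coin $\Omega(\log\log t)$ bound for $\GT_t$).
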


Several of the proofs in this section will use the following ``domino notation'':

\begin{definition}[Domino]
Let $\Sigma$ be an arbitrary finite alphabet. We call a pair $ab \in \Sigma^2$ a
\emph{domino} and we denote it as $\domino{a}{b}$. For any $n \in \bN$ and
any pair $(x,y) \in \Sigma^n \times \Sigma^n$, the \emph{dominoes} of $(x,y)$ is
the sequence
\[
    \domino{x}{y} \define \left( \domino{x_1}{y_1} , \domino{x_2}{y_2} , \dotsc, 
        \domino{x_n}{y_n} \right)\,.
\]
\end{definition}

We make frequent use of the hypergraph Ramsey theorem, which we state here for convenience.

\begin{theorem}[Hypergraph Ramsey theorem]
\label{thm:hypergraph-ramsey}
For any $\alpha,\beta\in \bN$ and $\sigma \geq \alpha$, there exists $R = R(\alpha,\beta,\sigma)$ such that for any coloring $\kappa \colon\binom{[R]}{\alpha} \to [\beta]$, there exists a subset $T \subseteq [R]$ of size $\sigma$ such that $\kappa$ is constant on ${T \choose \alpha}$.
\end{theorem}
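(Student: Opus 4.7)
The plan is to prove this classical theorem by induction on the arity $\alpha$. The base case $\alpha = 1$ is the pigeonhole principle: taking $R \define \beta(\sigma - 1) + 1$, any $\beta$-coloring of $[R]$ must place at least $\sigma$ elements in some common color class, yielding the desired monochromatic $T$.

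For the inductive step, I would assume the theorem holds for $(\alpha-1)$-uniform colorings and prove it for $\alpha$. The strategy is the standard Erd\H{o}s--Rado construction: start from $[R]$ with $R$ sufficiently large, and iteratively build a sequence of elements $a_1, a_2, \ldots, a_M$ together with nested sets $[R] \supseteq S_0 \supseteq S_1 \supseteq \cdots \supseteq S_M$ such that $a_i \in S_{i-1}$ and, crucially, every $\alpha$-subset of the form $\{a_i\} \cup B$ with $B \in \binom{S_i}{\alpha-1}$ receives the same color $c_i$ under $\kappa$. At each step $i$, having chosen $a_i \in S_{i-1}$, I would define an auxiliary $\beta$-coloring of $\binom{S_{i-1} \setminus \{a_i\}}{\alpha-1}$ by $B \mapsto \kappa(\{a_i\} \cup B)$ and invoke the inductive hypothesis to extract a large homogeneous $S_i \subseteq S_{i-1} \setminus \{a_i\}$, with associated color $c_i \in [\beta]$.

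Setting $M \define \beta(\sigma - 1) + 1$ and applying pigeonhole to $c_1, \ldots, c_M \in [\beta]$, some color $c$ appears at least $\sigma$ times, say at indices $i_1 < i_2 < \cdots < i_\sigma$. I claim $T \define \{a_{i_1}, \ldots, a_{i_\sigma}\}$ is monochromatic under $\kappa$: for any $\alpha$-subset of $T$ listed in increasing order as $a_{i_{j_1}}, \ldots, a_{i_{j_\alpha}}$, the nesting property ensures $\{a_{i_{j_2}}, \ldots, a_{i_{j_\alpha}}\} \subseteq S_{i_{j_1}}$, so by construction the color of this $\alpha$-subset equals $c_{i_{j_1}} = c$.

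The main technical point is purely bookkeeping: to make the iterative construction run for all $M$ steps, $R$ must dominate a tower-type recursion in $\alpha, \beta, \sigma$, since step $i$ requires $|S_{i-1}| - 1 \geq R(\alpha-1, \beta, |S_i|)$ to invoke induction. Since the theorem asks only for existence of $R$, I would define $R$ recursively via this relation, initializing with $|S_M| \geq \alpha$ and unwinding backwards; no explicit quantitative analysis is needed.
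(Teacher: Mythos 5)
Your proof is correct: this is the classical Erd\H{o}s--Rado argument by induction on the uniformity $\alpha$, and every step checks out. The base case is pigeonhole; the inductive step constructs $a_1,\ldots,a_M$ with nested $S_0 \supseteq S_1 \supseteq \cdots$ so that the color of $\{a_i\}\cup B$ depends only on $i$ for $B \in \binom{S_i}{\alpha-1}$; and the final pigeonhole plus the nesting (which guarantees $\{a_{i_{j_2}},\ldots,a_{i_{j_\alpha}}\} \subseteq S_{i_{j_1}}$ for any increasingly-listed $\alpha$-subset of $T$) gives the monochromatic set. The paper itself states the hypergraph Ramsey theorem without proof as a classical result, so there is no internal proof to compare against; your argument is the standard one and is sound. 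One tiny remark on bookkeeping: you initialize the recursion with $|S_M| \geq \alpha$, but $S_M$ is never actually used---it would suffice to require $|S_{M-1}| \geq 1$ so that $a_M$ can be chosen; this does not affect correctness since the theorem asks only for existence of $R$.
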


We use an easy corollary of this theorem (see \eg~\cite{FHHH24} for an
elementary proof of this). For any set $T$ and any $0 \leq t \leq |T|$, write
${T \choose \leq t}$ for the set of subsets of $T$ of cardinality at most $t$.

\begin{corollary}
\label{cor:hypergraph-ramsey}
For any $\alpha,\beta \in \bN$ and $\sigma \geq \alpha$, there exists $N = N(\alpha,\beta,\sigma)$ such that for any coloring $\kappa \colon { [N] \choose \leq \alpha } \to [\beta]$, there exists $T \subseteq [N]$ of size $\sigma$ such that $\kappa$ is constant on ${T \choose \alpha'}$ for every $\alpha' \leq \alpha$.
\end{corollary}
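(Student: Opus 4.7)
The plan is to iterate the hypergraph Ramsey theorem (\cref{thm:hypergraph-ramsey}) once for each uniformity $\alpha' \in \{1,2,\dots,\alpha\}$, each time passing to a subset on which $\kappa$ is constant on $\alpha'$-sets. Starting from $[N]$ we build a nested chain $[N] = T_\alpha \supseteq T_{\alpha-1} \supseteq \dotsm \supseteq T_0$ such that, for each step, $\kappa$ is constant on $\binom{T_{\alpha'-1}}{\alpha'}$. Since constancy on $\alpha''$-subsets is obviously preserved under passing to a smaller ground set, at the end the single set $T_0$ is simultaneously monochromatic on all $\alpha'$-subsets for $\alpha' \in \{1,\dots,\alpha\}$; the case $\alpha' = 0$ is vacuous (the only $0$-subset is $\emptyset$).

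To get the size right, I would set up the required sizes in reverse. Put $\sigma_0 \coloneqq \sigma$, and for each $\alpha' = 1,2,\dots,\alpha$ inductively define
\[
\sigma_{\alpha'} \coloneqq R(\alpha',\beta,\sigma_{\alpha'-1}),
\]
where $R$ is the Ramsey number from \cref{thm:hypergraph-ramsey}. Then take $N \coloneqq \sigma_\alpha$. Given $\kappa$, apply \cref{thm:hypergraph-ramsey} to the restriction of $\kappa$ to $\binom{[N]}{\alpha}$ to obtain $T_{\alpha-1} \subseteq [N]$ with $|T_{\alpha-1}| \geq \sigma_{\alpha-1}$ and $\kappa$ constant on $\binom{T_{\alpha-1}}{\alpha}$. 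At the next step, $|T_{\alpha-1}| \geq \sigma_{\alpha-1} = R(\alpha-1,\beta,\sigma_{\alpha-2})$ lets us apply \cref{thm:hypergraph-ramsey} again to $\kappa$ restricted to $\binom{T_{\alpha-1}}{\alpha-1}$, producing $T_{\alpha-2}$, and so on down to $T_0$, which has $|T_0| \geq \sigma_0 = \sigma$. Taking $T \coloneqq T_0$ completes the construction.

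There is essentially no obstacle here: the argument is a routine iterated Ramsey; the only things to verify are (i) that $\kappa$-constancy on a uniformity $\alpha''$ is preserved when we later shrink the ground set (immediate, since $\binom{T_{\alpha'-1}}{\alpha''} \subseteq \binom{T_{\alpha''}}{\alpha''}$ for $\alpha' \leq \alpha''$), and (ii) that the bookkeeping of $\sigma_{\alpha'}$ gives a well-defined finite $N = N(\alpha,\beta,\sigma)$, which is clear since at each step $R$ is finite. Together these yield the desired $T$ and hence the corollary.
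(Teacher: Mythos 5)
Your proof is correct. The paper does not actually spell out a proof of \cref{cor:hypergraph-ramsey} but instead cites \cite{FHHH24} for an elementary argument; the iterated application of \cref{thm:hypergraph-ramsey} one uniformity at a time, with the reverse bookkeeping $\sigma_0 = \sigma$ and $\sigma_{\alpha'} = R(\alpha',\beta,\sigma_{\alpha'-1})$, is precisely the standard way to obtain this corollary and matches the intent of the citation. Your two verification points are the right ones; the only small additional thing worth stating explicitly is that each invocation of \cref{thm:hypergraph-ramsey} is legal because $\sigma_{\alpha'-1} \geq \sigma_0 = \sigma \geq \alpha \geq \alpha'$ (using that Ramsey numbers are non-decreasing and at least the requested output size), which you implicitly rely on.
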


\subsection{A General Invariance Lemma for Hamming Distance}

A function $f \colon \zo^* \times \zo^* \to \cR$ is permutation invariant if for every
$x,y \in \zo^n$ and every permutation $\pi \colon [n] \to [n]$, it holds that
$f(x,y) = f(\pi x, \pi y)$.

The following lemma is essentially the main tool of \cite{FHHH24}, although we
need it in a more general form that is not obviously obtained from \cite{FHHH24}
without carefully inspecting the proof. This argument will also be the basis for
some other arguments later in the paper.

The general lemma may be difficult to parse, but it essentially says the following. Suppose that $\gamma(\dist(x,y))$ can be computed from $x,y$, in the form of $h(Q(\phi(x), \phi(y))$ for some functions $\phi$ and $h$, where $Q$ is a stable matrix. Then we can without loss of generality assume that $Q(\phi(x), \phi(y))$ is permutation invariant. 

\begin{lemma}
\label{lemma:ramsey}
    Let $\gamma \colon \cD \to \cR$ be any function with domain $\cD \subseteq \bN$ and any range $\cR$.
    For any fixed $n \in \bN$, let $\cQ_n$ be any stable set of matrices with entries in a fixed, finite set
    $\Lambda_n$ such that the following holds: there
    exists $h_n \colon \Lambda_n \to \cR$ such that for infinitely many values of $N$,
    there exists an $m(N) \times m(N)$ matrix $Q \in \cQ_n$ and a map $\phi_N \colon
    \zo^N \to [m(N)]$ which satisfy
\begin{equation}
\label{eq:fcode-ramsey}
        \forall x,y \in { [2N] \choose N } \;,
            \text{ if } \dist(x,y) \in \cD \text{ then } h_n(Q(\phi_N(x), \phi_N(y))) = \gamma(\dist(x,y)) \,.
\end{equation}
    Then there exists an $m' \times m'$ matrix $Q' \in \cQ_n$ and a map $\phi'_n
    \colon \zo^n \to [m']$ which satisfy
    \[
        \forall x,y \in { [2n] \choose n} \;,
            \text{ if } \dist(x,y) \in \cD \text{ then }  h_n(Q'(\phi'_n(x), \phi_n'(y))) = \gamma(\dist(x,y)) \,.
    \]
    and $Q'(\phi'_n(x), \phi'_n(y))$ is permutation-invariant.
\end{lemma}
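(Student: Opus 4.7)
My plan is to set $Q' = Q$ and define $\phi'_n$ as a restriction-with-padding of $\phi_N$. Specifically, I will pick $N$ very large and find a $2n$-element subset $T \subseteq [2N]$ together with a fixed padding $z^* \in \binom{[2N] \setminus T}{N-n}$ such that, identifying $T$ with $[2n]$ via some bijection, the map $\phi'_n(x) \coloneqq \phi_N(x \cup z^*)$ satisfies the conclusion. Correctness is immediate because padding both sides with the same string preserves Hamming distance: for all $x, y \in \binom{[2n]}{n}$ with $\dist(x, y) \in \cD$,
\[
h_n\bigl(Q(\phi_N(x \cup z^*),\, \phi_N(y \cup z^*))\bigr) = \gamma(\dist(x \cup z^*, y \cup z^*)) = \gamma(\dist(x, y)).
\]
So the real content of the lemma is arranging the permutation-invariance.

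Within the slice $\binom{[2n]}{n}$, the unordered multiset of dominoes $\{\domino{x_i}{y_i} : i \in [2n]\}$ of a pair $(x,y)$ is determined entirely by $\dist(x,y)$ (since both strings have weight $n$, forcing equal counts of $\domino{0}{1}$ and $\domino{1}{0}$, and symmetrically for $\domino{0}{0}$ and $\domino{1}{1}$). Hence permutation-invariance is equivalent to requiring that the $Q$-value depend only on $\dist(x,y)$. I will therefore find $T, z^*$ so that $Q(\phi_N(x \cup z^*), \phi_N(y \cup z^*))$ is a function of $\dist(x,y)$ alone. I set this up as a Ramsey coloring: for each $2n$-subset $I \subseteq [2N]$, equipped with canonical families of input pairs whose ``active'' dominoes are placed on $I$ (with coordinates outside $I$ filled by a fixed string), color by the resulting tuple of elements in $\Lambda_n$. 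Since $|\Lambda_n|$ is finite, iterating the hypergraph Ramsey theorem (\cref{thm:hypergraph-ramsey}, \cref{cor:hypergraph-ramsey}) with appropriate parameters yields a $2n$-subset $T$ on which the color depends only on unordered combinatorial data, i.e., on the multiset of active dominoes.

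The main obstacle is upgrading Ramsey's ``unordered'' homogeneity to full permutation-invariance: in principle the $Q$-value could still depend on the \emph{order} in which dominoes of a fixed multiset are placed on the positions of $T$. To rule this out I invoke the stability of $\cQ_n$, following the template of \cite{FHHH24}. The idea is an exchange/sliding argument: if two orderings of the same multiset yielded different $Q$-values, one could chain together many adjacent transpositions of positions within $T$ to exhibit a long sequence of rows $r_1, \dotsc, r_s$ and columns $c_1, \dotsc, c_s$ of $Q$ which realize a \textsc{Greater-Than} submatrix of size~$s$; taking $s$ larger than the stability constant $t$ of $\cQ_n$ then yields the desired contradiction. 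Making the sliding argument precise---in particular, identifying the right combinatorial invariant to feed into the Ramsey coloring so that an ordering obstruction translates directly into a \textsc{Greater-Than} submatrix---is the technical heart of the lemma. The integer $N$ must be chosen large enough to accommodate iterated Ramsey bounds together with the stability threshold $t$, all depending on $n$, $t$, and $|\Lambda_n|$ (but crucially not on the particular $Q$ or $\phi_N$).
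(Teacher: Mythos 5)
Your high-level plan---restriction-with-padding, a Ramsey step for homogeneity, stability for the remaining ordering obstruction---has the right shape, and your observation that permutation-invariance on the slice $\binom{[2n]}{n}$ is equivalent to dependence on $\dist(x,y)$ is correct. However, the sketch collapses a crucial staging of the argument into two disconnected steps, and as written neither step does what you claim.

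First, a Ramsey coloring on $2n$-subsets $I$ (with a fixed padding outside $I$) gives you homogeneity \emph{across placements on different $I \subseteq T$ of equal size}, which says nothing about reordering dominoes \emph{within} a fixed placement; it yields no swap-invariance at all on its own. The paper instead colors subsets $S \subseteq [N']$ of \emph{all sizes} $\leq N$ by the $Q$-values of pairs placed on $S$ with all-zero padding elsewhere (\cref{cor:hypergraph-ramsey} is needed here, not just \cref{thm:hypergraph-ramsey}). Then, because the non-$\domino{0}{0}$ positions of $\pad(x),\pad(y)$ form a subset $S$, moving a $\domino{0}{0}$ domino changes $S$ to another same-size $S' \subseteq T$ with $\colr(S)=\colr(S')$, giving $\domino{0}{0}$-swap invariance. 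A \emph{second} Ramsey round with all-ones padding, layered on the first, then gives $\domino{1}{1}$-swap invariance as well.

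Second, and more importantly, the stability/\textsc{Greater-Than} argument for the remaining $\domino{0}{1}\leftrightarrow\domino{1}{0}$ swap is not a standalone step: to exhibit a $(t{+}1)\times(t{+}1)$ \textsc{GT} submatrix, the paper pads with the \emph{specific} string $0^{2t}1^{2t}$ and then slides the $\domino{1}{0}$ and $\domino{0}{1}$ dominoes to any two slots inside the $0^{2t}$ zone, using the already-established $\domino{0}{0}$-swap invariance to guarantee the $Q$-value is preserved along the slide. The GT pattern arises because the value is governed by whether $\domino{1}{0}$ sits left or right of $\domino{0}{1}$. Without first proving the $\domino{0}{0}$- (and $\domino{1}{1}$-) swap invariance, this sliding is not justified and the chain of ``adjacent transpositions'' you describe does not yield a \textsc{GT} submatrix. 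Correspondingly, your padding $z^*$ cannot be arbitrary; it needs the $0^{2t}1^{2t}$ structure to leave room for the slide. The fix is to replace your single ``Ramsey then stability'' plan by the paper's three-stage build-up ($\domino{0}{0}$-swaps, then $\domino{1}{1}$-swaps, then $\domino{0}{1}\domino{1}{0}$-swaps), with the later stages explicitly invoking the earlier ones.
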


\begin{proof}
Under the assumptions of the lemma, we first prove the following claim. 

\begin{claim}
\label{claim:lemma-fcode-ramsey-00}
For every $N$, there exists a matrix $Q \in \cQ_n$ and
a map $\phi_N$ which satisfy
\cref{eq:fcode-ramsey} and such that $Q(\phi_N(x), \phi_N(y))$ is
$\left\{\domino{0}{0}\right\}$-swap invariant.
\end{claim}
\begin{proof}
    Let $N' > N$ be sufficiently large and let $Q, \phi_{N'}, \phi_{N'}$ satisfy
    \cref{eq:fcode-ramsey}. For every subset $S \subseteq [N']$ with cardinality
    $|S| \leq N$ we assign a color $\colr(S)$ as follows. For every $x,y \in
    \zo^{|S|}$, we append the value $Q(\phi_{N'}(X), \phi_{N'}(Y)) \in \Lambda_n$ to $\colr(S)$, where $X$ and $Y$ are defined as follows. Define $X\in \zo^{N'}$ to be the unique string such that $X_S=x$ and is 0 on all coordinates $[N'] \setminus S$. Define $Y \in \zo^{N'}$ similarly with respect to $y$. 
    
    Observe that the total number of distinct colors is at most $\Lambda_n^{2^{2N}}$ which is independent of $N'$. Therefore for sufficiently large $N'$ the hypergraph Ramsey theorem (\cref{cor:hypergraph-ramsey})
    guarantees the existence of $T \subseteq [N']$ of cardinality $|T| = N$ such that
    any two subsets $S, S' \subseteq T$ with $|S| = |S'|$ have $\colr(S) = \colr(S')$.

    Now for every $x \in { [2N] \choose N }$, define $\pad(x)$ as the unique
    string equal to $x$ on coordinates $T$ and 0 elsewhere, and define
    $\phi_N(x) \define \phi_{N'}(\pad(x))$. Observe the following properties:
    \begin{itemize}
        \item For all $x,y \in { [2N] \choose N }$, $\dist(\pad(x),\pad(y)) = \dist(x,y)$. So if $\dist(x,y) \in \cD$ then
        \begin{align*}
          h_n(Q(\phi_N(x), \phi_N(y)))
            &= h_n(Q(\phi_{N'}(\pad(x)), \phi_{N'}(\pad(y))))\\
            &= \gamma(\dist(\pad(x),\pad(y)))\\ &= \gamma(\dist(x,y)) \,.  
        \end{align*}
        \item If $x,y \in { [2N] \choose N }$ are obtained from $x',y' \in {
        [2N] \choose N }$ by swapping a $\domino{0}{0}$ domino with one of its
        neighbors, then the subsequences of non-$\domino{0}{0}$ dominoes of
        $\domino{x}{y}$ and $\domino{x'}{y'}$ are identical. Let $S, S' \subseteq T$
        be, respectively, the subsets of coordinates $i \in T$ where
        $\domino{\pad(x)_i}{\pad(y)_i} \neq \domino{0}{0}$ and $\domino{\pad(x')_i}{\pad(y')_i} \neq
        \domino{0}{0}$. Then the subsequence of dominos $\domino{\pad(x)}{\pad(y)}$ on
        coordinates $S$ is equal to the subsequence of dominoes
        $\domino{\pad(x')}{\pad(y')}$ on coordinates $S'$. By definition of the colors, we
        have
        \[
            Q(\phi_{N'}(\pad(x)), \phi_{N'}(\pad(y))) = Q(\phi_{N'}(\pad(x')), \phi_{N'}(\pad(y'))) \,,
        \]
        and therefore
        \[
            Q(\phi_{N}(x), \phi_{N}(y)) = Q(\phi_{N}(x'), \phi_{N}(y')) \,.
        \]
    \end{itemize}
    This concludes the proof of the claim.
\end{proof}

\begin{claim}
\label{claim:lemma-fcode-ramsey-11}
For every $N$, there exists a matrix $Q \in \cQ_n$ and
map $\phi_N$ which satisfy
\cref{eq:fcode-ramsey} and such that $Q(\phi_N(x), \phi_N(y))$ is
$\left\{\domino{0}{0}, \domino{1}{1}\right\}$-swap invariant.
\end{claim}
\begin{proof}
    Using \cref{claim:lemma-fcode-ramsey-00},
    for a sufficiently large $N' > N$ there are $Q, \phi_{N'}, \phi_{N'}$ that satisfy
    \cref{eq:fcode-ramsey} with the guarantee that $Q(\phi_{N'}(X),
    \phi_{N'}(Y))$ is $\left\{\domino{0}{0}\right\}$-swap invariant for $X,Y \in
    { [2N'] \choose N' }$. Proceed to define colors and apply the Ramsey
    theorem; as in the proof of \cref{claim:lemma-fcode-ramsey-00}, except here we pad
    the inputs with 1s instead of 0s.

    For every $x \in { [2N] \choose N }$, define $\pad(x) \in { [2N'] \choose N' }$
    as the unique string equal to $x$ on coordinates $T$ and 1 elsewhere, and
    define $\phi_N(x) \define \phi_{N'}(\pad(x))$.
    Observe the following properties:
    \begin{itemize}
        \item For all $x,y \in { [2N] \choose N }$, $\dist(\pad(x),\pad(y)) = \dist(x,y)$. So if $\dist(x,y) \in \cD$ then
        \begin{align*}
            h_n(Q(\phi_N(x), \phi_N(y)))
            &= h_n(Q(\phi_{N'}(\pad(x)), \phi_{N'}(\pad(y)))) \\
            &= \gamma(\dist(\pad(x),\pad(y))) = \gamma(\dist(x,y)) \,.
        \end{align*}
        \item If $x,y \in { [2N] \choose N }$ are obtained from $x',y' \in {
        [2N] \choose N }$ by swapping a $\domino{0}{0}$ domino with one of its
        neighbors, then the corresponding $\pad(x),\pad(y) \in { [2N'] \choose N' }$ is obtained from
        $\pad(x'), \pad(y') \in { [2N'] \choose N' }$ by a sequence of swaps of a $\domino{0}{0}$
        domino with one of its neighbors. By assumption, each one of these swaps preserves the value
        of $Q( \phi_{N'}(\pad(x)), \phi_{N'}(\pad(y)) )$, so we have
        \[
            Q( \phi_{N'}(\pad(x)), \phi_{N'}(\pad(y)) )
            = Q( \phi_{N'}(\pad(x')), \phi_{N'}(\pad(y')) ) \,,
        \]
        and therefore
        \[
            Q(\phi_{N}(x), \phi_{N}(y)) = Q(\phi_{N}(x'), \phi_{N}(y')) \,.
        \]
        \item The same argument as in \cref{claim:lemma-fcode-ramsey-00}
        shows that if $x,y \in { [2N] \choose N }$ are obtained from $x',y' \in {
        [2N] \choose N }$ by a sequence of swaps of a $\domino{1}{1}$ with its neighbors, then 
        \[
            Q(\phi_{N}(x), \phi_{N}(y)) = Q(\phi_{N}(x'), \phi_{N}(y')) \,.
        \]
    \end{itemize}
    This concludes the proof of the claim.
\end{proof}

The final claim necessary to complete the proof of \cref{lemma:ramsey} is the
following. 

\begin{claim}
\label{claim:lemma-fcode-ramsey-01}
There exists a matrix $Q \in \cQ_n$ and a map $\phi_n$ which satisfy
\cref{eq:fcode-ramsey} and such that $Q(\phi_n(x), \phi_n(y))$ is
permutation-invariant.
\end{claim}
\begin{proof}
Since $\cQ_n$ is stable, there exists some $t = t(n)$ such that no $t \times t$ \textsc{Greater-Than}
instance appears as a submatrix in $\cQ_n$. Let $N > n + 4t$ and, using
\cref{claim:lemma-fcode-ramsey-11}, let $Q, \phi_{N}$ satisfy
\cref{eq:fcode-ramsey} with the guarantee that $Q(\phi_{N}(X), \phi_{N}(Y))$
is $\left\{ \domino{0}{0}, \domino{1}{1} \right\}$-swap invariant for on $X,Y
\in { [2N] \choose N }$.

For each $x \in { [2n] \choose n }$ we define $\pad(x)=x0^{2t}1^{2t} \in { [2N] \choose N }$ and  $\phi_n(x) \define
\phi_{N}(\pad(x))$. Observe that for every
$x,y \in { [2n] \choose n }$ with $\dist(x,y) \in \cD$, we have
\begin{align*}
    h_n(Q(\phi_n(x), \phi_n(y)))
        &= h_n(Q(\phi_{N}(\pad(x)), \phi_{N}(\pad(y)))) \\
    &= \gamma(\dist(\pad(x), \pad(y))) =  \gamma(\dist(x,y)) \,.
\end{align*}
Suppose that $x,y \in { [2n] \choose n }$ is obtained from $x',y' \in { [2n] \choose n }$
by swapping one $\domino{0}{0}$- or $\domino{1}{1}$-domino with one of its neighbors. Then
$\pad(x), \pad(y)$ is also obtained from $\pad(x'), \pad(y')$ by swapping one of these dominoes
with one of its neighbors, so by the guarantee given by \cref{claim:lemma-fcode-ramsey-11}
we have
\begin{align*}
    Q( \phi_n(x), \phi_n(y) )
    &= Q( \phi_N(\pad(x)), \phi_N(\pad(y))) \\
    &= Q( \phi_N(\pad(x')), \phi_N(\pad(y')))
    = Q( \phi_n(x'), \phi_n(y') ) \,.
\end{align*}
Finally, suppose that $x,y$ is obtained from $x', y'$ by swapping two consecutive $\domino{0}{1}$
and $\domino{1}{0}$ dominoes. Then for some $\ell \in [n]$, we may write without loss of generality
\begin{align*}
    \domino{\pad(x)}{\pad(y)}   &= \domino{ x_{[1:\ell]} }{ y_{[1:\ell]} } \; \domino{1}{0} \domino{0}{1} \;
                     \domino{ x_{[\ell+2:n]}}{ y_{[\ell+2:n]} } \domino{0}{0} \dotsm \domino{0}{0} \domino{1}{1} \dotsm \domino{1}{1} \\
    \domino{\pad(x')}{\pad(y')} &= \domino{ x_{[1:\ell]} }{ y_{[1:\ell]} } \; \domino{0}{1} \domino{1}{0} \;
                     \domino{ x_{[\ell+2:n]}}{ y_{[\ell+2:n]} } \domino{0}{0} \dotsm \domino{0}{0} \domino{1}{1} \dotsm \domino{1}{1} \,.
\end{align*}
Suppose for the sake of contradiction that
\[
    Q( \phi_N(\pad(x)), \phi_N(\pad(y)) ) \neq Q( \phi_N(\pad(x)), \phi_N(\pad(y)) ) \,.
\]
For any $i,j \in [t+1]$ with $i < j$, we can rearrange $\domino{\pad(x)}{\pad(y)}$
by swapping only $\domino{0}{0}$ dominoes with their neighbors, and therefore without changing
the value of $Q( \phi_N( \cdot ), \phi_N( \cdot ))$ as follows:
\begin{align*}
    \domino{X^{(i)}}{Y^{(j)}}   &\define \domino{ x_{[1:\ell]} }{ y_{[1:\ell]} } \;
                                                          \underbrace{\domino{0}{0} \dotsm \domino{0}{0} \domino{\bf 1}{\bf 0}
                                                          \domino{0}{0} \dotsm \domino{0}{0} \domino{\bf 0}{\bf 1} 
                                                          \domino{0}{0} \dotsm \domino{0}{0}}_{2t+2} \;
                     \domino{ x_{[\ell+2:n]}}{ y_{[\ell+2:n]} }  \domino{1}{1} \dotsm \domino{1}{1} \,,
\end{align*}
where $X^{(i)}$ has 1 in coordinate $\ell + 2i$ and $Y^{(j)}$ has 1 in coordinate $\ell + 2j - 1$.
On the other hand, if $j \geq i$, we can rearrange $\domino{\pad(x')}{\pad(y')}$
without changing the value of $Q( \phi_N( \cdot ), \phi_N( \cdot ) )$ as follows:
\begin{align*}
    \domino{X^{(i)}}{Y^{(j)}}   &= \domino{ x_{[1:\ell]} }{ y_{[1:\ell]} } \;
                                                          \domino{0}{0} \dotsm \domino{0}{0} \domino{\bf 0}{\bf 1}
                                                          \domino{0}{0} \dotsm \domino{0}{0} \domino{\bf 1}{\bf 0} 
                                                          \domino{0}{0} \dotsm \domino{0}{0} \;
                     \domino{ x_{[\ell+2:n]}}{ y_{[\ell+2:n]} }  \domino{1}{1} \dotsm \domino{1}{1} \,.
\end{align*}
But then if $Q( \phi_N(\pad(x)), \phi_N(\pad(y))) \neq Q( \phi_N(\pad(x')), \phi_N(\pad(y')) )$
then the rows $X^{(i)}$ and columns $Y^{(j)}$ form a $(t + 1) \times (t + 1)$ \textsc{Greater-Than} submatrix,
which is a contradiction. So it must be the case that
\begin{align*}
    Q( \phi_n(x), \phi_n(y) )
        &= Q( \phi_N(\pad(x)), \phi_N(\pad(y)) ) \\
        &= Q( \phi_N(\pad(x')), \phi_N(\pad(y')) ) 
        = Q( \phi_n(x'), \phi_n(y') ) \,,
\end{align*}
as desired. This proves the claim.
\end{proof}
This concludes the proof of the lemma.
\end{proof}

\subsection{Invariance Lemma for Single-Player Code Extension}
\label{section:invariance-single-player-extension}

\lemmasingleplayercodeextension*
\begin{proof}
    Fix any $n$ and let $n' = \frac{f(2)}{2} \cdot n$. Define $\cQ_n$ by including for every $\ell \in \bN$, the 
    matrix $Q \colon \zo^\ell \times \zo^\ell \to \{0, 1, \dotsc, n'\}$ defined as
    $Q(x,y) = \max(\dist(x,y), n')$. Then by \cref{prop:ccc-stable}, $\cQ_n$ is
    stable since each matrix $Q \in \cQ_n$ has $\DD^{\EHD_{n'}}(Q) \leq n'$ which is
    independent of the size $2^\ell \times 2^\ell$ of $Q$.
    
    There are infinitely many values $N$ such that there exists an $f$-code $E
    \colon { [2N] \choose N } \to \zo^{m(N)}$. Thus for any such $N$ and any $x,y \in { [2N] \choose N }$ with $\dist(x,y) \in \{0,2,4,6\}$, we have
    $\dist(E(x), E(y)) \leq n'$, and so 
    \[
            f(\dist(x,y)) = \dist(E(x), E(y)) = Q(E(x), E(y)) \,,
    \]
    where $Q \colon \zo^{m(N)} \times \zo^{m(N)} \to \{0, \dotsc, n'\}$ is a matrix
    in $\cQ_n$. So the hypothesis of \cref{lemma:ramsey} is satisfied, so there
    exists a matrix $Q' \colon \zo^{m'(n)} \times \zo^{m'(n)} \to \{0, \dotsc,
    n'\}$ in $\cQ_n$ and functions $\phi_n \colon { [2n] \choose n} \to \zo^{m'(n)}$
    such that, for all $x,y \in { [2n] \choose n }$, if $\dist(x,y) \in
    \{2,4,6\}$ then (using the fact that $f(\dist(x,y)) \leq n'$)
    \begin{equation}
        \label{eq:lemma-single-player-fcode-extension}
            f(\dist(x,y)) = Q'(\phi_n(x), \phi_n(y))
                = \max\{ \dist(\phi_n(x), \phi_n(y)), n' \} = \dist(\phi_n(x), \phi_n(y)) 
    \end{equation}
    and $Q'(\phi_n(x), \phi_n(y)) = \max\{\dist(\phi_n(x), \phi_n(y)), n'\}$ is
    permutation-invariant on $(x,y)$. So there is some function $g$ such that
    \[
        \forall x,y \in { [2n] \choose n } \;,\; \max\{\dist(\phi_n(x), \phi_n(y)), n'\} = g(\dist(x,y)) \,.
    \]
    Since $\phi_n$ is an $f$-code due to \cref{eq:lemma-single-player-fcode-extension},
    \cref{prop:single-player-fcode-triangle} ensures that for all $x,y \in { [2n] \choose n }$,
    \[
        \dist(\phi_n(x), \phi_n(y)) \leq n' \,,
    \]
    so $g(\dist(x,y)) = \dist(\phi_n(x), \phi_n(y))$. Then $\phi_n$ is an extended $f$-code, as desired.
\end{proof} 

\subsection{Invariance Lemma for Two-Player Code Extension}
\label{section:invariance-two-player-extension}

\lemmatwoplayercodeextension*

\begin{proof}
Fix any $n$ and let $n' = 2f(0) + \tfrac{f(2)}{2} n$. Define $\cQ_n$
as the set of matrices obtained by choosing any $\ell \in \bN$ and including the matrix
$Q \colon \zo^{2\ell} \times \zo^{2\ell} \to \Lambda_n$ defined next, where
\[
    \Lambda_n \define \{0, \dotsc, n'\} \times \{0, \dotsc, n'\} \times \{0, \dotsc, n'\}\;,
\]
For row $u \concat v \in \zo^\ell \times \zo^\ell$ and column $u' \concat v'
\in \zo^{\ell} \times \zo^\ell$ we define
\[
    Q( u \concat u', v \concat v') \define \left( \max(\dist(u,v), n'), \max(\dist(u',v'), n'), \max(\dist(u,v'), n') \right) \,.
\]
By \cref{prop:ccc-stable}, $\cQ_n$ is stable since each matrix $\DD^{\EHD_{n'}}(Q)
\leq 3n'$ which is independent of the size $2^{2\ell} \times 2^{2\ell}$ of $Q$.
Now we define $h_n \colon \Lambda_n \to \bN$ as the function $h_n(a_1, a_2, a_3)
\define a_3$.

There are infinitely many values $N$ such that there exists a two-player
$f$-code $E_1, E_2 \colon { [2N] \choose N } \to \zo^{m(N)}$. For any such
$N$, if $x,y \in { [2N] \choose N }$ satisfy $\dist(x,y) \in \{0,2,4,6\}$, then
$\dist(E_1(x), E_2(y)) \leq n'$, and so 
\begin{align*}
    f(\dist(x,y)) 
    = \dist(E_1(x), E_2(y)) )
    = h_n\left( Q\left( E_1(x) \concat E_2(x) , E_1(y) \concat E_2(y) \right)\right) \,,
\end{align*}
where $Q \in \cQ_n$ has entry
\begin{align*}
    &Q(E_1(x) \concat E_2(x), E_1(y) \concat E_2(y) ) = \\
    &\qquad\left( \dist(E_1(x), E_1(y)), \dist(E_2(x), E_2(y)), \dist(E_1(x), E_2(y)) \right) \,.
\end{align*}
Then the hypothesis of \cref{lemma:ramsey} is satisfied, so there exists a
matrix $Q' \colon \zo^{2\ell} \times \zo^{2\ell} \to \Lambda_n$ in $\cQ_n$ and a
function $\phi_n \colon { [2n] \choose n } \to \zo^{2\ell}$ where, if we write
$E'_1(x)$ as the first $\ell$ bits of $\phi_n(x)$ and $E'_2(x)$ as the last
$\ell$ bits of $\phi_n(x)$, so that $\phi_n(x) = E'_1(x) \concat E'_2(x)$, then
for all $x,y \in { [2n] \choose n }$ where $\dist(x,y) \in \{0,2,4,6\}$, we have
\begin{equation}
\label{eq:lemma-two-player-fcode-extension-smalldistance}
\begin{aligned}
    f(\dist(x,y)) &= h_n(Q'( E'_1(x) \concat E'_2(x) , E'_1(y) \concat E'_2(y) ) ) \\
                  &= \max\left( \dist(E'_1(x), E'_2(y)), n' \right)
                  = \dist(E'_1(x), E'_2(y)) \,,
\end{aligned}
\end{equation}
where we used the fact $f(\dist(x,y)) \leq n'$
(so $E'_1, E'_2$ is a two-player $f$-code),
and where
\[
    Q'(E'_1(x) \concat E'_2(x), E'_1(y) \concat E'_2(y))
\]
is permutation-invariant. Due to
permutation invariance, there exist functions $g, g_1, g_2$ such that
\begin{align*}
    &Q'(E'_1(x) \concat E'_2(x), E'_1(y) \concat E'_2(y)) \\
    &\qquad=
        \left( \max(\dist(E'_1(x), E'_1(y)), n'), \max(\dist(E'_2(x), E'_2(y)), n'), \max(\dist(E'_1(x), E'_2(y)), n') \right)\\
    &\qquad= \left( g_1(\dist(x,y)), g_2(\dist(x,y)), g(\dist(x,y)) \right) \,.
\end{align*}
Since $E'_1, E'_2 \colon { [2n] \choose n } \to \zo^{\ell}$ is a two-player $f$-code, by \cref{prop:two-player-fcode-triangle},
we have for all $x,y \in { [2n] \choose n }$ that
\[
    \dist(E'_1(x), E'_1(y)), \dist(E'_2(x), E'_2(y)), \dist(E'_1(x), E'_2(y)) \leq n' \,.
\]
Hence, $g_1(\dist(x,y)) = \max(\dist(E'_1(x), E'_1(y)), n') = \dist(E'_1(x), E'_1(y))$, 
$g_2(\dist(x,y)) = \dist(E'_2(x), E'_2(y))$, and $g(\dist(x,y)) = \dist(E'_1(x), E'_2(y))$.
So, $E'_1, E'_2$ form an extended two-player $f$-code, as desired.
\end{proof}

\subsection{Invariance Lemma for $\{4,4\}$-Hamming Distance}
\label{section:query-permutation-invariance}

We prove the invariance lemma that was used in
\cref{section:reductions-to-codes} to extract two-player $f$-codes out of
reductions from $\HFF$ to $\HD_k$. This proof follows a similar plan as the
proof of \cref{lemma:ramsey}, but requires more effort because the $\HFF$
function is not fully permutation-invariant.

Recall the notation $\Sigma_n \define { [2n] \choose n}$ and the definition of
distance signatures $\sig(x,y)$ from \cref{def:distance-signatures}, and the
definition of the query set
\begin{equation}
\label{eq:hff-invariance-query-set}
    \cQ \define \QS(\{ H_t : t \in \bN \})
\end{equation}
from \cref{def:augmented-thd-query-set},
where $H_t \colon \zo^t \times \zo^t \to \zo \times \{ 0, \dotsc, K \}$ is defined as
\[
    H_t(u,v) \define ( \ind{\dist(x,y) \leq k}, \max( \dist(x,y), K ) ) \,,
\]
for fixed constants $k \leq K$. We want to prove:

\lemmahffinvariance*

$\HFF^{n}$ is defined on inputs $x,y \in (\zo^{n})^{n}$ comprising $n$ blocks of
$n$ bits per block. It will be helpful to decouple the number of blocks from the
number of bits in each block. For any $d,n$, we define $\HFF^{d,n}$ as the
subproblem of $\HFF$ defined on strings $x,y \in (\zo^n)^d$ comprising $d$
blocks of $n$ bits per block. It is easy to see, via padding, that $\HFF^{d,n}$
is a submatrix of $\HFF^{\max(d,n)}$.

For inputs $x,y \in (\zo^n)^d$, we think of the pair $x,y$ as being arranged as a sequence
\[
\overbrace{%
\underbrace{\fixedovalbox{$\domino{(x_1)_1}{(y_1)_1} \domino{(x_1)_2}{(y_1)_2} \dotsm \domino{(x_1)_n}{(y_1)_n}$}}_{n}
\;\; \fixedovalbox{$\domino{(x_2)_1}{(y_2)_1} \domino{(x_2)_2}{(y_2)_2} \dotsm \domino{(x_2)_n}{(y_2)_n}$}
\;\;\dotsm\;\;
\fixedovalbox{$\domino{(x_d)_1}{(y_d)_1} \domino{(x_d)_2}{(y_d)_2} \dotsm \domino{(x_d)_n}{(y_d)_n}$}
}^{d}
\]
where we think of larger blocks as ``outer dominoes'' on alphabet $\zo^n$, and
each block is composed of smaller ``inner dominoes'' on alphabet $\zo$. Observe
that the value of $\HFF^{d,n}(x,y)$ is invariant under:
\begin{enumerate}
    \item \emph{Inner permutations:} permutations of small dominoes within any block; and
    \item \emph{Outer permutations:} permutations of the large dominoes.
\end{enumerate}

\begin{definition}[Inner- and outer-permutation invariance]
\label{def:inner-outer-invariant}
    A matrix $M \colon (\zo^n)^d \times (\zo^n)^d\\ \to \Lambda$ is \emph{inner-permutation invariant}
    if, for every $x,y,u,v \in (\zo^n)^d$ such that $(u,v)$ is obtained from $(x,y)$ by permuting
    inner dominoes within some block $j \in [d]$, we have
    \[
        M(x,y) = M(u,v) \,.
    \]
    $M$ is \emph{outer-permutation invariant}
    if, for every $x,y,u,v \in (\zo^n)^d$ such that $(u,v)$ is obtained from $(x,y)$ by permuting
    the outer dominoes, we have
    \[
        M(x,y) = M(u,v) \,.
    \]
    We say a set $\cM$ of matrices is inner-permutation invariant if each $M \in \cM$
    is inner-permutation invariant, and say $\cM$ is outer-permutation invariant
    if each $M \in \cM$ is outer-permutation invariant.
\end{definition}

Our goal will be to prove inner- and outer-permutation invariance of the query matrices
$Q_i$ in the statement of \cref{lemma:hff-invariance}. This will suffice due to the
following easy statement:

\begin{fact}
\label{fact:inner-outer-signature}
    If $M \colon (\zo^{2n})^d \times (\zo^{2n})^d \to \Lambda$ is both inner- and outer-permutation
    invariant, then for any $x,y,u,v \in \Sigma_n^d$ we have
        $\sig(x,y) = \sig(u,v) \implies M(x,y) = M(u,v)$.
\end{fact}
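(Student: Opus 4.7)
The plan is to produce a concrete permutation of $(x,y)$ into $(u,v)$ that uses only the invariances allowed by the hypothesis. This is a purely combinatorial argument relying on the fact that, on the Hamming slice $\Sigma_n = \binom{[2n]}{n}$, the distance of a pair completely determines its ``column type''\! (i.e., the multiset of coordinate dominoes).

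Step 1 (outer reshuffling). Let $k = |\sig(x,y)| = |\sig(u,v)|$. Since both signatures are equal as multisets of non-zero distances and both have $d$ blocks in total, both $(x,y)$ and $(u,v)$ have exactly $d-k$ blocks where the two strings are equal, and $k$ blocks whose non-zero distances form the same multiset. By outer-permutation invariance, I may first reorder the outer dominoes of $(x,y)$ and of $(u,v)$ so that in both pairs the equal-distance blocks sit in positions $k{+}1,\dotsc,d$, and so that for every $i \in [k]$ we have $\dist(x_i,y_i) = \dist(u_i,v_i)$. This step does not change $M(x,y)$ or $M(u,v)$.

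Step 2 (inner reshuffling of non-trivial blocks). Fix $i \in [k]$. Both $x_i,y_i,u_i,v_i$ lie in $\Sigma_n$, so $|x_i|=|y_i|=|u_i|=|v_i|=n$. For a pair of weight-$n$ strings in $\{0,1\}^{2n}$ at Hamming distance $2s$, the number of $\domino{1}{1}$-dominoes is $n-s$, and by the weight constraints the multiset of column dominoes is then forced to be $(n-s)$ copies of $\domino{1}{1}$, $(n-s)$ copies of $\domino{0}{0}$, $s$ copies of $\domino{1}{0}$, and $s$ copies of $\domino{0}{1}$. Hence for each $i \in [k]$, the dominoes of $(x_i,y_i)$ and of $(u_i,v_i)$ have identical multisets, so there is a permutation $\pi_i$ of the $2n$ coordinates of block $i$ with $\pi_i(x_i) = u_i$ and $\pi_i(y_i) = v_i$. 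Applying $\pi_i$ is an inner permutation within block $i$, so by inner-permutation invariance it preserves the value of $M$.

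Step 3 (inner reshuffling of trivial blocks). For $i \in \{k{+}1,\dotsc,d\}$ we have $x_i = y_i$ and $u_i = v_i$, both weight-$n$ strings. Pick any coordinate permutation $\pi_i$ on block $i$ with $\pi_i(x_i) = u_i$; such a $\pi_i$ exists because $x_i$ and $u_i$ have the same Hamming weight. Then automatically $\pi_i(y_i) = \pi_i(x_i) = u_i = v_i$, so this inner permutation transforms the $i$-th block of $(x,y)$ (after Step 1) into the $i$-th block of $(u,v)$, and again preserves $M$ by inner-permutation invariance.

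Concatenating the transformations from Steps 1--3 carries $(x,y)$ to $(u,v)$ through a sequence of outer and inner permutations, each of which fixes the value of $M$. Hence $M(x,y) = M(u,v)$. The only substantive point is the weight-slice domino count in Step 2, which closes the argument; everything else is bookkeeping.
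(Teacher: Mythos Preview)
Your proof is correct and is exactly the natural argument the paper has in mind when it calls this an ``easy statement'' without proof: align the blocks by outer permutation, then use that on the slice $\Sigma_n$ the domino multiset of a block is determined by its Hamming distance, so inner permutations finish the job. Nothing to add.
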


\newcommand{\PI}{\cH} 

Our proof will hold not only for $\HFF^{d,n}$ but for any set of inner- and
outer-permutation invariant matrices, so we will let $\PI = \{ \PI^{n,d} \;|\; d,n
\in \bN \}$ be an arbitrary set of matrices with values in any fixed set
$\Lambda$, such that each matrix $\PI^{n,d} \colon (\zo^n)^d \times (\zo^n)^d \to
\Lambda$ is both inner- and outer-permutation invariant.
In the remainder of this section, we make the following assumption.
\begin{assumption}
\label{assumption:reduction}
Let $\PI$ be a set of matrices which is both inner- and outer-permutation
invariant. Let $\cQ$ be any stable query set and assume there is a constant $q$
and a function $\rho \colon \Lambda^q \to \zo$ such that, for every $n,d$, there
exist $Q_1, \dotsc, Q_q \in \cQ$ such that
\begin{equation}
\label{eq:perm-invariance-reduction}
    \PI^{n,d} = \rho(Q_1, Q_2, \dotsc, Q_q) \,.
\end{equation}
\end{assumption}

We will prove the following lemma.

\begin{restatable}{lemma}{lemmainnerouterinvariance}
\label{lemma:inner-outer-invariance}
    Under \cref{assumption:reduction}, for every $n,d$ there exist
    $Q_1, \dotsc, Q_q \in \cQ$ such that
    \[
        \PI^{n,d} = \rho(Q_1, Q_2, \dotsc, Q_q) \,,
    \]
    and each $Q_i$ is both inner- and outer-permutation invariant.
\end{restatable}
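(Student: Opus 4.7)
The plan is to mimic the three-stage argument of \cref{lemma:ramsey} twice in succession: once within each block (for inner permutations) and once across blocks (for outer permutations), using the stability of $\cQ$ in both stages to handle the ``off-diagonal'' swap.

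\textbf{Stage 1 (inner invariance).} Fix $n, d$ and start, via \cref{assumption:reduction}, from a reduction $\PI^{n', d} = \rho(Q_1, \ldots, Q_q)$ for $n' \gg n$. I would apply the argument of \cref{lemma:ramsey} block-by-block on the inner coordinates. Concretely, color subsets $S \subseteq [n']$ by the tuple of query values obtained when inside block~$1$ all inner coordinates outside $S$ are set to $\domino{0}{0}$-dominoes (with the other $d-1$ blocks frozen); the hypergraph Ramsey theorem yields a subset $T_1$ of size $n$ on which block~$1$ is $\domino{0}{0}$-swap-invariant. Iterating for blocks $2, \ldots, d$ gives simultaneous $\domino{0}{0}$-swap invariance in every block. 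Repeating with $\domino{1}{1}$-padding yields $\domino{1}{1}$-swap invariance, and then invoking stability via the adjacent $\domino{0}{1} \leftrightarrow \domino{1}{0}$ swap argument from the last claim of \cref{lemma:ramsey} yields $\domino{0}{1} \leftrightarrow \domino{1}{0}$-invariance within every block. Since these three swap invariances generate the full symmetric group on the inner coordinates of each block, each resulting $Q_i$ is inner-permutation invariant.

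\textbf{Stage 2 (outer invariance).} Starting from the inner-invariant reduction for $\PI^{n, d'}$ with $d' \gg d$, treat each outer block as a super-symbol in the alphabet $\zo^n \times \zo^n$. By Stage 1, each $Q_i$ depends on a block only through its inner equivalence class, of which there are only $O(n^3)$ (parametrized by the counts of $00$, $01$, $10$, $11$ positions). So the effective outer alphabet is finite (for fixed $n$), and I would rerun the same three-step Ramsey/stability argument at the outer level: pad unused outer blocks first by $(0^n, 0^n)$ and use Ramsey to extract $(0^n, 0^n)$-outer-swap invariance; then by $(1^n, 1^n)$ to extract $(1^n, 1^n)$-outer-swap invariance; and finally use stability to upgrade to outer-swap invariance between two arbitrary super-symbols. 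Since arbitrary outer permutations decompose into adjacent transpositions, the result is outer-permutation invariance, which together with the inner invariance surviving from Stage 1 completes the proof.

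\textbf{Main obstacle.} The delicate point is the stability step in Stage 2. Inside a block the only non-trivial swap is $\domino{0}{1} \leftrightarrow \domino{1}{0}$, so the \textsc{Greater-Than} construction in the proof of \cref{lemma:ramsey} is easy to describe. At the outer level one must instead handle adjacent swaps between arbitrary super-symbols $s$ and $t$. The needed generalization is: if an $st \to ts$ swap at outer positions $(\ell, \ell+1)$ changes some $Q_i$-value, then by varying which subset of a long padded stretch of $(0^n,0^n)$-blocks (produced in the first step of Stage 2) carries $s$ versus $t$, one builds rows and columns that form an arbitrarily large \textsc{Greater-Than} submatrix of $Q_i$, contradicting stability of $\cQ$. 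This directly generalizes the final construction in the proof of \cref{lemma:ramsey} and is the only essentially new ingredient.
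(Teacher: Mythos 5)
Your Stage 1 (inner invariance) tracks the paper's \cref{section:inner-permutation-invariance} accurately. Stage 2 is where the gap is. You claim that the adjacent-swap/stability argument from the last claim of \cref{lemma:ramsey} ``directly generalizes'' to an arbitrary outer swap $\domino{s}{s'}\domino{t}{t'} \leftrightarrow \domino{t}{t'}\domino{s}{s'}$, but it does not. The reason the inner \textsc{Greater-Than} construction works is that the two dominoes being swapped, $\domino{0}{1}$ and $\domino{1}{0}$, each have \emph{one} non-background side: sliding $\domino{1}{0}$ through a stretch of $\domino{0}{0}$ padding changes only Alice's string, and sliding $\domino{0}{1}$ changes only Bob's, so the pair $(X^{(i)}, Y^{(j)})$ factors with Alice's row depending only on $i$ and Bob's column only on $j$, giving the \textsc{Greater-Than} structure. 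For generic outer dominoes $\domino{s}{s'}$ and $\domino{t}{t'}$, both sides of each domino are non-background, so both the row and the column determine both positions $p_s$ and $p_t$ simultaneously; there is no way to vary $p_s$ and $p_t$ independently across the two players, and the \textsc{Greater-Than} submatrix you need never appears.

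The paper handles this with an extra ingredient your plan omits. It splits the outer stage into three steps (\cref{prop:outer-invariance-aa,prop:outer-invariance-auv,prop:outer-invariance-abuv}): (i) Ramsey gives invariance under swaps involving a $\domino{a}{a}$ block; (ii) stability gives invariance under swaps of $\domino{a}{u}$ with $\domino{v}{a}$ --- the stability argument does go through here precisely because the two dominoes \emph{share} a symbol $a$, so $\domino{a}{u}$ changes only Bob's side while sliding through $\domino{a}{a}$ padding and $\domino{v}{a}$ changes only Alice's; (iii) the general swap $\domino{a}{b}\domino{u}{v}$ is reduced to case (ii) by applying an \emph{inner} permutation $\pi$ of the block coordinates with $\pi v = a$, using inner-permutation invariance, and then inverting $\pi$. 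Step (iii) is the genuinely new trick, and it is also why the paper works over the slice $\Sigma_n = \binom{[2n]}{n}$ as the outer alphabet: any two symbols in the slice are related by a coordinate permutation, which fails over all of $\zo^n$ (your proposed $(0^n,0^n)$ and $(1^n,1^n)$ paddings are not even in the slice). So you have correctly identified that the stability step at the outer level is the hard part, but the direct generalization you describe is not available; the reduction to the ``shared-symbol'' case via inner permutation invariance is what is actually needed.
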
 

With this lemma we can prove the main invariance lemma for $\HFF$, \cref{lemma:hff-invariance}, as follows.

\begin{proof}[Proof of \cref{lemma:hff-invariance}]
Suppose there exists a constant $k$ such that $\DD^{\HD_k}(\HFF) = \cO(1)$.  By \cref{prop:reduction-function}, there exists a constant $q$ and a function $\rho \colon \zo^q \to \zo$
such that for any $d,n$ there exist matrices $Q_1, \dotsc, Q_q \in \QS(\HD_k)$
such that
\[
    \HFF^{d,n} = \rho(Q_1, Q_2, \dotsc, Q_q) \,.
\]
Let $K \geq k$ and let $\cQ$ be the query set from
\cref{eq:hff-invariance-query-set}. By \cref{prop:ccc-stable}, $\cQ$ is stable
because $\DD^{\EHD_K}(\cQ) = \cO(1)$. Note that for any $Q \in \QS(\HD_k)$ there is
a matrix $Q^* \in \cQ$ with values in $\Lambda = \zo \times \{0, \dotsc, K\}$,
with the same number of rows and columns as $Q$, such that the first bit of each
entry $Q^*(i,j)$ is equal to $Q(i,j)$. Therefore there exists a function $\rho^*\colon \Lambda^q \to \zo$, defined as $\rho^*(\lambda_1, \dotsc, \lambda_1) =
\rho(\lambda^{(1)}_1, \dotsc, \lambda^{(1)}_q)$ where $\lambda^{(1)}_i$ is the
first bit of $\lambda_i$, and matrices $Q^*_1, \dotsc, Q^*_q \in \cQ$ such that
\[
    \HFF^{d,n} = \rho^*(Q^*_1, Q^*_2, \dotsc, Q^*_q) \,.
\]
Then \cref{assumption:reduction} is satisfied, so using
\cref{lemma:inner-outer-invariance} (with the set $\cQ^*$ and function $\rho^*$)
and \cref{fact:inner-outer-signature} we may conclude the proof.
\end{proof}

We prove 
\cref{lemma:inner-outer-invariance} by gradually building up invariance of the query
matrices under larger classes of permutations. In \cref{section:inner-permutation-invariance}
we prove inner-permutation invariance, and in \cref{section:outer-permutation-invariance}
we conclude the proof with outer-permutation invariance.

\begin{remark}
    Let us clarify some notation. The matrix $\PI^{n,d}$ is defined on row and
    column set $(\zo^n)^d$ whereas the query set $\cQ$ is an abstract set of
    matrices, closed under permutations. 
    Since $\cQ$ is closed under
    permutations, we may assume for simplicity of notation that each $Q_i$ in
    \cref{eq:perm-invariance-reduction} is also defined on row and column set
    $(\zo^n)^d$. In our application, we took $\cQ$ to be a query set of matrices
    $H_t$ defined on row and column set $\zo^t$. Since we will be using $x,y \in (\zo^n)^d$
    as the row and column index sets, we write
    \[
        Q_i(x,y) = H_t( \phi(x), \psi(y) )
    \]
    for some maps $\phi, \psi \colon (\zo^n)^d \to \zo^t$ which translate between the row
    and column index sets of the matrices.
\end{remark}

\subsubsection{Inner-Permutation Invariance}
\label{section:inner-permutation-invariance}

We prove:

\begin{lemma}
\label{lemma:inner-permutation-invariance}
Under \cref{assumption:reduction}, for any $n, d \in \bN$, there exist $Q_1, Q_2,
\dotsc, Q_q \in \cQ$ such that
\[
    \PI^{n,d} = \rho(Q_1, Q_2, \dotsc, Q_q) \,,
\]
and each $Q_i(x,y)$ is inner-permutation invariant.
\end{lemma}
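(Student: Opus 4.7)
The plan is to adapt the three-stage Ramsey strategy used in \cref{lemma:ramsey} (first $\domino{0}{0}$-swap invariance, then $\domino{1}{1}$-swap invariance, then full permutation invariance via stability) to the multi-block setting, and apply it block-by-block. I would prove the lemma by induction on $j \in [d]$: assuming the queries in the reduction $\PI^{n,d} = \rho(Q_1, \dotsc, Q_q)$ are already invariant under inner permutations of blocks $1, \dotsc, j-1$, I would show that they can be chosen to additionally be invariant under inner permutations of block $j$. The outer structure of the proof is important: since $\PI^{n,d}$ is inner-permutation invariant by assumption, the reduction function $\rho$ does not need to be modified, so only the individual queries $Q_i$ are updated in each iteration.

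For the inductive step, fix $j$ and let $n' \gg n$ be sufficiently large. By \cref{assumption:reduction}, there exist $Q_1', \dotsc, Q_q' \in \cQ$ with $\PI^{n',d} = \rho(Q_1', \dotsc, Q_q')$. I would color each subset $S \subseteq [n']$ of size $|S| \leq n$ by the tuple of values $\bigl(Q_i'(X, Y)\bigr)_{i \in [q], (x,y) \in \zo^{|S|} \times \zo^{|S|}}$, where $X, Y \in (\zo^{n'})^d$ are inputs whose $j$-th blocks agree with $x, y$ on coordinates $S$ and are zero elsewhere, and whose remaining blocks are drawn from a fixed finite family of reference inputs that witnesses the invariances already secured in blocks $1, \dotsc, j-1$. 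Because the number of colors depends only on $n, q, d$ and $|\Lambda|$, the hypergraph Ramsey theorem (\cref{cor:hypergraph-ramsey}) yields a monochromatic $T \subseteq [n']$ of size $n$. Restricting block $j$ to $T$ and identifying $T$ with $[n]$ produces updated queries $Q_i''$ which, by the zero-padding argument of \cref{claim:lemma-fcode-ramsey-00}, are invariant under $\domino{0}{0}$-swaps within block $j$; re-applying the same scheme with one-padding, as in \cref{claim:lemma-fcode-ramsey-11}, upgrades this to $\domino{1}{1}$-swap invariance within block $j$.

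To promote these restricted swap invariances into full permutation invariance within block $j$, I would invoke the stability of $\cQ$ in the style of \cref{claim:lemma-fcode-ramsey-01}. If a $\domino{0}{1}$--$\domino{1}{0}$ swap within block $j$ altered some $Q_i''(x,y)$, then by sliding the $\domino{1}{0}$ domino and $\domino{0}{1}$ domino independently across a padded run of $\domino{0}{0}$ dominoes inside block $j$, one constructs an arbitrarily large \textsc{Greater-Than} submatrix of $Q_i''$, contradicting stability. This establishes inner-permutation invariance in block $j$. Iterating over $j = 1, 2, \dotsc, d$ yields the lemma.

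The main obstacle is the bookkeeping across blocks: the Ramsey step in block $j$ must preserve the invariances already established in blocks $1, \dotsc, j-1$. This is handled by enlarging the Ramsey color of a subset $S$ of block $j$ to record the query values not only for a single reference assignment of the other blocks, but for an entire fixed finite family of reference assignments that spans the orbits of inner permutations in those blocks. Monochromaticity of the Ramsey subset then guarantees that any inner permutation within block $j$ acts consistently across all such reference inputs, so invariances in earlier blocks are maintained. A secondary subtlety is ensuring that the padded and restricted queries remain in $\cQ$, which follows from closure of $\cQ$ under row/column copies, deletions, and permutations, as in the analogous steps of \cref{lemma:ramsey}.
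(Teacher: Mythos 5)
Your proposal follows the same strategy as the paper: induct block-by-block, and within each block run the three-stage argument of \cref{lemma:ramsey} (Ramsey for $\domino{0}{0}$-swap invariance, then for $\domino{1}{1}$-swap invariance, then a stability argument to handle $\domino{0}{1}$--$\domino{1}{0}$ swaps), which is exactly what the paper formalizes via the inner $(J,j,\Delta)$-shuffle property and \cref{prop:inner-00,prop:inner-01}.

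Two small corrections are needed for the argument to close. First, in the inductive step you write ``By \cref{assumption:reduction}, there exist $Q_1',\dotsc,Q_q'\in\cQ$ with $\PI^{n',d}=\rho(Q_1',\dotsc,Q_q')$,'' but \cref{assumption:reduction} gives no invariance guarantees for these queries; you must instead invoke the inductive hypothesis \emph{at size $n'$}, which means the hypothesis has to be stated as ``for every $n$, there exist queries for $\PI^{n,d}$ invariant in blocks $1,\dotsc,j-1$'' (this universal quantification over $n$ is what the shuffle property makes explicit). Second, you attribute the preservation of invariance in blocks $<j$ to the Ramsey monochromaticity via a ``family of reference inputs that spans the orbits,'' but monochromaticity only relates colors of different subsets $S,S'$ with $|S|=|S'|$; it does not relate values at different reference inputs inside the same color tuple. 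In the paper's proof the earlier invariances are preserved by a direct argument that has nothing to do with the coloring: padding in block $j$ commutes with consecutive swaps in any block $j'<j$, so $P_i(x,y)=Q_i(p(x),p(y))=Q_i(p(x'),p(y'))=P_i(x',y')$ follows immediately from the invariance of $Q_i$. Once you replace the orbit-spanning family by the simpler ``record all values for every content of the other blocks'' coloring (which is still finite and independent of $n'$) and use the direct padding argument for blocks $<j$, your proof coincides with the paper's.
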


We will prove \cref{lemma:inner-outer-invariance} by gradually building up the
inner-permutation invariance. We will keep track of three variables: $J
\subseteq [d]$, the set of blocks where inner-permutation invariance already
holds; $j \in [d]$, a new block where we are in the process of building up
inner-permutation invariance; and $\Delta$, which records the types of
permutations in block $j$ that we have already established invariance for,
defined below. For strings $x,y \in \zo^n$ we consider their Boolean domino sequence
\[
    \domino{x}{y} = \domino{x_1}{y_1} \domino{x_2}{y_2} \dotsm \domino{x_n}{y_n}\;,
\]
with each $\domino{x_i}{y_i} \in \zo^2$. 

\begin{definition}[Inner $(J,j,\Delta)$-Shuffle Property]
\label{def:inner-shuffle}
Fix any $d$, $J \subseteq [d]$, and $j \in [d]\backslash J$. Let $\Delta \subset
{\zo^2 \choose 2}$ be a set of unordered pairs of dominoes. We say the
\emph{inner $(J,j,\Delta)$-shuffle property} holds if for every $n$ there exist
$Q_1, \dotsc, Q_q \in \cQ$ such that
\[
    \PI^{n,d} = \rho(Q_1, \dotsc, Q_q) \,,
\]
and each $Q_i$ satisfies the following two conditions:
\begin{enumerate}
    \item $Q_i$ is \emph{$J$-inner-permutation invariant}: for $x,y \in (\zo^n)^d$, if $x',y' \in (\zo^n)^d$ are obtained from $x,y$
    by swapping any two consecutive dominoes within a block $j' \in J$, then
    $Q_i(x,y) = Q_i(x',y')$.
    \label{def:inner-shuffle-item-1}
    \item Let $x,y \in (\zo^n)^d$ and suppose that $x', y' \in (\zo^n)^d$ are obtained from $x,y$
    by swapping two consecutive dominoes $\left\{\domino{a}{b}, \domino{c}{d}\right\} \in \Delta$
    within block $j$. Then
    $Q_i(x,y) = Q_i(x',y')$.
    \label{def:inner-shuffle-item-2}
\end{enumerate}
\end{definition}
Note that, in the base case where $J = \emptyset$, the $\emptyset$-inner-permutation invariance
property holds trivially.
We first establish that we may swap any consecutive pairs of dominoes where one
of the dominoes is $\domino{0}{0}$ or $\domino{1}{1}$. Define
\[
    \Delta_0 \define \left\{ \left\{ \domino{0}{0}, \domino{a}{b} \right\} \;|\; a,b \in \zo \right\}, \;\; \text{ and } \;\; 
    \Delta_1 \define \left\{ \left\{ \domino{1}{1}, \domino{a}{b} \right\} \;|\; a,b \in \zo \right\} \,.
\]

\begin{proposition}
\label{prop:inner-00}
    Fix any $J \subseteq [d]$ and $j \in [d]\backslash J$. Assume the inner $(J,j,\emptyset)$-shuffle
    property holds. Then the inner $(J,j, \Delta_0 \cup \Delta_1)$-shuffle property
    holds.
\end{proposition}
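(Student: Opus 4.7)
The plan is to mimic the two-stage Ramsey argument of \cref{claim:lemma-fcode-ramsey-00} and \cref{claim:lemma-fcode-ramsey-11}: first upgrade from $(J,j,\emptyset)$-invariance to $(J,j,\Delta_0)$-invariance via $\domino{0}{0}$-padding in block~$j$, then upgrade further to $(J,j,\Delta_0 \cup \Delta_1)$-invariance via $\domino{1}{1}$-padding in block~$j$. Each stage is a single application of the hypergraph Ramsey theorem \cref{cor:hypergraph-ramsey}, made possible by the fact that the query values lie in a fixed finite alphabet $\Lambda$ and so the number of possible ``local'' query profiles is bounded independently of the block size.

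For stage one, fix $n$ and choose $N \geq n$ sufficiently large. Apply the $(J,j,\emptyset)$-shuffle hypothesis at block size $N$ to obtain $J$-inner-permutation invariant queries $Q_1^{(N)}, \dotsc, Q_q^{(N)} \in \cQ$ with $\PI^{N,d} = \rho(Q_1^{(N)}, \dotsc, Q_q^{(N)})$. For each $S \subseteq [N]$ with $|S| \leq n$, assign to $S$ a color that records the tuple $(Q_i^{(N)}(X,Y))_{i \in [q]}$ over every way of placing block-$j$ data on the coordinates $S$ (with $\domino{0}{0}$'s on $[N] \setminus S$) together with some fixed canonical $0$-padding of the other $d-1$ blocks from $\zo^n$ to $\zo^N$. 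The number of colors depends only on $n, d, q, |\Lambda|$. By \cref{cor:hypergraph-ramsey}, for $N$ large enough there exists $T \subseteq [N]$ with $|T| = n$ that is monochromatic on each size class. Define $\pad_0 \colon (\zo^n)^d \to (\zo^N)^d$ by placing block-$j$ data on coordinates $T$, padding the rest of block~$j$ with $0$s, and padding every other block from size $n$ to size $N$ with $0$s. Set $Q_i \define Q_i^{(N)} \circ (\pad_0 \times \pad_0)$. Since inserting equal $\domino{0}{0}$ dominoes in every block preserves the value of the inner-permutation-invariant $\PI$, the identity $\PI^{n,d} = \rho(Q_1, \dotsc, Q_q)$ transfers to the size-$n$ instance. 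The $J$-inner-permutation invariance of each $Q_i$ is inherited because any swap within a block $j' \in J$ of $(x,y)$ lifts to the same swap within block $j'$ of $(\pad_0(x), \pad_0(y))$. Finally, any swap of a $\domino{0}{0}$ with its neighbor in block~$j$ transforms the support subset $S \subseteq T$ into a different same-sized subset $S' \subseteq T$, and the two subsets share the same Ramsey color, so $Q_i(x,y) = Q_i(x',y')$ for all such $\Delta_0$-swaps.

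Stage two repeats the construction starting from the $(J,j,\Delta_0)$-invariant queries produced in stage one, but this time $\pad_1$ places $\domino{1}{1}$'s on the extra coordinates of block~$j$ (while still $0$-padding the other blocks). The key observation, parallel to \cref{claim:lemma-fcode-ramsey-11}, is that $\Delta_0$-invariance survives the padding: a $\Delta_0$-swap on the size-$n$ input lifts to a single $\Delta_0$-swap on the size-$N$ input sitting away from the $\domino{1}{1}$-padded tail, which is covered by the stage-one $\Delta_0$-invariance. A fresh Ramsey coloring — now recording query values for block-$j$ data supported on $S$ with $\domino{1}{1}$'s on $[N] \setminus S$ — and the corresponding monochromatic $T' \subseteq [N]$ of size $n$ then yield new queries that additionally gain $\Delta_1$-invariance in block~$j$ while retaining the $J$-inner-permutation invariance and the $\Delta_0$-invariance established so far.

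The main obstacle is bookkeeping — making sure that each invariance is preserved under the padding and Ramsey passage, and that the Ramsey coloring has size independent of $N$. The latter is guaranteed because the color depends only on the \emph{abstract} configuration of dominoes placed on a size-$\ell$ subset (with $\ell \leq n$), not on its concrete coordinates in $[N]$, so the Ramsey-monochromatic subset $T$ captures all same-size subsets uniformly. The former requires that the canonical padding of blocks $[d] \setminus \{j\}$ be identical in the ``before'' and ``after'' instances, so that $J$-inner-permutation-invariant swaps and the reduction identity both lift faithfully through $\pad_0$ and $\pad_1$; this is immediate from the construction.
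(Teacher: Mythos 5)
Your overall plan is the same as the paper's: a two-stage hypergraph Ramsey argument, first padding block~$j$ with $\domino{0}{0}$s to obtain $\Delta_0$-invariance, then padding with $\domino{1}{1}$s to obtain $\Delta_1$-invariance, mimicking \cref{claim:lemma-fcode-ramsey-00} and \cref{claim:lemma-fcode-ramsey-11}. Your stage-one argument matches the paper.

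Stage two is where there is a concrete gap. You assert that ``a $\Delta_0$-swap on the size-$n$ input lifts to a single $\Delta_0$-swap on the size-$N$ input sitting away from the $\domino{1}{1}$-padded tail.'' This is not true. The monochromatic set $T' \subseteq [N]$ of size $n$ that \cref{cor:hypergraph-ramsey} produces is not necessarily an interval, so the $\domino{1}{1}$s placed on $[N]\setminus T'$ are interspersed throughout block~$j$ rather than forming a contiguous tail. Consequently, a swap of consecutive coordinates $p,p+1$ in the size-$n$ block $j$ lifts to a transposition of the data at the generally non-adjacent coordinates $t_p, t_{p+1}$ of $T'$, with $\domino{1}{1}$ dominoes sitting in between; that transposition is \emph{not} a single consecutive $\Delta_0$-swap in the size-$N$ instance, and stage-one $\Delta_0$-invariance does not apply to it directly. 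The paper deals with precisely this point by a separate argument (it decomposes the lift into a sequence of $\domino{0}{0}$-swaps in the analogous \cref{claim:lemma-fcode-ramsey-11}, and argues via ``moving the $\domino{0}{0}$ dominoes to the end of the block'' in \cref{prop:inner-00} itself); that step is the genuinely delicate part of the proposition, and your formulation of it would not yield a valid proof as written. To repair the argument you need to explain why the lifted transposition, with $\domino{1}{1}$s intervening, still preserves the value of $Q_i^{(N)}$, using a combination of the stage-one $\Delta_0$-invariance and the stage-two Ramsey coloring rather than a one-shot swap.

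Two smaller notes: your phrase ``some fixed canonical $0$-padding of the other $d-1$ blocks'' should make explicit that the contents of the other $d-1$ blocks \emph{range} over $(\zo^n)^{d-1}$ (only the padding scheme to length $N$ is fixed), since the color must record the query values for all such configurations; and, like the paper, you implicitly use that $0$-padding preserves the value of $\PI$, i.e.\ $\PI^{n,d}(x,y) = \PI^{N,d}(p(x),p(y))$, which is true for $\HFF$ but is an additional property beyond inner/outer permutation invariance.
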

\begin{proof}
    We first establish the inner $\Delta_0$-shuffle property, and then the inner
    $(\Delta_0 \cup \Delta_1)$-shuffle property. Let $N > n$ be sufficiently large (to be determined later). By assumption, there exist
    $Q_1, \dotsc, Q_q$ such that
    \[
        \PI^{N,d} = \rho(Q_1, \dotsc, Q_q) \,,
    \]
    and each $Q_i$ satisfies \eqref{def:inner-shuffle-item-1} of
    \cref{def:inner-shuffle}. For every subset $S \subseteq [N]$ with cardinality
    $|S| \leq n$, we assign a color $\col(S)$ as follows.

    For every $u,v \in (\zo^{n})^d$ such that the $j^{th}$ block
    of dominoes $\domino{u_j}{v_j} \in (\zo^2)^{n}$ does not contain the $\domino{0}{0}$
    domino, we let $U,V \in (\zo^N)^d$ be the unique strings such that:
    \begin{enumerate}
        \item For each $j' \neq j$, $U_{j'}\coloneqq u_{j'}0^{N-n} \in \zo^N$ and $V_{j'}\coloneqq v_{j'}0^{N-n} \in \zo^N$
        \item In block $j$, the substring $(U_j)_S \in \zo^{|S|}$ of $U_j \in \zo^N$ on coordinates $S$ is equal
        to $u_{j} \in \zo^n$ and the remaining bits are 0, and $V_{j}$ is defined similarly. 
    \end{enumerate}
    We now append to $\col(S)$ the values $Q_1(U,V), \dotsc, Q_q(U,V)$. Observe that the total number of colors
    depends on $n,d,k$, and the number of colors in the coloring $\cQ$, but
    not on $N$. So for a sufficiently large $N$, by the hypergraph Ramsey theorem
    (\cref{cor:hypergraph-ramsey}), there is a set $T \subset
    [N]$ where every two subsets $S, S' \subseteq T$ with $|S| = |S'|$ have
    $\col(S) = \col(S')$.

    For each $x \in (\zo^n)^d$ we define $p(x) \in (\zo^N)^d$ as follows:
    \begin{enumerate}
        \item For each $j' \neq j$, let $p(x)_{j'}\coloneqq x_{j'}0^{N-n} \in \zo^N$. 
        \item Let $p(x)_j \in \zo^N$ be the string such that $\left(p(x)_j\right)_T \coloneqq x_j$,
        and its remaining bits are 0.
    \end{enumerate}
    We claim that the matrices $P_i(x,y) \coloneqq Q_i(p(x), p(y))$ satisfy the desired conditions. First, we have
    \begin{align*}
        \PI^{n,d}(x,y) &= \PI^{N,d}(p(x),p(y))  \\
                        &= \rho(Q_1(p(x), p(y)), \dotsc, Q_q(p(x), p(y))) \\
                        &= \rho(P_1(x,y), \dotsc, P_q(x,y)) \,.
    \end{align*}
    Next, let $x,y \in (\zo^n)^d$ and suppose that $x',y' \in (\zo^n)^d$ are
    obtained from $x,y$ by swapping two consecutive dominoes within a block $j'
    \in J$. Then $p(x'), p(y')$ are obtained from $p(x),p(y)$ by some
    permutation on dominoes in block $j'$; by decomposing this permutation into
    a sequence of domino swaps, condition
    \eqref{def:inner-shuffle-item-1} of \cref{def:inner-shuffle} implies that
    $P_i(x,y) = Q_i(p(x),p(y)) = Q_i(p(x'),p(y')) = P_i(x',y')$, as
    desired.

    Finally, let $x,y \in (\zo^n)^d$ and suppose that $x',y' \in (\zo^n)^d$
    are obtained from $x,y$ by swapping a domino $\domino{0}{0}$ in block $j$
    with one of its neighbors $\domino{a}{b}$ for some $a,b \in \zo$. Let $S
    \subseteq T$ be the set of coordinates $s \in [N]$ in block $j$ of
    $(\zo^N)^d$ where the sequence $\domino{p(x)_j}{p(y)_j} \in (\zo^2)^N$ is
    nonzero, \ie $\domino{(p(x)_j)_s}{(p(y)_j)_s} \neq \domino{0}{0}$.
    Similarly, let $S' \subseteq T$ be the analogous set of coordinates for
    $p(x'), p(y')$. Observe that the subsequence of dominoes
    $\domino{p(x)_j}{p(y)_j}$ on coordinates $S$ in block $j$, is equal to the
    subsequence of dominoes $\domino{p(x')_j}{p(y')_j}$ on coordinates $S'$ in
    block $j$, and that $p(x),p(x')$ and $p(y),p(y')$ are equal outside of block
    $j$. Therefore, since $\col(S) = \col(S')$, we have
    $Q_i(p(x),p(y)) = Q_i(p(x'),p(y'))$ by definition of the colors. So
    \[
        P_i(x,y) = Q_i(p(x),p(y)) = Q_i(p(x'),p(y')) = P_i(x',y') \,,
    \]
    as desired. This establishes the inner $\Delta_0$-shuffle property.

    It remains to establish the inner $(\Delta_0 \cup \Delta_1)$-shuffle
    property, assuming the inner $\Delta_0$-shuffle property. This is achieved
    by following a nearly identical argument as above, with a few adjustments.
    First, when constructing $U, V$ in the definition of $\col(S)$, we pad the
    strings with 1s instead of 0s, and in the definition of $p(x)$ we also pad
    with 1s instead of 0s.
    
    Second, it becomes necessary to verify that, for $x,y \in (\zo^n)^d$ and
    $x',y'$ obtained from $x,y$ by swapping a $\domino{0}{0}$ domino in block
    $j$ with its neighbor, it should remain the case that $P_i(x,y) =
    P_i(x',y')$. This follows from the assumption of inner $\Delta_0$-shuffle
    invariance: moving the $\domino{0}{0}$ dominoes 
    $\domino{p(x)_j}{p(y)_j}$ in block $j$ to the end of the block does not change the color of
    the set $S \subseteq T$ of non-$\domino{1}{1}$ dominoes. Similarly, moving the
    $\domino{0}{0}$ dominoes $\domino{p(x')}{p(y')}$ in block $j$ to the end
    of the block does not change the color of the set $S'$. After this
    transformation, the subsequence of $\domino{p(x)_j}{p(y)_j}$ on $S$ is equal
    to the subsequence of $\domino{p(x')_j}{p(y')_j}$ on $S'$ and $\col(S) =
    \col(S')$. 
\end{proof}

Now we establish that we may swap any remaining consecutive pairs of dominoes, \ie the pairs involving
both $\domino{0}{1}, \domino{1}{0}$. Define
\[
    \Delta_{01} \define \left\{ \left\{ \domino{0}{1}, \domino{1}{0} \right\} \right\} \,.
\]
Note that $\Delta_0 \cup \Delta_1 \cup \Delta_{01}$ is simply the set of all
domino pairs, so that if the inner $(J,j, \Delta_0 \cup \Delta_1 \cup
\Delta_{01})$-shuffle property holds, then we may assume each $Q_i$ is $(J
\cup \{j\})$-inner-permutation invariant.

\begin{proposition}
\label{prop:inner-01}
    Fix any $J \subseteq [d], j \in [d]$ and assume the inner $(J,j,\Delta_0 \cup
    \Delta_1)$-shuffle property holds. Then the inner $(J,j,\Delta_0 \cup
    \Delta_1 \cup \Delta_{01})$-shuffle property holds. In particular, the
    inner $(J \cup \{j\}, j', \emptyset)$-shuffle property holds for any $j'$.
\end{proposition}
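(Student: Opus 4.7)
The plan is to mimic the Greater-Than embedding of \cref{claim:lemma-fcode-ramsey-01} from the proof of \cref{lemma:ramsey}, adapted to the multi-block setting: exploit the already-established $\Delta_0 \cup \Delta_1$-invariance inside block $j$ to relocate the two 1-bits involved in a $\{\domino{0}{1}, \domino{1}{0}\}$-swap to many distinct positions within a padded window, and then contradict the stability of $\cQ$ if the swap is not invariance-preserving. Concretely, since $\cQ$ is stable, fix $t$ such that no $Q \in \cQ$ contains a \textsc{Greater-Than} submatrix of size $t+1$. Set $N = n + 2t$ and invoke the inner $(J, j, \Delta_0 \cup \Delta_1)$-shuffle property at size $N$ to obtain $Q_1, \dotsc, Q_q \in \cQ$ with $\PI^{N,d} = \rho(Q_1, \dotsc, Q_q)$, each $Q_i$ being $J$-inner-permutation invariant and invariant under $\Delta_0 \cup \Delta_1$-swaps in block $j$. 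For $x \in (\zo^n)^d$ define $p(x) \in (\zo^N)^d$ by $p(x)_{j'} \coloneqq x_{j'} \concat 0^{2t}$ for every $j'$, and set $P_i(x,y) \coloneqq Q_i(p(x), p(y))$. As in \cref{prop:inner-00}, $\PI^{n,d}(x,y) = \PI^{N,d}(p(x), p(y)) = \rho(P_1(x,y), \dotsc, P_q(x,y))$, and the $J$-inner-permutation invariance and $(\Delta_0 \cup \Delta_1)$-swap invariance of $P_i$ transfer directly from those of $Q_i$.

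It then remains to show that $P_i$ is invariant under a single swap of adjacent $\domino{1}{0}$ and $\domino{0}{1}$ in block $j$, say at positions $\ell, \ell+1$, producing $(x',y')$ from $(x,y)$. Consider the window of $2t+2$ positions $\{\ell, \ell+1, \dotsc, \ell+2t+1\}$ in block $j$ of $p(x)_j$, which contains the two swapped dominoes together with $2t$ padding $\domino{0}{0}$'s dragged in via $\Delta_0$-swaps. Using $\Delta_0$-swaps we can freely shuffle the $\domino{0}{0}$'s relative to the lone $\domino{1}{0}$ and $\domino{0}{1}$, but we cannot exchange $\domino{1}{0}$ and $\domino{0}{1}$ themselves under $\Delta_0 \cup \Delta_1$. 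Thus, starting from $(p(x), p(y))$ we can reach any configuration $(X^{(a)}, Y^{(b)})$ with $\domino{1}{0}$ at position $\ell + 2a$ and $\domino{0}{1}$ at position $\ell + 2b - 1$ provided $a < b$, while starting from $(p(x'), p(y'))$ we reach the same configuration provided $a \geq b$. If $P_i(x,y) \neq P_i(x',y')$, then the $(t+1) \times (t+1)$ submatrix of $Q_i$ indexed by $(X^{(a)})_{a \in [t+1]}$ and $(Y^{(b)})_{b \in [t+1]}$ is a \textsc{Greater-Than} submatrix of size $t+1$, contradicting the stability of $\cQ$. Hence $P_i(x,y) = P_i(x',y')$, which establishes the inner $(J, j, \Delta_0 \cup \Delta_1 \cup \Delta_{01})$-shuffle property.

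Since $\Delta_0 \cup \Delta_1 \cup \Delta_{01}$ exhausts all unordered pairs of Boolean dominoes, invariance under all three swap types in block $j$ combined with $J$-inner-permutation invariance is exactly $(J \cup \{j\})$-inner-permutation invariance, from which the inner $(J \cup \{j\}, j', \emptyset)$-shuffle property for any $j'$ is immediate. The main obstacle in this argument is the careful Greater-Than embedding: one must verify both that the two non-equal dominoes stay on their respective sides throughout the $\Delta_0$-shuffling (so the $t+1$ choices for $a$ and $b$ are indeed independent up to the comparison $a < b$ vs.\ $a \geq b$), and that the resulting $(t+1) \times (t+1)$ pattern is genuinely a \textsc{Greater-Than} submatrix forbidden by stability; this latter point is exactly the reason the stability constant $t$ controls the amount of padding $2t$ we must introduce.
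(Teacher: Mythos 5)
Your proof is correct and follows essentially the same strategy as the paper: pad block $j$ with $0^{2t}$, use the already-established $\Delta_0$-swap invariance to slide the $\domino{1}{0}$ and $\domino{0}{1}$ dominoes to even/odd positions in a window, and derive a \textsc{Greater-Than} submatrix of $Q_i$ if the $\Delta_{01}$-swap were not invariance-preserving, contradicting stability of $\cQ$. (You even fix a small off-by-one that appears in the paper, which sets up $t$ as the size of the largest forbidden \textsc{Greater-Than} but then only exhibits a $t \times t$ pattern; your $(t+1)\times(t+1)$ version is cleaner.)
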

\begin{proof}
Since $\cQ$ is stable, there exists a constant $t$ such that no $t \times t$ instance of \textsc{Greater-Than}
appears in $\cQ$. Let $N > n + 2t$ and let $Q_1, \dotsc, Q_q \in \cQ$ be the matrices guaranteed
by the inner $(\Delta_0 \cup \Delta_1)$-shuffle property to satisfy
\[
    \PI^{N,d} = \rho(Q_1, \dotsc, Q_q)
\]
and the other conditions of \cref{def:inner-shuffle}. For each $x \in
(\zo^n)^d$ we define $p(x) \in (\zo^N)^d$ as follows: for each block $j' \in
[d]$, we define $p(x)_{j'} \in \zo^N$ as the string whose first $n$ bits are
equal to $x_{j'}$, and whose remaining bits are 0. We then define matrices
$P_i(x,y) = Q_i(p(x),p(y))$ for all $x,y \in (\zo^n)^d$. We claim that
these matrices satisfy the required conditions. First observe that the
conditions required for the inner $(\Delta_0 \cup \Delta_1)$-shuffle property
are preserved, since any two consecutive dominoes in any block of $x,y$ are also
consecutive in $p(x),p(y)$.

Now suppose that $x,y \in (\zo^n)^d$ and $x',y' \in (\zo^n)^d$ are obtained
from $x,y$ from swapping two consecutive $\domino{0}{1}, \domino{1}{0}$ dominoes
in block $j$. Assume for the sake of contradiction that $P_i(x,y) \neq
P_i(x',y')$, so $Q_i(p(x),p(y)) \neq Q_i(p(x'),p(y'))$.
Observe that $x, x'$, and $y, y'$ respectively, are identical in every block except block $j$, and that
in block $j$, we may assume without loss of generality that there is some $\ell$ such that
\begin{align*}
    \domino{p(x)_j}{p(y)_j} &= \domino{(x_j)_{[1:\ell]}}{(y_j)_{[1:\ell]}} \; \domino{1}{0} \domino{0}{1} \;
                        \domino{(x_j)_{[\ell+3:n]}}{(y_j)_{[\ell+3:n]}} \; \domino{0}{0} \dotsm \domino{0}{0} \\
    \domino{p(x')_j}{p(y')_j} &= \domino{(x_j)_{[1:\ell]}}{(y_j)_{[1:\ell]}} \; \domino{0}{1} \domino{1}{0} \;
                        \domino{(x_j)_{[\ell+3:n]}}{(y_j)_{[\ell+3:n]}} \; \domino{0}{0} \dotsm \domino{0}{0} \,.
\end{align*}
For $\alpha, \beta \in [t]$, we now define the following. Let $X_\alpha \in
\zo^{2t}$ be the string which is 1 at index $2\alpha$ and 0 elsewhere, and let
$Y_\beta \in \zo^{2t}$ be the string which is 1 at index $2\beta - 1$ and 0
elsewhere. Then extend these strings into $\zo^N$ by defining the concatenations
\begin{align*}
    \widehat{X_\alpha}
        &\define (x_j)_{[1:\ell]} \concat X_\alpha \concat (x_j)_{[\ell+3:n]} \\
    \widehat{Y_\beta}
        &\define (y_j)_{[1:\ell]} \concat Y_\beta \concat (y_j)_{[\ell+3:n]} \,,
\end{align*}
and finally extend these strings into $(\zo^N)^d$ by inserting these strings into block $j$:
\begin{align*}
    \widehat{\widehat{X_\alpha}} &\define ( p(x)_1, p(x)_2, \dotsc, p(x)_{j-1}, \widehat{X_\alpha},
            p(x)_{j+1}, \dotsc, p(x)_d ) \\
    \widehat{\widehat{Y_\beta}} &\define ( p(y)_1, p(y)_2, \dotsc, p(y)_{j-1}, \widehat{Y_\beta},
            p(y)_{j+1}, \dotsc, p(y)_d ) \,.
\end{align*}
Now observe that when $\alpha < \beta$, the pair $\widehat{\widehat{X_\alpha}}, \widehat{\widehat{Y_\beta}}$
is obtained from $p(x), p(y)$ by swapping $\domino{0}{0}$ dominoes in block $j$ with their neighbors. Then, by assumption,
\[
    Q_i(\widehat{\widehat{X_\alpha}}, \widehat{\widehat{Y_\beta}})
        = Q_i(p(x), p(y)) \,.
\]
Similarly, when $\alpha \geq \beta$, the pair $\widehat{\widehat{X_\alpha}},
\widehat{\widehat{Y_\beta}}$ is obtained from $p(x'), p(y')$ by swapping
$\domino{0}{0}$ dominoes in block $j$ with their neighbors. Then, by assumption,
\[
    Q_i(\widehat{\widehat{X_\alpha}}, \widehat{\widehat{Y_\beta}})
        = Q_i(p(x'), p(y')) \,.
\]
But then if $Q_i(p(x),p(y)) \neq Q_i(p(x'),p(y'))$, it means the $t \times t$ submatrix of $Q_i$
on rows $\widehat{\widehat{X_\alpha}}$ and columns $\widehat{\widehat{Y_\beta}}$ is a \textsc{Greater-Than}
matrix, which is a contradiction. So it must be the case that $P_i(x,y) = Q_i(p(x), p(y)) = Q_i(p(x'), p(y'))
= P_i(x',y')$, as desired. This concludes the proof.
\end{proof}

We may now conclude the proof of \cref{lemma:inner-permutation-invariance}.

\begin{proof}[Proof of \cref{lemma:inner-permutation-invariance}]
Fix any $d$. We prove the lemma by induction on $|J|$. In the base case $J = \emptyset$, the
inner $(\emptyset, 1, \emptyset)$-shuffle property holds trivially, so we may apply 
\cref{prop:inner-00} to establish the inner $(\emptyset, 1, \Delta_0 \cup \Delta_1)$-shuffle property.
Then we may apply \cref{prop:inner-01} to establish the inner $(\{1\}, 2, \emptyset)$-shuffle property.
By induction, we obtain the inner $([d], d, \emptyset)$-shuffle property, which implies that for every $n$
there exist $Q_1, \dotsc, Q_q \in \cQ$ such that each $Q_i$ is inner-permutation invariant and
\[
    \PI^{n,d} =  \rho(Q_1, \dotsc, Q_q) \,,
\]
as desired.
\end{proof}

\subsubsection{Outer-Permutation Invariance}
\label{section:outer-permutation-invariance}

\lemmainnerouterinvariance*

As before, we will build up outer-permutation invariance gradually. We will now
fix some $n$ and for simplicity of notation define $\Sigma \define \Sigma_n$
which we recall is the weight $n$ slice $\Sigma_n \define { [2n] \choose n}$.

\begin{definition}[Swap Invariance]
Fix any set $\Delta$ of unordered pairs $\left\{\domino{a}{b}, \domino{a'}{b'}\right\}$
of dominoes in $\Sigma^2$. For any matrix $Q \colon \Sigma^d \times \Sigma^d \to
\Lambda$, we say $Q$ is \emph{$\Delta$-swap invariant} if
\[
    Q(x,y) = Q(u,v) 
\]
whenever the domino sequences $\domino{x}{y}, \domino{u}{v} \in (\Sigma^2)^d$
are obtained from each other by swapping two consecutive dominoes
$\domino{a}{b}\domino{a'}{b'} \leftrightarrow \domino{a'}{b'} \domino{a}{b}$
which satisfy $\left\{\domino{a}{b}, \domino{a'}{b'}\right\} \in \Delta$.
\end{definition}

\begin{definition}[Outer $\Delta$-Shuffle Property]
Fix any $n, d \in \bN$ and let $\Delta \subseteq {\Sigma^2 \choose 2}$ be a set
of unordered pairs of dominoes on this alphabet. We say that the \emph{outer
$\Delta$-shuffle property} holds for $\PI^{n,d}$ if there exist matrices $Q_1,
Q_2, \dotsc, Q_q \in \cQ$ such that
\[
    \PI^{n,d} = \rho(Q_1, Q_2, \dotsc, Q_q) \,,
\]
with each matrix $Q_i(x,y)$ being inner-permutation invariant, and
$\Delta$-swap invariant on the strings $(x,y) \in \Sigma^d$.
\end{definition}

The first lemma says that we may improve the invariance of query matrices by
enforcing that they are invariant under any swap involving a new $\domino{a}{a}$
domino. For any fixed $a \in \Sigma$, we define
\[
    \Delta_a \define \left\{ \left\{\domino{a}{a}, \domino{b}{c}\right\} \;|\; b,c \in \Sigma \right\} \,.
\]
\begin{proposition}
\label{prop:outer-invariance-aa}
    Fix any $n \in \bN$, let $B \subseteq \Sigma$, and let $\Delta = \bigcup_{b
    \in B} \Delta_b$. Let $a \in \Sigma$. If the outer
    $\Delta$-shuffle property holds for every $d$, then the outer $\left(\Delta
    \cup \Delta_a\right)$-shuffle property also holds for every $d$.
\end{proposition}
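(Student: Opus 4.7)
The plan is to adapt the Ramsey argument of \cref{prop:inner-00} to the outer level, with $\domino{a}{a}$ outer-dominoes serving as padding (the case $a \in B$ is trivial, so assume $a \notin B$). Fix $n$ and take $D > d$ sufficiently large. Applying the outer $\Delta$-shuffle property at $\PI^{n,D}$ yields matrices $Q_1, \dotsc, Q_q \in \cQ$ with $\PI^{n,D} = \rho(Q_1, \dotsc, Q_q)$, where each $Q_i$ is inner-permutation invariant and $\Delta$-swap invariant at $[D]$-adjacent positions. Color each $S \subseteq [D]$ of size at most $d$ by the function sending a length-$|S|$ sequence $\alpha$ of non-$\domino{a}{a}$ outer-dominoes to the tuple $(Q_i(U_{S,\alpha}, V_{S,\alpha}))_{i \in [q]}$, where $(U_{S,\alpha}, V_{S,\alpha}) \in \Sigma^D \times \Sigma^D$ places $\alpha$ at positions $S$ in sorted order and puts $\domino{a}{a}$ at the remaining positions.

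By a refinement of the hypergraph Ramsey theorem, for $D$ sufficiently large we extract a $[D]$-consecutive subset $T = \{t, t+1, \dotsc, t+d-1\}$ of size $d$ on which the coloring depends only on subset size. Define the padding $p \colon \Sigma^d \to \Sigma^D$ by $p(x)_{t+j-1} \coloneqq x_j$ for $j \in [d]$ and $p(x)_i \coloneqq a$ for $i \notin T$, and set $P_i(x,y) \coloneqq Q_i(p(x), p(y))$. Padding with outer-$\domino{a}{a}$ dominoes adds only trivially equal blocks and preserves $\PI$ (as in \cref{prop:inner-00}), so $\PI^{n,d} = \rho(P_1, \dotsc, P_q)$, and inner-permutation invariance of each $P_i$ is immediate from that of $Q_i$.

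Two remaining invariance checks complete the proof. The $\Delta_a$-swap invariance of $P_i$ follows from Ramsey alone: a $[d]$-adjacent swap of $\domino{a}{a}$ with a neighbor in $(x,y)$ merely relocates a single non-$\domino{a}{a}$ entry between two $T$-positions, leaving the non-$\domino{a}{a}$ subsequence unchanged, so Ramsey color equality forces $Q_i$-invariance. The $\Delta$-swap inheritance from $Q_i$ is the subtler step: a $[d]$-adjacent $\Delta$-swap in $(x,y)$ becomes a swap at the $T$-positions $t+j-1, t+j$ of $(p(x), p(y))$, which are $[D]$-adjacent by the consecutive structure of $T$, allowing $Q_i$'s $[D]$-adjacent $\Delta$-swap invariance to apply directly. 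I expect the main technical obstacle to be this refinement of the Ramsey argument producing a $[D]$-consecutive $T$, since standard hypergraph Ramsey does not directly yield consecutive subsets; I anticipate handling it by combining \cref{cor:hypergraph-ramsey} with a pigeonhole over $[D]$-consecutive $d$-windows, or an equivalent structured Ramsey construction tailored to produce monochromatic intervals.
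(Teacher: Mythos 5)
Your strategy founders exactly where you feared it would. The ``refinement of the hypergraph Ramsey theorem producing a $[D]$-consecutive $T$'' does not exist, and no pigeonhole-over-windows argument can rescue it. To see why, consider a coloring in which $\col(S)$ for $|S|=2$ is determined by the parity of the gap between the two elements of $S$. Then for any interval $T = \{t, t+1, \dots, t+d-1\}$ with $d \geq 3$, the subsets $\{t,t+1\}$ and $\{t,t+2\}$ receive different colors, so no interval of size $\geq 3$ has the required within-$T$ color uniformity. Nothing in the hypotheses rules out this kind of coloring: in the base case of the induction ($B = \emptyset$, $\Delta = \emptyset$), the matrices $Q_i$ need only be inner-permutation invariant and members of the stable query set, neither of which forbids $Q_i(U_{S,\alpha},V_{S,\alpha})$ from depending on the exact gap structure of $S$. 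A pigeonhole over $[D]$-consecutive windows only shows that two windows induce the same coloring function; it does not make any single window monochromatic across its own equal-sized subsets, which is what your argument needs.

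The paper takes a different route precisely to avoid this: it accepts a non-consecutive Ramsey set $T$, pads the non-$T$ positions of $p(x), p(y)$ with $a$, and handles the $\Delta$-swap inheritance by a decomposition rather than by consecutiveness. The key observation is that the gap between two $T$-adjacent positions $t_j < t_{j+1}$ contains only $\domino{a}{a}$ dominoes, and the domino being $\Delta$-swapped is $\domino{a'}{a'}$ with $a' \in B$; every swap of $\domino{a'}{a'}$ with any neighbor lies in $\Delta_{a'} \subseteq \Delta$, so $\domino{a'}{a'}$ can be walked across the padding gap (and past $\domino{b}{c}$) via a chain of $[D]$-adjacent $\Delta$-swaps, each of which leaves $Q_i$ unchanged. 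Your treatment of the $\Delta_a$-swap part via the Ramsey color equality is fine and matches the paper; it is the $\Delta$-swap-inheritance step that needs this padding-walk idea in place of a consecutive $T$.
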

\begin{proof}
    Fix any sufficiently large $D$ to be determined later; by assumption, the $\Delta$-shuffle property holds
    for $D$, so there exist $Q_1, \dotsc, Q_q \in \cQ$ such that
    \[
        \PI^{n,D} = \rho(Q_1, \dotsc, Q_q)
    \]
    and each $Q_i$ is inner-permutation invariant and invariant under swaps of
    consecutive domino pairs in $\Delta$. For every set $S \subseteq [D]$ of
    cardinality $s = |S| \leq d$, we assign a color $\col(S)$ as follows. For
    every two strings $u,v \in \Sigma^s$ which do not contain $\domino{a}{a}$ in
    the sequence $\domino{u}{v}$, we let $U,V \in \Sigma^D$ be the unique
    strings such that the substrings $U_S, V_S$ on coordinates $S$ satisfy $U_S
    = u$ and $V_S = v$, and every other coordinate is $a$. We then append to $\col(S)$
    the values $Q_1(U,V), \dotsc, Q_q(U,V)$. The number of possible colors
    depends only on $|\Sigma|$, $d$, $k$, and the number of colors applied to the matrices in $\cQ$,
    which do not depend on $D$. Therefore, by \cref{cor:hypergraph-ramsey}, there exists a set $T \subseteq [D]$
    of cardinality $|T|=d$ such that every two sets $S, S' \subseteq T$ with $|S|=|S'|$ satisfy $\col(S) = \col(S')$.

    We now define matrices $P_1, \dotsc, P_q$ on inputs $x,y \in \Sigma^d$
    as follows. For each $x \in \Sigma^d$, we let $p(x) \in \Sigma^D$ be the
    unique string whose substring $p(x)_T$ on coordinates $T$ is $x$, and all
    other coordinates are $a$. We then define $P_i(x,y) = Q_i(p(x),p(y))$
    and claim that these matrices satisfy the required conditions.

    We first verify that each $P_j$ is invariant under swaps of consecutive
    $\domino{a}{a}$, $\domino{b}{c}$ dominoes.
    Let $x,y,x',y' \in \Sigma^d$ be such that $\domino{x}{y}$ and
    $\domino{x'}{y'}$ differ only by swapping a consecutive pair $\domino{a}{a},
    \domino{b}{c}$ for some $b,c \in \Sigma$. Let $S \subseteq T$ be the set of
    coordinates $i \in [D]$ where $\domino{p(x)_i}{p(y)_i} \neq \domino{a}{a}$,
    and let $S' \subseteq T$ be the set of coordinates $i \in [D]$ where
    $\domino{p(x')_i}{p(y')_i} \neq \domino{a}{a}$. Then $|S| = |S'|$ so
    $\col(S) = \col(S')$. Observe also that the subsequence of
    $\domino{p(x)}{p(y)}$ on coordinates $S$ is equal to the subsequence of
    $\domino{p(x')}{p(y')}$ on coordinates $S'$. Then, by definition of the
    colors, we must have $Q_j(p(x),p(y)) = Q_j(p(x'),p(y'))$, so $P_j(x,y)
    = P_j(x',y')$ as desired.

    Now we verify that each $P_j$ remains invariant under swaps of consecutive
    dominoes $\domino{a'}{a'}$,$\domino{b}{c}$ whenever $a' \in B$.
    Let $a' \in B$ and let $x,y,x',y' \in \Sigma^d$ be such that
    $\domino{x}{y}$ and $\domino{x'}{y'}$ differ only by swapping a consecutive
    pair $\domino{a'}{a'}, \domino{b}{c}$ for some $b,c \in \Sigma$. Observe that
    $\domino{p(x)}{p(y)}$ may then be transformed into $\domino{p(x')}{p(y')}$
    by a sequence of swaps of consecutive dominoes of the form either $\domino{a'}{a'} \domino{a}{a}$
    or $\domino{a'}{a'} \domino{b}{c}$, each of which appears in $\Delta$. Since each $Q_j$
    is invariant under swaps of consecutive dominoes in $\Delta$, we have
    \[
        P_j(x,y) = Q_j(p(x),p(y)) = Q_j(p(x'),p(y')) = P_j(x',y') \,,
    \]
    as desired. This concludes the proof.
\end{proof}

The second lemma says that we may improve the invariance of query matrices by
enforcing that they are invariant under any swap of consecutive $\domino{a}{u} \domino{v}{a}$ dominoes.
Let $\Delta^{(1)}$ be the set of all domino pairs where one domino has the same top and bottom part:
\[
    \Delta^{(1)} \define \left\{ \left\{ \domino{a}{a}, \domino{b}{c} \right\} \;|\; a,b,c \in \Sigma \right\} \,.
\]

\begin{proposition}
\label{prop:outer-invariance-auv}
    Fix any $n \in \bN$, and let $\Delta \subseteq {\Sigma^2 \choose 2}$ be any
    set of pairs of dominoes that contains $\Delta^{(1)}$. Let $a,u,v \in
    \Sigma$ and define
    \[
        \Delta_{a,u,v} \define \left\{ \left\{\domino{a}{u}, \domino{v}{a}\right\} \;|\; a,u,v \in \Sigma \right\} \,.
    \]
    If the outer $\Delta$-shuffle property holds for
    every $d$, then the outer $\left(\Delta \cup \Delta_{a,u,v}\right)$-shuffle property also holds for every $d$.
\end{proposition}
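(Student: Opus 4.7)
My plan is to adapt the argument of \cref{prop:inner-01}---using stability of $\cQ$ to forbid large \textsc{Greater-Than} submatrices---with the role of the padding dominoes $\domino{0}{0}, \domino{1}{1}$ there played here by $\domino{a}{a}$. The assumed $\Delta$-shuffle property already grants freedom to swap $\domino{a}{a}$ past any neighbor, since every such pair lies in $\Delta^{(1)} \subseteq \Delta$; this will be the workhorse for sliding dominoes around inside a padded tail.

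Since $\cQ$ is stable, fix $t$ so that no $Q \in \cQ$ contains a $t \times t$ \textsc{Greater-Than} submatrix, and set $D = d + 2t + 2$. Applying the outer $\Delta$-shuffle property at block count $D$ yields $Q_1, \ldots, Q_q \in \cQ$ with $\PI^{n, D} = \rho(Q_1, \ldots, Q_q)$, each $Q_i$ being inner-permutation invariant and $\Delta$-swap invariant. For $x \in \Sigma^d$, let $p(x) \in \Sigma^D$ be $x$ with $2t + 2$ copies of $a$ appended, and set $P_i(x, y) \define Q_i(p(x), p(y))$. Appending matching $\domino{a}{a}$ outer blocks preserves the value of $\PI$, so $\PI^{n, d} = \rho(P_1, \ldots, P_q)$, and each $P_i$ inherits inner-permutation invariance and $\Delta$-swap invariance, as any such transformation of $(x, y)$ corresponds to the same transformation on the first $d$ blocks of $(p(x), p(y))$.

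The new content is $\Delta_{a, u, v}$-swap invariance. Suppose $(x', y')$ is obtained from $(x, y)$ by swapping consecutive dominoes $\domino{a}{u}\domino{v}{a} \leftrightarrow \domino{v}{a}\domino{a}{u}$ at positions $\ell, \ell+1$. For each $\alpha, \beta \in [t+1]$, construct $\widehat X_\alpha, \widehat Y_\beta \in \Sigma^D$ by starting from $p(x), p(y)$ and (i) replacing the outer blocks at positions $\ell, \ell+1$ with $a$ in both, (ii) placing a single $v$ at position $d + 2\alpha - 1$ of $\widehat X_\alpha$, and (iii) placing a single $u$ at position $d + 2\beta$ of $\widehat Y_\beta$, while keeping $a$ at all other tail positions. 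Then the combined sequence $\domino{\widehat X_\alpha}{\widehat Y_\beta}$ contains a single $\domino{v}{a}$ at tail position $d + 2\alpha - 1$, a single $\domino{a}{u}$ at tail position $d + 2\beta$, and $\domino{a}{a}$ everywhere else in the tail and at positions $\ell, \ell+1$. When $\alpha \leq \beta$ these place $\domino{v}{a}$ before $\domino{a}{u}$, and by sliding them leftward using only $\domino{a}{a}$-neighbor swaps (each in $\Delta^{(1)} \subseteq \Delta$) we reach $(p(x'), p(y'))$; when $\alpha > \beta$ the order is reversed and the same style of sliding reaches $(p(x), p(y))$. Hence
\begin{equation*}
Q_i(\widehat X_\alpha, \widehat Y_\beta) = \begin{cases} Q_i(p(x'), p(y')) & \text{if } \alpha \leq \beta, \\ Q_i(p(x), p(y)) & \text{if } \alpha > \beta. \end{cases}
\end{equation*}
If $Q_i(p(x), p(y)) \neq Q_i(p(x'), p(y'))$, the $(t+1) \times (t+1)$ submatrix of $Q_i$ indexed by rows $\widehat X_\alpha$ and columns $\widehat Y_\beta$ is a \textsc{Greater-Than} submatrix, contradicting stability of $\cQ$. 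Therefore $P_i(x, y) = P_i(x', y')$, as required. The main obstacle is precisely the position bookkeeping: using odd offsets $d + 2\alpha - 1$ for $v$ and even offsets $d + 2\beta$ for $u$ ensures that the two mobile dominoes never occupy the same tail position, that the $(t+1)^2$ pairs $(\widehat X_\alpha, \widehat Y_\beta)$ give distinct row/column labels, and that their relative order within the tail flips exactly between $\alpha \leq \beta$ and $\alpha > \beta$, producing the \textsc{Greater-Than} pattern needed to invoke stability.
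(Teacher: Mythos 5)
There is a genuine gap in your construction of the \textsc{Greater-Than} witness, and it lies exactly where you flagged it: the position bookkeeping. You append the $2t+2$ copies of $a$ as a tail after block $d$, and place the two mobile dominoes $\domino{v}{a}$ and $\domino{a}{u}$ in that tail, while simultaneously putting $\domino{a}{a}$ at positions $\ell, \ell+1$ where the swap actually occurred. But the swap position $\ell$ can be anywhere in $[d-1]$, so there can be arbitrary dominoes $\domino{x_j}{y_j}$ for $j \in \{\ell+2, \ldots, d\}$ sitting between positions $\ell, \ell+1$ and the tail. A $\Delta^{(1)}$-swap only lets you exchange a $\domino{c}{c}$-type domino with a neighbor; it does \emph{not} let $\domino{v}{a}$ or $\domino{a}{u}$ pass an off-diagonal block $\domino{x_j}{y_j}$ with $x_j \neq y_j$. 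Concretely, $\Delta^{(1)}$-invariance means $Q_i$ depends only on the ordered subsequence of off-diagonal dominoes (together with the multiset of diagonal ones). For $p(x')$, $p(y')$ that ordered subsequence has $\domino{v}{a}, \domino{a}{u}$ appearing \emph{before} the off-diagonal dominoes of $x_{[\ell+2:d]}, y_{[\ell+2:d]}$; in your $\widehat{X_\alpha}, \widehat{Y_\beta}$ they appear \emph{after}. These are different sequences whenever any $x_j \neq y_j$ with $j > \ell+1$, so your claimed equalities $Q_i(\widehat X_\alpha, \widehat Y_\beta) = Q_i(p(x'), p(y'))$ or $Q_i(p(x), p(y))$ do not follow.

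The paper avoids this by inserting the $2t$-length segment \emph{in place}: it takes $\widehat{X_\alpha} = x_{[1:\ell-1]} \; X_\alpha \; x_{[\ell+2:d]}$ with $X_\alpha \in \Sigma^{2t}$, so the mobile dominoes and all the padding $\domino{a}{a}$'s live between $x_{[1:\ell-1]}$ and $x_{[\ell+2:d]}$ from the start. All sliding then happens inside that window, and every swap used is genuinely a $\Delta^{(1)}$-swap (a diagonal $\domino{a}{a}$ past something). Your argument would become correct if you replace the two positions $\ell, \ell+1$ by a $2t$-length window containing the $\domino{a}{a}$'s and the two mobile dominoes, and set $D = d - 2 + 2t$ so that no separate tail is needed. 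Apart from this placement issue, the remaining scaffolding---padding with $a$ to preserve $\PI$, invoking stability of $\cQ$ to forbid a $t\times t$ \textsc{Greater-Than}, and using the $\alpha\le\beta$ vs. $\alpha>\beta$ dichotomy---matches the paper's argument.
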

\begin{proof}
    Fix any $d$. Since $\cQ$ is a stable coloring, there exists some constant
    $t$ such that the $t \times t$ \textsc{Greater-Than} matrix does not appear
    as a (colored) submatrix of any $Q \in \cQ$. Let $D = 2t + d - 2$. By
    assumption, the outer $\Delta$-shuffle property holds for $D$, so there are
    matrices $Q_1, \dotsc, Q_q \in \cQ$ such that
    \[
        \PI^{n,D} = \rho(Q_1, \dotsc, Q_q)
    \]
    and each $Q_j$ is inner-permutation invariant and $\Delta$-swap invariant.
    We define matrices $P_1, \dotsc, P_q \in \cQ$ on inputs $x,y \in
    \Sigma^d$ by extending each $x$ to $p(x) \in \Sigma^D$ by appending $D-d$
    copies of $a$ as a suffix. We then define $P_j(x,y) = Q_j(p(x),p(y))$
    and observe that
    \[
    \PI^{n,d}(x,y) = \PI^{n,D}(p(x),p(y)) 
    = \rho(Q_1(p(x), p(y)), \dotsc, Q_q(p(x), p(y)))
    = \rho(P_1(x,y), \dotsc, P_q(x,y)) \,.
    \]
    We claim that the matrices $P_j$ are invariant under swaps of consecutive
    $\domino{a}{u}, \domino{v}{a}$ dominoes on inputs $x,y \in \Sigma^d$.
    Suppose for the sake of contradiction that there exist $x,y,x',y' \in
    \Sigma^d$ which are obtained from each other by the swap of consecutive
    $\domino{a}{u}, \domino{v}{a}$ dominoes, but for which $P_j(x,y) \neq
    P_j(x',y')$. Then $Q_j(p(x), p(y)) \neq Q_j(p(x'), p(y'))$, where
    \begin{align*}
        \domino{p(x)}{p(y)}
        &= \domino{x_{[1:i-1]}}{y_{[1:i-1]}} \; \domino{a}{u} \domino{v}{a}
            \;\domino{x_{[i+2,d]}}{y_{[i+2:d]}} \;\; \domino{a}{a} \dotsm \domino{a}{a} \\
        \domino{p(x')}{p(y')}
        &= \domino{x_{[1:i-1]}}{y_{[1:i-1]}} \; \domino{v}{a} \domino{a}{u} 
            \;\domino{x_{[i+2,d]}}{y_{[i+2:d]}} \;\; \domino{a}{a} \dotsm \domino{a}{a} \,.
    \end{align*}
    Now consider any two $\alpha, \beta \in [t]$ define the strings $X_\alpha =
    (a,a,\dotsc, v, \dotsc, a) \in \Sigma^{2t}$ which has $a$ in every
    coordinate except $v$ in coordinate $2\alpha$, and $Y_\beta = (a,a,\dotsc,
    u, \dotsc a) \in \Sigma^{2t}$ which has $a$ in every coordinate except $u$
    in coordinate $2\beta - 1$. Extend these strings to $\widehat{X_\alpha} =
    x_{[1:i-1]} \; X_\alpha \; x_{[i+2:d]} \in \Sigma^D$ and $\widehat{Y_\beta}
    = y_{[1:i-1]} \; Y_\beta \; y_{[i+2,d]} \in \Sigma^D$. Observe that, for
    $\alpha \geq \beta$, the dominoes of $\widehat{X_\alpha}$ and
    $\widehat{Y_\beta}$ appear as
    \[
        \domino{\widehat{X_\alpha}}{\widehat{Y_\beta}}
             = \domino{x_{[1:i-1]}}{y_{[1:i-1]}}
            \; \domino{a}{a} \dotsm \domino{a}{a}
            \; \domino{\bf a}{\bf u}
            \; \domino{a}{a} \dotsm \domino{a}{a}
            \; \domino{\bf v}{\bf a}
            \; \domino{a}{a} \dotsm \domino{a}{a} 
            \;\domino{x_{[i+2,d]}}{y_{[i+2:d]}}  \,.
    \]
    Then, since $Q_j$ is invariant under swaps involving the $\domino{a}{a}$ domino, we have
    \[
        Q_j(p(x),p(y)) = Q_j( \widehat{X_\alpha}, \widehat{Y_\beta } ) \,.
    \]
    Similarly, when $\alpha < \beta$, we have
    \[
        Q_j(p(x'),p(y')) = Q_j( \widehat{X_\alpha}, \widehat{Y_\beta } ) \,.
    \]
    But then if $Q_j(p(x), p(y)) \neq Q_j(p(x'), p(y'))$ we have found a $t
    \times t$ \textsc{Greater-Than} submatrix on the rows and columns indexed by
    $\widehat{X_\alpha}$ and $\widehat{Y_\beta}$ respectively, which is a
    contradiction. This concludes the proof.
\end{proof}

The third lemma now says that we may improve the invariance of query matrices by
enforcing that they are invariant under any swap.
Let $\Delta^{(2)}$ be $\Delta^{(1)}$ plus the set of all domino pairs $\domino{a}{u}, \domino{v}{a}$:
\[
    \Delta^{(2)} \define \Delta^{(1)} \cup \left\{ \left\{ \domino{a}{u}, \domino{v}{a} \right\} \;|\;
        a,u,v \in \Sigma \right\} \,.
\]

\begin{proposition}
\label{prop:outer-invariance-abuv}
    Fix any $n,d \in \bN$, and let $\Delta \subseteq {\Sigma^2 \choose 2}$ be any
    set of pairs of dominoes which contains $\Delta^{(2)}$. Let $a,b,u,v \in
    \Sigma$. If the outer $\Delta$-shuffle property holds, then
    the outer $\left(\Delta \cup \left\{\left\{\domino{a}{b},
    \domino{u}{v}\right\}\right\}\right)$-shuffle property also holds.
\end{proposition}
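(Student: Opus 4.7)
The plan is to follow the same \textsc{Greater-Than}-based strategy as in \cref{prop:outer-invariance-auv}, but with a carefully chosen padding that accommodates the fact that $\domino{a}{b}$ and $\domino{u}{v}$ need not share any letter in the $\Delta^{(2)}$ sense. We may assume $a \neq b$ and $u \neq v$, since otherwise one of the two dominos is diagonal and the pair already lies in $\Delta^{(1)} \subseteq \Delta$. Let $t$ be the stability constant of $\cQ$, set $D \define d + 2t - 2$, and obtain $Q_1, \dotsc, Q_q \in \cQ$ from the outer $\Delta$-shuffle property at length $D$. The key twist is to pad with a \emph{mixture} of two diagonal dominos: define $p(x) \define x \; b^{t-1} u^{t-1}$ and $p(y) \define y \; b^{t-1} u^{t-1}$, so that the padding contributes $\domino{b}{b}^{t-1} \domino{u}{u}^{t-1}$ to the domino sequence. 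Set $P_j(x,y) \define Q_j(p(x), p(y))$; the goal is to show that each $P_j$ is $\{\domino{a}{b}, \domino{u}{v}\}$-swap invariant, while the remaining invariances will be inherited from $Q_j$ automatically.

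To execute the \textsc{Greater-Than} argument, I would assume for contradiction that $Q_j(p(x), p(y)) \neq Q_j(p(x'), p(y'))$ for some $(x,y)$ and $(x',y')$ differing only by the forbidden swap at positions $i, i+1$. For each $\alpha, \beta \in [t]$, define $X_\alpha \in \Sigma^{2t}$ to have entry $a$ at position $2\alpha - 1$, $b$ at the other odd positions, and $u$ at all even positions; dually, $Y_\beta \in \Sigma^{2t}$ has entry $v$ at position $2\beta$, $u$ at the other even positions, and $b$ at all odd positions. Letting $\widehat{X_\alpha} \define x_{[1:i-1]} X_\alpha x_{[i+2:d]}$ and $\widehat{Y_\beta}$ analogously, a direct computation shows that inside the inserted $2t$-block we get $\domino{a}{b}$ at local position $2\alpha - 1$, $\domino{u}{v}$ at local position $2\beta$, $\domino{b}{b}$ at the remaining odd positions, and $\domino{u}{u}$ at the remaining even positions. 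Crucially, this multiset matches the pair $\{\domino{a}{b}, \domino{u}{v}\}$ together with the padding in $p(x), p(y)$.

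The main technical step is then to show that $Q_j(\widehat{X_\alpha}, \widehat{Y_\beta}) = Q_j(p(x), p(y))$ when $\alpha \leq \beta$, and $Q_j(\widehat{X_\alpha}, \widehat{Y_\beta}) = Q_j(p(x'), p(y'))$ when $\alpha > \beta$. Since $\domino{b}{b}$ and $\domino{u}{u}$ are diagonal, every pair involving them lies in $\Delta^{(1)} \subseteq \Delta$, so starting from $p(x), p(y)$ one can shift the diagonal padding leftward past the suffix into the $2t$-block and then reposition $\domino{a}{b}$ and $\domino{u}{v}$ inside the block by swapping past diagonals. The \emph{only} swap we cannot perform is the forbidden $\{\domino{a}{b}, \domino{u}{v}\}$ itself, so the relative order of $\domino{a}{b}$ and $\domino{u}{v}$ is preserved throughout. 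A parity check ($2\alpha - 1 < 2\beta$ iff $\alpha \leq \beta$) shows that the target order agrees with that of $p(x), p(y)$ precisely when $\alpha \leq \beta$; for $\alpha > \beta$ the order is reversed and we reach $\widehat{X_\alpha}, \widehat{Y_\beta}$ from $p(x'), p(y')$ instead. The $t \times t$ submatrix of $Q_j$ on rows $\widehat{X_\alpha}$ and columns $\widehat{Y_\beta}$ is then a \textsc{Greater-Than} instance, contradicting stability of $\cQ$.

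The hard part of the argument is getting the padding right. Unlike \cref{prop:outer-invariance-auv}, where a single diagonal padding $\domino{a}{a}$ sufficed because both $\domino{a}{u}$ and $\domino{v}{a}$ share the letter $a$, for arbitrary $\domino{a}{b}, \domino{u}{v}$ no single letter is shared in the $\Delta^{(2)}$ sense when $a \neq v$ and $b \neq u$. Mixing two diagonal types $\domino{b}{b}$ and $\domino{u}{u}$ resolves this: it aligns the multiset of dominos in the intermediate block $\widehat{X_\alpha}, \widehat{Y_\beta}$ with that of $p(x), p(y)$ while retaining full $\Delta^{(1)}$-mobility for the padding. The inherited inner-permutation and $\Delta$-swap invariances pass from $Q_j$ to $P_j$ routinely because the padding depends only on the coordinate length, not on $(x,y)$, completing the proof of the outer $\left(\Delta \cup \left\{\left\{\domino{a}{b}, \domino{u}{v}\right\}\right\}\right)$-shuffle property.
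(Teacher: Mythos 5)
Your proof takes a genuinely different route from the paper's, and it has a real gap as a proof of the proposition \emph{as stated}. The proposition fixes a single $d$ and supposes only that the outer $\Delta$-shuffle property holds at that $d$; this is deliberately different from \cref{prop:outer-invariance-aa,prop:outer-invariance-auv}, which are phrased ``for every $d$''. Your argument requires the property at length $D = d+2t-2 > d$ (to have room for the diagonal padding), so you implicitly strengthen the hypothesis. The rest of your argument is essentially sound: the mixed padding $\domino{b}{b}^{t-1}\domino{u}{u}^{t-1}$ is a clever fix for the fact that $\domino{a}{b}$ and $\domino{u}{v}$ need not share a letter, the multiset check works out, and the order-preservation argument goes through (one must move $\domino{u}{v}$ into place \emph{before} $\domino{a}{b}$ so the forbidden swap is never triggered, which your write-up glosses over). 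It is also worth noting that your approach would work over an arbitrary alphabet $\Sigma$, which is more general than what is needed here.

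The paper's proof is much shorter and needs no padding, no Ramsey, and no \textsc{Greater-Than} — precisely because it exploits the slice structure of $\Sigma = \binom{[2n]}{n}$, which your argument never uses. Since any two symbols in the slice are related by a coordinate permutation, pick $\pi$ with $\pi v = a$. Apply $\pi$ inside the block holding $\domino{u}{v}$, which is allowed by inner-permutation invariance, turning it into $\domino{\pi u}{a}$. But now $\left\{\domino{a}{b}, \domino{\pi u}{a}\right\} \in \Delta^{(2)} \subseteq \Delta$, so the swap \emph{is} allowed. Then apply $\pi^{-1}$ to undo the inner permutation, recovering $(x',y')$. This shows that the \emph{same} $Q_j$'s already possess the stronger swap invariance; no new matrices are needed, and the argument operates directly at the fixed $d$. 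You should keep the slice structure of $\Sigma$ in your toolbox — it is what makes the ``forbidden'' pair reachable from a pair that is already in $\Delta^{(2)}$ by a legal inner permutation.
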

\begin{proof}
    Fix the matrices $Q_1, \dotsc, Q_q \in \cQ$
    which witness
    \[
        \PI^{n,d} = \rho(Q_1, Q_2, \dotsc, Q_q)
    \]
    satisfying the outer $\Delta$-shuffle property.
    Let $x,y,x',y' \in \Sigma^d$ be such that $\domino{x'}{y'}$ is obtained from $\domino{x}{y}$
    by swapping consecutive $\domino{a}{b}, \domino{u}{v}$ dominoes. For some $i \in [d]$, we may write
    \begin{align*}
        \domino{x}{y}
        &=  \domino{x_{[1:i-1]}}{y_{[1:i-1]}}  \; \domino{a}{b} \domino{u}{v}
            \; \domino{x_{[i+2,d]}}{y_{[i+2:d]}} \\
        \domino{x'}{y'}
        &=  \domino{x_{[1:i-1]}}{y_{[1:i-1]}}  \; \domino{u}{v} \domino{a}{b} 
            \; \domino{x_{[i+2,d]}}{y_{[i+2:d]}} \,.
    \end{align*}
    Recall that $\Sigma = { 2n \choose 2 }$ is the set of $n$-bit binary strings with weight $n$, so that
    there exists a permutation $\pi \colon [2n] \to [2n]$ such that $\pi v = a$. Due to the outer $\Delta$-property,
    each matrix $Q_j$ is inner-permutation invariant and therefore satisfies $Q_j(x,y) = Q_j(x'',y'')$
    where
    \begin{align*}
        \domino{x''}{y''}
        &= \domino{x_{[1:i-1]}}{y_{[1:i-1]}} \; \domino{a}{b} \domino{\pi u}{\pi v}
            \; \domino{x_{[i+2,d]}}{y_{[i+2:d]}} \\
        &= \domino{x_{[1:i-1]}}{y_{[1:i-1]}} \; \domino{a}{b} \domino{\pi u}{a}
            \; \domino{x_{[i+2,d]}}{y_{[i+2:d]}} \,.
    \end{align*}
    Now, since $\Delta^{(2)} \subseteq \Delta$ and $\left\{\domino{a}{b},
    \domino{\pi u}{a}\right\} \in \Delta^{(2)}$, we
    have $Q_j(x'',y'') = Q_j(x''',y''')$ where
    \begin{align*}
        \domino{x'''}{y'''}
        &= \domino{x_{[1:i-1]}}{y_{[1:i-1]}} \; \domino{a}{b} \domino{\pi u}{\pi v}
            \; \domino{x_{[i+2,d]}}{y_{[i+2:d]}} \\
        &= \domino{x_{[1:i-1]}}{y_{[1:i-1]}} \; \domino{\pi u}{a} \domino{a}{b} 
            \; \domino{x_{[i+2,d]}}{y_{[i+2:d]}} \,.
    \end{align*}
    Finally, again using inner-permutation invariance, we may invert the
    permutation $\pi$ to transform $\domino{\pi u}{a}$ back to $\domino{u}{v}$,
    and conclude $Q_j(x''',y''') = Q_j(x',y')$. Since this holds for each $j
    \in [k]$, this concludes the proof.
\end{proof}

These propositions suffice to prove \cref{lemma:inner-outer-invariance}.

\begin{proof}[Proof of \cref{lemma:inner-outer-invariance}]
We apply \cref{prop:outer-invariance-aa} repeatedly for each $a \in \Sigma$
until we achieve the outer $\Delta^{(1)}$-shuffle property. Then we apply
\cref{prop:outer-invariance-auv} until we achieve the outer
$\Delta^{(2)}$-shuffle property. Finally, repeatedly apply
\cref{prop:outer-invariance-abuv} to achieve the outer $\Delta$-shuffle property
where $\Delta$ contains every pair of dominoes, which implies inner- and outer-permutation
invariance.
\end{proof}

\section{Distance-$r$ Composed Functions} \label{sec:composed-functions}

Here we define a new type of function composition that generalizes the
\textsc{$k$-Hamming Distance} and \textsc{$\{4,4\}$-Hamming Distance} problem.
One goal here is to present the most general form of the algorithmic technique which
leads to these constant-cost protocols as well as all constant-cost problems
known up until the preparation of this manuscript
\iftoggle{anonymous}{%
}{%
-- but as far as we know it
does not capture the new examples of \cite{Che24} mentioned in
\cref{remark:ben-problem}.
}%
We make some effort to optimize the protocol, since
this type of function composition may also be of independent interest.

\subsection{Definition and Theorem Statement}

For any set of square matrices $\cP$ and any $N \in \bN$, we write $\cP_N$
for the set of $N \times N$ matrices in $\cP$. We write
\[
    \R_\delta(\cP_N) \define \max\{ \R_\delta(P) \;|\; P \in \cP_N \} \,.
\]

We say a communication problem $\cP$ is \emph{symmetric} if every matrix $P \in \cP$
is symmetric; in particular, $P(x,y) = P(y,x)$ for all inputs $x,y$.

\newcommand{\Comp}[3]{\textsc{Comp}_{#3,#2}^{#1}}
\newcommand{\eqcat}{\bullet}

Given a sequence of matrices $P_1, \dotsc, P_n$, and corresponding sequences of rows $x_1, \dotsc,
x_n$ and columns $y_1, \dotsc, y_n$, define
\[
    (P_1 \eqcat P_2 \eqcat \dotsm \eqcat P_n)(x,y) \define ( P_i(x_i,y_i) \;:\; i \in [n], x_i \neq y_i ),
\]
where $x=(x_1,\ldots, x_n)$ and $y=(y_1,\ldots,y_n)$. 

\begin{definition}[Distance-$r$ Composed Function]
Let $\Lambda$ be any fixed finite alphabet and let $\cP$ be any $\Lambda$-valued
symmetric communication problem. For any $r$, let $g \colon \Lambda^{\leq r} \to \Lambda$
be any permutation-invariant function defined on strings of characters in $\Delta$, of length at
most $r$. We define the class
\[
    g[\cP]
\]
as the set of problems obtained in the following way.
For any $n\geq 1$ and $P_1, P_2, \dotsc, P_n \in \cP$, define the communication problem $g[P_1 \eqcat P_2 \eqcat \dotsm \eqcat P_n]$, as
\[
  g[P_1 \eqcat P_2 \eqcat \dotsm \eqcat P_n](x,y)
  \define \begin{cases}
    \bot &\text{ if } |\{ i \in [n] : x_i \neq y_i \}| > r \\
    g(P_1 \eqcat P_2 \eqcat \dotsm \eqcat P_n (x,y)) &\text{ otherwise.}
  \end{cases}
\]
where $x = (x_1, \dotsc, x_n)$ and $y = (y_1, \dotsc, y_n)$, and $x_i$ is a row of $P_i$ and $y_i$ is a column of $P_i$. 
\end{definition}

\begin{example}
The \textsc{$r$-Hamming Distance} problem is obtained as follows. Take $\Delta = \zo$,
define $g(s) \define \ind{ |s| = r }$ on strings $\zo^n$, and let $\cP \define \{ I_{2 \times 2} \}$
contain only the $2 \times 2$ identity matrix. Then $g[\cP]$ is the \textsc{$r$-Hamming Distance}
problem $\HD_r$.
\end{example}

Recall from \cite{HSZZ06,Sag18} that
\[
  \R_\delta(\HD_r) = \cO\left( r \log \frac{r}{\delta} \right) \,.
\]
We show that a similar bound holds when the base matrices $\cP$ are arbitrary
constant-cost problems; the proof is in \cref{section:parallel-simulation}.

\begin{theorem}
\label{thm:distance-r-composition}
For any symmetric $\Lambda$-valued communication problem $\cP$, any $r \in \bN$,
and permutation-invariant function $g \colon \Lambda^{\leq r} \to \Lambda$,
\[
\R_\delta( g[P_1 \eqcat P_2 \eqcat \dotsm \eqcat P_n] )
= \cO\left( r \log \tfrac{r}{\delta} + r \left( \max_i \R_{\delta/10r}(P_i) + \log\tfrac{1}{\delta}
\right) \right)\,.
\]
In particular, if $\cP$ has a constant-cost protocol, then
\[
\R_\delta( g[\cP] )  = \cO\left(r \log \tfrac{r}{\delta} \right) \,.
\]
\end{theorem}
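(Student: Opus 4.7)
The plan is to construct a three-phase public-coin protocol that generalises the Sağlam-style $\cO(r\log r)$-cost protocol for $\HD_r$ to arbitrary symmetric $\cP$. \emph{Phase 1 (promise check)}: verify that $|S|\leq r$, where $S=\{i:x_i\neq y_i\}$, by treating each coordinate as a symbol in a large alphabet, applying the indicator encoding of \cref{ex:q-ary} to reduce to a binary $\HD_{\leq r}$ instance, and running the known $\cO(r\log(r/\delta))$-cost protocol; if the test rejects, output $\bot$. \emph{Phase 2 (isolation)}: use shared randomness to hash $[n]$ into $B=\Theta(r^2/\delta)$ buckets so that, by a birthday argument, each of the $\leq r$ unequal coordinates is alone in its bucket with probability at least $1-\delta/3$. \emph{Phase 3 (per-bucket value recovery)}: for each active bucket, extract the value $P_{i^\ast}(x_{i^\ast},y_{i^\ast})$ at the unique unequal coordinate $i^\ast$ at cost $\cO(\R_{\delta/10r}(P_i)+\log(1/\delta))$; collect the $\leq r$ recovered values and apply the permutation-invariant $g$ locally.

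Phase~3 is where I expect the main obstacle: the players do not know $i^\ast$ inside a bucket of size $\Theta(n/r^2)$, and a naive $\Equality$-style binary search for $i^\ast$ would spend $\Theta(\log n)$ bits and break the $n$-independence of the bound. My plan is to amplify each $\pi_i$ to error $\delta/(10r)$ (incurring the factor $\R_{\delta/10r}(P_i)$) and pass to its public-coin simultaneous form of cost $c=\R_{\delta/10r}(P_i)$, producing symmetric sketches $M_i(z)\in\{0,1\}^{\cO(c)}$ with the crucial property that $x_i=y_i\Rightarrow M_i(x_i)=M_i(y_i)$. Within an active bucket $J$, Alice's multiset $\{M_i(x_i):i\in J\}$ and Bob's multiset $\{M_i(y_i):i\in J\}$ then agree on all $i\neq i^\ast$ and disagree only at the isolated coordinate. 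An $\cO(c+\log(1/\delta))$-cost symmetric-difference-of-multisets subroutine (e.g.\ polynomial fingerprints in a field of size $\mathrm{poly}(2^c,r/\delta)$) then reveals the unordered pair $\{M_{i^\ast}(x_{i^\ast}),M_{i^\ast}(y_{i^\ast})\}$, and since $P_{i^\ast}$ is symmetric the simultaneous-protocol referee function produces the right value regardless of how the two halves are matched back to Alice and Bob.

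The conceptual key point, which I would stress in the writeup, is that permutation-invariance of $g$ lets us care only about the unordered multiset of per-bucket outputs and never about the indices $i^\ast$ themselves; this is what makes the $n$-independent Phase~3 cost even conceivable. The final error analysis will be a straightforward union bound over the four failure modes---(i) Phase~1 detection, (ii) bucket isolation, (iii) each of the $\leq r$ amplified $\pi_i$ invocations, and (iv) each symmetric-difference recovery---each calibrated to $\cO(\delta/r)$, so the total error is at most $\delta$. Summing the Phase~1 cost $\cO(r\log(r/\delta))$ with the $\leq r$ per-bucket Phase~3 costs of $\cO(\R_{\delta/10r}(P_i)+\log(1/\delta))$ then yields the stated bound $\cO\!\bigl(r\log(r/\delta)+r(\R_{\delta/10r}(P_i)+\log(1/\delta))\bigr)$.
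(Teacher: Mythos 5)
Your Phases 1 and 2 match the paper's protocol essentially exactly: the paper also first checks that at most $r$ coordinates differ via the large-alphabet Hamming-distance encoding of \cref{ex:q-ary}, and then hashes the coordinates into $\Theta(r^2/\delta)$ buckets so that each unequal index is isolated with probability $1-\cO(\delta)$. The divergence---and the gap---is in Phase 3. You propose to replace the interactive protocol for $P_i$ by a public-coin \emph{simultaneous} protocol of cost $c=\R_{\delta/10r}(P_i)$, so that a per-coordinate sketch $M_i(z)\in\zo^{\cO(c)}$ can be fingerprinted across the bucket. But converting an interactive randomized protocol of cost $c$ into a public-coin simultaneous one is not cost-preserving: the generic simulation sends a bit for every node of the depth-$c$ communication tree where that player speaks, giving $2^{\cO(c)}$ bits in the worst case, and this blow-up is tight in general. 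You offer no argument that symmetric functions (in the sense $P(x,y)=P(y,x)$) admit a shorter simultaneous compression, so as written your Phase 3 cost would be $\cO\bigl(r\bigl(2^{\cO(\R_{\delta/10r}(P_i))}+\log\tfrac{1}{\delta}\bigr)\bigr)$, exponentially worse than the bound being proved.

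The paper circumvents this by not compressing $P_i$ to a one-shot sketch at all. In \cref{lemma:dist-1-promise}, the parties simulate the communication tree for each $P_j$ in a bucket round by round: first a hash plus a random ``role'' string $p_j(\cdot)$, recovered via a $2$-Sidon encoding, lets each party decide which side ($\sfA$ or $\sfB$) of the unknown unequal coordinate's protocol to play; then, in each round, the per-coordinate one-bit messages are aggregated by a fresh $2$-Sidon encoding, and the coordinates with $x_j=y_j$ contribute identical messages on both sides and cancel in the XOR, revealing exactly the message at the isolated coordinate. Symmetry of $P_i$ ensures correctness irrespective of which party plays which role. This simulates the interactive protocol at $\cO(1)$ overhead per round, keeping the cost linear in $\R_{\delta/10r}(P_i)$; this round-by-round aggregation via Sidon encodings is the step your plan is missing.
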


\subsection{Parallel Simulation of Communication Protocols}
\label{section:parallel-simulation}

We first show how to solve a subproblem, where Alice and Bob receive
inputs $x = (x_1, x_2, \dotsc, x_n)$ and $y = (y_1, y_2, \dotsc, y_n)$
which are promised to be equal on all coordinates $i \in [n]$ except one.

We will use the notion of a \emph{Sidon encoding}. A \emph{Sidon set} is in
general a set $S$ where every pair has a unique sum. We will use generalized
Sidon sets in $\zo^m$ to encode messages in the protocol (this idea comes from
\cite{EHZ23}):

\newcommand{\enc}{\mathsf{enc}}
\begin{definition}[Sidon encoding]
For any $n,k \in \bN$ a \emph{$k$-Sidon encoding} of size $s(n)$ is a function
$\enc \colon \zo^n \to \zo^{s(n)}$ with the property that, for any two subsets $U, V
\subseteq \zo^n$ of cardinality $|U|, |V| \leq k$, if
\[
    \bigoplus_{u \in U} \enc(u) = \bigoplus_{v \in V} \enc(v) \,,
\]
then $U = V$.
\end{definition}

\newcommand{\benc}{\mathsf{\bf enc}}
\begin{proposition}
\label{prop:k-sidon}
    For any $n,k \in \bN$ there exists a $k$-Sidon encoding $\enc \colon \zo^n \to
    \zo^{s(n)}$ of size $s(n) = \cO(kn)$.
\end{proposition}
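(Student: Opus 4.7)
The plan is to reformulate the $k$-Sidon property as a linear-algebraic condition and then invoke the standard BCH construction. First, I would observe the following equivalence: $\enc$ is a $k$-Sidon encoding if and only if no nonempty subset $W \subseteq \zo^n$ of cardinality $|W| \leq 2k$ satisfies $\bigoplus_{w \in W} \enc(w) = \vec 0$. The ``only if'' direction is immediate; for ``if'', if $U \neq V$ both have size at most $k$ and give the same XOR, then $W := U \triangle V$ is nonempty, has $|W| \leq 2k$, and satisfies $\bigoplus_{w \in W} \enc(w) = \vec 0$. Equivalently, viewing $\enc$ as the columns of a matrix $M \in \mathbb{F}_2^{s(n) \times 2^n}$, what we need is that any $2k$ columns of $M$ are linearly independent over $\mathbb{F}_2$. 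This is precisely the parity-check condition for a binary linear code of block length $2^n$ and minimum distance strictly greater than $2k$.

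Next, I would supply an explicit construction of such an $M$ with $s(n) = O(kn)$ rows. Identify $\zo^n$ with $\mathbb{F}_{2^n}$ and, for $\alpha \in \mathbb{F}_{2^n}$, define
\[
\enc(\alpha) \;:=\; (\alpha,\, \alpha^3,\, \alpha^5,\, \ldots,\, \alpha^{2k-1}) \;\in\; \mathbb{F}_{2^n}^{k} \;\cong\; \mathbb{F}_2^{kn}.
\]
This is the parity-check matrix of a narrow-sense binary BCH code of designed distance $2k+1$, so in the end $s(n) = kn = O(kn)$ as required. To verify the independence condition directly: suppose distinct $\alpha_1, \ldots, \alpha_{2k} \in \mathbb{F}_{2^n}$ and coefficients $c_1, \ldots, c_{2k} \in \mathbb{F}_2 \subseteq \mathbb{F}_{2^n}$, not all zero, give $\sum_i c_i \enc(\alpha_i) = 0$ in $\mathbb{F}_{2^n}^k$. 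Then the polynomial $p(x) := \sum_i c_i \cdot x$ evaluated at the $k$ coordinates $\alpha_i, \alpha_i^3, \ldots, \alpha_i^{2k-1}$ gives $k$ linear relations; more cleanly, the nonzero polynomial $q(x) := \sum_i c_i (x - \alpha_i)^{-1}$-style Vandermonde argument (or directly, the standard BCH proof using the $k \times 2k$ Vandermonde minor with entries $\alpha_i^{2j-1}$, which has odd-power columns and hence full rank over characteristic 2) yields the contradiction. Finally, I would handle the image of $0 \in \mathbb{F}_{2^n}$, which maps to $\vec 0$, by appending one extra coordinate whose value is $\ind{\alpha = 0}$, still keeping the size $O(kn)$.

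The ``main obstacle'' is really just choosing the cleanest exposition of the BCH/Vandermonde argument in characteristic 2; there is no conceptual difficulty, since the bound $s(n) = O(kn)$ matches the classical BCH dimension bound $\dim \geq 2^n - kn$ for designed distance $2k+1$. A purely probabilistic argument would also suffice --- a random $M \in \mathbb{F}_2^{s \times 2^n}$ has every fixed set of $2k$ columns linearly dependent with probability at most $2^{2k - s}$, and a union bound over $\binom{2^n}{2k} \leq 2^{2kn}$ subsets succeeds once $s = C k n$ for a suitable constant $C$ --- but the explicit BCH construction is preferable since it yields the stated bound with a small explicit constant.
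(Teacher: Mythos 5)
Your approach is correct but takes a genuinely different route from the paper. The paper's proof is precisely the probabilistic argument you sketch as a fallback: assign each $x\in\zo^n$ an independent uniform string in $\zo^t$, note that any fixed pair $U\neq V$ (with $|U|,|V|\leq k$) yields a collision with probability $2^{-t}$ since some $\benc(w)$ for $w\in U\setminus V$ is independent of the rest, and union-bound over the at most $(2^{nk})^2$ pairs, so $t>2kn$ suffices. Your primary route instead reformulates the $k$-Sidon condition as the parity-check condition of a binary code of minimum distance exceeding $2k$ (equivalently, no nonempty $W\subseteq\zo^n$ with $|W|\leq 2k$ has $\bigoplus_{w\in W}\enc(w)=\vec 0$), and instantiates with the narrow-sense BCH parity check $\enc(\alpha)=(\alpha,\alpha^3,\ldots,\alpha^{2k-1})\in\bF_{2^n}^k$; this gives an \emph{explicit} encoding of size $kn+1$ (a better constant than the paper's $2kn+1$), and your fix for the $\alpha=0$ degeneracy via an extra indicator bit is correct and necessary (since $\vec 0$ is never a legal image). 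The one spot that needs tightening is your verification of column independence, which is garbled as written: the decisive fact is the characteristic-2 observation that $\sum_{\alpha\in W}\alpha^{j}=0$ for all odd $j\leq 2k-1$ implies, by repeated squaring, the same for all $j\leq 2k$; then for $m=|W|\leq 2k$ distinct nonzero $\alpha_i$ the $m\times m$ matrix $(\alpha_i^{\,j})_{i,j\in[m]}$ is a diagonal times a Vandermonde and hence invertible, forcing $W=\emptyset$. The phrase about a ``$k\times 2k$ Vandermonde minor'' having ``full rank over characteristic 2'' does not capture this step; either spell out the squaring argument or cite the BCH bound directly. On balance the paper's probabilistic proof is shorter and fully self-contained, while yours buys an explicit code at the cost of slightly more machinery.
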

\begin{proof}
    Assign each $x \in \zo^n$ a uniformly random string $\benc(x) \sim \zo^t$ for $t> 2kn$.
    Let $U,V \subseteq \zo^n$ be distinct sets with $|U|, |V| \leq k$ and without loss of generality suppose that there exists $w \in U \setminus V$. Then we wish to bound the probability that
    \begin{equation}
    \label{eq:sidon-sum}
        \vec 0 = \left(\bigoplus_{u \in U} \benc(u) \right) \oplus \left(\bigoplus_{v \in
        V} \benc(v) \right) = \benc(w) \oplus \left(\bigoplus_{u \in U \setminus
        \{w\}} \benc(u) \right) \oplus \left(\bigoplus_{v \in V} \benc(v) \right)
    \end{equation}
    where $\benc(w) \sim \zo^t$ is independent of the latter terms. So the probability that the whole
    term is $\vec 0$ is $2^{-t}$. By the union bound, the probability that there exist two distinct sets $U,V$
    which satisfy \eqref{eq:sidon-sum} is at most $(2^{nk})^2 \cdot 2^{-t} < 1$ when $t > 2kn$, so there exists
    a fixed $k$-Sidon encoding $\enc \colon \zo^n \to \zo^{2k+1}$ of size $2kn+1$.
\end{proof}

\definecolor{CommentColor}{HTML}{666666}
\newcommand{\pcomment}[1]{\emph{\color{CommentColor} $\triangleright$ #1}}

\newcommand{\hash}{\mathsf{hash}}

\newcommand{\sfA}{\mathsf{A}}
\newcommand{\sfB}{\mathsf{B}}
\newcommand{\sfP}{\mathsf{P}}

We first solve a version of the problem where it is promised that there is exactly one coordinate $i
\in [n]$ where the inputs $x_i \neq y_i$ differ.  Note that under this promise, the entry of $P_1
\eqcat P_2 \eqcat \dotsm \eqcat P_n(x,y)$ is just $P_i(x_i,y_i)$ where $i \in [n]$ is the unique
coordinate where $x_i \neq y_i$.

\begin{lemma}
\label{lemma:dist-1-promise}
Let $P_1, \dotsc, P_n \in \cP$ be matrices with entries in $\Lambda$.
There is a randomized protocol for computing $P_1 \eqcat  P_2 \eqcat \dotsc
\eqcat P_n$ on inputs $x = (x_1, \dotsc, x_n)$ and $y = (y_1, \dotsc, y_n)$,
under the promise that there is
exactly one coordinate $i \in [n]$ such that $x_i \neq y_i$, with error probability $\leq \delta$
and communication cost $\cO\left( \max_i \R_\delta(P_i) + \log\tfrac{1}{\delta} \right)$.
\end{lemma}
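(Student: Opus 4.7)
The plan is to have Alice and Bob simulate all $n$ protocols $\Pi_j$ for $P_j$ in parallel, compressing per-round messages with a Sidon-type encoding so that the single active coordinate $i$ is implicitly isolated. Fix a public-coin randomized protocol $\Pi_j$ for each $P_j$ with error probability $\delta$ and cost $c_j = \R_\delta(P_j)$; by padding, assume every $\Pi_j$ has the same number of rounds $c = \max_j c_j$ and the same schedule of whose turn it is. Both parties maintain a simulated transcript $S_j$ for each $\Pi_j$, with the invariant that $S_i$ tracks the true execution of $\Pi_i$ on $(x_i, y_i)$, while for every $j \neq i$ both parties share the same locally-simulated transcript of $\Pi_j$ on $(x_j,x_j)$---this is automatic since $x_j = y_j$ and both parties use identical public randomness.

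In a round where Alice speaks, she computes for each $j \in [n]$ the bit $a_j$ that she would send in $\Pi_j$ on input $x_j$ given $S_j$, and Bob analogously computes $b_j$ assuming input $y_j$. By the invariant, $a_j = b_j$ for every $j \neq i$, so the vector $a \oplus b \in \zo^n$ has Hamming weight at most $1$. Using a $1$-Sidon (\ie injective) encoding $\enc \colon [n] \to \zo^s$ from \cref{prop:k-sidon}, Alice transmits the sketch $M_A = \bigoplus_{j \in [n]} a_j \cdot \enc(j)$. Bob computes $M_B = \bigoplus_j b_j \cdot \enc(j)$ and obtains $M_A \oplus M_B = (a_i \oplus b_i) \cdot \enc(i)$: either this is $0$ (so $a_i = b_i$ and Bob updates $S_i$ with $b_i$), or it uniquely reveals the pair $(i, a_i)$ by injectivity of $\enc$, so Bob updates $S_i$ with $\lnot b_i$. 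Rounds where Bob speaks are handled symmetrically. After $c$ rounds the simulated transcript $S_i$ equals the true execution of $\Pi_i(x_i,y_i)$, and both parties output $P_i(x_i,y_i)$, erring only with the error $\delta$ of $\Pi_i$.

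The main technical obstacle is matching the exact cost bound $\cO(\R_\delta(P_i) + \log \tfrac{1}{\delta})$, since a deterministic injection $\enc$ requires $s = \Theta(\log n)$ bits, giving total cost $O(c \log n)$ with a spurious $\log n$ factor. To eliminate it I would replace $\enc$ by a random linear hash into $\zo^{O(\log(c/\delta))}$ drawn from public coins, so that round-by-round the hash is injective on the relevant coordinates with probability at least $1 - \delta/c$; a union bound over the $c$ rounds keeps the total error at $O(\delta)$. The delicate part is that a hash collision in any single round can silently derail the invariant that $S_i$ equals the true transcript and propagate to all later rounds, so the hash must be re-randomized independently each round and each $\Pi_j$'s error boosted by a constant factor so that the error of $\Pi_i$ plus the accumulated collision probability sums to at most $\delta$.
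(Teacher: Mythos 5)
Your core plan---maintaining $n$ parallel simulations $S_1,\dotsc,S_n$, exploiting $a_j=b_j$ for $j\neq i$ so the XOR of per-coordinate messages isolates coordinate $i$, and then driving the true transcript $S_i$ forward---is sound and gives a correct protocol with cost $\cO(c\log n)$. However, the step you single out as ``the main technical obstacle'' is where the proof genuinely breaks. When you replace the injective encoding $\enc\colon[n]\to\zo^s$ by a random hash $h\colon[n]\to\zo^L$ with $L=\cO(\log(c/\delta))$, the decoding step is: from $M_A\oplus M_B=(a_i\oplus b_i)\cdot h(i)$, find the coordinate $j$ with $h(j)$ equal to the observed sketch and flip $S_j$. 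For this to succeed you need $h$ to be collision-free \emph{at} $i$ over all of $[n]$: the probability that some $j\neq i$ has $h(j)=h(i)$ is $\Theta(n\cdot 2^{-L})$, not $\Theta(2^{-L})$, because the receiving party does not know $i$ and must search all $n$ indices. Driving this below $\delta/c$ forces $L=\Omega(\log(nc/\delta))$, so the random hash buys you nothing over a deterministic injection; the $\log n$ factor survives and the claimed bound $\cO(\R_\delta(P_i)+\log\tfrac1\delta)$ is not reached. Re-randomizing per round does not help: each round independently pays the $\log n$ cost.

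There is also a smaller correctness hole: your protocol only learns $i$ in a round where $a_i\neq b_i$. If $a_i=b_i$ in every round (which is possible, e.g.\ when $P_i(x_i,y_i)=P_i(x_i,x_i)$ and the randomized transcripts happen to coincide), neither party ever identifies $i$, and the final output step ``output $P_i(x_i,y_i)$'' is not defined---the parties do not know which $S_j$ to read the answer from.

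The paper's proof avoids both issues by \emph{never decoding $i$}. Instead of Sidon-encoding coordinate indices, it hashes the \emph{inputs} $x_j\mapsto\hash_j(x_j)\in[10/\delta]$ and Sidon-encodes the constant-length round content (a role bit $\sfP/\neg\sfP$ concatenated with the protocol's one-bit message, and at the end, the hash concatenated with the leaf value). Because $\hash_j(x_j)=\hash_j(y_j)$ whenever $x_j=y_j$, the XOR of encodings over all $j$ cancels except at coordinate $i$, and the surviving term has length $\cO(\log(1/\delta))$ independent of $n$. All coordinates are then updated with the \emph{same} recovered bit---only $t_i$ tracks the true execution, the others are synchronized garbage---and the final leaf is extracted by one more XOR, again without ever learning $i$. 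That is the idea you are missing: the sketch must carry bounded-length content, not a coordinate index, because identifying $i$ is information-theoretically a $\log n$-bit task.
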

\begin{proof}
For a communication tree $T$ we assume that each inner node $v$ of $T$ is labeled with either $\sfA$
or $\sfB$ depending on whether it is Alice's turn or Bob's turn to send a message. For each node $v$
labeled $\sfA$, we define the function $\sfA_v(x) \in \zo$ which determines Alice's message at that
node on input $x$; we may assume that this function is always defined. Similarly, $\sfB_v(y) \in
\zo$ is Bob's message on node $v$ labeled $\sfB$ with input $y$. We will extend these functions by
setting $\sfA_v(x) \define \bot$ when $v$ is labeled $\sfB$, and similarly setting $\sfB_v(y)
\define \bot$ when $v$ is labeled $\sfA$.

We assume, without loss of generality, that $P_i$ is an $N \times N$ matrix with rows and columns indexed by $[N]$, and that for all $\delta$ and $j$, the randomized communication tree $T_j$ for $P_j$ is complete with depth $R_\delta \define \max_i \R_\delta(P_i)$.

\renewcommand{\concat}{\diamond}
Now the protocol for computing $P_1 \eqcat P_2 \eqcat \dotsm \eqcat P_n(x,y)$
under the promise of exactly one difference is as follows. Below, for clarity, we will use
the notation $a \concat b$ to denote string concatenation.

\begin{enumerate}[leftmargin=1.3em,itemsep=0.5em]
    \item For each $j \in [n]$ and each $z \in [N]$, assign a uniformly random
        $\hash_j(z) \sim [10/\delta]$.

        \pcomment{We will write $i \in [n]$ for the unique coordinate where $x_i
        \neq y_i$. With probability at least $1-\delta/10$, $\hash_i(x_i) \neq
        \hash_i(y_i)$. Assume this is the case below.}
    \item For each $j \in [n]$ and each $z \in [N]$, assign a uniformly random
        $p_j(z) \sim \{\sfA, \sfB\}^{10+\log(1/\delta)}$ (treat $\sfA, \sfB$ as $\zo$).
    Let $\enc_1 \colon \zo^{\cO(\log(1/\delta) )} \to \zo^{\cO(\log(1/\delta))}$ be a 2-Sidon encoding. The players exchange
        $\bigoplus_{j\in [n]} \enc(\hash_j(x_j) \concat p_j(x_j) )$ and $\bigoplus_{j \in [n]} \enc(\hash_j(y_j) \concat p_j(y_j) )$,
    and compute 
    \begin{multline*}
    \left(\bigoplus_{j=1}^n \enc\left(\hash_j(x_j) \concat p_j(x_j) \right)\right) \oplus \left(
        \bigoplus_{j =1}^n \enc\left(\hash_j(y_j) \concat p_j(y_j) \right)\right)\\ = \left(\hash_i(x_i) \concat p_i(x_i)\right)\oplus \left(\hash_i(y_i) \concat
    p_i(y_i)\right).
    \end{multline*}
    Using the properties of the 2-Sidon encoding, they then determine
    the unordered pair $\{\hash_i(x_i) \concat p_i(x_i), \hash_i(y_i) \concat
    p_i(y_i)\}$. The players return $0$ if $p_i(x_i) = p_i(y_i)$.
    
    \pcomment{The probability that this line outputs 0 
        is at most $2^{-(10+\log(1/\delta))} < \delta/10$.}
        
    Otherwise, the players obtain strings
    \begin{equation}
    \label{eq:dist-1-hash}
    \hash_i(x_i) \concat \sfP  \text{ and } \hash_i(y_i) \concat (\neg \sfP)
    \end{equation}
    where $(\sfP, \neg \sfP) \in \{\sfA, \sfB\}^2$ is the value of $p_i(x_i),
    p_i(y_i)$ on the first coordinate where they differ.

    \pcomment{Note that the players don't know $i$, so they see the strings but
    do not know which one is associated with $x_i$ and which one with $y_i$.
    These strings will tell the players to simulate player ``$\sfP$'' if their
    input hashes to $\hash_i(x_i)$, and simulate player ``$(\neg \sfP)$'' if
    their input hashes to $\hash_i(y_i)$.}

    \item For each $j \in [n]$, randomly choose the communication tree $T_j$ for
    the problem $P_j(x_j,y_j)$, with depth 
    $R_{\delta/10}$, and error probability $\delta/10$. Here we assume without loss of generality that all trees are complete with the same depth.

    \pcomment{The probability that we generate a tree $T_i$ with the incorrect
    output value on inputs $x_i, y_i$ is at most $\delta/10$. Assume $T_i$ is
    correct below.}
    \item Let $S\subseteq [n]$ be the set of coordinates $j$ such that $\hash_j(x_j)\in \{\hash_i(x_i), \hash_i(y_i)\}$, which is equivalently the set of coordinates $j$ such that $\hash_j(y_j)\in \{\hash_i(x_i),\hash_i(y_i)\}$. The players will each compute $S$, and next they simulate all protocols $T_j$ with $j\in S$ in parallel as follows. 
    
    For each $j\in S$, initialize $t_j$
    to be the root of $T_j$. In every round, until every
    $t_j$ becomes a leaf:
    \begin{enumerate}
        \item For each $j \in S$, Alice and Bob define, respectively:
        \begin{align*}
            a_j(x_j) &\define \begin{cases}
                \sfP \concat \sfP_{t_j}(x_j) &\text{ if } \hash_j(x_j) = \hash_i(x_i)\\
                (\neg \sfP) \concat  (\neg \sfP)_{t_j}(x_j) &\text{ if } \hash_j(x_j) = \hash_i(y_i),
            \end{cases}\\
            b_j(y_j) &\define \begin{cases}
                \sfP \concat \sfP_{t_j}(y_j) &\text{ if } \hash_j(y_j) = \hash_i(x_i) \\
                (\neg \sfP) \concat ( (\neg \sfP)_{t_j}(y_j)) &\text{ if } \hash_j(y_j) = \hash_i(y_i) \,.
            \end{cases}
        \end{align*}
        \pcomment{Note that for $x_j = y_j$ either $\hash_j(x_j) = \hash_j(y_j) = \hash_i(x_i)$ and
        \[
            a_j(x_j) = \sfP \concat \sfP_{t_j}(x_j) = \sfP \concat \sfP_{t_j}(y_j) = b_j(y_j)\;,
        \]
        or $\hash_j(x_j) = \hash_j(y_j) = \hash_i(y_i)$, in which case
        \[
            a_j(x_j) = (\neg \sfP) \concat  (\neg \sfP)_{t_j}(x_j)
                    = (\neg \sfP) \concat (\neg \sfP)_{t_j}(y_j) = b_j(y_j)\,.
        \]
        However, for $j=i$ where $x_i \neq y_i$, we have $a_i(x_i) = \sfP \concat \sfP_{t_i}(x_i)$ and $b_i(y_i) = (\neg \sfP) \concat  (\neg \sfP)_{t_i}(y_i)$.}
        \item Let $\enc_2 \colon \zo^{\cO(1)} \to \zo^{\cO(1)}$ be a 2-Sidon encoding. Alice and Bob communicate $\cO(1)$ bits to compute
        \[
            \left(\bigoplus_{j=1}^n \enc( a_j(x_j) )\right) \oplus \left( \bigoplus_{j=1}^n \enc( b_j(y_j) ) \right)
            = \enc(a_i(x_i)) \oplus \enc(b_i(y_i)) \,,
        \]
        from which they determine the strings $a_i(x_i) = \sfP \concat \sfP_{t_i}(x_i)$ and $b_i(y_i) = (\neg \sfP) \concat (\neg \sfP)_{t_i}(y_i)$.
        \item For each $j \in [n]$, if node $t_j$ is labeled $\sfP$ then both Alice and Bob update it using
        the message $\sfP_{t_i}(x_i)$; otherwise they update it using the message $(\neg \sfP)_{t_i}(y_i)$.

        \pcomment{Observe that Alice and Bob agree on the update made to each protocol, and that protocol $T_i$
        is updated as if Alice was simulating player $\sfP$ and Bob was simulating player $(\neg \sfP)$.}
    \end{enumerate}
    \item When all $t_j$ have reached a leaf, let $\ell_j(t_j) \in \Lambda$ be the value of each leaf. The players
    use a 2-Sidon encoding $\enc_3 \colon \zo^{\cO(\log(|\Lambda|/\delta))} \to \zo^{\cO(\log(|\Lambda|/\delta))}$
    to compute
    \begin{align*}
        &\left( \bigoplus_{j=1}^n \enc(\hash_j(x_j) \concat \ell_j(t_j)) \right)
        \oplus
        \left( \bigoplus_{j=1}^n \enc(\hash_j(y_j) \concat \ell_j(t_j)) \right) \\
        &\qquad= \enc(\hash_i(x_i) \concat \ell_i(t_i)) \oplus \enc(\hash_i(y_i) \concat \ell_i(t_i))
    \end{align*}
    and recover the strings $\hash_i(x_i) \concat \ell_i(t_i), \hash_i(y_i) \concat \ell_i(t_i)$,
    from which they can output $\ell_i(t_i)$.\\
    \pcomment{The output is either $T_i(x_i,y_i)$ or $T_i(y_i,x_i)$ depending on the value of $\sfP$;
    in either case the output is correct since $P_i(x_i,y_i) = P_i(y_i,x_i)$.}
\end{enumerate}
The correctness of the protocol is guaranteed by the comments within.
\end{proof}

\subsection{Distance-$r$ Composition Protocol}

\renewcommand{\concat}{\diamond}
We now prove the upper bound on the distance-$r$ composed problem.

\begin{proof}[Proof of \cref{thm:distance-r-composition}]
The protocol for $g[ P_1 \eqcat P_2 \eqcat \dotsm \eqcat P_n ]$ is as follows.
For inputs $x,y$, we will write $\Delta \define \{ i \in [n] : x_i \neq y_i \}$.
Once again, we assume each $P_i$ has rows and columns indexed by $[N]$.

\begin{enumerate}[leftmargin=1.3em,itemsep=0.5em]
\item As in the large-alphabet Hamming distance \cref{ex:q-ary}, with alphabet $[N]$, check that
there are at most $r$ coordinates $i \in [n]$ such that $x_i \neq y_i$, with probability of error
$\delta/10$.

\pcomment{We now assume there are at most $r$ coordinates that differ. Let $\Delta \subseteq [n]$ be these coordinates.}

\item For each $j \in [n]$, assign a uniformly random value $h(j) \sim [5r^2/\delta]$,
and for each $x_j \in [N]$ we assign a uniformly random value $\hash_j(x_j) \sim [10r/\delta]$.

\pcomment{The probability that there are distinct $i,j \in \Delta$ with $h(i) = h(j)$ 
is at most ${ r \choose 2 } \cdot \tfrac{\delta}{5r^2} \leq \tfrac{\delta}{10}$,
and probability that there is $i \in \Delta$ such that $\hash(x_i) = \hash(y_i)$ is at most
$r \cdot \tfrac{\delta}{10 r} = \tfrac{\delta}{10}$}

\item Let $\enc \colon \zo^{\cO(\log(r/\delta))} \to \zo^{\cO(r \log (r/\delta))}$ be a $2r$-Sidon encoding. Alice and Bob exchange
\[
    \bigoplus_{i = 1}^n \enc( h(i) \concat \hash_i(x_i) ) \text{ and } 
    \bigoplus_{i = 1}^n \enc( h(i) \concat \hash_i(y_i) )\,,
\]
and take the XOR of these strings to obtain
\begin{align*}
    \bigoplus_{i \in \Delta} \left( \enc( h(i) \concat \hash_i(x_i) ) \oplus \enc( h(i) \concat \hash_i(y_i) ) \right)\,.
\end{align*}
From this by the properties of the $2r$-Sidon encoding they deduce the unordered set
\begin{equation}
\label{eq:dist-r-sidon}
    \{ h(i) \concat \hash_i(x_i) : i \in \Delta \} \cup \{ h(i) \concat \hash_i(y_i) : i \in \Delta \} \,,
\end{equation}
and they may now agree on the set $H \define \{ h(i) : i \in \Delta \}$ with cardinality $|\Delta|$.
\item For each $h \in H$, the players obtain $B_h \define \{ j \in [n] : h(j) = h \}$. They then perform the
protocol from  \cref{lemma:dist-1-promise} restricted to the coordinates $B_h$, with error probability
$\delta/10r$.
\pcomment{Since $h(i)$ is unique for each $i \in \Delta$, it is promised that there is at most one coordinate
$i \in \Delta$ in each $B_h$.}
\end{enumerate}
The comments within explain the correctness of the protocol, and \cref{lemma:dist-1-promise}.
The cost of the protocol is
\[
\cO\left(r \log \tfrac{r}{\delta} + r \log \tfrac{r}{\delta} + r \cdot \left(\R_{\delta/10r}(\cP_N) + \log\tfrac{|\Lambda|}{\delta}\right) \right) \,.
\]
We may assume $\R_{\delta/10r}(\cP_N) = \Omega(\log |\Lambda|)$ since the number of possible
outputs $\Lambda$ of a protocol cannot exceed the number of leaves in the communication tree,
so we obtain the desired bound.
\end{proof}

\renewcommand{\concat}{}

\appendix

\section{Appendix: Reduction to Gap Hamming Distance} \label{app:gap-ham}

Before the present paper, one reason that it would make sense to expect every
constant-cost problem to reduce to \textsc{$k$-Hamming Distance} is that an
analogous statement is true for \emph{partial} problems: every constant-cost
problem reduces to an appropriately chosen instance of the
\textsc{Gap Hamming Distance} ($\GHD$) problem.

\begin{definition}[Gap Hamming Distance]
For $n \in \bN$ and $\gamma \in (0,1/2)$, on inputs $x,y \in \zo^n$ we define the partial function
\[
    \GHD_\gamma(x,y) \define \begin{cases}
        1 &\text{ if } \dist(x,y) \leq \gamma n \\
        0 &\text{ if } \dist(x,y) \geq (1-\gamma) n \\
        * &\text{ otherwise}\,.
    \end{cases}
\]
\end{definition}

For any constant $\gamma < 1/2$, this problem has a constant-cost protocol, where the shared randomness is used to sample a sufficiently large set $S$
of $\cO(1)$ coordinates, and Alice sends the substring $x_S$ on those
coordinates to Bob, who compares it with the substring $y_S$. The Chernoff bound
guarantees that $\dist(x_S,y_S)$ is sufficiently concentrated around
$\nicefrac{|S|\cdot\dist(x,y)}{n}$, so that the players can determine
$\GHD_\gamma(x,y)$ correctly with high probability. Every constant-cost problem
can be reduced to $\GHD_\gamma$ for appropriately chosen constant $\gamma$:

\begin{theorem}
\label{thm:GHD-reduction}
For any constant $c$ there is a constant $\gamma$ such that the following holds.
Let $P = (P_N)_{N \in \bN}$ be any communication problem with randomized communication complexity at most $c$.
Then $\sD^{\GHD_\gamma}(P) = 1$; i.e. the problem $P$ can be solved by a single query to $\GHD_\gamma$. 
\end{theorem}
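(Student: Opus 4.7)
The plan is to construct, for each input $(x,y)$, strings $a(x), b(y) \in \{0,1\}^L$ whose Hamming distance cleanly separates the two cases: $\dist(a(x), b(y))/L \leq 1/2 - \eta$ when $P(x,y) = 1$ and $\dist(a(x), b(y))/L \geq 1/2 + \eta$ when $P(x,y) = 0$, for some constant $\eta > 0$ depending only on $c$. A single query to $\GHD_\gamma$ with $\gamma = 1/2 - \eta$ will then recover $P(x,y)$, as required.

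First I would boost the given randomized protocol to a protocol $\Pi'$ of cost $c' = O(c)$ and error at most $1/10$. For each fixing $r$ of its shared randomness, the deterministic protocol $\Pi'_r$ factors through its $2^{c'}$ rectangles as $\Pi'_r(x,y) = f_r(\alpha_r(x), \beta_r(y))$, where $\alpha_r \colon X \to [M]$ and $\beta_r \colon Y \to [M]$ are Alice's and Bob's transcript-type maps, $f_r \colon [M]^2 \to \{0,1\}$ is the leaf-labeling, and $M = 2^{c'}$. The key gadget is now an encoding of a single $\Pi'_r$ into Hamming distance: letting $F_x = (f_r(\alpha_r(x), j))_{j \in [M]} \in \{0,1\}^M$, I would define $A_r(x) \in \{0,1\}^{2M}$ to be $F_x$ followed by its bitwise complement $\overline{F_x}$, and $B_r(y) \in \{0,1\}^{2M}$ to be the standard basis vector at coordinate $\beta_r(y)$ followed by $0^M$. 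A direct computation shows $\dist(A_r(x), B_r(y)) = M + 1 - 2\,\Pi'_r(x,y)$, so the block distance depends only on the output bit of $\Pi'_r$ and not on any other entry of $f_r$.

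Next I would derandomize the protocol. Sample $N = O(\log|X| + \log|Y|)$ random strings $r_1, \dotsc, r_N$ independently. By Chernoff concentration for each fixed $(x,y)$ and a union bound over the $|X| \cdot |Y|$ pairs, there exist fixed strings $r_1^*, \dotsc, r_N^*$ such that for every $(x,y)$ simultaneously, $S(x,y) \define \sum_{i=1}^N \Pi'_{r_i^*}(x,y)$ satisfies $S \geq 0.85 N$ when $P(x,y) = 1$ and $S \leq 0.15 N$ when $P(x,y) = 0$. Concatenating the gadgets yields strings $a(x), b(y) \in \{0,1\}^L$ with $L = 2MN$ and total distance $N(M+1) - 2 S(x,y)$; dividing by $L$ places the ratio $\dist(a,b)/L$ below $1/2 - \eta$ or above $1/2 + \eta$ depending on $P(x,y)$, for $\eta = \Theta(2^{-c'})$ a constant depending only on $c$. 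Setting $\gamma = 1/2 - \eta$ completes the single-query reduction.

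I expect the main conceptual step to be the $2M$-bit gadget in the second paragraph, which collapses an arbitrary $M \times M$ Boolean rectangle matrix into a Hamming-distance encoding that reads out exactly one designated entry. Everything else (cost-preserving boosting, Chernoff concentration, derandomization by union bound, and length scaling) is entirely routine.
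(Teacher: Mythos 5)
Your approach is essentially the same as the paper's: fix the randomness, encode each deterministic protocol's output into a local Hamming gadget, concatenate over many seeds, invoke Chernoff and a union bound to choose seeds that work for all $(x,y)$, and read off the answer with one $\GHD$ query. Your $2M$-bit gadget is in fact a dual of the one in the paper: the paper restricts (implicitly) to a one-way protocol and sets Alice's encoding to a weight-$1$ indicator of her message and Bob's to the (padded) indicator of his acceptance set, whereas you set Bob's encoding to a weight-$1$ indicator of his ``column type'' and Alice's to her (complemented) row of outputs. Your distance computation $\dist(A_r(x),B_r(y)) = M + 1 - 2\Pi'_r(x,y)$ is correct.

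There is one concrete slip, though it is fixable and does not break the argument. You claim that a deterministic protocol $\Pi'_r$ of cost $c'$ ``factors through its $2^{c'}$ rectangles'' as $f_r(\alpha_r(x),\beta_r(y))$ with $\alpha_r,\beta_r$ both mapping into $[M]$ for $M=2^{c'}$. That is not true: the $\leq 2^{c'}$ leaf rectangles of a two-way protocol do not form a grid, so they do not induce a factorization through $[2^{c'}]\times[2^{c'}]$. What is true is that the row type of $x$ (and likewise the column type of $y$) is determined by which of the $\leq 2^{c'}$ leaves are compatible with $x$, so the number of distinct row/column types is at most $2^{2^{c'}}$; equivalently, one can first pass to a one-way protocol (blowing up $c'$ to at most $2^{c'}$), which is what the paper does. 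Either way $M$ is bounded by a constant depending only on $c$, so your gap $\eta = \Theta(1/M)$ is still a positive constant and the rest of your argument (Chernoff with $N=\Theta(n)$ seeds, union bound over all pairs) goes through verbatim. You should just replace $M=2^{c'}$ by the correct bound $M\leq 2^{2^{c'}}$, or make the reduction to a one-way protocol explicit.
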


This theorem is not as satisfactory as a reduction to $k$-\textsc{Hamming Distance} would have been,
since a reduction to $\GHD_\gamma$ cannot be used to answer questions such as whether membership in
$\BPPZ$ implies existence of large rectangles or whether $\BPPZ$ reduces to $\mathsf{RP}^0$ (see
e.g. \cite{HH24}). This is because $\GHD_\gamma$ is a \emph{partial} problem, and it is not known
whether any of its completions into a total communication problem has a constant-cost protocol. Moreover,
$\GHD_\gamma$ does not contain large monochromatic rectangles (see
\eg~\cite{MR0613409,song2014space}), and consequently it only admits constant-cost two-sided error
protocols since constant-cost one-sided error protocols have large monochromatic
rectangles \cite{HHH22dimfree}.

One way to prove \cref{thm:GHD-reduction} is to use the fact that public-coin communication
complexity is related to \emph{discrepancy}, which is related to \emph{margin} \cite{CG88, LS09}: a
problem is in $\BPPZ$ if and only if it has constant discrepancy, and an application of
Grothendieck's inequality shows that discrepancy equals margin up to a constant factor~\cite{LS09}.
Starting from an assumption on the margin of the communication problem being bounded away from $0$
then allows for a geometric argument, similar to the one used in \cite{HHM23} to embed Gap Inner
Product function into Gap Hamming Distance function.

Below we give an alternative direct and elementary proof, that avoids discrepancy and margin. 
\begin{proof}
    Consider a cost $c$ public-coin protocol $\pi$ for $P$ in the one-way model,
    where Alice, given $x$, sends Bob a message $A(r,x)$, where $r$ is the public randomness, and then 
    Bob announces the output $B(A(r,x), y, r)\in \zo$.
    
    Let $w = 2^c$ be the number of potential messages by Alice. Define $S(r,y)$ as the set of messages $\tau\in \zo^c$ such that Bob will output 1, given
    shared randomness $r$ and input $y$, \ie such that $B(\tau, r,y)=1$. 
    Define two $w$-bit strings $\phi_{r}(x), \psi_{r}(y) \in \zo^w$ where we index their bits by Boolean strings from $\zo^c$, as following. For any $i\in [w]$, the $i$-th bit of $\psi_r(x)$ is $1$ iff the $A(r,x)$ is lexicographically the $i$-th string in $\zo^c$. Define $\psi_r(y)$ similarly based on $S(r,y)$. 

Finally, we obtain $\phi'_{r}(x)$ by padding $\phi_{r}(x)$ with a suffix of $w$ 0s, and $\psi'_{r}(y)$ by padding $\psi_{r}(y)$ with a suffix of $\neg \psi_{r}(y)$. Then we have $|\phi'_{r}(x)|=1$, and $|\psi'_{r}(y)|=w$ for any $r,x,y$. 

Note that, whenever the protocol outputs 1 with random seed $r$, we have $\dist(\phi'_{r}(x), \psi'_{r}(y)) = w-1$, and when the protocol outputs 0, we have $\dist(\phi'_{r}(x), \psi'_{r}(y))= w+1$.
    
    By definition, with probability $\geq 2/3$ over $r$, the protocol will output the correct answer $P(x,y)$. Choose $t$ independent random seeds $r_1, \dots, r_t$. 
    Define $\Phi(x)=\phi_{r_1}(x)\concat \cdots \concat \phi_{r_t}(x),$ and $\Psi(y)= \psi_{r_1}(y)\concat \cdots \concat \psi_{r_t}(y)$ both of which are of length $L = 2tw$.
     
Now, when $P(x,y)=1$, we have
$
\Ex{\dist(\Phi(x), \Psi(y))} \leq \left(w-\nicefrac{1}{3}\right)t = \left(\nicefrac{1}{2}-\nicefrac{1}{6w}\right) L\;.
$
Applying the Chernoff Bound,
    \[ \Pr{\dist(\Phi(x), \Psi(y)) > \left(\frac{1}{2}-\frac{1}{6w} \right)L+\epsilon \cdot t }\leq \exp\left(-\frac{1}{2}\epsilon^2 t\right)\]
Similarly, when $P(x,y)=0$, we have
$
\bE [\dist(\Phi(x), \Psi(y))] \geq \left(w+\nicefrac{1}{3}\right)t  = \left(\nicefrac{1}{2}+\nicefrac{1}{6w}\right) L\;,
$ 
and 
\[ 
    \Pr{(\dist(\Phi(x), \Psi(y))) < \left(\frac{1}{2}+\frac{1}{6w} \right)L - \epsilon \cdot t } \leq \exp\left(-\frac{1}{2}\epsilon^2 t\right)
    \]
    Let $t = \lceil 3/\epsilon^2 \cdot n\rceil $, $\epsilon = \frac{1}{6}$, and $\gamma = \frac{1}{2}-\frac{1}{12w}$. 
    Then for every $x,y$ 
    with probability at least $1 - \exp(-\frac{1}{2}\epsilon^2 t) > 1-4^{-n}$ we have that if $P(x,y)=1$, then $\dist(\Phi(x), \Psi(y))\leq \gamma L$,  and otherwise $\dist(\Phi(x), \Psi(y))\geq (1- \gamma )L$,

    Applying the union bound, there exists a choice of $t$ fixed seeds $r_1, \dots, r_t$
    such that for all $x, y\in\zo^n$, $P(x,y) = \GHD_{\gamma}(\Phi(x), \Psi(y))$.
    In other words, $P$ can be solved by a single query to $\GHD_\gamma$.
\end{proof}

\ignore{
    \begin{proof}
    Chor and Goldreich \cite{CG88} showed that upper bounds on a problem’s \emph{discrepancy} give lower bounds on its randomized communication complexity. For a $c$-cost problem $P$, $\disc{P}\geq \frac{1}{3}\cdot 2^{-c}$.
    
    The \emph{margin}, associating a Boolean matrix with representation as points and half-spaces, measures the smallest distance between the points $u_x$ and hyperplanes defined by $v_y$. More formally,
    \[
    \m{P}:= \sup_{d} \min_{x,y}|\inp{u_x}{v_y}|,
    \]
    where the supremum is over all $d\in \bN$ and unit vectors $u_x, v_y \in \bR^d$ with $P(x,y) = \sign (\inp{u_x}{v_y}) $.
    
    Linial and Shraibman \cite{LS09} proved that margin is essentially equivalent to discrepancy: $\disc{P}\leq \m{P} \leq 8 ~ \disc{P}$. Then we have $m(P)\geq \frac{1}{3}\cdot 2^{-c}$.
    
    For every input $(x,y)$, $P(x,y)=\sign(\inp{u_x}{v_y})$ and $|\inp{u_x}{v_y}|\geq\m{P}>0$. Let $\theta$ be the angle between $u_x$ and $v_y$. When $P(x,y)=1$, $\cos{\theta}\geq \m{P}$, then $\theta \leq \arccos{\m{P}}\leq\frac{\pi}{2}-\m{P}$. 
    Pick a random hyperplane through the origin. Denote its unit normal vector by $h$. Then 
    $\Pr{\sign(\inp{h}{u_x})\neq \sign(\inp{h}{v_y})} 
    = \frac{\theta}{\pi}\leq \frac{1}{2}-\frac{\m{P}}{\pi} $.
    Similarly, when $P(x,y)=0$, $\cos{(\pi-\theta)}\geq \m{P}$, we have
    $\Pr{\sign(\inp{h}{u_x}) \neq \sign(\inp{h}{v_y})}
      \geq \frac{1}{2}+\frac{\m{P}}{\pi}$.
    
    Consider a sequence of $K$ random hyperplanes $h_1,...,h_K$ through the origin. Label each point $p$ with a sequence $s_p \in \{0,1\}^K$, where for every $i\in [K]$, $s_p(i) = 1$ if and only if $\inp{h_i}{p}\geq 0$. 
    
    When $P(x,y)=1$, $\bE_{h_1,...,h_k}{|s_{u_x}-s_{v_y}|} \leq (\frac{1}{2}-\frac{\m{P}}{\pi})K$. By the Chernoff bound, 
    \[ 
        \Pr{|s_{u_x}-s_{v_y}| > (\frac{1}{2}-\frac{\m{P}}{\pi}+\epsilon)K} \leq \exp(-2K\epsilon^2).
    \]
    
    When $P(x,y)=0$, $\bE_{h_1,...,h_k}{|s_{u_x}-s_{v_y}|} \geq (\frac{1}{2}+\frac{\m{P}}{\pi})K$. Similarly,
    \[
        \Pr{|s_{u_x}-s_{v_y}| < (\frac{1}{2}+\frac{\m{P}}{\pi}-\epsilon)K} \leq \exp(-2K\epsilon^2).
    \]
    
    Take $\epsilon = \frac{\m{P}}{2\pi}$, $\gamma=\frac{1}{2}-\frac{\m{P}}{2\pi}$. Then if $P(x,y)=1$,  $\dist(s_{u_v}, s_{v_y})\leq \gamma K$; otherwise $\dist(s_{u_v}, s_{v_y})\geq(1-\gamma)K$, with probability greater than $1-(\exp(-\frac{1}{2\pi^2}))^n$ for a choice of $K=\Omega(n/m(P)^2)$. By a union bound over all $x,y\in \{0,1\}^n$, there exists a choice of $K$ hyperplanes $h_1,\dots, h_K$ such that for all inputs $x,y\in\{0,1\}^n$, $P(x,y)=\GHD_\gamma(s_{u_v}, s_{v_y})$.
    Therefore, the $c$-cost problem $P$ can be solved by a single query to $\GHD_\gamma$, where $\gamma\leq \frac{1}{2}-\frac{2^{-c}}{6\pi}$.
    \end{proof}
}

\section{Appendix: Previously-studied problems reduce to $k$-Hamming Distance}
\label{section:reductions-to-khd}

We survey the constant-cost problems which have been explicitly studied in the
literature, and show that they reduce to \textsc{$k$-Hamming Distance}, along
with their distance-$r$ compositions.

To show that the distance-$r$ compositions reduce to \textsc{$k$-Hamming
Distance}, we will need to define a certain restricted type of
an $\Equality$-oracle protocol, where, informally, the two players Alice and Bob
would always supply the same query strings to each oracle if they were
given the same input $x$.

Recall that a communication problem $\cP$ is \emph{symmetric} if every
communication matrix $P \in \cP$ is symmetric, \ie $P(x,y) = P(y,x)$.

\begin{definition}[Tandem $\Equality$ Protocol]
A symmetric communication problem $\cP$ has a constant-cost \emph{tandem
$\Equality$ protocol} if for every matrix $P \in \zo^{N \times N}$ in $\cP$,
there is a constant-cost deterministic $\Equality$-oracle communication protocol
for $P$ such that, for every node $v$ in the communication tree, the two
parties' query functions $a_v \colon [N]  \to \bN$ and $b_v \colon [N] \to \bN$ satisfy $a_v(x)
= b_v(x)$ for all $x \in [N]$.
\end{definition}

We will prove the following in \cref{section:tandem-equality-khd}.

\begin{theorem}
\label{thm:tandem-equality-khd}
    Let $\cP$ be any symmetric communication problem with a constant-cost tandem
    $\Equality$ protocol. Then for any constant $r$, there exists a constant $k$
    such that any distance-$r$ composition of $\cP$ reduces to $\HD_k$.
\end{theorem}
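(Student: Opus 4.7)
The plan is to exploit the symmetry of the tandem $\Equality$ protocol to reduce the joint equality-pattern statistics of the $n$ blocks to a constant number of $\HD_k$-oracle queries, combined via a M\"obius inversion.

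First, I would put each $P \in \cP$ into a non-adaptive normal form by flattening the cost-$c$ tandem decision tree, obtaining $P(x,y) = \rho_P(e_1,\dotsc,e_q)$ with $q \leq 2^c$, $e_j = [\phi^{(j)}_P(x) = \phi^{(j)}_P(y)]$, and each query function $\phi^{(j)}_P$ applied by both players to their own input (the tandem property). Since there are only finitely many leaf-assignment functions $\rho_P \colon \{0,1\}^q \to \Lambda$, both players can partition the (fixed, input-independent) block indices into the constantly many classes $B_\rho \define \{i : \rho_{P_i} = \rho\}$. Using permutation-invariance of $g$, whenever the number $d$ of unequal blocks is at most $r$, the value of the distance-$r$ composition is determined by the count vector
\[
  c^\rho_t \define |\{\, i \in B_\rho : x_i \neq y_i,\ \vec{e}_i = t \,\}|,
  \qquad \rho \text{ a leaf-assignment},\ t \in \{0,1\}^q,
\]
through the multiset on which $g$ is evaluated.

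The main step is to compute every $c^\rho_t$. A preliminary batch of $\HD_{2k}$ queries ($k = 0,\dotsc,r$) built from the large-alphabet encoding of \cref{ex:q-ary}, applied separately to each $B_\rho$, determines $d$ (aborting with $\bot$ if $d > r$) and the number of equal blocks inside each $B_\rho$. For each $\rho$ and each $S \subseteq [q]$, both players then encode each block $i \in B_\rho$ by the tagged character $(i,\ (\phi^{(j)}_{P_i}(x_i))_{j \in S})$ on Alice's side and $(i,\ (\phi^{(j)}_{P_i}(y_i))_{j \in S})$ on Bob's side, and apply the indicator code of \cref{ex:q-ary}. The per-block Hamming contribution is $2$ exactly when some $\phi^{(j)}_{P_i}$ with $j \in S$ disagrees on block $i$; summing over blocks, the total Hamming distance equals $2 D^\rho_S$, where
\[
  D^\rho_S \define |\{\, i \in B_\rho : \exists j \in S,\ \phi^{(j)}_{P_i}(x_i) \neq \phi^{(j)}_{P_i}(y_i) \,\}|.
\]
The tandem property forces $\phi^{(j)}_{P_i}(x_i) = \phi^{(j)}_{P_i}(y_i)$ whenever $x_i = y_i$, so $D^\rho_S \leq d \leq r$ and each $D^\rho_S$ is identified by $r+1$ queries to $\HD_{2k}$ for $k \in \{0,\dotsc,r\}$, all of which reduce to $\HD_{2r}$ by padding. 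Superset M\"obius inversion on the Boolean lattice, applied to the family $(|B_\rho| - D^\rho_S)_{S \subseteq [q]}$, recovers $d^\rho_t \define |\{i \in B_\rho : \vec{e}_i = t\}|$; then $c^\rho_t = d^\rho_t$ for $t \neq \vec 1$, and $c^\rho_{\vec 1}$ is obtained by subtracting the count of equal blocks in $B_\rho$ (which is only possible for $t = \vec 1$ by tandem). Applying $g$ to the reconstructed multiset produces the answer.

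The main obstacle is that different blocks $i$ may use completely unrelated query functions $\phi^{(j)}_{P_i}$, so naively concatenated indicator encodings could cause characters from different blocks to collide spuriously and corrupt the Hamming-distance count. This is resolved by tagging each character with the public block index $i$, which makes the indicator code treat distinct blocks independently; restricting the counting to each class $B_\rho$ rather than across classes handles the dependence of $\rho$ on $P_i$. The total number of queries is then $\cO(|\Lambda|^{2^{q}} \cdot 2^q \cdot r) = \cO(1)$, and the oracle parameter $k = 2r$ is independent of $n$.
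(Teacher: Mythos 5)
Your proof is correct, and it takes a genuinely different route from the paper's. The paper's argument (\cref{lemma:tandem-eq-to-khd}, \cref{fact:newton}, \cref{prop:Fd-unique}, \cref{prot:tandem-eq-khd}) compresses the $q$ \textsc{Equality}-query answers for a single block into one constant-weight codeword $E_i(\cdot)$ (weight $k=2^q-1$, with the answers read off the binary representation of $\dist(E_i(x_i),E_i(y_i))$), then applies the degree-$d$ tensor encodings $F_d$ and Newton's identities to the power sums $T_1,\dotsc,T_r$ to recover the multiset of per-block codeword distances. You instead work directly in the Boolean lattice $2^{[q]}$ of query indices: for each $S\subseteq[q]$ a single large-alphabet indicator code counts the blocks where some query in $S$ disagrees, and superset M\"obius inversion over $S$ recovers the per-block equality patterns $\vec e_i$; the tandem property is exactly what bounds all these counts by $d\leq r$, which is what keeps the oracle parameter constant. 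Both proofs are inversion arguments that reconstruct the multiset of per-block outcomes from a constant number of summed $\HD$ queries, just over different bases (power sums versus the Boolean subset lattice). Your version also sidesteps the need to fix a single reduction function $\rho$ across all $P\in\cP$ (which the paper achieves by the padding trick in \cref{prop:reduction-function}) by simply partitioning blocks into the constantly many classes $B_\rho$, and it yields a noticeably smaller oracle parameter: $\HD_{2r}$ suffices, versus the paper's $\HD_{2rk^r}$ with $k=2^q-1$. One small simplification you could make: since the per-block encodings are concatenated, the explicit tag $i$ on each character is redundant (distinct blocks already occupy disjoint coordinate ranges), though including it does no harm.
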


Let us now survey the constant-cost problems which have been studied in the literature
and classify them using the notion of tandem $\Equality$ protocols.

\subsection{Survey of Known Problems}

The constant-cost communication problems which have been explicitly studied in prior work, aside
from the \textsc{$k$-Hamming Distance} problem, fall into three categories.

\paragraph*{Cartesian products.} \cite{HWZ22} show that if there is a constant-cost protocol for
computing adjacency or $\dist(x,y) \leq k$ in graphs $G$ belonging to graph class $\cG$ (where
$\dist(x,y)$ is the shortest path distance), then there is a constant-cost protocol for computing adjacency
or $\dist(x,y) \leq k$ in the class of \emph{Cartesian product} graphs $\cG^\times$ defined as the
set of all graphs obtained by taking the Cartesian product graph
\[
  G_1 \times G_2 \times \dotsm \times G_n
\]
of arbitrarily-many graphs $G_i \in \cG$. This is generalized by our distance-$r$ compositions,
since computing adjacency or distance in a graph is a symmetric problem.

\paragraph*{Graph problems with tandem $\Equality$ protocols.}
These protocols are of the form ``compute adjacency or $\dist(x,y) \leq k$ in graphs $G$ of
a certain class''\!, where $\dist(x,y)$ is the path distance in $G$. The most general problems
of this form are
\begin{itemize}
\item Computing $\dist(x,y) \leq k$ in graphs $G$ belonging to any class of \emph{structurally
bounded expansion} (including planar graphs, graphs with an excluded minor, etc.) \cite{EHK22}. An
inspection of their $\Equality$-oracle protocol reveals that it is in fact a tandem protocol.
(Note that these problems generalize the planar graphs in \cref{ex:planar} but our protocol
in \cref{ex:planar} was \emph{not} a tandem protocol.)
\item Computing adjacency in stable unit-disk graphs \cite{HZ24}. The $\Equality$-oracle protocol in
\cite{HZ24} is not tandem, but it is not difficult to transform it into a tandem protocol.
\end{itemize}
These protocols reduce to $\Equality$, but we may also take the Cartesian product to get a
new problem; \cref{thm:tandem-equality-khd} shows that the resulting problems still reduce to \textsc{$k$-Hamming
Distance}.

\paragraph*{Asymmetric $\Equality$ protocols.}
These problems are of the form ``compute adjacency in a bipartite graph $G = (X,Y,E)$,
where Alice has $x \in X$ and Bob has $y \in Y$''\!. The most general problems of this
type are:
\begin{enumerate}
\item Computing adjacency in bipartite graphs $G$ forbidding certain choices of a single induced subgraph $H$
\cite{HWZ22}.
\item Deciding incidence in ``stable'' point-halfspace arrangements in small dimensions
\cite{HZ24}.
\end{enumerate}
To apply the Cartesian product or distance-$r$ composition to these problems, we first need to
``symmetrize'' the problem by allowing Alice and Bob to have inputs $x,y$ belonging to the union $X
\cup Y$ (and define the output to be a constant value, say 0, when $x,y \in X$ or $x,y \in Y$).
Symmetrizing the problem in this way allows a tandem $\Equality$ protocol to be obtained from the
original $\Equality$ oracle protocol: simply include a round of communication at the beginning for
them to check whether their inputs are in the same set $X$ or $Y$). So the problems obtained
in this way also reduce to \textsc{$k$-Hamming Distance} by \cref{thm:tandem-equality-khd}.

\subsection{Reduction to $k$-Hamming Distance}
\label{section:tandem-equality-khd}

We prove \cref{thm:tandem-equality-khd} which states that if $\cP$ admits a
constant-cost tandem $\Equality$ protocol, then the distance-$r$ composition of
$\cP$ is reducible to \textsc{$k$-Hamming Distance}. The first step is to show
that we can encode a tandem $\Equality$ protocol using a map $E \colon [N] \to
\zo^m$ such that the output $P(x,y)$ can be determined from $\dist(E(x), E(y))$.

\begin{lemma}
\label{lemma:tandem-eq-to-khd}
Let $\cP$ be any communication problem over an alphabet $\Lambda$ which admits a constant-cost tandem $\Equality$ protocol.
Then there exists a constant $k$ and a fixed map $D : \{0, 2, 4, \dotsc,
2k\}\rightarrow \Lambda$ such that, for any $N$ and any $N \times N$ matrix $P \in \cP$, there is a map $E \colon [N] \to \zo^{m(N)}$
such that $|E(x)| = k$ for every $x \in [N]$,
and
\begin{equation}\label{eq:tandem-eq-to-khd-eq0}
  \forall x,y \in [N] \;:\qquad P(x,y) =  D(\dist(E(x), E(y))) \,.
\end{equation}
\end{lemma}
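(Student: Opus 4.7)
Fix any $P\in\cP$ and let $\pi$ be its tandem $\Equality$ protocol of cost at most some constant $c$. View $\pi$ as a binary decision tree of depth $c$: each internal node $v$ is labeled with a query function $q_v\colon [N]\to \cS_v$ (shared between the two players by the tandem property), and each leaf carries a label in $\Lambda$. On input $(x,y)$, the protocol evaluates at each visited node $v$ the bit $\sigma_v(x,y)\coloneqq \ind{q_v(x)=q_v(y)}$, branches accordingly, and returns the label of the leaf it reaches. Without loss of generality, pad $\pi$ to a complete binary tree of depth $c$ with $K\coloneqq 2^c-1$ internal nodes enumerated $v_1,\ldots,v_K$ (insert trivial queries $q_v\equiv 0$ at padded nodes).

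The plan is to encode $x$ so that a single Hamming distance recovers the entire equality vector $\sigma(x,y) = (\sigma_{v_1}(x,y),\ldots,\sigma_{v_K}(x,y))\in \zo^K$, from which one can simulate $\pi$ to obtain $P(x,y)$. For each $i\in [K]$, let $I_i(x)\in \zo^{|\cS_i|}$ be the indicator vector with a single $1$ at coordinate $q_{v_i}(x)$, and let $B_i(x)$ be the concatenation of $2^{i-1}$ copies of $I_i(x)$. Define $E(x)\coloneqq B_1(x)\concat\ldots\concat B_K(x)$. Each $B_i(x)$ has Hamming weight exactly $2^{i-1}$, so $|E(x)| = \sum_{i=1}^K 2^{i-1} = 2^K-1 =: k$, a constant independent of $N$. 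Since $\dist(I_i(x),I_i(y))\in \{0,2\}$ according to whether $q_{v_i}(x)=q_{v_i}(y)$,
\[
\dist(E(x),E(y)) \;=\; 2\sum_{i=1}^K 2^{i-1}\bigl(1-\sigma_{v_i}(x,y)\bigr) \;\in\; \{0,2,\ldots,2k\},
\]
and the binary expansion of $d/2$ uniquely recovers each bit $1-\sigma_{v_i}$ of the pattern. Defining $D(d)$ by ``read $\sigma\in \zo^K$ off the binary digits of $d/2$, then simulate $\pi$ on $\sigma$ and return the label of the reached leaf'' therefore yields $P(x,y)=D(\dist(E(x),E(y)))$ for every $x,y\in [N]$.

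The subtlety that requires care is that the decoder $D$ as just described depends on the specific shape and leaf-labels of $\pi$, whereas the lemma asks for a single $D$ valid across all of $\cP$. Since the depth is bounded by the constant $c$, there are only finitely many possible (tree-shape, leaf-labeling) pairs $(T_1,L_1),\ldots,(T_M,L_M)$ with labels in $\Lambda$. For each $j$, apply the construction above to obtain an encoder $E_j$ of weight $k_j$, and then replicate every coordinate of $E_j$ by a large factor $r_j$ (\eg $r_j\coloneqq (4k)^{j-1}$) so that the nonzero distances produced by different $j$ lie in pairwise disjoint intervals; pad with constant-$1$ coordinates to bring every encoder to a uniform weight $k^*\coloneqq \max_j r_j k_j$. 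For $P\in \cP$ one uses $E_{j(P)}$ matching the (tree, labeling) of $\pi$, and the universal $D$ first reads off $j$ from the interval containing $d$ and then decodes within that interval as above. The main obstacle is the corner case $d=0$ (attained exactly when $q_v(x)=q_v(y)$ for every $v$): since different $P$'s can produce different outputs on such input pairs, a consistent value for $D(0)$ must be arranged, \eg by augmenting the encoding with an injective identifier of $x$ so that $d=0$ forces $x=y$ and fixing a canonical convention for the diagonal value across $\cP$. The conceptual heart of the proof is the power-of-$2$ weighting that realizes the pattern-to-distance bijection; everything else is finite bookkeeping to coordinate across the family $\cP$.
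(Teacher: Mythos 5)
Your core construction---indicator vectors for the $\Equality$-query replies, duplicated by powers of two, so that a single Hamming distance recovers the full vector of query answers via binary expansion---is exactly the one the paper uses. Where you diverge is in making the decoder $D$ independent of $P$. The paper invokes a padding argument (analogous to \cref{prop:reduction-function}) to obtain a single fixed reduction function $\rho \colon \zo^q \to \Lambda$ that works for every $P \in \cP$ and then sets $D$ to be $\rho$ composed with bit-reading; your alternative is to keep one encoder per (tree shape, leaf labeling) and blow up the replication factor so the resulting distance ranges lie in pairwise disjoint intervals, letting $D$ first read off the index $j$ and then decode within it. Both routes are reasonable, but yours introduces exactly the complication you flag: the value $d=0$ lies in every interval and so cannot identify $j$.

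The gap is in your proposed repair for $d=0$. Appending an injective identifier of $x$ is not compatible with the rest of the construction: a bounded-weight injective map $[N]\to\zo^m$ necessarily produces pairwise distances that vary over a range $\{2,4,\dotsc,2w\}$ depending on the particular unequal pair $x\neq y$, and this additive contribution would corrupt the binary-expansion decoding of the query answers for \emph{all} $d$, not only perturb the $d=0$ case; one cannot make that contribution both nonzero and uniform. Your second suggested fix---``a canonical convention for the diagonal''---is precisely the unstated assumption that every $P\in\cP$ has the same constant diagonal value. It is worth noting that the paper's argument implicitly rests on the same assumption: a single fixed $\rho$ with tandem queries forces $P(x,x)=\rho(1,\dotsc,1)$ for every $P\in\cP$, which already fails, e.g., if $\cP$ contains both the all-zeros and all-ones matrices (each of which trivially admits a cost-$0$ tandem protocol). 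So the subtlety you noticed at $d=0$ is genuine and not fully resolved by the paper either; the honest resolution is to restrict to families $\cP$ with a common diagonal value, which holds in the intended applications. As written, though, the ``injective identifier'' step should be removed---it is a dead end, and the replication-plus-interval machinery also becomes unnecessary once you adopt the paper's route through a fixed $\rho$.
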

\begin{proof}
Since $\cP$ admits a constant-cost tandem $\Equality$ protocol, there is a constant $q$ and a fixed
function $\rho \colon \zo^q \to \Lambda$ such that the following holds. For every $N$, every $N \times N$ matrix $P \in \cP$
can be computed as
\begin{equation}\label{eq:tandem-eq-to-khd-eq1}
  P = \rho(Q_1, Q_2, \dotsc, Q_q) \,,
\end{equation}
where each $Q_i \colon [N] \times [N] \to \zo$ is a matrix of the form $Q_i(x,y) \define \ind{ a_i(x) =
a_i(y) }$ for some function $a_i \colon [N] \to \bN$. The proof of this fact is similar to that of
\cref{prop:reduction-function}, where $\rho$ may be taken to be the function that produces the
output of the protocol, given the answer to each query. We may assume without loss of generality
that each $a_i$ has the range $[N]$.

Now define $E_i \colon [N] \to \zo^{2^{i-1} N}$ so that $E_i(x)$ is obtained by starting with the $N$-bit vector with value 1 in coordinate
$a_i(x)$ and 0 elsewhere, and then duplicating each coordinate $2^{i-1}$ times. Note that
the Hamming weight is $|E_i(x)| = 2^{i-1}$, $\dist(E_i(x), E_i(y)) \in \{ 0, 2^i \}$, and $\dist(E_i(x), E_i(y)) = 0$ if and only if $Q_i(x,y) =
1$.

Finally, define $E(x)\coloneqq E_1(x)E_2(x)\cdots E_q(x)$. The weight is $|E(x)| = \sum_{i=1}^q 2^{i-1} = 2^q-1$, and 
\begin{align*}
  \dist(E(x), E(y))&= \dist(E_1(x) \concat \dotsm \concat E_q(x), E_1(y) \concat \dotsm \concat E_q(y))
  = \sum_{i=1}^q 2^i \ind{ Q_i(x,y) = 1 }\,.
\end{align*}
Let $k=2^q-1$, and note that $0\leq \dist(E(x), E(y)) \leq 2k$, and the binary representation of $\dist(E(x), E(y))$ has 1 in the $i^{th}$ least-significant bit if
and only if $Q_i(x,y) = 1$. This combined with \cref{eq:tandem-eq-to-khd-eq1} shows that the map $D$ where $D(t)$ is obtained by applying $\rho$ to the binary representation of $t$, satisfies \cref{eq:tandem-eq-to-khd-eq0} as desired. 
\end{proof}

We will further transform the encodings $E$ obtained from \cref{lemma:tandem-eq-to-khd}.  For this
transformation we need the next fact, which follows from Newton's identities for the elementary
symmetric polynomials\footnote{Under the assumption of \cref{fact:newton}, Newton's identity implies
that the polynomials $\prod_{i=1}^r (x-a_i)$ and $\prod_{i=1}^r
(x-b_i)$ have the same coefficients and therefore the same multisets of roots.}:
\begin{fact}
\label{fact:newton}
Fix any $r$ and consider sequences $a_1 \leq a_2 \leq \dotsm \leq a_r$
and $b_1 \leq b_2 \leq \dotsm \leq b_r$. Suppose that
\[
  \forall d \in [r] \;:\; \sum_{i=1}^r a_i^d = \sum_{i=1}^r b_i^d \,.
\]
Then $a_i = b_i$ for all $i \in [r]$, \ie the sequences are equal.
\end{fact}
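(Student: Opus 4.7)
The plan is to derive equality of the two multisets $\{a_1,\dots,a_r\}$ and $\{b_1,\dots,b_r\}$ from equality of their power sums, and then use the sortedness hypothesis to upgrade multiset equality to pointwise equality. Concretely, for $d\in[r]$ let $p_d(z_1,\dots,z_r) \define \sum_{i=1}^r z_i^d$ be the $d$-th power sum and $e_d(z_1,\dots,z_r)$ be the $d$-th elementary symmetric polynomial. Newton's identities give recurrences
\[
d\cdot e_d = \sum_{i=1}^{d} (-1)^{i-1} e_{d-i}\, p_i, \qquad d=1,2,\dots,r,
\]
with $e_0=1$. Since the right-hand side expresses $e_d$ as a polynomial in $p_1,\dots,p_d$ (once $e_0,\dots,e_{d-1}$ have been determined), an easy induction on $d$ shows that $e_d(a_1,\dots,a_r) = e_d(b_1,\dots,b_r)$ for every $d\in[r]$.

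Once all elementary symmetric polynomials agree, the monic polynomials
\[
P_a(x) \define \prod_{i=1}^r (x-a_i) = \sum_{d=0}^r (-1)^d e_d(a) \, x^{r-d}
\]
and $P_b(x) \define \prod_{i=1}^r (x-b_i)$ are identical, so they have the same multiset of roots. Hence $\{a_1,\dots,a_r\}$ and $\{b_1,\dots,b_r\}$ are equal as multisets. Since both sequences are sorted in non-decreasing order, this forces $a_i=b_i$ for every $i\in[r]$.

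I expect no serious obstacle here; the only subtlety is making sure Newton's identities are available in the coefficient ring. Since the hypothesis provides the equalities for $d\in[r]$ (exactly the range needed to recover $e_1,\dots,e_r$), and since the identities divide only by $d\leq r$, the derivation goes through as long as the $a_i, b_i$ lie in a field of characteristic $0$ (or more than $r$); in the application to $f$-codes they are non-negative integers, so this is automatic. The proof can therefore be a clean two-line appeal to Newton's identities followed by the observation that sorted sequences with equal multisets coincide.
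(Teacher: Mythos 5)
Your proposal is correct and follows essentially the same route as the paper, which dispatches this fact in a footnote with exactly this argument: Newton's identities convert equal power sums into equal elementary symmetric polynomials, hence the monic polynomials $\prod(x-a_i)$ and $\prod(x-b_i)$ coincide and have the same multiset of roots, and sortedness then gives pointwise equality. Your extra observation about the characteristic (divisions by $d\leq r$) is a valid point that the paper implicitly elides, and it is indeed harmless here since the values are non-negative integers.
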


Fix any $d, k$, consider the following encoding $F_d \colon \binom{[n]}{k} \to [n]^d$:
\begin{equation}
\label{eq:Fd-definition}
  F_d(x) \define \left\{ (i_1, i_2, \dotsc, i_d) \in [n]^d \;|\; \forall s, i_s \in x \right\} \,.
\end{equation}
Observe that for $x \in \binom{[n]}{k}$, $F_d(x) = |x|^d = k^d$.  For $x,y \in \binom{[n]}{k}$ with
$\dist(x,y) = \Delta$, we have
\begin{equation}
\label{eq:Fd-distance-calculation}
  \dist(F_d(x), F_d(y)) = (|x|+|y|)^d - (2k-\Delta)^d = 2^dk^d - 2^d(k-\Delta/2)^d \,.
\end{equation}
These encodings have the following property:
\begin{proposition}
\label{prop:Fd-unique}
Fix any $r, k, n \in \bN$. For $x_1, \dotsc, x_r, y_1, \dotsc, y_r, x'_1, \dotsc, x'_r, y'_1,
\dotsc, y'_r \in \binom{[n]}{k}$, write $a_i \define \dist(x_i,y_i)$ and $b_i \define \dist(x'_i,
y'_i)$. Suppose that
\[
  \forall d \in [r] \;:\;
    \sum_{i=1}^r \dist(F_d(x_i), F_d(y_i)) = \sum_{i=1}^r \dist(F_d(x'_i), F_d(y'_i)) \,.
\]
Then the multisets $\{a_i : i \in [r] \}$ and $\{b_i : i \in [r]\}$ are equal.
\end{proposition}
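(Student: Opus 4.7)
The plan is to reduce the statement directly to Fact~\ref{fact:newton} using the explicit formula in \eqref{eq:Fd-distance-calculation}. Note that for $x,y \in \binom{[n]}{k}$ the Hamming distance $\dist(x,y)$ is always even, so $a_i/2$ and $b_i/2$ are integers in $\{0,1,\dots,k\}$.

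First, I would apply \eqref{eq:Fd-distance-calculation} to each term in both sums, which gives
\begin{equation*}
  \dist(F_d(x_i), F_d(y_i)) = 2^d k^d - 2^d (k - a_i/2)^d,
\end{equation*}
and analogously $\dist(F_d(x'_i), F_d(y'_i)) = 2^d k^d - 2^d (k - b_i/2)^d$. Summing over $i \in [r]$ and cancelling the $r \cdot 2^d k^d$ terms which appear on both sides, the hypothesis becomes
\begin{equation*}
  \sum_{i=1}^r (k - a_i/2)^d = \sum_{i=1}^r (k - b_i/2)^d \qquad \forall d \in [r].
\end{equation*}

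Next, I would set $c_i \define k - a_i/2$ and $c'_i \define k - b_i/2$, which are non-negative integers (in fact lying in $\{0, 1, \dots, k\}$), and sort both sequences in nondecreasing order. The displayed equation says exactly that the power sums of $(c_i)$ and $(c'_i)$ of orders $1, 2, \dots, r$ coincide, so Fact~\ref{fact:newton} applies directly and yields $c_i = c'_i$ for every $i$ after sorting. Since the map $c \mapsto 2(k-c)$ is a bijection, the sorted sequences $(a_i)$ and $(b_i)$ agree termwise, which is exactly the claim that the multisets $\{a_i : i \in [r]\}$ and $\{b_i : i \in [r]\}$ are equal.

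There is no real obstacle here: the formula \eqref{eq:Fd-distance-calculation} gives an exact polynomial-in-$d$ evaluation of the quantities being summed, so matching the sums for $d = 1, \dots, r$ is equivalent to matching $r$ power sums, and Newton's identities then close the argument. The only mild sanity check worth flagging is that the $c_i, c'_i$ take values in $[0,k]$ so the hypotheses of Fact~\ref{fact:newton} are satisfied without complication.
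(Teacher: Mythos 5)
Your proof is correct and follows essentially the same route as the paper's: expand each term via \eqref{eq:Fd-distance-calculation}, cancel the common $r\cdot 2^d k^d$ terms, and invoke \cref{fact:newton} on the resulting power-sum identities for $(k-a_i/2)$ and $(k-b_i/2)$. The extra remarks about integrality and the range $[0,k]$ are harmless sanity checks and do not change the argument.
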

\begin{proof}
For all $d \in [r]$, we have from \cref{eq:Fd-distance-calculation} that
\begin{align*}
  \sum_{i=1}^r \dist(F_d(x_i), F_d(y_i))
    = \sum_{i=1}^r \left( 2^d k^d - 2^d(k-a_i/2)^d \right) 
    = r 2^d k^d - 2^d \sum_{i=1}^r (k-a_i/2)^d \,,
\end{align*}
and similarly
\begin{align*}
  \sum_{i=1}^r \dist(F_d(x'_i), F_d(y'_i))
    &= r 2^d k^d - 2^d \sum_{i=1}^r (k-b_i/2)^d \,,
\end{align*}
so for all $d \in [r]$,
\[
  \sum_{i=1}^r (k-a_i/2)^d = \sum_{i=1}^r (k-b_i/2)^d \,.
\]
By \cref{fact:newton}, this means the multisets $\{a_i : i \in [r]\}$ and $\{b_i : i \in [r] \}$ are
equal.
\end{proof}

We now prove the theorem that distance-$r$ compositions of constant-cost problems with tandem
$\Equality$ protocols can be reduced to \textsc{$k$-Hamming Distance}.

\begin{proof}[Proof of \cref{thm:tandem-equality-khd}]
Suppose $\mathcal{P}$ is a $\Lambda$-valued symmetric communication problem with a constant-cost
tandem $\Equality$ protocol. Let $k$ and $D$ be as provided by \cref{lemma:tandem-eq-to-khd}, and
let $E_i : [N] \to \binom{m(N)}{k}$ be the encoding obtained for each $P_i$ from
\cref{lemma:tandem-eq-to-khd}.

Fix any $r$ and consider any distance-$r$ composition of $\mathcal{P}$ defined by a function
$g:\Lambda^r\rightarrow \Lambda$ and $N \times N$ matrices $P_1, \dotsc, P_n\in \cP$.  For $d \in
[r]$, define $F_d$ as in equation \eqref{eq:Fd-definition}.

We describe in \cref{prot:tandem-eq-khd} how the resulting distance-$r$ composition function can be
solved by $\cO(r^2 k^d)$ queries (\ie independent of the input size) to a $K$-\textsc{Hamming
Distance} oracle, where $K=2rk^r$.

\begin{protocol}
\begin{mdframed}
\renewcommand{\thempfootnote}{$*$}
{\bf\slshape
$\textsc{Hamming Distance}$ oracle protocol for $P(x,y)$:}
\vspace{0.3em}
\begin{enumerate}[leftmargin=1.3em,itemsep=0.5em]
\item Let $\Delta \define \{ i \in [n] : x_i \neq y_i \}$.
\item Use $r$ queries to $\HD_r$ to determine whether $|\Delta| > r$ and if so, output $\bot$ as
desired. Otherwise, we have computed $|\Delta|$ exactly.
\item For each $d \in [r]$, use $2rk^d$ queries to $\HD_{2rk^d}$ to compute
\[
T_d \define \dist\left( F_d(E_1(x_1)) \concat F_d(E_2(x_2)) \concat \dotsm \concat F_d(E_n(x_n)),\,
            F_d(E_1(y_1)) \concat F_d(E_2(y_2)) \concat \dotsm \concat F_d(E_n(y_n)) \right) \,,
\]
\item From \cref{prop:Fd-unique}, the values $a_i \define \dist(E_i(x_i), E_i(y_i))$ are now
determined uniquely (since there are at most $|\Delta| \leq r$ nonzero of these values).
\item Compute the multiset $\{ D(a_i) : i \in [n], a_i \neq 0 \}$ using the function $D$ from
\cref{lemma:tandem-eq-to-khd}. Since $g$ is permutation-invariant, we may compute it by applying it
to the values $D(a_i) = P_i(x_i,y_i)$ where $a_i \neq 0$, in any order.
\end{enumerate}
\end{mdframed}
\caption{Reduction from tandem $\Equality$-oracle protocols to \textsc{Hamming Distance}}
\label{prot:tandem-eq-khd}
\end{protocol}
\end{proof}

\iftoggle{anonymous}{%
}{%
\section*{Acknowledgments}

Big thanks to Lianna Hambardzumyan for many helpful discussions on this topic.
Much of the work in this paper was done simultaneously with \cite{FHHH24} which
was coauthored with Lianna.

Mika G\"o\"os and Nathaniel Harms are supported by the Swiss
State Secretariat for Education, Research, and Innovation (SERI) under contract number MB22.00026, and Nathaniel Harms is partly supported by an NSERC postdoctoral fellowship.
}

\DeclareUrlCommand{\Doi}{\urlstyle{sf}}
\renewcommand{\path}[1]{\small\Doi{#1}}
\renewcommand{\url}[1]{\href{#1}{\small\Doi{#1}}}
\bibliographystyle{alphaurl}
\bibliography{references.bib}

\end{document}